\newcommand{\Opt}{\ensuremath{\texttt{OPT}}\xspace}
\newcommand{\opt}{\Opt}
\newcommand{\Alg}{\ensuremath{\texttt{ALG}}\xspace}
\newcommand{\OPT}{\opt}
\newcommand{\TAF}{\ensuremath{\texttt{AnyFit}_\tau}\xspace}
\newcommand{\TAAF}{\ensuremath{\texttt{AlmostAnyFit}_\tau}\xspace}
\newcommand{\TNF}{\ensuremath{\texttt{NextFit}_\tau}\xspace}
\newcommand{\TWF}{\ensuremath{\texttt{WorstFit}_\tau}\xspace}
\newcommand{\TFF}{\ensuremath{\texttt{FirstFit}_\tau}\xspace}
\newcommand{\TBF}{\ensuremath{\texttt{BestFit}_\tau}\xspace}
\newcommand{\GHAR}{\ensuremath{\texttt{Harmonic}_\tau}\xspace}
\newcommand{\TAAFop}{\ensuremath{\texttt{AlmostAnyFit}_{\tau^*}}\xspace}
\newcommand{\TNFop}{\ensuremath{\texttt{NextFit}_{\tau^*}}\xspace}
\newcommand{\TWFop}{\ensuremath{\texttt{WorstFit}_{\tau^*}}\xspace}
\newcommand{\GHARop}{\ensuremath{\texttt{Harmonic}_{\tau^*}}\xspace}
\newcommand{\AF}{\ensuremath{\texttt{AnyFit}}\xspace}
\newcommand{\AAF}{\ensuremath{\texttt{AlmostAnyFit}}\xspace}
\newcommand{\NF}{\ensuremath{\texttt{NextFit}}\xspace}
\newcommand{\WF}{\ensuremath{\texttt{WorstFit}}\xspace}
\newcommand{\FF}{\ensuremath{\texttt{FirstFit}}\xspace}
\newcommand{\BF}{\ensuremath{\texttt{BestFit}}\xspace}
\newcommand{\HAR}{\ensuremath{\texttt{Harmonic}}\xspace}
\newcommand{\gbp}{\texttt{GBP}\xspace}
\newcommand{\black}{black\xspace}
\newcommand{\CR}{\texttt{CR}\xspace}
\newcommand{\WPC}{WPC\xspace}
\definecolor{darkblue}{RGB}{0 0 100}
\newcommand{\shahin}[1]{  
	{\textcolor{orange}{(Shahin says:  #1)}}{}} 
\newcommand{\shahinadd}[1]{  
	{\textcolor{black}{#1}}{}} 
\title{Green Bin Packing}
\author{Jackson Bibbens}
\affiliation{%
  \institution{University of Massachusetts Amherst}
  \country{USA}
}
\email{jbibbens@umass.edu}
\author{Cooper Sigrist}
\affiliation{%
  \institution{University of Massachusetts Amherst}
  \country{USA}
}
\email{csigrist@umass.edu}
\author{Bo Sun}
\affiliation{%
  \institution{University of Ottawa}
  \country{Canada}
}
\email{bo.sun@uottawa.ca}
\author{Shahin Kamali}
\affiliation{%
  \institution{York University}
  \country{Canada}
}
\email{kamalis@yorku.ca}
\author{Mohammad Hajiesmaili}
\affiliation{%
  \institution{University of Massachusetts Amherst}
  \country{USA}
}
\email{hajiesmaili@cs.umass.edu}
\begin{abstract}
     The online bin packing problem and its variants are regularly used to model server allocation problems. Modern concerns surrounding sustainability and overcommitment in cloud computing motivate bin packing models that capture costs associated with highly utilized servers. In this work, we introduce the \textit{green bin packing} problem, an online variant with a linear cost $\beta$ for filling above a fixed level $G$. For a given instance, the goal is to minimize the sum of the number of opened bins and the linear cost. We show that when $\beta G \le 1$, classical online bin packing algorithms such as FirstFit or Harmonic perform well, and can achieve competitive ratios lower than in the classic setting. However, when $\beta G > 1$, new algorithmic solutions can improve both worst-case and typical performance. We introduce variants of classic online bin packing algorithms and establish theoretical bounds, as well as test their empirical performance.

\end{abstract}
\keywords{Green Bin Packing, Online Algorithms, Server Allocation}
\begin{document}

\maketitle

\section{Introduction}
\label{sec:intro}
Online bin packing is a classic problem in the field of online algorithms: a sequence of items with sizes in $(0,1]$ must be packed into the minimum number of unit capacity bins in an online fashion. This basic setting finds applications in diverse areas such as hospital room scheduling, packing physical trucks, and cloud computing. This last case has been of particular interest in recent literature, with a number of works studying variants of online bin packing which better capture aspects of resource allocation in the cloud \cite{Hong2023SBP, Cohen2019Overcommitment, Kamali2021FaultTolerant}. 

Motivated by two practical scenarios in cloud computing: (1) carbon-aware geographical load balancing~\cite{lechowicz2025learning,liu2011greening, Lin2012GLB, Lin2023GLB, Murillo2024GLB, Sukprasert2024} and (2) overcommitment in cloud resource allocation~\cite{Cohen2019Overcommitment,cortez2017resource,reidys2025coach,kumbhare2021prediction,bashir2021take}, this paper introduces and studies the online \textit{green bin packing} problem (\gbp). In \gbp, each bin has fixed cost to open and a normalized capacity of 1, partitioned into two regions: a \textit{green} portion of size $G \le 1$, representing carbon-free resources in scenario (1) or guaranteed-availability resources in scenario~(2), and a remaining \textit{black} portion of size $1 - G$. Packing items into the green region incurs no cost, while using the black region, i.e., placing items beyond the threshold $G$, incurs a linear cost at a rate $\beta$. We refer to \autoref{sec:model-preliminaries} for more details on the formal statement of the problem and details of the motivating applications. 

In this paper, we tackle \gbp and provide foundational results for both its offline and online variants. Our findings reveal both intriguing similarities to and important distinctions from the classical bin packing setting, while also opening new directions for future research. A central motivation for \gbp arises from a key algorithmic challenge: balancing two distinct cost functions---the fixed cost of opening a new bin and the variable cost incurred when usage exceeds a predefined ``green'' threshold in existing bins. This trade-off introduces a novel design dimension that is absent in traditional formulations. We show that directly applying classical bin packing algorithms without accounting for this new cost structure can lead to unbounded competitive ratios (see \autoref{prop:classicUnbounded}), underscoring the need for new algorithmic approaches tailored to the \gbp setting.

\subsection{Contribution}
We provide a comprehensive algorithm design and analysis for \gbp in both offline and online settings. 
In~\autoref{sec:fundamentals}, and for the completeness of the result in terms of complexity analysis of \gbp, we focus on offline \gbp and prove that it is NP-hard in \autoref{thm:np-hard} and further show that an asymptotic polynomial-time approximation scheme (APTAS) exists in \autoref{thm:aptas}. 

\paragraph{Fundamental algorithmic ideas.}
In the online setting, the core question of \gbp is that for each item, \textit{should we open a new bin at cost $1$ to use its green space $G$, or should we use the black space in opened bins at unit cost $\beta$?} As shown concretely in~\autoref{subsec:approaches}, answering this question clearly depends on the values of $\beta$ and $G$. In particular, with ``cheap'' black cost, i.e., small values of $\beta G$ ($\beta G \leq 1$), it is best always to avoid opening a new bin if possible. For expensive black costs, i.e., larger values of $\beta G$,  ($\beta G > 1$), we design simple competitive algorithms for all given $G$ and $\beta$. This separation guides our algorithm design in this paper. 
Specifically, we consider a class of threshold-based algorithms analogous to classic bin packing algorithms. For a given threshold $\tau \in [0,1-G]$, we define $G+\tau$ as the effective capacity, i.e., the maximum capacity that can be used to pack items in a bin, and run the classic \NF, \WF, \AAF, and \HAR algorithms using this effective capacity. The main contribution of this work is to determine the threshold $\tau$ for each given $G$ and $\beta$, and to analyze the competitive ratios of the corresponding algorithms. We summarize our results in \autoref{tab:less1} and \autoref{tab:greater}, and visualize these results in \autoref{fig:allGB} (with elaboration of the behaviors in~\autoref{Appendix-Visual}). In particular, our results can be divided into two cases.

\paragraph{Algorithms and results when $\beta G \leq 1$.}
In~\autoref{sec:GBleqOne}, we focus on the case when the unit cost of black space is relatively low, i.e., $\beta G \le 1$, and we set $\tau = 1 - G$. Thus, the algorithms target to fully utilize the black space, filling each bin up to capacity $1$. We derive (nearly)-tight competitive ratios for all algorithms %
as functions of $\beta$ and $G$. Unlike classic bin packing, where \NF and \WF both have a competitive ratio of 2~\cite{Johnson16}, for \gbp we establish a clear separation between \NF and \WF, where the performance of \WF is much better. Besides this, the competitive ratio of other algorithms exhibits a similar pattern and separation to classic bin packing, that is \BF and \FF have similar competitive ratios (which is better than \WF), and \HAR has a slightly improved competitive ratio.
Compared to classic bin packing, in \gbp, the competitive ratios of these algorithms are generally smaller. 
This is because the additional cost of black space affects both the online algorithms and the offline optimum, leading to improved ratios. %

\paragraph{Algorithms and results when $\beta G > 1$.}
In~\autoref{sec:GBgtOne}, we focus on the case when the unit cost of black space is high, i.e., $\beta G > 1$, and the threshold $\tau$ depends on both $\beta$ and $G$. We derive upper and lower bounds for all algorithms as functions of $\tau$, which are complex piecewise functions (see~\autoref{fig:allGB} for an illustration). Interestingly, the competitive ratios with optimized thresholds exhibit new performance structures. In particular, \NF still has a competitive ratio of $2$ when the threshold is set $\tau^* = 0$ for large $\beta G$. 
In contrast, all other algorithms can better utilize opened bins compared to \TNF and set the threshold $\tau^* = \Theta({1}/{\beta})$. 
Interestingly, \TWF achieves a competitive ratio of $2 - \Omega(1/(\beta G))$, which clearly separates it from \TNF in a departure from the classic behavior.
This improvement is  due to the load-balancing nature of \TWF, which tends to fill the bins with the smallest load first. This effectively prioritizes green space usage and minimizes the use of expensive black space. Finally, \TAAF and \GHAR attain a ratio of $1.75 - \Omega(1/(\beta G))$, which degrades with increasing $\beta G$, as \gbp reduces to classic bin packing with capacity $G$ when $\beta G \to\infty$. 

\paragraph{Numerical results}
Finally, we conduct extensive experiments to evaluate performance in both cases of small and large $\beta G$. \BF and \FF generally achieve the best performance among all algorithms. The most notable observation is that \WF demonstrates strong empirical performance, especially when $G$ is small and $\beta$ is large. This behavior contrasts sharply with that of classic bin packing and suggests new design principles for \gbp.

\begin{table}
\centering
\caption{Summary of results for $\beta G \le 1$. For each algorithm and the online deterministic lower bound, we list competitive ratio bounds and compare the (upper) bound value when $\beta \rightarrow0$ to the bound in classic bin packing. Note that when $\beta \rightarrow 0$, \gbp becomes equivalent to classic bin packing where bins have capacity 1.}
 \small
\begin{tabular}{c c c c c c} 
 \specialrule{.1em}{.05em}{.05em} 
 Algorithm & Competitive Ratio & $\beta \rightarrow 1/G$ & $\beta \rightarrow 0$ & Classic & Theorem \\ [0.5ex] 
 \hline
 \\[-2.1ex]
 \NF & $\frac{2+\beta(1-G)}{\beta+(1-\beta G)}$ & $G+1$ & 2 & 2 & \ref{thm:nextfit} \\ [0.6ex] 
 \WF & $\frac{2+\beta(1-2G)^+}{\beta+(1-\beta G)}$ & $\max\{1, 2G\}$ & 2 & 2 & \ref{thm:worstfit} \\[0.6ex] 
 \AAF & piecewise (\ref{eq:craaf}) & $\max\{1,\frac{2+\beta}{2\beta},7/4\beta\}$ & 1.75 &  1.7 & \ref{thm:anyfitLB}, \ref{thm:anyfitUB} \\[0.6ex] 
 \HAR & piecewise (\ref{eq:crhar}) & $\max\{1,\frac{2+\beta}{2\beta},\frac{11.146-\beta}{6\beta}\}$ & 1.691 & 1.691 & \ref{thm:anyfitLB}, \ref{thm:harUB} \\ [0.2ex]
\hline
 \\[-2.1ex]
  \texttt{GeneralLB} & $\frac{248/161+\beta(1-248G/161)^+}{1+\beta(1-G)}$& $\max\{1,1.5403G\}$&1.5403 & 1.5403 & \ref{thm:GenLB-gbLtOne}\\[0.6ex] 
 \specialrule{.1em}{.05em}{.05em}
\end{tabular}
 \normalsize
\label{tab:less1}
\end{table}

\begin{table}
\centering
 \caption{Summary of results for $\beta G > 1$. For each algorithm and online deterministic lower bound, we list competitive ratio bounds and compare the (upper) bound values at $\beta G  \rightarrow \infty$ to the \CR for classic bin packing. Note that as $\beta G\rightarrow \infty$, \gbp becomes equivalent to classic bin packing where bins have capacity $G.$}
 \small
\begin{tabular}{c c c c c c} 
 \specialrule{.1em}{.05em}{.05em} 
 Algorithm & Competitive Ratio & $\beta G \rightarrow 1$ & $\beta G\rightarrow\infty$ & Classic & Theorem \\ [0.5ex] 
 \hline
 \\[-2.1ex]
 \TNFop & piecewise (\ref{eq:nextfittau}) & $G+1$ & 2  & 2 & \ref{thm:tnf} \\ [0.6ex] 
 \TWFop & $\max\{\frac{2\beta G}{\beta G+1},2G\}$ & $\max\{1,2G\}$ & $2 - \Omega(\frac{1}{\beta G})$ & $2$ & \ref{thm:taf-lb}, \ref{thm:TWFUB} \\[0.6ex] 
 \TAAFop & piecewise (\ref{Cor:optimalTauOne}) & $\max\{1,\frac{6G+1}{4},\frac{4-\beta}{2}\}$ & $1.75 - \Omega(\frac{1}{\beta G})$ & 1.7 & \ref{thm:taf-lb}, \ref{thm:taf-UB} \\[0.6ex] 
 \GHARop & piecewise (\ref{Cor:optimalTauOne}) & $\max\{1,\frac{6G+1}{4},\frac{4-\beta}{2}\}$ & $1.75 - \Omega(\frac{1}{\beta G})$ & 1.691 & \ref{thm:taf-lb}, \ref{thm:taf-UB} \\ [0.2ex]
\hline
 \\[-2.1ex]
  \texttt{GeneralLB} & piecewise (\ref{Thm:GenLB_BGgt1}) & $\max\{1, \frac{2.5403-G}{1+\beta(1-G)}\}$&  1.5 & 1.5403  & \ref{Thm:GenLB_BGgt1}\\[0.6ex] 
 \specialrule{.1em}{.05em}{.05em}
\end{tabular}\normalsize
\label{tab:greater}
\end{table}

\subsection{Overview of Techniques}
Our analysis builds on several classical techniques from the bin packing literature, but requires certain modifications to address the new cost structure introduced by \gbp. 
While the general strategy follows familiar lines---combining approximation schemes, adversarial analysis, and weighting arguments---the coexistence of fixed and variable costs adds a layer of complexity that fundamentally changes algorithm design and analysis.

\paragraph{Extending classical frameworks.}
We first extend established theoretical tools to the \gbp setting. 
On the offline side, we adapt the APTAS framework of de la Vega and Lueker~\cite{Vega81apt} and the NP-hardness reduction from \textsc{Partition} to handle the two-tier cost model. 
For the online setting, we generalize Yao’s general lower-bound framework~\cite{Yao1980genLB} to account for the black-space penalty. 
These extensions preserve the spirit of the classical analyses but reveal a key difference: the optimal fill level and structure of feasible packings now depend on the parameters $\beta$ and $G$, in contrast to the fixed threshold 1 of traditional bin packing. 
This dependency propagates to all aspects of the competitive analysis and leads to substantial complexities in the characterization of the adversarial patterns as described below. 

\paragraph{Multiple adversarial patterns and piecewise competitive ratios.}
A defining feature of \gbp is that no single adversarial input achieves the worst-case performance across all parameter values. 
Instead, the interplay between the fixed cost of opening bins and the linear black-space cost produces multiple adversarial regimes, each defining the competitive ratio for different ranges of~$\beta G$. 
For example, when~$\beta G$ is small, the adversary may exploit the algorithm’s inefficiency in utilizing green space---forcing it to open many bins, much like in the classical setting. 
However, when~$\beta G$ is large, the adversary may instead construct sequences of tiny ``sand'' items that compel the algorithm to fill bins beyond $G$ and incur large black-space costs.  There are other adversarial patterns, dependent on the specific algorithm.
The transition between these behaviors occurs at distinct thresholds of~$\beta G$, yielding \textit{piecewise competitive ratios} whose defining input sequences change discontinuously with~$(\beta, G)$. 
Identifying these adversarial patterns for each algorithm requires a substantial departure from proof techniques in the classic bin packing and forms the core of our lower-bound analysis and constitutes one of the main technical contributions of this work.

\paragraph{Modified weighting arguments for upper bounds.}
To obtain upper bounds, we introduce a new version of the classical weighting argument. 
In the standard analysis of bin packing, weights are assigned so that the total \textit{weight per bin} can be bounded, which in turn bounds the number of bins used. 
This approach no longer works in \gbp, where the cost is not determined solely by the number of bins but also by how much black space each bin uses. 
We therefore generalize the idea to measure the \textit{weight per cost}, accounting for both the cost of opening bins and the variable cost of black-space usage. 
This framework allows us to reason about how different parts of the cost contribute to the overall performance of an algorithm and to identify the trade-offs between green and black utilization. 
For algorithms such as \textit{AlmostAnyFit}$_\tau$ and \textit{Harmonic}$_\tau$, we further extend this approach into a parameterized analysis that traces a family of bounding curves as functions of parameter $\tau$.

\paragraph{Analytical implications.}
These methodological departures lead to several phenomena absent from classical analyses: 
(i) competitive ratios become continuous, piecewise functions of~$\beta G$; 
(ii) equivalences such as the identical performance of \NF and \WF no longer hold---\WF performs strictly better due to its load-balancing behavior. 
Together, these insights form the technical foundation of our study and point toward new directions for analyzing online algorithms under hybrid cost structures.

\section{Related Work}

Bin packing is a foundational optimization problem, with a rich literature covering both the original statement and many variants under offline, online, and stochastic regimes. Across these regimes, some common variants include multiple-dimensional bin packing \cite{Karp1984MDBP,Kou1977MDBP,panigrahy2011heuristics}, bin packing with item migrations \cite{Mellou2024, Jansen2019Migration, Gambosi2000Migration, Balogh2014Migration}, bin packing with departures \cite{Mellou2024, Liu2022DBP, Hong2023SBP}, and bin packing with predictions of item sizes \cite{Liu2022DBP, angelopoulos2024BPpredictions}. 
The regime of analysis also separates the literature of bin packing; this paper resides in the literature of online competitive analysis, distinguishing it from offline or stochastic approaches. In the offline setting, the entire set of items is presented simultaneously, and the goal is typically to approximate the optimal solution in a reasonable time \cite{Vega81apt, BALOGH2015offline, Borgulya2021}. The techniques used in the offline setting require full information over the set of bins, making them infeasible for the online regime. Meanwhile, in the stochastic setting, item sizes are sampled iid from some distribution. In this case, algorithms are evaluated typically by their expected loss, which is the difference between the number of opened bins and the total size of all items (normalized to the bin capacity) \cite{Csirik2006, Rhee1993, Gupta_2020}. While these algorithms can perform well under stochastic inputs, they do not provide good worst-case guarantees in terms of competitive ratio~\cite{Banerjee2020, Csirik2006}. In contrast, the \gbp problem considered in this work builds upon the standard online bin packing framework \textit{without} incorporating these additional extensions or analysis regime. Extending \gbp to include such dimensions presents compelling directions for future research in green bin packing.

Focusing on more relevant literature, \gbp has similarities to existing works, which either allow overfilling a bin for additional cost~\cite{PerezSalazar2022,Denton2010OR,Coffman2006, Levin2022GEPB} or are motivated by sustainable computing~\cite{Song2014AdaptiveRP,Kamali2015EfficientBP}. \citet{PerezSalazar2022} provide a model of bin packing with an overfilling cost and where item sizes are unknown at the time of packing. In this model, the overfilling cost is a flat penalty regardless of the amount the bin is overfilled, and at most one item is allowed to be inserted above capacity. Not only do these restrictions differ from the \gbp model, but \cite{PerezSalazar2022} considers the stochastic regime while we consider the online setting. The concept of overfilling is also seen in previous research on operating room scheduling with overtime costs \cite{Denton2010OR} and (generalized) extensible bin packing \cite{Coffman2006, Levin2022GEPB}. Similarly to \gbp, these models allow for some form of overfilling a bin at an additional cost linear to the amount overfilled. However, these models differ by allowing the ability to overfill a bin by an arbitrarily large amount (we are limited to $1-G$ in \gbp), and are all studied under the offline regime.
At least two previous works have positioned themselves in the intersection of sustainability and bin packing~\cite{Song2014AdaptiveRP,Kamali2015EfficientBP}. However, these works have been primarily framed as reducing the total number of servers (bins) used in data centers. These models do not account for the green cost based on bin utilization. Additionally, while these works are within the online setting, they both allow item migration, which is a significant departure from our model.

\section{Model and Preliminaries}
\label{sec:model-preliminaries}
\subsection{Problem Definition} 

We define the Green Bin Packing (\gbp) as an online optimization problem. 
We have an infinite number of bins, each having a maximum capacity of $1$. The capacity of each bin is divided into two parts: for a given $G\in [0,1]$, the capacity within $[0, G]$ is called \textit{green space}, and the capacity within $(G,1]$ is called \textit{\black space}. Opening a new bin incurs a cost of $1$, and using each unit of \black space costs $\beta > 0$. A sequence of $n$ items, $(a_1,\dots,a_n)$, arrive one at a time.  
Each item $i$ has a size $a_i \in (0,1]$, and upon its arrival, it must be assigned to an existing bin whose current fill level is no more than $1 - a_i$, or placed in a newly opened bin.  
After packing all $n$ items, let $N$ denote the number of opened bins and $L_j$ denote the final fill level of the $j$-th bin. 
The objective is to minimize the overall cost of opening bins and using \black space, i.e., $N + \sum_{j\in[N]}\beta (L_j - G)^+$, where $(L_j-G)^+ := \max\{0, L_j-G\}$. 

For given setup parameters $G$ and $\beta$, we use $\sigma :=\{a_1,\dots,a_n\}$ to denote an instance of \gbp. We define the offline optimal algorithm $\opt$ as that which results in the minimal cost for a given instance. The standard metric used to evaluate an online algorithm \Alg is its \textit{competitive ratio} (\CR), which is the worst-case ratio of the cost of \Alg to the cost of \Opt. Following the standard approach for classical bin packing \cite{Johnson74}, in this paper we consider the \textit{asymptotic} competitive ratio, in which the cost of \Opt is unbounded (i.e., the input sequence is arbitrarily large). Any future references to competitive ratio in this text refer to the asymptotic definition.

This definition is structured to capture applications where bins have two cost tiers: a free capacity and then a costly capacity. This relates to problems such as server allocation for sustainability (renewable energy is "green", energy from fossil fuels is "\black") or for overcommitment (actual capacity is "green" and overcommitted capacity is "black").

\subsection{Motivating applications}
\label{sec:app}
Incorporating a two-stage cost model into the classic bin packing problem is a natural and meaningful extension that introduces significant new challenges for algorithm design and analysis. At the same time, this formulation serves as a stylized yet practical model that captures key aspects of real-world bin packing scenarios. In the following, we highlight two representative applications that illustrate its relevance.

\paragraph{Greening geographical load balancing}
In a geographically distributed system, job requests can be served at multiple locations (data centers), each equipped with a limited capacity of on-site carbon-free energy (e.g., solar or wind). When demand exceeds this local green capacity, the system must rely on off-site power sources, often from the external grid or local generators, which may involve significantly higher carbon emissions. This setting is motivated by the growing demand for data center services and the associated concerns about their carbon footprint~\cite{hajiesmaili2025toward,irwin2025vision}. As future data centers increasingly integrate on-site renewable energy, they will still face the challenge of managing excess demand that must be met using carbon-intensive sources. This trade-off between clean and dirty energy motivates our model in \gbp, where serving additional load beyond green capacity incurs a carbon cost.

\paragraph{Cloud resource overcommitment}
The \gbp problem could also be considered as initial steps for the modeling of resource overcommitment cost. 
A central challenge in this domain---particularly in large-scale cloud platforms such as Amazon AWS, Microsoft Azure, and Google Cloud Engine---is the efficient allocation of physical resources to virtual machines (VMs). This challenge is often modeled using practical variants of the bin packing problem \cite{barbalho2023virtual, Grandl2014ClusterSched, Hadary2020Protean, mohan2022looking}. In practice, due to the unpredictability of VM resource demands and to reduce operational costs, operators commonly allocate only a fraction of the requested resources, e.g., 65\% of CPU requests, guided by customized ML-based prediction models~\cite{cortez2017resource}. This introduces a risk of overcommitment when actual resource usage exceeds the provisioned physical capacity, typically due to underestimation of ML-based models.
In such scenarios, the incurred cost is better represented by a two-tier cost structure---where only the portion exceeding guaranteed capacity is penalized---rather than a simple function of total usage. The \gbp model captures this behavior by using deterministic bin sizes and assigning a linear \black cost to the portion of resource usage that exceeds the green (safe) threshold. We note that Cohen et al.~\cite{Cohen2019Overcommitment} approached a different but related problem from the angle of extending bin packing by formulating it under the setting of stochastic bin sizes (i.e., where the actual size is unknown at the time of assignment) and aiming to bound the probability of overcommitment.
\subsection{A brief overview of existing algorithms for classic bin packing}
Classic bin packing is a special case of \gbp when $G = 1$ or $\beta = 0$, and thus its goal is just to minimize the total number of opened bins. Suppose the instance $\sigma$ is given in advance, the offline bin packing is known to be strongly NP-Hard \cite{lewis1983michael}, but there exists an APTAS for the offline optimal \cite{Vega81apt}. In the online setting, there exists a lower bound on the competitive ratio of all online bin packing algorithms of $\nicefrac{248}{161}\approx1.54037$ \cite{BALOGH2012lowerBound}, which no algorithm has yet achieved. Many well-known algorithms have been proposed for optimizing the upper bound. 
\begin{itemize}
    \item \NF: Each item is placed in the most recently opened bin if possible, else a new bin is opened.
    \item \WF: Each item is placed into the bin with the lowest fill level that the item can fit in. A new bin is opened if none exist.
    \item \BF: Each item is placed into the bin with the highest fill level that the item can fit in. A new bin is opened if none exist.
    \item \FF: Each item is placed into the earliest opened bin that it fits in. A new bin is opened if none exist.
    \item \HAR: Given a parameter, $K$, the range $(0,1]$ is partitioned into intervals: 
    
    ${(\nicefrac{1}{2}, 1],(\nicefrac{1}{3}, \nicefrac{1}{2}], \ldots, (\nicefrac{1}{K}, \nicefrac{1}{K-1}], (0,\nicefrac{1}{K}]}$. Each item may only be placed into a bin if all other items placed in that bin belong to the same interval, and that bin has capacity for the item. A new bin is opened if none exist.
\end{itemize}

To better classify bin packing strategies, past literature has defined \AF algorithms as those that never open a new bin unless the current item to be packed does not fit into any existing open bin in the packing. Additionally, \AAF algorithms are those which will not pack an item into an open bin with the lowest utilization unless there are multiple such bins or it is the only bin with enough capacity for the item \cite{Johnson74}. Note that of the algorithms we consider, \AF includes \WF, \FF, and \BF, while \AAF only includes \FF and \BF. 

\NF and \WF are both 2-competitive, giving them the worst competitive ratios of this list. Meanwhile, the \AAF algorithms, which includes \FF and \BF, have competitive ratios of 1.7 \cite{Johnson74}. Finally, the performance of \HAR improves as $K$ increases, beating $1.7$ by $K=7$, and approaching $1.69103$ in the limit \cite{Lee1985Har}. 
The best known upper bound is 1.5783 which is achieved by an Advanced \HAR algorithm~\cite{balogh2017new}.
Although the family of \HAR algorithms attain the best known competitive ratio, \HAR performs poorly in practice compared to \FF and \BF, which can be run more quickly and converge to be 1-competitive on uniformly random item sizes \cite{Bentley1984}\cite{Lee1985Har}.

\section{New design challenges and our approaches for \gbp}\label{subsec:approaches}

Compared to classical bin packing, online algorithms for \gbp\ must balance two competing costs: opening new bins and using black space. 
When all bins are filled up to the green threshold~$G$, the algorithm must decide whether to pay a fixed cost~$1$ to open a new bin or a per-unit cost~$\beta$ to continue packing into existing bins. 
This trade-off, determined by the parameters~$G$ and~$\beta$, makes direct adaptation of classical algorithms ineffective---indeed, algorithms that always fill bins to capacity can yield unbounded competitive ratios.

\begin{restatable}{proposition}{classicunbounded}\label{prop:classicUnbounded}
Given $G \in [0, 1)$, the competitive ratio of any $\beta, G$ oblivious algorithm for bin packing (including \NF, \WF, \AAF, \HAR) grows linearly with $\beta$.
\end{restatable}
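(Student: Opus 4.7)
The plan is to exhibit a single family of adversarial instances $\sigma_\epsilon$, indexed by a vanishing item size $\epsilon$, that witnesses an $\Omega(\beta)$ lower bound uniformly for every $(\beta,G)$-oblivious algorithm in the list. Fix $G \in (0,1)$ (the claim is vacuous at $G=0$) and consider $\beta$ large enough that $\beta G > 1$. The instance $\sigma_\epsilon$ consists of $n$ identical items of size $\epsilon$, with $\epsilon$ arbitrarily small and $n$ chosen so that $n\epsilon$ is a large multiple of $G$. The intuition is that an algorithm which does not see $\beta$ or $G$ cannot decide to stop a bin at the green threshold, and so greedily fills every bin to near unit capacity, while \OPT, aware that a freshly opened green region costs $1/G$ per unit whereas black space costs $\beta > 1/G$ per unit, stops each of its bins at fill level $G$.

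First I would upper-bound $\OPT(\sigma_\epsilon)$ by the strategy that opens $\lceil n\epsilon / G \rceil$ bins and packs each up to level $G$ using tiny items, for a total cost of $\lceil n\epsilon / G \rceil$. Next I would lower-bound $\ALG(\sigma_\epsilon)$ by noting that each of \NF, \WF, \FF, \BF, and \HAR refuses to open a new bin whenever an eligible existing bin can still accommodate the current item, and ``eligible'' is judged against the unit capacity precisely because the algorithm does not know $G$. For items of size $\epsilon$, this forces every opened bin (possibly excluding the last) to reach fill level at least $1-\epsilon$, so \ALG opens at least $\lfloor n\epsilon \rfloor$ bins, each contributing cost at least $1 + \beta(1 - G - \epsilon)$.

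Dividing the two bounds and letting $\epsilon \to 0$ with $n\epsilon \to \infty$ yields an asymptotic competitive ratio of at least $G\bigl(1 + \beta(1-G)\bigr) = G + \beta G(1-G)$, which is linear in $\beta$ with positive slope $G(1-G) > 0$. The main obstacle I anticipate is the uniform verification that all five algorithms actually fill their bins to near unit capacity on this tiny-item input. For \NF, \WF, \FF, and \BF this is immediate from the Any-Fit-style test using the unit capacity. \HAR needs the additional observation that all items of size at most $1/K$ fall into the smallest harmonic class and are packed there by a rule whose only termination criterion is again the unit-capacity fit, so the bins devoted to that class behave identically to those of a $(\beta,G)$-oblivious Any-Fit algorithm, completing the uniform lower bound.
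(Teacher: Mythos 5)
Your proposal is correct and matches the paper's own argument: both use an all-sand instance of negligibly small items, observe that any $(\beta,G)$-oblivious algorithm fills bins to unit capacity for cost $S(1+\beta(1-G))$ while an alternative packing stops each bin at level $G$ for cost $S/G$, and conclude a ratio of $G(1+\beta(1-G))$, linear in $\beta$. Your extra care in verifying the fill-to-capacity behavior for each listed algorithm (notably \HAR) and in excluding the degenerate endpoint $G=0$ is a slight refinement of the paper's terser proof, but the approach is the same.
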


To maintain bounded competitive performance, \gbp algorithms need to limit the usage of \black space when $\beta$ is relatively large.
In an extreme case where the sizes of all items are infinitesimal and the number of opened bins is large, the unit-cost of green space is $1/G$ (the cost of opening a bin over the green space capacity), and the unit-cost of \black space is $\beta$. 
Given this, we have the following conjecture: \gbp can be divided into two basic settings based on the value $\beta G$.
\begin{itemize}
    \item When $\beta G \le 1$ (i.e., $\beta \le 1/G$), the unit-cost of \black space is cheaper and thus competitive algorithms should attempt to fill each bin to its capacity, as seen in classic bin packing. 
    \item When $\beta G > 1$ (i.e., $\beta > 1/G$), the unit-cost of \black space is large, and thus competitive algorithms should attempt to fill each bin between $G$ and $1$, the larger $\beta$, the closer to $G$. 
\end{itemize}

In \autoref{sec:fundamentals}, we present results for the offline setting that motivate our online algorithms. \autoref{sec:GBleqOne} covers the setting where $\beta G \le 1$. We find that classic bin-packing algorithms work well and provide new competitive ratios for \gbp. In \autoref{sec:GBgtOne}, we propose and analyze threshold-based algorithms for the case where $\beta G > 1$. We present our experimental results in \autoref{sec:experiments}, where we analyze algorithms under both problem settings across a variety of simulated inputs.

\section{Offline Green Bin Packing}
\label{sec:fundamentals}
In this section, we present a set of \textit{offline} \gbp results: showing an offline lower bound which we consider optimal in our competitive analysis, proving that the problem is NP-hard, and establishing an asymptotic polynomial time approximation scheme (APTAS) for the problem.

\subsection{Lower Bound on Offline Optimum}
Let \textit{small} items be those with size less than $G$ and \textit{large} items with size at least $G$. 

\begin{restatable}{proposition}{smallItemBlackSpace}\label{lem:smallItemBlackSpace}
    When $\beta G > 1$, an optimal solution will place a volume of at most $\nicefrac{1}{\beta}$ small items in bins with large items. Similarly, no bin will contain more than one large item.
\end{restatable}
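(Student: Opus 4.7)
The plan is to prove both statements by a local exchange argument: starting from any optimal packing, I assume the property fails and then construct a strictly cheaper packing, contradicting optimality. The key quantitative ingredient in both arguments is the hypothesis $\beta G > 1$, which makes it cheaper to spend $1$ opening a fresh bin (and paying $\beta(x-G)^{+}$ for any overflow $x$) than to continue occupying black space in a bin that is already above the green threshold.

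For the first claim, I would fix a bin $B$ in \Opt containing a large item of size $\ell \ge G$ and small items of total volume $v$, and assume toward contradiction that $v > 1/\beta$. The modification is to remove every small item from $B$ and re-pack them all into a single fresh bin $B'$. Since $v \le 1 - \ell \le 1$ and each item has size at most $1$, the small items together fit inside $B'$ by volume. Because the remaining fill of $B$ is still $\ell \ge G$, its black-space usage decreases by exactly $v$, producing a saving of $\beta v$, while the new bin costs $1 + \beta(v-G)^{+}$. The net change in the total cost is therefore $1 + \beta(v-G)^{+} - \beta v$. Splitting on whether $v \le G$: if $v \le G$ the change equals $1 - \beta v < 0$ by the assumption $v > 1/\beta$; if $v > G$ the change equals $1 - \beta G < 0$ by the hypothesis $\beta G > 1$. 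Either way \Opt strictly improves, a contradiction, and hence $v \le 1/\beta$.

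For the second claim, I would take a bin $B$ containing two large items $\ell_1, \ell_2 \ge G$ and move $\ell_2$ into a brand-new bin by itself. Because $\ell_1 \ge G$ the remaining fill of $B$ is still at least $G$ after the deletion, so the black-cost drop at $B$ is exactly $\beta \ell_2$, whereas the fresh bin costs $1 + \beta(\ell_2 - G)$ (using $\ell_2 \ge G$). The resulting net change is $1 - \beta G < 0$, again contradicting optimality, so no \Opt bin can contain two large items.

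The main technical subtlety in the first argument is verifying that the small items really do fit together inside the single replacement bin $B'$; this is immediate from the fact that their aggregate volume is at most $1-\ell \le 1$ and that each individual item has size at most $1$, so placing them all into an empty unit-capacity bin never violates the capacity constraint. A secondary check, present in both arguments, is that the bin from which mass is removed stays at fill level at least $G$ afterwards (so the entire removed volume translates into black-space savings rather than a smaller quantity); in both cases this follows because the untouched large item of size at least $G$ remains in $B$.
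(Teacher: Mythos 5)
Your exchange argument is sound and is the same \emph{kind} of argument the paper uses, but you apply it at a different scope than the paper does. You fix a single bin $B$ with a large item and bound the small-item volume $v$ inside that one bin by $1/\beta$. The paper's proof instead defines $\Gamma$ as the set of \emph{all} small items placed across \emph{all} bins that contain a large item, and bounds the aggregate volume $S_\Gamma$ by $1/\beta$, via the same comparison $\beta S_\Gamma \le 1 + \beta(S_\Gamma - G)^{+}$. Since your per-bin bound only gives $v_B \le 1/\beta$ for each such bin $B$, summing over bins yields only $n/\beta$ where $n$ is the number of bins containing a large item, which does not recover the aggregate statement. So if the proposition is read globally --- which is how the paper's own proof reads it, and which is the stronger and more useful claim --- your argument proves strictly less than what is asserted.

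To close the gap you would argue about $\Gamma$ collectively: if $S_\Gamma > 1/\beta$, remove all of $\Gamma$ (or, when $S_\Gamma > 1$, a greedily chosen subset $\Gamma' \subseteq \Gamma$ with $1/\beta < S_{\Gamma'} \le 1$, which exists because every small item in such a bin has size less than $1-G \le 1 - 1/\beta$) and re-pack it into a single fresh bin. Every unit of that volume currently sits in black space of a bin already at level $\ge G$, so the savings is $\beta S_{\Gamma'}$, while the new bin costs $1 + \beta(S_{\Gamma'}-G)^{+}$; exactly your case split ($S_{\Gamma'} \le G$ gives $1 - \beta S_{\Gamma'} < 0$, and $S_{\Gamma'} > G$ gives $1 - \beta G < 0$) then yields the contradiction. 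Your proof of the second claim --- moving a second large item $\ell_2 \ge G$ into its own bin to save $1 - \beta G < 0$ --- is correct and matches the paper's remark that the same argument can be repeated with the second large item playing the role of $\Gamma$.
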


\begin{proof}
    Let $\Gamma$ be the set of all small items in bins with large items, with total volume $S_\Gamma$. Since large items have size at least $G$, the small items completely contribute to \black cost. Since this packing is part of an optimal solution, we know that placing the items of $\Gamma$ in bins with large items is cheaper than placing all of $S_\Gamma$ into a new bin: $\beta S_\Gamma \le 1+ \beta (S_\Gamma-G)^+$. Thus, $S_\Gamma\le \frac{1}{\beta}$. 
    
    This argument can be repeated by setting $\Gamma$ as a second large item in the bin. 
\end{proof}

To analyze upper bounds on the competitive ratios, we rely on a lower bound for the optimal cost of the offline \gbp. 
Let $S := S(\sigma)$ denote the total volume/size of items in the instance $\sigma$, and $\Opt(S)$ denote the minimum cost an optimal algorithm would take given instances that have a combined volume of $S$.

\begin{lemma}\label{OptLB1}
On any instance $\sigma$ with a total volume $S$, $\Opt(\sigma) \ge \min_{L \in (0,1]} \nicefrac{S}{L}(1+\beta(L-G)^+)$.
\end{lemma}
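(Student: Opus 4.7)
The plan is to express OPT's cost as a sum of per-bin contributions, factor out the fill level of each bin, and then bound the resulting ratio by its minimum over feasible fill levels. Concretely, fix any valid packing used by \Opt on $\sigma$ with $N$ opened bins whose final fill levels are $L_1, \dots, L_N \in (0,1]$ (each opened bin is nonempty and respects the capacity constraint), and note that $\sum_{j=1}^N L_j = S$ since every item is packed.

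By definition of the cost function in \autoref{sec:model-preliminaries},
\begin{equation*}
\Opt(\sigma) \;=\; N + \sum_{j=1}^N \beta(L_j - G)^+ \;=\; \sum_{j=1}^N \bigl(1 + \beta(L_j - G)^+\bigr) \;=\; \sum_{j=1}^N L_j \cdot \frac{1 + \beta(L_j - G)^+}{L_j}.
\end{equation*}
Define $f(L) := \frac{1 + \beta(L-G)^+}{L}$ for $L \in (0,1]$ and let $L^\star \in (0,1]$ attain $\min_{L \in (0,1]} f(L)$ (the minimum exists because $f$ is continuous on $(0,1]$ and $f(L)\to\infty$ as $L \to 0^+$, so the infimum is realized on a closed subinterval bounded away from $0$). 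Then $f(L_j) \ge f(L^\star)$ for every $j$, so
\begin{equation*}
\Opt(\sigma) \;\ge\; \sum_{j=1}^N L_j \cdot f(L^\star) \;=\; f(L^\star) \cdot S \;=\; \frac{S}{L^\star}\bigl(1 + \beta(L^\star - G)^+\bigr) \;=\; \min_{L \in (0,1]} \frac{S}{L}\bigl(1 + \beta(L-G)^+\bigr),
\end{equation*}
which is the desired inequality.

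There is no real obstacle here; the only subtlety is being precise about the domain $(0,1]$ and noting that the minimum is attained (so the RHS is well-defined), but the inequality itself would hold identically with an infimum. The intuition behind the bound is that $f(L)$ is exactly the ``cost per unit volume'' of a bin filled to level $L$, and any actual packing spends at least the cheapest possible per-unit-volume cost on each of its $S$ units of volume. This lemma then serves as a universal lower bound against which the upper-bound analyses of the threshold-based algorithms in \autoref{sec:GBleqOne} and \autoref{sec:GBgtOne} can be compared.
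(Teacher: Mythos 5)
Your proof is correct. It reaches the same bound as the paper but by a slightly different route: the paper argues that, for a fixed number of opened bins, the cost is minimized when all bins are filled to a common level (an equalization step it states informally, and which implicitly needs a convexity argument for $\beta(L-G)^+$ plus the observation $N \ge S$), and then minimizes over that common level. You instead bound each bin individually: writing $\Opt(\sigma)=\sum_j L_j \cdot f(L_j)$ with $f(L)=\frac{1+\beta(L-G)^+}{L}$ the cost per unit volume, and replacing each $f(L_j)$ by $\min_{L\in(0,1]} f(L)$, using $\sum_j L_j = S$. This per-bin density argument is more direct and fully rigorous for the inequality as stated --- it needs no claim about equal fill levels being achievable or optimal, only that every opened bin has fill level in $(0,1]$ and that the minimum of $f$ is attained (which you justify). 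What the paper's phrasing buys instead is the accompanying discussion of when equality holds (namely when all items are small enough that every bin can be filled to the minimizing level $L$), which your argument also supports but does not spell out; since the lemma only asserts the lower bound, nothing is missing from your proof.
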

\begin{proof}
    An input of volume $S$ requires at least $S$ bins. Given a fixed number of bins in a packing, cost is minimized when each bin is filled to the same level, as this minimizes the usage of \black space. Therefore, the minimum cost of an optimal solution is equal to the minimization across fill levels for the sum of the number of bins and the total \black cost. 
\end{proof}

In the above statement, the equality only holds when the items of $\sigma$ allow for filling each bin to the same level $L$, for all $0  < L \leq 1$. This requires all items to be of sufficiently small size. Thus, we can say that for fixed $S$, $\Opt$ reaches minimum cost on inputs where each individual item is negligibly small.

\begin{theorem}[Lower Bound of Offline \gbp]
\label{thm:OptLB} Consider an input $\sigma$ with total volume $S$. When $\beta G \leq 1$, the lower bound is $ \Opt(\sigma) \geq S + S\beta(1-G)$, with equality when $\Opt$ fills each bin exactly to $1$. When $\beta G > 1$, the lower bound is $\Opt(\sigma) \geq \nicefrac{S}{G}$, with equality when $\Opt$ fills each bin exactly to $G$.
\end{theorem}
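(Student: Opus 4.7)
The plan is to directly minimize the function $f(L) := \frac{S}{L}\bigl(1+\beta(L-G)^+\bigr)$ supplied by \autoref{OptLB1} over $L \in (0,1]$, and then verify the two equality conditions. The piecewise structure of $f$ naturally splits the analysis at $L = G$: on $(0, G]$ we have $f(L) = S/L$, which is strictly decreasing and attains its minimum $S/G$ at the right endpoint $L = G$. On $(G, 1]$, expanding the positive part gives $f(L) = \frac{S(1-\beta G)}{L} + S\beta$, whose monotonicity in $L$ is dictated entirely by the sign of $1 - \beta G$.

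In the regime $\beta G \leq 1$, the coefficient $1 - \beta G$ is nonnegative, so $f$ is non-increasing on $(G, 1]$ and its minimum on this interval is at $L = 1$, with value $f(1) = S + S\beta(1-G)$. A short algebraic comparison (clearing the $1/G$) shows $S + S\beta(1-G) \leq S/G$ exactly when $\beta G \leq 1$, so $L = 1$ is the global minimizer on $(0,1]$ and the first bound follows. In the regime $\beta G > 1$, the coefficient $1 - \beta G$ is strictly negative, so $f$ is strictly increasing on $(G, 1]$; its infimum on this interval is approached as $L \to G^+$ and matches the value $f(G) = S/G$ from the first region. Hence $L = G$ is the global minimizer and $\Opt(\sigma) \geq S/G$.

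For the two equality claims, I would note that the proof of \autoref{OptLB1} produces an \emph{exact} equation whenever $\Opt$'s packing uses bins that are all filled to a common level $L$. Thus, if the instance permits filling every bin to exactly $L^* = 1$ (when $\beta G \leq 1$) or exactly $L^* = G$ (when $\beta G > 1$), the associated cost is precisely $f(L^*)$, which we have just shown equals the lower bound. A feasible construction of such a packing is available, for example, whenever item sizes can be partitioned into groups summing to $L^*$, which covers the ``divisible instance'' case stated in the theorem.

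The only mild obstacle is bookkeeping on the boundary $\beta G = 1$, where $f$ is constant on $[G, 1]$ and both $L = G$ and $L = 1$ are minimizers. Fortunately, the two formulas agree there, $S/G = S + S\beta(1-G)$, so the two cases of the theorem match continuously and no ambiguity is introduced. Beyond this, the argument is a routine minimization of a piecewise-rational function, so no substantial technical difficulty remains.
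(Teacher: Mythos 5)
Your proposal is correct and follows essentially the same route as the paper: both directly minimize the expression from \autoref{OptLB1} by splitting at $L=G$ and using the sign of $1-\beta G$ to determine whether the minimum sits at $L=1$ or $L=G$. Your version simply spells out a few details the paper leaves implicit (the explicit comparison $S+S\beta(1-G)\le S/G$ when $\beta G\le 1$, the boundary case $\beta G=1$, and the equality conditions), which is fine but not a different argument.
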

\begin{proof}
We will perform the minimization in \autoref{OptLB1}. We can first see that as $L \rightarrow G-$, the expression is $\nicefrac{S}{L}$ and is decreasing. Meanwhile, as $G < L \rightarrow 1-$, the expression becomes $S\beta+\nicefrac{S}{L}(1-\beta G)$. Thus, it will continue decreasing if $\beta G <1$ and will increase if $\beta G > 1$. Therefore, we find that expression is minimized at $S+S\beta(1-G)$ when $\beta G \le 1$, and at $\nicefrac{S}{G}$ when $\beta G > 1$. 
\end{proof}

\subsection{APTAS for offline \gbp}
We show that the offline \gbp is NP-hard for any value of $\beta >0$ and $G \in [0,1]$. Our hardness result also rules out any (absolute) polynomial-time approximation scheme (PTAS). We can show, however, that the problem admits an \textit{asymptotic} polynomial-time approximation scheme (APTAS). These results are in line with classic bin packing.

\begin{restatable}{theorem}{apxhard}\label{thm:np-hard}
The offline \gbp is NP-hard for any $\beta > 0$ and $G \in [0,1]$. Furthermore, unless $P = NP$, there exists no polynomial-time algorithm that approximates the optimal solution with an absolute approximation ratio smaller than $\frac{1.5 + \beta (1 - G)}{1 + \beta (1 - G)}$.
\end{restatable}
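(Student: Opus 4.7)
The plan is to prove NP-hardness via a polynomial-time reduction from the classical \textsc{Partition} problem, with the same construction yielding the claimed inapproximability through a gap argument. Given a \textsc{Partition} instance $(a_1,\dots,a_n)$ with $\sum_i a_i = 2B$, I would scale the items to $a_i/B \in (0,1]$, so the total volume is exactly $2$, and use the given $G$ and $\beta$.

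For the YES case, a perfect partition packs the items into two bins each at fill $1$, giving cost $C_Y = 2(1+\beta(1-G))$. This matches the volume-based lower bound of \autoref{thm:OptLB} when $\beta G \le 1$, and I would verify it against alternative packings with more bins (relevant when $\beta G > 1$) to confirm optimality. For the NO case, any two-bin packing would force some bin above capacity $1$, so at least three bins must be opened. Combining this integrality constraint with a careful accounting of the minimum black-space usage across $\ge 3$-bin packings of the scaled items yields a lower bound $C_N$ on \Opt. The hardness-of-approximation then follows from the standard gap argument: any polynomial-time algorithm with absolute ratio strictly smaller than $C_N/C_Y$ would distinguish YES from NO instances of \textsc{Partition}, contradicting $\mathrm{P}=\mathrm{NP}$.

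The hard part will be establishing a lower bound $C_N$ tight enough to match the claimed ratio $\frac{1.5+\beta(1-G)}{1+\beta(1-G)}$ uniformly across $(\beta,G)$. The edge case $G=0$ is immediate, since every unit of volume is pure black space and $C_N = 3+2\beta$, $C_Y = 2+2\beta$ give the claimed ratio directly. For $G>0$, a naive three-bin analysis only yields $C_N \ge 3 + \beta(2-3G)^+$, which is weaker than the required $3 + 2\beta(1-G)$ because three-bin packings can distribute volume freely between green and black regions. Closing this gap likely requires refining the reduction---for instance by restricting to \textsc{Partition} inputs whose scaled items all lie in $(G,1]$, so every opened bin necessarily contributes black cost, or by augmenting with small \emph{sand} items that force every packing into three or more bins to fill each bin to near capacity. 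A separate case split between the regimes $\beta G \le 1$ and $\beta G > 1$ (where the optimal fill level shifts from $1$ down to $G$, as in \autoref{thm:OptLB}) will likely be needed to unify the argument, and I expect this regime-dependent refinement to be the principal technical challenge.
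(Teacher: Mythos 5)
There is a genuine gap. Your core construction---scaling the \textsc{Partition} items so that their total volume is $2$ and comparing a two-bin packing (YES) against a three-bin packing (NO)---does not control the structure of \Opt once $\beta G>1$ (and you correctly sense this, but the fixes you sketch do not close it). In that regime \Opt may prefer to spread the volume over many bins filled to roughly $G$ rather than two bins filled to $1$: e.g.\ with $G=\nicefrac12$, $\beta=10$, the ``YES'' packing costs $2+2\beta(1-G)=12$ while four bins filled to $\nicefrac12$ cost $4$, so the optimal value of your instance is no longer tied to whether \textsc{Partition} is solvable, and even NP-hardness (let alone the gap $C_N/C_Y=\frac{1.5+\beta(1-G)}{1+\beta(1-G)}$) does not follow. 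Your proposed repairs are speculative: restricting to \textsc{Partition} inputs whose scaled items all exceed $G$ does not preserve hardness of the source problem (for $G$ close to $1$ at most two such items fit in total volume $2$), and the ``sand items'' idea is not worked out---indeed, adding sand that must fill every bin to near capacity is itself what needs to be enforced, and sand can always be diverted to fresh bins at cost $\approx \nicefrac1G$ per unit volume when $\beta G>1$.

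The paper's reduction inverts your roles of large and small items precisely to avoid this. It adds two \emph{anchor} items of size $1-\epsilon$ and scales the \textsc{Partition} items down so their total size is only $2\epsilon$, with $\epsilon<\min\{\nicefrac12,\,\nicefrac{1}{2\beta},\,1-G\}$. The two anchors force exactly two (nearly full) bins whose black cost $2\beta(1-\epsilon-G)$ is incurred regardless of the regime, so the only question is whether the tiny items fit into the remaining space of $2\epsilon$, which happens iff \textsc{Partition} has a solution. This pins the optimum at $2(1+\beta(1-G))$ for YES instances and at least $2(1.5+\beta(1-\epsilon-G))$ for NO instances; the condition $\epsilon<\nicefrac{1}{2\beta}$ guarantees the NO cost strictly exceeds the YES cost, and letting $\epsilon\to 0$ gives the claimed inapproximability ratio uniformly in $\beta$ and $G$, with no case split between $\beta G\le 1$ and $\beta G>1$. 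If you want to salvage your write-up, replacing your volume-$2$ instance by this anchored construction is the missing idea.
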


\begin{restatable}{theorem}{aptas}\label{thm:aptas}
    For any $\beta > 0$ and $G \in [0,1]$, and any input instance $\sigma$ of the offline \gbp, there exists an APTAS with cost at most $(1+\epsilon)\Opt(\sigma)$, for $\epsilon> 0$.
\end{restatable}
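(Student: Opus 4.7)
The plan is to adapt the classical de la Vega--Lueker APTAS for bin packing by enriching the configuration space with the two-tier cost of \gbp. Fix $\epsilon > 0$. First I would separate items into \emph{small} (size at most $\epsilon$) and \emph{large} (size greater than $\epsilon$). Since a bin holds at most $1/\epsilon$ large items, I apply linear grouping to the sorted list of large sizes to reduce them to $K = O(1/\epsilon^2)$ distinct values. The rounded instance admits a configuration ILP: a configuration is a multiset of the $K$ rounded sizes that fits in a unit-capacity bin, tagged with its cost $1 + \beta(\ell - G)^+$ where $\ell$ is the total fill of the configuration. The number of configurations depends only on $\epsilon$, so the ILP has constantly many variables and is solved in polynomial time via Lenstra's algorithm. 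After packing the rounded large items, I would greedily insert the small items, filling each bin up to a target level $T$ before opening a new bin; here $T = 1$ when $\beta G \le 1$ and $T = G$ when $\beta G > 1$, matching the two regimes of \autoref{thm:OptLB}.

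The key steps of the analysis are: (i) show that linear grouping inflates the large-items-only subproblem cost by at most a $(1+\epsilon)$ factor, via the standard sandwich in which the rounded instance's optimum is upper-bounded by a shifted copy of the original optimum at the expense of only a constant number of extra bins that are absorbed asymptotically; (ii) bound the configuration ILP value by the cost of any feasible integral packing of the rounded large items, in particular the projection of \Opt onto those items; (iii) bound the overhead of small-item reinsertion. For step (iii), I would invoke the corresponding case of \autoref{thm:OptLB}: a total small-item volume $S_s$ contributes at least $S_s(1+\beta(1-G))$ to \Opt when $\beta G \le 1$ and at least $S_s/G$ when $\beta G > 1$, whereas greedy insertion with target $T$ wastes at most $\epsilon$ volume per bin receiving small items, so the incremental cost is within $(1 + O(\epsilon))$ of this contribution plus at most one extra bin.

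The main obstacle will be the small-item reinsertion step when $\beta G > 1$, because some bins produced by the large-items ILP may already be filled above $G$, so any small item placed into them contributes pure black cost. I would handle this by distinguishing ``under-green'' bins (fill below $G$), which absorb small items up to $G$ for free, from bins already above $G$, which the algorithm avoids unless doing so forces opening an almost-empty new bin. The tight accounting relies on \autoref{lem:smallItemBlackSpace}, which limits the small-item mass that \Opt places above $G$ alongside large items, matching the black-space slack that the APTAS can afford when reinserting small items. This parity between \Opt's per-bin black budget and the APTAS's waste is what ultimately closes the $(1+\epsilon)$ guarantee in both regimes, and degenerates cleanly to the classical de la Vega--Lueker analysis in the $\beta \to 0$ limit.
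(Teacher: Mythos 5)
Your high-level plan (linear grouping and a constant-configuration optimization for the large items, followed by greedy reinsertion of small items, with the regime split at $\beta G = 1$) matches the skeleton of the paper's Step~II/Step~III. The $\beta G \le 1$ branch is plausibly fixable and roughly parallels what the paper does, though you would still need the paper's three-way case analysis (in particular the case where no new bin is opened but some small items land in black space, which requires comparing bin counts $m'$ and $m$ and a careful choice of $\delta$).

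The $\beta G > 1$ branch, however, has a genuine gap that no amount of care in the small-item placement step can repair. The configuration ILP as you define it minimizes $\sum_j \bigl(1+\beta(\ell_j - G)^+\bigr)$ over packings of the large items \emph{in isolation}. That objective is blind to the green slack that the small items could later exploit, and it can therefore commit to a packing that is a constant factor away from optimal once the small items are added. Concretely, take $G = \tfrac12$, $\beta = 10$, $n$ large items of size $s = 0.29$, and total small volume $S = (G-s)\,n = 0.21n$. Packing two large items per bin costs $1+\beta(2s-G) = 1.8$ per bin, i.e.\ $0.9$ per large item, versus $1$ per item for singletons, so the ILP places two per bin and produces $n/2$ bins at level $0.58 > G$. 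Your greedy then avoids these bins (correctly, since $\beta = 10 > 1/G$) and packs the small items into $\lceil S/G\rceil = 0.42n$ fresh bins, for total cost $\approx 1.32n$. The true optimum instead places one large item per bin and tops each up to level $G$ with small items, for cost $n$. The ratio is $\approx 1.32$, which is not $(1+\epsilon)$ for any $\epsilon < 0.32$, and by pushing $s \to (G + 1/\beta)/2$ and $\beta G \to \infty$ the ratio of this family of instances approaches $3/2$. Your invocation of \autoref{lem:smallItemBlackSpace} does not rescue this: that lemma bounds how much small volume \Opt{} places in black space, and in the hard instance \Opt{} uses no black space for small items at all --- the damage is entirely due to the ILP's large-item choice, which happened before any small item was seen.

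The paper sidesteps exactly this issue for $\beta G > 1$: rather than running the APTAS on large items only and then reinserting small items, it groups the small items into \emph{bundles} of total size in $[\delta, 2\delta)$ and runs the Step~II APTAS on the combined instance (bundles plus large items). The optimization then jointly decides how many bins to devote to single large items with green slack versus paired large items, so the tradeoff in the example above is resolved correctly. If you want to stay close to your ILP formulation, the corresponding fix is to add a ``sand'' resource: either include bundle-sized pseudo-items as an extra rounded size, or add a continuous variable per configuration tracking how much small volume it absorbs, with the constraint that the total absorbed plus the leftover (priced at $1/G$ per unit) matches the small-item supply. Without one of these fixes, the $\beta G > 1$ branch of the proposal does not yield a $(1+\epsilon)$ guarantee.
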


Both results align with the classical bin packing problem. However, to establish the APTAS in \autoref{thm:aptas}, the proof requires additional modifications. In particular, similar to the classical APTAS proof for bin packing, we can show that an APTAS exists when all items are larger than a threshold $\delta$ (for some $\delta \in (0,1)$).%
\shahinadd{This is achieved by first applying a rounding scheme similar to that of de~la~Vega and~Lueker~\cite{Vega81apt}, which groups item sizes into a constant number of classes and replaces each item with the upper bound of its class. 
After rounding, only a constant number of distinct item sizes remain, enabling an enumeration of all feasible bin configurations and the selection of the best packing in polynomial time.}
To handle items smaller than $\delta$, the classical bin packing approach uses a greedy strategy: assign each small item to the first bin it fits in, opening a new bin only when necessary. This retains the APTAS guarantee.
In contrast, the offline \gbp requires more careful handling of small item assignments due to an additional trade-off between opening a new bin and utilizing \black space. We address this by proposing different assignment strategies for the cases $\beta G \le 1$ and $\beta G > 1$, respectively, and show that the APTAS guarantee is preserved in both cases.
See \autoref{Appendix-Offline} for detailed proofs of \autoref{thm:np-hard} and \autoref{thm:aptas}.

\section{Green Bin Packing when $\beta G\leq 1$}
\label{sec:GBleqOne}

As discussed in \autoref{subsec:approaches}, when $\beta G \le 1$ the \black space is cheap, and thus, intuitively, an algorithm should avoid opening a new bin and place items into \black space if possible. This is also aligned with the lower bound of \Opt in \autoref{thm:OptLB}, where $\Opt(\sigma)$ is minimized when filling each bin to its capacity. \shahinadd{To formalize this intuition, we define a family of algorithms parameterized by a threshold $\tau \in [0, 1-G]$. 
Each behaves like its classical counterpart (e.g., \NF, \WF, \AAF) but treats bins as having an effective capacity of $G+\tau$: 
items are packed up to this level, and larger items are placed alone in a bin. 
These algorithms are the main focus of our study for $\beta G > 1$. 
When $\beta G \le 1$, however, increasing $\tau$ always improves performance, so it is best to take $\tau = 1-G$. }

{
\begin{restatable}{theorem}{fillToCapacity}\label{thm:fillToCapacity}
When $\beta G \le 1$, any threshold-based online algorithm with threshold $\tau$ has a competitive ratio no smaller than the same algorithm with threshold $\tau + \epsilon$. %
\end{restatable}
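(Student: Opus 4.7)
The plan is to identify a single density quantity whose monotonicity in $\tau$ controls the competitive ratio, and to verify that this quantity is non-increasing exactly when $\beta G \le 1$. For any threshold-based algorithm, bins that do not contain an oversized item are filled to level at most $G+\tau$, so each such bin holds volume at most $G+\tau$ while contributing cost at most $1+\beta\tau$. In the asymptotic (large-instance) regime in which the competitive ratio is defined, the cost of $\text{Alg}_\tau$ on a worst-case stream is therefore proportional to $S \cdot (1+\beta\tau)/(G+\tau)$, up to an algorithm-specific multiplicative constant (e.g., roughly $2$ for \TNF, close to $1$ for \TAAF and \GHAR) that does not itself depend on $\tau$. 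Combined with the \OPT lower bound $S\,(1+\beta(1-G))$ from \autoref{thm:OptLB}, the competitive ratio is bounded by a $\tau$-independent constant times $\dfrac{(1+\beta\tau)/(G+\tau)}{1+\beta(1-G)}$.

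The key step is a one-line calculation: the derivative of $(1+\beta\tau)/(G+\tau)$ with respect to $\tau$ equals $(\beta G - 1)/(G+\tau)^2$, which is non-positive precisely when $\beta G \le 1$. Hence the density bound is non-increasing in $\tau$ throughout the regime of interest, and the claim that the competitive ratio of $\text{Alg}_\tau$ is at least that of $\text{Alg}_{\tau+\epsilon}$ follows once we establish that both ratios are actually controlled by this same bound rather than merely upper-bounded by it.

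The main obstacle is that pointwise dominance $\text{Alg}_{\tau+\epsilon}(\sigma) \le \text{Alg}_\tau(\sigma)$ can fail on short instances---a larger threshold may route an early item into a bin that a few steps later forces more black space than if the item had started a new bin---so the proof cannot simply compare costs item-by-item. The resolution is asymptotic: for each $\tau$ we construct a matching adversarial stream (items sized to just saturate level $G+\tau$, padded with negligibly small fillers) that realizes the density bound up to lower-order additive terms. The most delicate piece is verifying that the algorithm-specific multiplicative constant is genuinely $\tau$-independent, which requires replaying each algorithm's classical worst-case analysis with effective capacity $G+\tau$ in place of $1$ and checking that the combinatorial structure of the bad input (alternating sizes for \TNF, size classes for \GHAR, and so on) scales uniformly in $\tau$. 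With that in hand, the non-increasing density ratio yields the theorem.
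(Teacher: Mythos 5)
Your central premise---that the worst-case cost of a threshold-$\tau$ algorithm is $S\cdot\frac{1+\beta\tau}{G+\tau}$ times an algorithm-specific constant that does not depend on $\tau$, so that the competitive ratio inherits the monotonicity of $\frac{1+\beta\tau}{G+\tau}$---does not hold, and the ``delicate piece'' you defer is exactly where it breaks. The quantity $\frac{1+\beta\tau}{G+\tau}$ is the cost density only of bins filled with sand up to the effective capacity; the actual worst-case inputs for these algorithms mix several adversarial patterns whose $\tau$-dependence is not separable in this way. For instance, the worst case for \WF uses bins filled to half the effective capacity, where no black space is used at all, so the numerator $1+\beta\tau$ is simply wrong for that pattern; and the paper's own tight bound for \TNF is a maximum of three terms, $\max\{\frac{G(1+\tau\beta)}{G+\tau},\frac{G(2+\tau\beta)}{G+2\tau},\frac{2+\tau\beta}{1+\tau\beta}\}$, none of which is a $\tau$-independent multiple of $\frac{1+\beta\tau}{G+\tau}$. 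Beyond this, your route requires \emph{exact} (matching upper and lower) characterizations of the competitive ratio as a function of $\tau$ for every algorithm---you concede this yourself---which is far more than the theorem asserts, is not available even in the paper (the \AAF and \HAR bounds are only nearly tight), and in any case could only cover the named algorithms, whereas the statement is about \emph{any} threshold-based online algorithm. The derivative computation $\frac{d}{d\tau}\frac{1+\beta\tau}{G+\tau}=\frac{\beta G-1}{(G+\tau)^2}$ is correct and captures the right intuition, but it is not by itself a proof.

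The missing idea is a direct input-transformation argument that sidesteps computing any competitive ratio. The paper takes a worst-case input $\sigma$ for the algorithm $A^{+}$ with threshold $\tau+\epsilon$, sets $\gamma=\frac{\tau}{\tau+\epsilon}$, and feeds the algorithm $A$ with threshold $\tau$ a new input $\sigma'$ consisting of sand of total volume $X=(1-\gamma)s(\sigma)$ followed by a copy of $\sigma$ scaled by $\gamma$. Because $s(\sigma')=s(\sigma)$, one has $\Opt(\sigma')\le\Opt(\sigma)$, while $A$ pays $\frac{(1+\beta\tau)X}{G+\tau}$ for the sand and then mirrors the packing of $A^{+}$ on $\sigma$, losing at most $\beta X$ in black-space cost on the scaled items; hence $A(\sigma')\ge A^{+}(\sigma)+\frac{(1-\beta G)X}{G+\tau}\ge A^{+}(\sigma)$, where the hypothesis $\beta G\le 1$ enters exactly once, in the sign of the slack term. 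Dividing gives $\CR(A)\ge A(\sigma')/\Opt(\sigma')\ge A^{+}(\sigma)/\Opt(\sigma)=\CR(A^{+})$ for any threshold-based algorithm, with no tightness assumptions anywhere. To repair your proposal you would need either this kind of per-instance reduction or genuinely tight $\tau$-parameterized bounds for the whole class, and the latter is not a realistic path.
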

}
In light of the above theorem, we propose to use classical bin packing algorithms: attempt to fill out bins as much as possible before opening a new bin. 
In particular, we focus on the following algorithms: \NF, \WF, \FF, \BF, and \HAR. While these algorithms remain the same as the classic bin packing algorithms, their competitive analysis, however, is more intricate due to the new cost structure of \gbp. In addition to these specific algorithms, we provide a general lower bound on the competitive ratio of any deterministic online algorithm for \gbp when $\beta G \le 1$. We present our main results in \autoref{subsec:results1} and then provide the detailed proofs in \autoref{subsec:results2}.

\subsection{Summary of results for $\beta G \le 1$}
\label{subsec:results1}
We first establish tight bounds on the competitive ratio of \NF and \WF on \gbp. 

\begin{restatable}{theorem}{nextfit}\label{thm:nextfit}
    Given $\beta G \leq 1$, the competitive ratio of \NF for \gbp is exactly $\frac{2+ \beta (1- G)}{\beta + (1-\beta G)}$.
\end{restatable}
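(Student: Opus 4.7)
The plan is to establish matching upper and lower bounds on the asymptotic competitive ratio. The lower bound on $\Opt(\sigma)$ from \autoref{thm:OptLB}, namely $\Opt(\sigma)\ge S(1+\beta(1-G))$ when $\beta G\le 1$ (with $S$ the total item volume), plays the role of the denominator; the numerator is handled by combining a simple per-bin cost inequality with the classical \NF pairing argument, and a matching adversarial sequence establishes tightness.

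For the upper bound, I would proceed in three steps. First, establish the pointwise inequality $(L-G)^+\le(1-G)L$ for any $L\in(0,1]$, by splitting on $L\le G$ (LHS is $0$ while RHS is non-negative) and $L>G$ (use $L\le 1$ to get $L-G\le L-GL=(1-G)L$). Second, apply this bin by bin to obtain $\Alg(\sigma)=\sum_{j=1}^N[1+\beta(L_j-G)^+]\le N+\beta(1-G)S$, where $N$ is the number of opened bins and $L_j$ are their fill levels. Third, invoke the standard \NF pairing bound: since the first item placed in bin $j+1$ did not fit into bin $j$, consecutive fill levels satisfy $L_j+L_{j+1}>1$, and summing over disjoint pairs of bins gives $N<2S+1$. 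Combining yields $\Alg(\sigma)<S(2+\beta(1-G))+1$; dividing by the $\Opt$ lower bound and taking $S\to\infty$ produces the asymptotic bound $(2+\beta(1-G))/(1+\beta(1-G))$, which coincides with $(2+\beta(1-G))/(\beta+(1-\beta G))$.

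For the matching lower bound, I would feed \NF the alternating sequence $(1-\delta,\,2\delta,\,1-\delta,\,2\delta,\,\ldots)$ of $n$ large/small pairs, for sufficiently small $\delta>0$. Direct simulation shows that each small item forces a fresh bin (since $1-\delta+2\delta>1$), so \NF opens $2n$ bins whose fill levels alternate between $1-\delta$ and $2\delta$; as $\delta\to 0$, this contributes $n(2+\beta(1-G))$ in total cost. In contrast, $\Opt$ places each large item in its own bin at fill $1-\delta$ and collects all $n$ small items into a single additional bin, incurring cost $n(1+\beta(1-G))+O(1)$ in the limit. Sending $\delta\to 0$ and then $n\to\infty$ drives the ratio to match the upper bound, establishing equality.

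The main obstacle is ensuring that the per-bin inequality, which is strict for bins whose fill lies in the open interval $(G,1)$, still yields a bound that is tight in the worst case. This is resolved by designing the lower-bound instance so that its bins concentrate either near full (fill $\approx 1$, where the inequality binds) or near empty (fill $\approx 0$, where both sides vanish); the $O(1)$ additive slack in both $\Alg$ and $\Opt$ is then absorbed by the asymptotic definition of competitive ratio.
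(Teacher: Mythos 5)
Your proposal is correct and follows essentially the same route as the paper: the upper bound combines the \NF pairing bound $N\le 2S$ with the observation that black space is at most a $(1-G)$ fraction of packed volume (you state this per bin as $(L-G)^+\le(1-G)L$, the paper states it in aggregate), then divides by the $\Opt\ge S(1+\beta(1-G))$ bound from \autoref{thm:OptLB}. Your lower-bound instance $(1-\delta,2\delta)^n$ is just a slight variant of the paper's $(1,\epsilon)^n$ and yields the same matching ratio.
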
 

\begin{restatable}{theorem}{worstfit}\label{thm:worstfit}
    Given $\beta G \leq 1$, the competitive ratio of \WF for \gbp is exactly $\frac{2 + \beta (1- 2G)^+}{\beta + (1-\beta G)}$.
\end{restatable}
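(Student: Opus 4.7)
The plan is to establish matching upper and lower bounds by leveraging a structural property of \WF that is stronger than the one used in classical bin packing. Specifically, in \gbp with $\beta G \le 1$ (where by \autoref{thm:fillToCapacity} we set $\tau = 1-G$ so \WF fills bins up to capacity $1$), I claim that for every pair of opened bins $i \ne j$, the final fills satisfy $L_i + L_j > 1$. The argument is that when an item $a_j^\ast$ forces a new bin $j$ to open, every already-open bin must have fill strictly greater than $1 - a_j^\ast$ at that moment; since fills only grow, for any $i < j$ we get $L_i > 1 - a_j^\ast$ and $L_j \ge a_j^\ast$. As a corollary, \emph{at most one} bin has final fill $\le \tfrac{1}{2}$.

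For the upper bound, this structure yields two inequalities: (i) the classical $N < 2S + O(1)$, and (ii) a lower bound on the used green volume, $\sum_j \min\{L_j, G\} \ge (N-1)\min\{G, \tfrac{1}{2}\}$, since all but the exceptional bin have fill above $\min\{G,\tfrac{1}{2}\}$. Hence the black cost satisfies $B = S - \sum_j \min\{L_j, G\} \le S - (N-1)\min\{G, \tfrac{1}{2}\}$. Combining these and using that $1 - \beta\min\{G,\tfrac{1}{2}\} \ge 0$ throughout the regime $\beta G \le 1$ (so the substitution $N \le 2S + O(1)$ has the right sign),
\begin{align*}
\WF(\sigma) \;=\; N + \beta B \;\le\; N\bigl(1 - \beta\min\{G,\tfrac{1}{2}\}\bigr) + \beta S + O(1) \;\le\; S\bigl(2 + \beta(1-2G)^+\bigr) + O(1),
\end{align*}
where I invoke the identity $1 - 2\min\{G,\tfrac{1}{2}\} = (1-2G)^+$. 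Dividing by the offline lower bound $\Opt(\sigma) \ge S\bigl(1 + \beta(1-G)\bigr)$ from \autoref{thm:OptLB} gives the target ratio.

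For the matching lower bound, I would use the alternating adversarial sequence $0.5,\epsilon,0.5,\epsilon,\ldots$ with $n$ copies of each size. A quick trace shows \WF opens $n$ bins, each containing one $0.5$-item and one $\epsilon$-item and thus ending with fill $\approx \tfrac{1}{2}$, whereas \Opt pairs the $0.5$-items two to a bin (fill $1$) and places all $\epsilon$-items into one extra bin. A direct computation then confirms $\WF/\Opt \to \frac{2 + \beta(1-2G)^+}{1 + \beta(1-G)}$ as $n \to \infty$.

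The main technical hurdle is establishing the all-pairs inequality $L_i + L_j > 1$ (rather than the weaker consecutive-pair bound used for \NF) and, when converting this into a bound on black-space usage, ensuring that the ``exceptional'' under-half-full bin is charged only to an $O(1)$ additive term. Once one notices the unifying identity $1 - 2\min\{G,\tfrac{1}{2}\} = (1-2G)^+$, the $G \le \tfrac{1}{2}$ and $G > \tfrac{1}{2}$ regimes collapse into a single line of analysis, with the $G > \tfrac{1}{2}$ regime profiting not from $L_j > G$ but only from the weaker $L_j > \tfrac{1}{2}$.
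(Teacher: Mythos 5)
Your proof is correct and follows essentially the same route as the paper's: the same adversarial sequence $(\nicefrac{1}{2},\epsilon)^n$ for the lower bound, and an upper bound built from the facts that all but one \WF bin is filled above $\nicefrac{1}{2}$ and that $\Opt(\sigma)\ge S(1+\beta(1-G))$. The only difference is cosmetic bookkeeping: you merge the $G\le\nicefrac{1}{2}$ and $G>\nicefrac{1}{2}$ cases via $\min\{G,\nicefrac{1}{2}\}$ and the identity $1-2\min\{G,\nicefrac{1}{2}\}=(1-2G)^+$, whereas the paper handles them separately (using a mediant inequality for $G>\nicefrac{1}{2}$).
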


These results show a clear separation between the competitive performance of \NF and \WF. As seen in \autoref{fig:GBltOne_optimal_cr}, the competitive ratio for \WF is consistently smaller than \NF. The worst-case inputs for both algorithms show that this separation is entirely due to the distribution of black space, not the number of bins opened. In both cases, the algorithm will open twice as many bins as \Opt. However, \NF will waste more green space (use more \black space) than \WF does, resulting in a higher cost. This is a clear departure from classic bin packing, where both algorithms have a competitive ratio of $2$. In \gbp, the bound of $2$ is reached when $G=1$; for all $G<1$, the bounds of \NF and \WF in \gbp are strictly below $2$. Additionally, as we see that for both algorithms, competitive ratio decreases as $\beta G$ approaches $1$, i.e., $\beta G \rightarrow 1^-$. 

\begin{figure}
    \centering
    \includegraphics[width=0.99\linewidth]{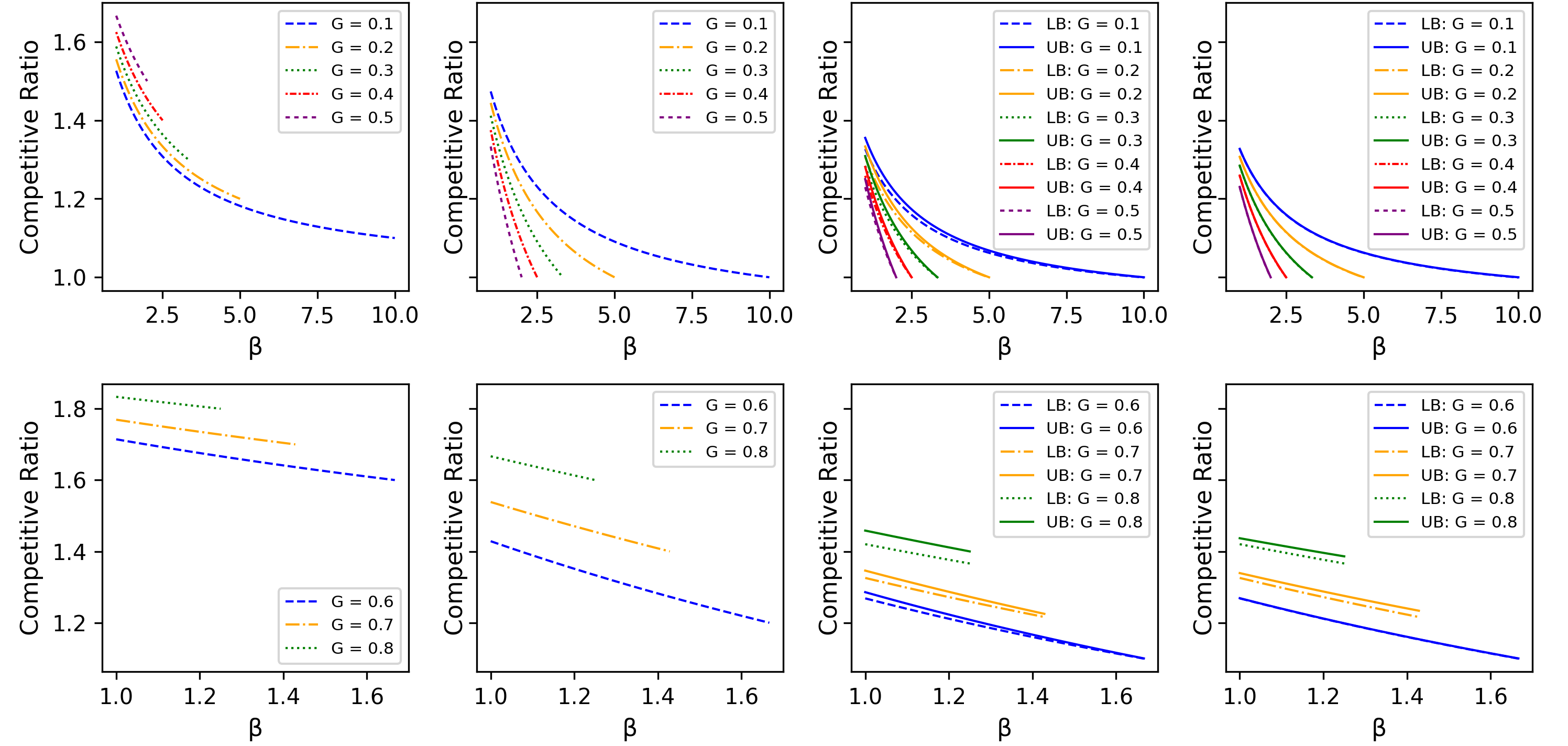}
    \makebox[0.99\textwidth][s]{ 
        \hspace{12mm}
        (a) \NF
        \hspace{14mm}
        (b) \WF
        \hspace{10mm}
        (c) \AAF
        \hspace{10mm}
        (d) \HAR
        \hspace{10mm}
    }
    \caption{Competitive ratio bounds for $\beta G \le 1$. Top row displays values of $G \le \nicefrac{1}{2}$, bottom row displays $G >\nicefrac{1}{2} $ }
    \Description{For $G\le 1/2$, competitive ratio decreases as $\beta \rightarrow 10$. Larger $G$ corresponds to lower competitive ratio for all algorithms except \NF. \WF performs notably better than \NF. \AAF somewhat improves \WF, and \HAR slightly improves \AAF. Similar trends hold for $G > 1/2$, however competitive ratios are generally larger, and do not reach 1 for $\beta G = 1$ as in $G \le 1/2$. \shahin{add a note to explain why some values of $x$ are missing. It will be good to include values on $y$ axis. }}
    \label{fig:GBltOne_optimal_cr}
\end{figure}

For the \AAF and \HAR algorithms, we derive lower and upper bounds on their competitive ratios, and show that the gap between these two bounds is quite small. 

\begin{restatable}{theorem}{anyfitLB}\label{thm:anyfitLB}
    Given $\beta G \leq 1$, the competitive ratios of \AAF (e.g., \BF and \FF) and \HAR are lower bounded by $$\CR \ge (\nicefrac{71}{42} + \beta [(\nicefrac{1}{43} - \nicefrac{G}{42})^+ + (\nicefrac{1}{7} - \nicefrac{G}{6})^+ +(\nicefrac{1}{3}- \nicefrac{G}{2})^+ +(\nicefrac{1}{2}- G)^+])/{(1 + \beta (1 - G))}$$

\end{restatable}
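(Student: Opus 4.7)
The plan is to follow the classical Ramanan--Liu--Yao style lower bound construction based on Sylvester's sequence $2, 3, 7, 43$, and to carefully recompute both the algorithm's cost and the offline optimum under the \gbp\ cost model so that the extra \black-space terms appear naturally in the numerator.

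First, I would construct a four-phase adversarial sequence: for a large integer $N$ and a small $\epsilon > 0$, present $N$ items of size $s_1 = \nicefrac{1}{43} + \epsilon$, followed by $N$ items of size $s_2 = \nicefrac{1}{7} + \epsilon$, then $N$ items of size $s_3 = \nicefrac{1}{3} + \epsilon$, and finally $N$ items of size $s_4 = \nicefrac{1}{2} + \epsilon$. Setting $k_i$ to be the maximum number of phase-$i$ items that fit in a single bin, one checks $(k_1, k_2, k_3, k_4) = (42, 6, 2, 1)$, and the resulting per-bin fill levels $L_i = k_i s_i$ tend to $\nicefrac{42}{43}$, $\nicefrac{6}{7}$, $\nicefrac{2}{3}$, $\nicefrac{1}{2}$ as $\epsilon \to 0$.

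Second, I would verify that any \AAF algorithm (in particular \BF and \FF) and any \HAR algorithm (with parameter $K \ge 43$) produce the same prescribed packing on this input: phase~$i$ must open only fresh bins, because any bin opened during an earlier phase already has fill strictly exceeding $1 - s_i$ (e.g., $\nicefrac{42}{43} + s_2 > 1$, $\nicefrac{6}{7} + s_3 > 1$, $\nicefrac{2}{3} + s_4 > 1$), and within phase~$i$ each new bin receives exactly $k_i$ items. For \AAF this follows from these tight-fit inequalities combined with the definition of \AAF, while for \HAR each $s_i$ lies in the harmonic interval $(\nicefrac{1}{k_i+1}, \nicefrac{1}{k_i}]$ for small enough $\epsilon$, forcing $k_i$ items per bin by construction.

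Third, I would compute \Alg's total cost. Phase~$i$ uses $N/k_i$ bins, each of cost $1 + \beta (L_i - G)^+$. Using the identity $(1/k_i)(k_i s_i - G)^+ = (s_i - G/k_i)^+$ and summing over $i$ gives
\[
\Alg(\sigma) = N \Bigl( \nicefrac{71}{42} + \beta \bigl[(\nicefrac{1}{43}-\nicefrac{G}{42})^+ + (\nicefrac{1}{7}-\nicefrac{G}{6})^+ + (\nicefrac{1}{3}-\nicefrac{G}{2})^+ + (\nicefrac{1}{2}-G)^+\bigr] \Bigr),
\]
since $\sum_i 1/k_i = \nicefrac{1}{42} + \nicefrac{1}{6} + \nicefrac{1}{2} + 1 = \nicefrac{71}{42}$. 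For \Opt, I would exhibit the packing that places exactly one item from each phase into every bin; since $s_1 + s_2 + s_3 + s_4 = \nicefrac{1805}{1806} + 4\epsilon \le 1$, this uses only $N$ bins whose fill tends to $1$, so asymptotically $\Opt(\sigma) \le N(1 + \beta(1-G))$. Dividing these quantities and taking $N \to \infty$, $\epsilon \to 0$ yields the stated lower bound; the $(\cdot)^+$ terms correctly drop out in regimes of $G$ where the fill level $L_i$ falls below the green threshold.

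The main obstacle will be rigorously pinning down the packing behavior so that each phase produces exactly the prescribed bin count for all algorithms in the \AAF and \HAR classes simultaneously. In particular, one must rule out that a late-arriving phase-$i$ item is diverted by \BF (or \FF) into a partially-filled earlier-phase bin on an edge case, and one must verify harmonic interval membership uniformly in $\epsilon$. Once these per-phase structural invariants are established, the asymptotic ratio reduces to the algebraic computation above.
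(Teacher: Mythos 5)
Your proposal is correct and takes essentially the same approach as the paper's proof: the same Sylvester-sequence input $(\nicefrac{1}{43}+\epsilon, \nicefrac{1}{7}+\epsilon, \nicefrac{1}{3}+\epsilon, \nicefrac{1}{2}+\epsilon)$ with $n$ copies of each size, the same per-phase bin counts $n/42, n/6, n/2, n$ with fill levels $\nicefrac{42}{43}, \nicefrac{6}{7}, \nicefrac{2}{3}, \nicefrac{1}{2}$, and the same comparison against \Opt packing one item of each size per bin at cost $\le n(1+\beta(1-G))$. Your extra care about the \HAR parameter $K$ and the tight-fit inequalities ruling out cross-phase placement is a correct (if slightly conservative) treatment of structural details the paper leaves implicit.
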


\begin{restatable}{theorem}{anyfitUB}\label{thm:anyfitUB}
Given $\beta G \leq 1$, the competitive ratio of \AAF is upper bounded by
\begin{align}
\label{eq:craaf}
    \CR \leq \begin{cases} 
\frac{7/4 +\beta (1-7G/4)}{1 + \beta(1- G)} &  \text{ if } G \leq \frac{1}{2} \\
\frac{7/4 +\beta(1/2-3G/4)}{1 + \beta(1- G)} & \text { if } \frac{1}{2} < G \leq \frac{2}{3}\\
\frac{7/4}{1 + \beta(1- G)} & \text{ if } G \geq \frac{2}{3}
\end{cases}
\end{align}
\end{restatable}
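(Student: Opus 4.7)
The strategy is to adapt the classical $7/4$ upper-bound argument for \AAF\ in bin packing to the \gbp\ cost structure, and then divide by the offline lower bound from \autoref{thm:OptLB}.

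First, by \autoref{thm:fillToCapacity} it suffices to analyze \AAF\ with threshold $\tau = 1-G$, so the algorithm is effectively the classical \AAF\ with unit bin capacity. The any-fit property guarantees that any two non-empty bins have combined load exceeding $1$, so asymptotically at most one bin has load $\le 1/2$. A standard weighting argument, assigning each item a weight determined by its size class, yields $N \le (7/4)N^* + O(1)$, where $N$ is the number of \AAF\ bins and $N^*$ is the optimal classical bin count. Equivalently, AAF's average bin load $\bar L = S/N$ satisfies $\bar L \ge \tfrac{4}{7}\bar L^*$, where $\bar L^*$ is OPT's average load.

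Next, I translate this to \gbp\ cost. Since $\beta G \le 1$, \autoref{thm:OptLB} shows OPT asymptotically fills each bin to capacity $1$, so $\bar L^* = 1$ and $\text{OPT} \approx S(1 + \beta(1-G))$. Hence AAF's worst-case average load is $4/7$, and under a uniform-load worst case,
\[
\text{AAF cost} \;=\; N + \beta V_b \;=\; \frac{S}{\bar L}\bigl(1 + \beta(\bar L - G)^+\bigr)
\;\le\; \frac{7S}{4}\Bigl(1 + \beta\bigl(\tfrac{4}{7}-G\bigr)^+\Bigr).
\]
Dividing by the OPT lower bound $S(1+\beta(1-G))$ yields the asymptotic competitive ratio. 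The three piecewise cases arise from the location of $G$ relative to the critical load $4/7$ and the AAF minimum load $1/2$: for $G \le 1/2 < 4/7$ every bin exceeds $G$ and incurs black cost $\beta(4/7-G)$, producing the Case~1 numerator $7/4 + \beta(1 - 7G/4)$; for $G \ge 2/3 > 4/7$ the worst-case uniform load sits within green so no black cost is incurred, giving the Case~3 numerator $7/4$; and for $1/2 < G \le 2/3$ the worst case consists of a mix of bins in $(1/2, G]$ and $(G, 1]$.

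The main obstacle will be Case~2. The uniform-fill worst case breaks down here because the adversary can force a bimodal distribution: some bins with load slightly above $1/2$ (items barely exceeding $1/2$, hence green-only when $G > 1/2$) and other bins with load $> G$ (black-using). A refined weighting argument is needed that partitions \AAF\ bins by their load interval and tracks, for each item-size class, which bin type each item ends up in. I expect the coefficient $(1/2 - 3G/4)$ to arise from balancing the black-space contributions across the mixed-load distribution: the fraction $3G/4$ reflects the portion of green space absorbed by items of size $\le 1/2$ in bins with load $\le G$, while the constant $1/2$ captures the maximum black fill for items of size in $(1/2, 1]$ placed alone. Once this balance is made precise via the weighting argument, substituting into the general formula and dividing by $S(1+\beta(1-G))$ yields the claimed piecewise bound.
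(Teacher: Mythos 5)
Your plan contains a genuine and fatal gap: the claim that \AAF's average bin load satisfies $\bar L \ge 4/7$ (equivalently, that $N \le 7S/4$) is false. The classical $7/4$ bound gives $N \le \tfrac{7}{4}\,\Opt_{BP}(\sigma) + O(1)$, and $\Opt_{BP}(\sigma)$ can far exceed $S$. Concretely, for $n$ items of size $1/2+\epsilon$ we have $S \approx n/2$, yet \AAF (and \Opt) opens $n$ bins, so $\bar L \approx 1/2 < 4/7$. Plugging $S \approx n/2$ into your bound $\tfrac{7S}{4}\bigl(1+\beta(4/7-G)\bigr)$ gives roughly $\tfrac{7n}{8}$ when $\beta\to 0$, strictly less than \AAF's actual cost of $n$ — your ``upper bound'' on \AAF's cost is violated. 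The error traces to two conflations: (a) treating the $7/4$ ratio to $\Opt_{BP}$ as if it bounded $N$ in terms of $S$; and (b) reading \autoref{thm:OptLB} as saying that \Opt fills every bin to capacity $1$ for every input $\sigma$, when in fact it only asserts this for the minimizing (sand-like) instance that achieves the lower bound. For inputs with large items, $\bar L^* < 1$, so $\bar L \ge (4/7)\bar L^*$ does not give $\bar L \ge 4/7$. Your uniform-load simplification is additionally suspect (black cost $\sum_j (L_j-G)^+$ is convex, so uniform loads minimize it for fixed $N$, $S$; if some \AAF bins fall below $G$, the identity $V_b = S - NG$ you need also fails), but the averaging gap above is already decisive.

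The paper avoids all of this by working bin-by-bin with a weight-per-cost argument rather than through averages. It defines a weight $w(x)$ (different for small/large items and for different $G$-regimes), shows per-bin that \AAF's cost never exceeds its bin weight, and bounds the maximum weight $w^*$ any \Opt bin can carry; \autoref{prop:wpc bound} then converts $w^*$ into a competitive-ratio bound by dividing by the cost $1+\beta(1-G)$ of a fully filled \Opt bin. Because the comparison is local to each bin, the argument never needs $N \le 7S/4$, never needs uniform loads, and automatically handles the large-item regime that breaks the averaging approach. It also yields Case 2 ($1/2 < G \le 2/3$) directly, which your sketch leaves unproven. To repair your proposal you would essentially have to reconstruct that per-bin weighting scheme; the average-load route cannot be patched.
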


\begin{restatable}{theorem}{harUB}\label{thm:harUB}
Given $\beta G \leq 1$, the competitive ratio of \HAR is upper bounded by
\begin{align}
\label{eq:crhar}
\CR \leq \begin{cases}
        \frac{1.691 - 1.691 \beta G + \beta}{1+\beta (1-G)} & \text{\ if \ } G \leq 1/2 \\
        \frac{1.691 - 0.691\beta G + \beta /2}{1+\beta(1-G)} & \text{\ if \ } G \in (1/2,2/3] \\ 
        \max\{\frac{1.691 - \beta(1-G)/6}{1+\beta (1-G)}, \frac{1.636 - \beta(1-G)/2}{1+\beta (1-G)}\} & \text{\ if \ } G > 2/3
    \end{cases}
\end{align}
\end{restatable}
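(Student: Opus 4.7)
The plan is to extend the classical weighting argument for \HAR by keeping careful track, class by class, of whether each Harmonic bin's fill level exceeds the green threshold $G$. Recall that \HAR partitions items into size classes: class $k$ (for $k \ge 2$) contains items of size in $(1/(k+1), 1/k]$, while class $1$ contains items of size in $(1/2, 1]$. A full class-$k$ bin ($k \ge 2$) holds $k$ items with fill level $L \in (k/(k+1), 1]$, and a class-$1$ bin holds a single item with $L \in (1/2, 1]$. The key accounting identity is that a bin of fill $L$ incurs cost $1 + \beta(L - G)^+$.

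For $G \le 1/2$, every full Harmonic bin satisfies $L > G$, so aggregating over all bins gives $\text{ALG} = N(1 - \beta G) + \beta S + O(1)$, where $N$ is the total number of bins and $S$ is the total volume. Substituting the classical Harmonic bound $N \le h_\infty S + O(1)$ with $h_\infty \approx 1.691$ together with the $\Opt$ lower bound $\Opt \ge S(1 + \beta(1 - G))$ from \autoref{thm:OptLB} immediately yields the first case of the claim.

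For $G \in (1/2, 2/3]$, class-$k$ bins with $k \ge 2$ still satisfy $L \ge k/(k+1) \ge G$, but class-$1$ bins holding items with $s \le G$ contribute no black cost. I would split class-$1$ into two sub-populations (those with $s > G$ and those with $s \le G$), re-express $\text{ALG}$ with a correction term of the form $\beta(N_1^- G - S_1^-)$ that is maximized when the corresponding items lie just above $1/2$, and evaluate on the Harmonic-bad instance (items just above $1/2, 1/3, 1/7, 1/43, \ldots$); matching the result to the $\Opt$ lower bound recovers the numerator $1.691 - 0.691\beta G + \beta/2$. For $G > 2/3$, class-$2$ bins can also fall below $G$, so the analysis must track two simultaneously ``inactive'' classes; the two expressions inside the $\max$ correspond to two distinct worst-case configurations: one where the adversary concentrates mass in classes $1$ and $2$ (producing the $1.636$ leading constant with a larger $\beta$-slope), and one that retains the full Harmonic-bad mix so that sufficiently many higher-order classes contribute to recover the $1.691$ leading constant.

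The main obstacle is the case analysis for $G > 2/3$. Verifying that the two stated expressions indeed dominate every adversarial pattern requires carefully optimizing each class's weight contribution under the joint constraint that whether each class's fill level exceeds $G$ determines its black-cost accounting, and the dominant configuration can switch as $\beta$ varies within the regime $\beta G \le 1$. A subtle technical point throughout is that the Harmonic constant $h_\infty \approx 1.691$ is only attained in the limit of infinitely many classes, so finite truncations produce lower-order correction terms that must be absorbed into the asymptotic $O(1)$ before the ratio takes its claimed form.
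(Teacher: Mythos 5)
Your argument for $G \le 1/2$ rests on a false lemma. You invoke ``the classical Harmonic bound $N \le h_\infty S + O(1)$,'' but the classical result bounds the Harmonic bin count against the \emph{offline optimal bin count} of classical bin packing, not against the total volume $S$; these can differ by nearly a factor of $2$. Concretely, with $n$ items of size $1/2+\epsilon$ you have $N = n$ and $S \approx n/2$, so $N/S \approx 2 > 1.691$. Plugging that instance into your intermediate bound $\Alg \le \bigl(1.691(1-\beta G)+\beta\bigr)\,S$ with $G=1/2$, $\beta=1$ gives $\Alg \le 1.8455\,S \approx 0.923\,n$, while the true cost is $\approx n$: the intermediate bound is violated. (The theorem's final ratio is still correct on this instance, but only because the $\Opt$ lower bound $S(1+\beta(1-G))$ is equally slack there; your chain of inequalities does not track that compensation.) The paper circumvents this by assigning a class-$i$ item of size $x$ the weight $w(x)=\tfrac{1}{i}\bigl(1+\beta(ix-G)^+\bigr)$, showing $\Alg(\sigma)\le W(\sigma)$ bin by bin, and then bounding the \emph{maximum weight that fits in a single \Opt bin} via \autoref{prop:wpc bound} and \autoref{lemma:HAUpper}; the constant $1.691$ enters as an upper bound on $\sum_j 1/c_j$ for items packable into one unit-capacity bin---a per-bin statement about $\Opt$, not a global relation between $N$ and $S$.

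Your plan for $G>1/2$ has a second, methodological gap: ``evaluat[ing] on the Harmonic-bad instance'' and matching to $\Opt$ proves a \emph{lower} bound, not an upper bound. To establish the claimed upper bound one must show that no single $\Opt$ bin can accumulate more weight than the stated amount, which is precisely the case analysis of \autoref{lemma:HAUpper} and \autoref{prop:har_ub_1}: for $G\in(1/2,2/3]$, split on whether the $\Opt$ bin contains an item in $(1/2,2/3]$; for $G>2/3$, split on whether it contains items in $(1/3,2/3]$ and $(1/4,1/3]$, which is where the two branches of the $\max$ come from. Your outline correctly guesses the extremal configurations, but without the per-bin maximization over \emph{all} packings of an $\Opt$ bin, the bounds are not established.
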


 \autoref{thm:anyfitLB}, \autoref{thm:anyfitUB}, and \autoref{thm:harUB} provide nearly tight bounds on \AAF and \HAR, and show that these algorithms have strictly better competitive performance than \NF and \WF (\autoref{fig:GBltOne_optimal_cr}). Again, we see that \CR decreases as $\beta G \rightarrow 1^-$, and is optimal for $\beta G = 1$ when $G\le \nicefrac{1}{2}$. To establish the lower bound, we use an input sequence which gives a lower bound of $\nicefrac{71}{42}$ for these algorithms in the classical setting. Adapting this sequence to the \gbp setting requires a case analysis to capture black cost incurred at different values of $G$. For the upper bound, we define a weighting scheme as described in \autoref{subsec:approaches}. We show that \Opt maximizes its \WPC ratio by filling bins to capacity, and give lower bounds on the \WPC that \AAF and \HAR can achieve based on the values of $\beta$ and $G$,

Finally, we show a general lower bound for the \gbp problem when $\beta G \le 1$, the proof of \autoref{thm:GenLB-gbLtOne} is deferred to \autoref{Appendix-GeneralLB}.

\begin{restatable}{theorem}{genLBgbLtOne} \label{thm:GenLB-gbLtOne}
    Given $\beta G \le 1$, the \CR of any deterministic online algorithm is at least $\frac{r+\beta(1-rG)^+}{1+\beta(1-G)}$, where $r=\nicefrac{248}{161}$ is the general lower bound for classic online bin packing.
\end{restatable}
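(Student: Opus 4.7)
The plan is to lift the classical online bin-packing lower bound $r = 248/161$ of Balogh et al.~\cite{BALOGH2012lowerBound} to the \gbp setting by bounding separately the bin-opening cost and the induced black-space cost. Let $\sigma$ be an adversarial input that realizes the classical bound, with the standard structural property that the classical optimum uses $N^{*}$ bins each filled exactly to capacity $1$, so the total volume satisfies $S = N^{*}$ and every deterministic online strategy is forced to open $N_A(\sigma) \ge rS$ bins (up to lower-order terms absorbed by the asymptotic competitive ratio convention). Since a deterministic \gbp algorithm is structurally just a placement strategy into unit-capacity bins---its cost objective governs which placements are \emph{preferred}, not which are \emph{feasible}---the classical adversary applies unchanged, and the bound on $N_A$ transfers verbatim.

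For $\Opt(\sigma)$, I would use the classical optimal packing itself as a feasible \gbp solution: each of its $N^{*}$ bins holds exactly $1$ unit, so its \gbp cost is $N^{*} + \beta N^{*}(1-G) = S(1+\beta(1-G))$, which by the equality case in Theorem~\ref{thm:OptLB} coincides with $\Opt(\sigma)$ in the regime $\beta G \le 1$. For the algorithm, I would combine the bin-count bound with a volume identity: the green capacity across opened bins can absorb at most $N_A G$ units, so $B_A \ge (S - N_A G)^{+}$. Hence
\begin{equation*}
A(\sigma) \;\ge\; N_A + \beta\,(S - N_A G)^{+} \;\ge\; \min_{n \ge rS}\Big( n + \beta (S - nG)^{+} \Big).
\end{equation*}
Under $\beta G \le 1$ the inner expression is nondecreasing in $n$ (slope $1-\beta G \ge 0$ on $[0, S/G]$ and slope $1$ beyond), so the minimum over $n \ge rS$ is attained at $n = rS$. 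This yields $rS + \beta S(1-rG)$ when $rG \le 1$ and $rS$ when $rG > 1$, which unify as $S\bigl(r + \beta(1-rG)^{+}\bigr)$. Dividing by the $\Opt$ bound gives the claimed ratio $\frac{r + \beta(1-rG)^{+}}{1+\beta(1-G)}$.

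The main subtlety is justifying that the classical bin-count bound truly binds a \gbp algorithm that might deliberately open extra bins to avoid black-space cost. The resolution is that the Balogh et al.\ adversary forces $N_A \ge rS$ against every placement sequence respecting the capacity-$1$ constraint; the algorithm's cost objective does not enter this argument, so the bound holds regardless of motivation. An algorithm that opens more bins simply slides along the curve $n + \beta(S-nG)^{+}$, but because that curve is nondecreasing for $\beta G \le 1$, it never falls below the minimum $S\bigl(r+\beta(1-rG)^{+}\bigr)$. A smaller technicality is absorbing the $1-o(1)$ slack from the classical asymptotic statement, which is handled directly by the paper's asymptotic competitive ratio convention.
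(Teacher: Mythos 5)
Your proposal is correct and follows essentially the same route as the paper's proof: invoke the classical adversarial input achieving $r=\nicefrac{248}{161}$ as a black box, bound $\Opt(\sigma)$ above by $S(1+\beta(1-G))$ by filling bins to capacity, and bound the algorithm below via its forced bin count plus the residual black-space cost $\beta(S-N_A G)^+$. Your explicit monotonicity argument that $n+\beta(S-nG)^+$ is nondecreasing in $n$ when $\beta G\le 1$ (so opening extra bins cannot help) just makes precise the paper's statement that the algorithm's minimal cost arises from spreading the volume evenly over $rS$ bins.
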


\subsection{Proofs}
\label{subsec:results2}

In these proofs, we use $(a, b)^n$ to denote a sequence of $n$ item pairs, each consisting of one item of size $a$ and one item of size $b$. Recall that our focus is on analyzing the \textit{asymptotic} competitive ratio, in which the total size of items is arbitrarily large and thus we can ignore the cost from one single bin in both offline and online algorithms.

\shahinadd{First, we prove \autoref{thm:nextfit}, which states the competitive ratio of \NF is exactly $\frac{2+ \beta (1- G)}{\beta + (1-\beta G)}$.}
\subsubsection{Proof of \autoref{thm:nextfit}.}
We aim to show \NF has matching lower and upper bounds. 

\noindent\textbf{Lower bound:} Define an input $\hat{\sigma} := (1, \epsilon)^n$, where $\epsilon>1$ is infinitesimally small. \NF opens $2n$ bins and $n$ of $2n$ bins use the black space. Thus, its cost is $\NF(\hat{\sigma}) = n(2+\beta(1-G))$. Opt uses $n+1$ bins, where $n$ bins are used to pack items with size $1$ and the remaining one is used to pack the small items. In the analysis of asymptotic competitive ratio, the cost from the additional one bin can be ignored, and its cost is $\Opt(\hat{\sigma}) = n (1+\beta (1-G))$. Therefore, the competitive ratio of \NF is at least $\frac{\NF(\hat{\sigma})}{\Opt(\hat{\sigma})} \ge \frac{2+\beta(1-G)}{1+\beta (1-G)}$. 

\noindent\textbf{Upper bound:} Fix an input $\sigma$. Let $S$ be the total size of items in $\sigma$. The number of bins that \NF opens is at most $2S$ because the total size of items in each pair of consecutive bins is more than $1$. Let $S_b$ be the total size placed in its \black space. Then we have $S_b \leq (1-G)S$ because a portion of $G$ in each bin must be packed in green space before switching to \black space.
Thus, the cost of \NF is at most $\NF(\sigma) \le 2S + \beta S_b \le 2S + \beta S (1 - G)$. 
Also, based on \autoref{thm:OptLB} and $\beta G \leq 1$, we have $Opt(S) \geq S (1+\beta(1-G))$.
Therefore, we can have $\frac{\NF(\sigma)}{\Opt(\sigma)} \leq \frac{2 + \beta (1 - G)}{1+\beta(1-G)}$.

\shahinadd{Next, we prove \autoref{thm:worstfit}, establishing the competitive ratio of \WF for \gbp as $\frac{2 + \beta (1- 2G)^+}{\beta + (1-\beta G)}$.}

\subsubsection{Proof of \autoref{thm:worstfit}.}
We start with the case  $G\leq \nicefrac{1}{2}$: \\
\noindent \textbf{Lower bound:}    
Consider the input $\hat{\sigma} = (\nicefrac{1}{2},\epsilon)^n$. \WF opens $n$ bins, each filled up to level $\nicefrac{1}{2}+\epsilon$, and its cost is approximately $\WF(\hat{\sigma}) = n (1 + \beta (\nicefrac{1}{2}-G))$ for small $\epsilon$. \Opt uses $\nicefrac{n}{2}+1$ bins, and its cost is $\Opt(\hat{\sigma}) = (\nicefrac{n}{2}) (1+\beta(1-G))$. Therefore, the competitive ratio is at least $\frac{n (1 + \beta (1/2-G))}{(n/2)(1+\beta(1-G))} = \frac{2 + \beta (1-2G)}{1+\beta(1-G)} = \frac{\beta + 2(1-\beta G)}{\beta + (1-\beta G)}$.

\noindent\textbf{Upper bound:} Fix an input $\sigma$ and let $S$ be the total size of items in $\sigma$. 
\WF fills bins to the level of at least  \nicefrac{1}{2} (for all but at most one bin).
Let $k$ denote the number of bins opened by \WF.
Since $G\leq \nicefrac{1}{2}$, the green space of $k$ bins is fully used and the remaining size of $S - kG$ is placed in \black space. Therefore, we have $\WF(\sigma) = k + \beta (S - k G) = \beta S + k(1-\beta G)$.
Given that all bins of \WF have a level of at least \nicefrac{1}{2}, we have $k\leq 2S$ and $\WF(\sigma) \leq S (\beta + 2(1-\beta G)) = S(2 + \beta (1 - 2G))$.  
Also, since $\beta G \leq 1$, we have $Opt(\sigma) \geq S (1+\beta(1-G))$. Therefore, the competitive ratio of \WF is at most $\frac{S (2 + \beta(1- 2G))}{S (1+\beta(1-G))}  = \frac{2 + \beta(1- 2G)}{ \beta + (1- \beta G)}$.  \\

Next, we consider the case $G > \nicefrac{1}{2}$. 

\noindent\textbf{Lower bound:} 
For the same input $\hat{\sigma} = (\nicefrac{1}{2},\epsilon)^n$, Opt pays the cost of $(\nicefrac{n}{2})(1+\beta(1-G))$, but \WF pays $n$. The competitive ratio is at least $\frac{n}{(n/2)(1+\beta(1-G))} =\frac{2}{\beta+(1-\beta G)}$. 

\noindent\textbf{Upper bound:} Fix an input $\sigma$ and let $S$ be the total size of items in $\sigma$. Let $k_1$ and $k_2$ be the numbers of bins in $\WF$ whose fill level is $(G,1]$ and in $(\nicefrac{1}{2},G]$, respectively. Let $S_1$ and $S_2$ denote the total size of items placed in these groups of bins, respectively. 
We have $\WF(\sigma) = k_1 + k_2 + \beta(S_1 - k_1 G) = k_1(1-\beta G) + k_2 + \beta S_1 $. 
Given that all bins have a level at least \nicefrac{1}{2}, we have $k_1 \leq 2 S_1$ and $k_2 \leq 2 S_2$, and this gives $\WF(\sigma) \leq S_1(\beta +2(1-\beta G)) + 2S_2 $. 

Given that $Opt(\sigma) \geq (S_1+S_2) (1+\beta(1-G))$, the competitive ratio of \WF is at most 
\begin{align*}
    \frac{S_1(\beta +2(1-\beta G)) + 2S_2}{S_1 (1+\beta(1-G)) + S_2 (1+\beta(1-G))} \le 
\max \left\{\frac{\beta +2(1-\beta G)}{ 1+\beta(1-G)}, \frac{2}{1+\beta(1-G)} \right\}  = \frac{2}{1+\beta(1-G)},
\end{align*} 
where we use the fact that $\frac{a+b}{c+d}\leq \max\{\frac{a}{c},\frac{b}{d}\}$, and $\beta + 2(1-\beta G) = 2 - \beta (2G-1) < 2$, which holds when $G>\nicefrac{1}{2}$.

\shahinadd{Next, we prove \autoref{thm:anyfitLB},
establishing a lower bound for competitive ratios of \AAF and \HAR.}
\subsubsection{Proof of \autoref{thm:anyfitLB}.}

Consider the following input, which gives a lower bound of $71/42 \approx 1.6904$ for the competitive ratio of \AAF and \HAR for classic bin packing: \\
\scalebox{.8}{
\begin{minipage}{1.2\textwidth}
$$ \hat{\sigma} = (\underbrace{\nicefrac{1}{43}+\epsilon, \ldots, \nicefrac{1}{43}+\epsilon}_{n \text{ items}},\underbrace{\nicefrac{1}{7}+\epsilon, \ldots, \nicefrac{1}{7}+\epsilon}_{n \text{ items}},\underbrace{\nicefrac{1}{3}+\epsilon, \ldots, \nicefrac{1}{3}+\epsilon}_{n \text{ items}}, \underbrace{\nicefrac{1}{2}+\epsilon, \ldots, \nicefrac{1}{2}+\epsilon}_{n \text{ items}}).$$
\end{minipage}}

Under this input, \Opt of \gbp opens $n$ bins, each containing one item from each size, resulting in a cost of at most $n(1+\beta(1-G))$. On the other hand, \AAF place items of similar classes together. For example, there will be $n/42$ bins containing items of size $1/43+\epsilon$. Each such bin will have a level of approximately $42/43$, and its cost will be $1 + \beta (42/43 - G)^+$. 
With a similar argument, we can summarize the packing decisions and the associated cost by \AAF as follows:
\begin{itemize}
    \item $n/42$ bins, each of cost approximately $ 1 + \beta (42/43- G)^+$ for items of size $1/43+\epsilon$;
    \item $n/6$ bins, each of cost approximately $1+\beta (6/7 - G)^+$ for items of size $1/7+\epsilon$;
    \item $n/2$ bins, each of cost approximately $ 1+\beta (2/3- G)^+$ for items of size $1/3+\epsilon$;
    \item $n$ bins, each of cost approximately $1+\beta (1/2 - G)^+$ for items of size $1/2+\epsilon$. 
\end{itemize}

The total cost of the algorithm would be $n/42+n/6+n/2+n + \beta n((1/43 - G/42)^+ + (1/7 - G/6)^+ +(1/3- G/2)^+ +(1/2- G)^+) = 71n/42 + \beta n((1/43 - G/42)^+ + (1/7 - G/6)^+ +(1/3- G/2)^+ +(1/2- G)^+) $.
We conclude that the competitive ratio of \AAF is at least $\frac{\AAF(\hat{\sigma})}{\Opt(\hat{\sigma})} \ge \frac{71/42 + \beta [(1/43 - G/42)^+ + (1/7 - G/6)^+ +(1/3- G/2)^+ +(1/2- G)^+]}{1 + \beta (1 - G)}$.

\shahinadd{Next, we prove \autoref{thm:anyfitUB}, establishing the upper bound for competitive ratio of \AAF.}
\subsubsection{Proof of \autoref{thm:anyfitUB}.}
\label{sec:aaf-ub}
Here we give a sketch of the proof and defer the full proof to \autoref{Appendix-AAFUB}. 
We start by introducing the modified weighting argument. 
The idea is to define a weight $w(x)$ for any item of size $x$. Given an input $\sigma$, let $W$ denote the total weight of all items of $\sigma$. We aim to show for any $\sigma$, (i) the cost of an online algorithm $\Alg(\sigma) \leq W(\sigma)$ and (ii) $\Opt(\sigma) \geq W(\sigma)/c$; Then this ensures a competitive ratio $c$ for \Alg. 
To establish (ii), we show that the ratio between the total cost of \Opt and the total weight of items in \Opt's packing is at least $1/c$, i.e., $\Opt/W \geq 1/c$, or the \textit{weight per cost} (\WPC) is upper bounded by $W/\Opt \le c$.

\begin{restatable}{proposition}{propWPCboundI}
\label{prop:wpc bound}
  Given $\beta G \le 1$, an input $\sigma$, and a weight function $w(x)$ such that $w(x)/x \ge \beta$, the weight per cost of $\Opt(\sigma)$ is upper bounded by $w^*(\sigma)/(1+\beta(1-G))$, where $w^*(\sigma)$ is the maximum total weight of items in a bin opened by \Opt.
\end{restatable}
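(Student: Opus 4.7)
My plan is to prove the proposition by a per-bin analysis of OPT's packing, using Theorem~\ref{thm:OptLB} to lower bound $\Opt(\sigma)$ and the weight-density hypothesis $w(x)/x \ge \beta$ to control total weight. Decomposing $W(\sigma) = \sum_j W_j$ and $\Opt(\sigma) = \sum_j c_j$ over OPT's bins (where $W_j$ is the total weight and $c_j = 1 + \beta(L_j - G)^+$ is the cost of the $j$-th bin with fill level $L_j$), the goal reduces to establishing a per-bin inequality
\[
W_j \bigl(1 + \beta(1 - G)\bigr) \;\le\; w^*(\sigma) \cdot \bigl(1 + \beta(L_j - G)^+\bigr),
\]
which upon summation yields the claimed WPC bound.

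For bins filled exactly to capacity ($L_j = 1$), the per-bin inequality follows immediately from $W_j \le w^*(\sigma)$, which is the defining property of $w^*$. For partially-filled bins ($L_j < 1$), I would exploit the hypothesis $w(x)/x \ge \beta$: each unit of volume carries weight at least $\beta$, so the ``weight shortfall'' $w^* - W_j$ is bounded below by $\beta(1 - L_j)$, the minimum weight needed to top the bin up to capacity. Rearranging and matching against the cost $c_j$---splitting into the sub-cases $L_j \le G$ and $L_j > G$, since the cost function is piecewise linear at the threshold $G$---recovers the desired per-bin inequality in both regimes.

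Summing the per-bin bound over all $j$ gives $W(\sigma) \cdot (1 + \beta(1 - G)) \le w^*(\sigma) \cdot \Opt(\sigma)$, equivalent to $W(\sigma)/\Opt(\sigma) \le w^*(\sigma)/(1 + \beta(1 - G))$, which is the stated bound.

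The main obstacle is rigorously justifying the weight-shortfall inequality $w^* - W_j \ge \beta(1 - L_j)$ for under-filled bins, since $w^*$ is defined relative to OPT's own packing rather than as an abstract upper bound over all feasible bin configurations. I expect to resolve this via an exchange argument that invokes OPT's optimality when $\beta G \le 1$: by Theorem~\ref{thm:OptLB}, OPT's cost is minimized by filling bins as close to level 1 as possible, so any bin whose weight violates the shortfall bound could be ``topped up'' with volume redistributed from other bins at marginal cost $\beta$ per unit, strictly reducing the total cost under $\beta G \le 1$ and contradicting optimality. Carefully navigating the boundary case $L_j = G$ (where the marginal cost of adding volume jumps from $0$ to $\beta$) will be the most delicate step of this argument.
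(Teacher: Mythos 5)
Your per-bin strategy mirrors the paper's: the paper also establishes the bound bin-by-bin and uses the density hypothesis $w(x)/x \ge \beta$ to control partially-filled bins. The mechanism differs, though. Rather than a weight-shortfall rearrangement, the paper ``tops up'' an underfilled bin with sand of density $\ge \beta$ and argues via the mediant inequality $\frac{W + \beta\epsilon}{c + \beta\epsilon} \ge \frac{W}{c}$ that this can only increase the bin's weight-per-cost, having first observed that filling up to level $G$ is free.

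The obstacle you flag is real, however, and your proposed fix does not close it---and a closer look suggests the same subtlety sits in the paper's own argument. The mediant inequality above holds only when $W \le c$ for the bin, and your shortfall-plus-rearrangement (even granting the shortfall) reduces algebraically to the same per-bin condition $W_j \le c_j$; neither follows from $w(x)/x \ge \beta$ alone. Concretely, take $G = 1/2$, $\beta = 1$, and $n$ items of size $0.4$ with $w(0.4) = 2$ (density $5 \ge \beta$). Then $\Opt$ packs two items per bin at level $0.8$, so its weight-per-cost is $2n/(0.65n) \approx 3.08$, while $w^* = 4$ gives a claimed bound of $4/1.5 \approx 2.67$. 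Your exchange argument cannot rescue the shortfall inequality here: the items are indivisible, so $\Opt$ genuinely cannot top up its bins, and the failure is driven not by any suboptimality of $\Opt$ but by the item weight density far exceeding $\beta$. A correct statement needs an extra hypothesis bounding the weight function from above (equivalently, that a bin filled to capacity $1$ maximizes weight-per-cost among all feasible bins); the specific weight schemes used in the paper's applications do satisfy this, which is why the downstream theorems hold, and with such a hypothesis your per-bin rearrangement would go through cleanly.
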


Based on \autoref{prop:wpc bound}, to build the upper bound of \WPC, we can switch to bound the maximum weight of a bin, which is in a setting similar to classic bin packing.

We let an item be \textit{large} if it is larger than 1/2 and \textit{small} otherwise. \AAF
has the property that each of its bins, except possibly one, either includes a large item or has a level of at least $2/3$. This is because if a bin is formed only by small items and level less than $2/3$, any subsequent bin formed by small items will include exactly two small items larger than $1/3$. Given this, we split our analysis into two cases: $G \le {2}/{3}$ and $G>{2}/{3}$. 

\begin{restatable}{proposition}{propWPCboundII}\label{prop:smallG}
    Given $G \le 2/3$, $\beta G \le 1$ and an input $\sigma$, if weight function is $w(x) = (3/2(1-\beta G)+\beta)x$ for small items and $w(x) = 1+\beta(x-G)^+$ for large items, then $\Alg(\sigma) \le w(\sigma)$ and $w^*(\sigma) \le \max\{1 + \beta(1/2 - G)^+ + 3(1-\beta G)/4 + \beta/2, 3(1-\beta G)/2 + \beta\}$.
\end{restatable}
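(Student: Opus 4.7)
The plan is to prove the two claims in sequence, leveraging the classical structural property of \AAF that every bin in its packing, except possibly one, either contains a large item (size $> 1/2$) or has a fill level of at least $2/3$. In the asymptotic regime a single exceptional bin contributes negligibly, so I split the remaining bins into these two types and argue for each type separately.

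For the first claim $\Alg(\sigma) \le w(\sigma)$, consider a type-(a) bin containing only small items at level $L \ge 2/3$. Writing $w_{\text{small}} := \tfrac{3}{2}(1-\beta G) + \beta$, its weight is $w_{\text{small}}\,L$ and its cost is $1 + \beta(L-G)^+$. When $L \le G$, the inequality weight $\ge$ cost reduces to $w_{\text{small}}(2/3) \ge 1$, which is equivalent to $G \le 2/3$. When $L > G$, an algebraic simplification yields weight $-$ cost $= (1-\beta G)(\tfrac{3}{2}L - 1) \ge 0$ using $L \ge 2/3$ and $\beta G \le 1$. For a type-(b) bin with a large item of size $y$ and small items summing to $L - y$, the large item alone carries weight $1 + \beta(y-G)^+ \ge 1$, which already covers the bin's fixed cost. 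A brief case split on whether $y \ge G$ and whether $L > G$ then shows that $w_{\text{small}}(L-y)$ is enough to absorb any residual black-space cost, using $w_{\text{small}} - \beta = \tfrac{3}{2}(1-\beta G) \ge 0$.

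For the second claim, I would maximize the weight contained in a single \Opt bin of capacity $1$. Any such bin contains at most one large item. If it contains only small items, its weight is at most $w_{\text{small}}$, yielding the second term $\tfrac{3}{2}(1-\beta G) + \beta$ of the $\max$. If it contains a large item of size $y \in (1/2, 1]$, the remaining small items occupy at most $1 - y$, so the weight is at most $f(y) := 1 + \beta(y-G)^+ + w_{\text{small}}(1-y)$. On the region $y > G$, the derivative is $\beta - w_{\text{small}} = -\tfrac{3}{2}(1-\beta G) \le 0$, while on $y \le G$ the function is strictly decreasing in $y$; either way $f$ is maximized as $y \to 1/2^+$. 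Evaluating at $y = 1/2$ and unifying the cases $G \le 1/2$ and $G \in (1/2, 2/3]$ through the $(1/2-G)^+$ notation gives the first term $1 + \beta(1/2 - G)^+ + 3(1-\beta G)/4 + \beta/2$ of the $\max$.

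The main obstacle is less any single hard step than calibrating the weight function so that both claims hold simultaneously. The coefficient $\tfrac{3}{2}(1-\beta G) + \beta$ is forced by making the first claim tight at $L = 2/3$ in a type-(a) bin with $L \le G = 2/3$, and one must then verify that this same choice keeps the per-bin maximum at the claimed level. Tracking the piecewise term $(\cdot - G)^+$ through the subcases, according to the relative orderings of $G$, $1/2$, $y$, and $L$, is the technical care required to make all inequalities go through cleanly.
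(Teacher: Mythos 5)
Your proposal is correct and follows essentially the same route as the paper's proof: the \AAF structural property (every bin, except possibly one, contains an item larger than $1/2$ or is filled to at least $2/3$), the same per-bin algebra showing weight $\ge$ cost using $w_{\text{small}}-\beta = \tfrac{3}{2}(1-\beta G)\ge 0$, and the same monotonicity argument pushing the large item in an \Opt bin down to size $1/2^+$ to obtain $1+\beta(1/2-G)^+ + \tfrac{3}{4}(1-\beta G)+\tfrac{\beta}{2}$, with $\tfrac{3}{2}(1-\beta G)+\beta$ covering the small-items-only case.
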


\begin{restatable}{proposition}{propWPCboundIII}\label{prop:largeG}
Given $G > 2/3$, $\beta G \le 1$ and an input $\sigma$, if weight function is $w(x) = 3x/2$ for small items and $w(x) = 1+\beta(x-G)^+$ for large items, then $\Alg(\sigma) \le w(\sigma)$ and $w^*(\sigma) \le 7/4$. 
\end{restatable}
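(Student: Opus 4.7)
The overarching plan is to prove the two inequalities of the proposition separately, exploiting the fact that $G > 2/3$ together with $\beta G \le 1$ forces $\beta < 3/2$, so the slope $3/2$ of the small-item weight strictly dominates $\beta$. Throughout I would ignore the negligible contribution of at most one exceptional bin, consistent with the asymptotic setting.

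For the first claim $\Alg(\sigma) \le w(\sigma)$, I would argue bin by bin using the structural property of \AAF recalled earlier: every bin either contains a large item or is filled by small items to level at least $2/3$. A bin with only small items at fill level $L \ge 2/3$ has weight $3L/2$ and cost $1 + \beta(L-G)^+$; I would verify the sub-cases $L \le G$ (where $3L/2 \ge 1$ is immediate from $L \ge 2/3$) and $L > G$ (where the function $L \mapsto 3L/2 - 1 - \beta(L-G)$ is linear with positive slope $3/2 - \beta > 0$, so it suffices to check the boundary $L = G$, which reduces to $3G/2 \ge 1$). A bin with a single large item of size $x$ and small-item mass $y \ge 0$ has weight $1 + \beta(x-G)^+ + 3y/2$ and cost $1 + \beta(x+y-G)^+$; I would split into $x \ge G$ (where the opening terms and the $\beta(x-G)$ term cancel, leaving the pointwise inequality $3y/2 \ge \beta y$) and $x < G$ (where the weight trivially dominates when $x+y \le G$, and otherwise the required inequality becomes $(3/2-\beta)y + \beta(G-x) \ge 0$, which holds since both summands are non-negative).

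For the second claim $w^*(\sigma) \le 7/4$, I would do a case analysis on the contents of a single \Opt-bin, noting that two large items cannot coexist in a bin. If the bin contains only small items the weight is at most $3L/2 \le 3/2 < 7/4$. Otherwise the bin contains one large item of size $x > 1/2$ and small-item mass $y \le 1 - x$. When $x < G$ the weight equals $1 + 3y/2 \le 1 + 3(1-x)/2 < 7/4$, the strict inequality following from $x > 1/2$. When $x \ge G$ I would rewrite the weight as $1 - \beta G + 3L/2 + x(\beta - 3/2)$ with $L = x + y \le 1$; since $\beta - 3/2 < 0$, the maximum over $x \in [G, 1]$ occurs at $x = G$, giving at most $1 + 3(1-G)/2 < 3/2 < 7/4$ because $G > 2/3$.

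The main subtlety is choosing a weight function that simultaneously dominates the black-space cost of bins filled above $G$ and is tight at $7/4$ on the canonical configuration of one large item barely above $1/2$ together with small items summing to just under $1/2$. The inequality $\beta < 3/2$ is what reconciles these two requirements: on the algorithm side it ensures that the binding constraint from a small-item bin sits at $L = G$ (reducing to $G \ge 2/3$), and on the \Opt side it forces the maximizer away from the black-space region and back toward the $x \to 1/2$ boundary, where the tight value $7/4$ is attained only in the limit.
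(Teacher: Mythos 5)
Your proposal is correct and follows essentially the same route as the paper: a bin-by-bin comparison of weight versus cost for \AAF bins using the structural property (each bin has a large item or level at least $2/3$), with the key fact $\beta \le 1/G < 3/2$, and a case analysis on a single \Opt bin showing the weight is maximized as the large item's size approaches $1/2$, yielding the bound $7/4$. The only difference is cosmetic---you split the \Opt-bin analysis at $x = G$ while the paper bounds $1+\beta(x-G)^+ + \tfrac{3}{2}(1-x)$ directly as a decreasing function of $x$---so no further comment is needed.
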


Combining \autoref{prop:wpc bound}, \autoref{prop:smallG}, and \autoref{prop:largeG} gives \autoref{thm:anyfitUB}.

\shahinadd{Finally, we prove \autoref{thm:harUB}, establishing an upper bound for competitive ratio of \HAR.}
\subsubsection{Proof of \autoref{thm:harUB}.}
\label{sec:har-ub} Here we give a sketch of the proof and defer the full text to \autoref{Appendix-AAFUB}. We follow the same weight per cost argument as described in \autoref{thm:anyfitUB} and \autoref{prop:wpc bound}. For evaluating \HAR, we use a different weighting scheme to align with the structure of the algorithm. Let the weight of an item of size $x$ from class $i$ (of size in $\in [1/(i+1),1/i]$ to be 

\begin{align}
    w(x) = \frac{1}{i} (1+\beta (ix - G)^+) \label{eq:har_weight}
\end{align}

Note that we have $\nicefrac{w(x)}{x}\ge \beta$ to satisfy the conditions of \autoref{prop:wpc bound}.

\begin{restatable}{proposition}{harUBpropI}\label{prop:har_ub_0}
    Suppose item weights are defined as in \autoref{eq:har_weight}. For all $\beta G \le 1$ and an input $\sigma$, then $\Alg(\sigma) \le w(\sigma)$.
\end{restatable}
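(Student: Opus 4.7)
The plan is to show that, for each bin opened by \HAR, the sum of the weights of its items is at least the bin's cost, and then aggregate across all bins. Partitioning \HAR's bins by the class of items they contain, the analysis naturally splits into full bins of class $i < K$ (each holding exactly $i$ items from $(1/(i+1), 1/i]$), full bins of class $K$ (holding items from $(0, 1/K]$), and the at most one non-full bin per class.

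The first case is clean. For a full class-$i$ bin with $i < K$, containing items $x_1, \ldots, x_i$ with level $L = \sum_j x_j$, the item-weight sum equals $1 + \frac{\beta}{i}\sum_j (ix_j - G)^+$ while the bin cost is $1 + \beta(L - G)^+$. Since $f(y) = (y - G)^+$ is convex, Jensen's inequality yields $\frac{1}{i}\sum_j f(ix_j) \ge f\bigl(\frac{1}{i}\sum_j ix_j\bigr) = f(L)$, so the per-bin weight dominates the per-bin cost. The standard \HAR reasoning (any class-$i$ item of size $\le 1/i$ fits into any class-$i$ bin with fewer than $i$ items) shows that all but one class-$i$ bin per class is full.

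The main obstacle is the class-$K$ bins, where items can be of any size in $(0, 1/K]$ and the number of items $m$ per bin can be arbitrary. Since \HAR opens a new class-$K$ bin only when the current item fails to fit, every full class-$K$ bin has level $L > 1 - 1/K$, which forces $m \ge K$. I would split the analysis into two sub-cases: (i) $L \le G$, where the bin cost is $1$ and the weight is at least $m/K \ge 1$; and (ii) $L > G$, which I would further partition by whether each item satisfies $K x_j \ge G$ or $K x_j < G$. Direct algebraic manipulation---crucially using $\beta G \le 1$ and $m \ge K$---then reduces every sub-case to the inequality $(m/K - 1)(1 - \beta G) \ge 0$. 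The subtlety is in handling items strictly smaller than $G/K$, which contribute only $1/K$ to the weight with no black-space term; one must combine the size bound $\sum_{j:\, K x_j < G} x_j < m_2 G / K$ (where $m_2$ counts such items) with $\beta G \le 1$ to preserve the per-bin inequality.

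Finally, the at most one non-full bin per class contributes an additive $O(K)$ deficit to the weight--cost comparison, which is independent of $\sigma$ and absorbed into the asymptotic competitive ratio. Summing the per-bin inequalities across all full bins then yields $\Alg(\sigma) \le w(\sigma)$ in the asymptotic sense, as claimed.
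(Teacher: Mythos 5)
Your handling of the type-$i$ bins with $i < K$ is the same as the paper's in substance: the paper drops the positive part via $(y)^+ \ge y$ to obtain $1 + \frac{\beta}{i}\sum_j(ix_j - G)^+ \ge 1 + \beta(L - G)$ and identifies this with the bin cost when $L \ge G$, while the $L < G$ case is trivial since the weight is already $\ge 1$. Your convexity/Jensen step packages both cases at once --- a cleaner write-up, but not a different route.

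You are right that the type-$K$ bins are where the real work is, and the paper's proof does not cover them: the sentence ``there are $i$ items of class $i$ in $B$'' holds only for $i < K$, while a type-$K$ bin can hold arbitrarily many items of arbitrarily small size. However, your treatment of this case has a genuine gap. Both the floor ``the weight is at least $m/K$'' and your split on $Kx_j \ge G$ vs.\ $Kx_j < G$ implicitly assume that every item placed in a type-$K$ bin is assigned weight class $K$, so that $w(x) = \tfrac{1}{K}(1 + \beta(Kx - G)^+) \ge \tfrac{1}{K}$. But Equation~\eqref{eq:har_weight} assigns weight by an item's true size class $i$ (the unique $i$ with $x \in (1/(i+1), 1/i]$), which exceeds $K$ for items smaller than $1/K$; for such items $w(x) = \tfrac{1-\beta G}{i} + \beta x$ can be strictly less than $1/K$, and the floor fails. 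Concretely, if every item in a closed type-$K$ bin of level $L$ sits at the top of its size class ($x_j = 1/i_j$, $i_j \ge K$), the bin's total weight is exactly $L\bigl(1 + \beta(1-G)\bigr)$ against a cost of $1 + \beta(L - G)$, a per-bin shortfall of $(1-\beta G)(1 - L) > 0$ since $L < 1$, and a linear-in-$n$ number of such bins is achievable. To close the argument you need to either (a) make the ``capped at $K$'' interpretation explicit --- then $m \ge K$ does follow from $L > 1 - 1/K$ and item sizes $\le 1/K$, and your sub-case analysis goes through --- and then re-check that the density bound in Proposition~\ref{prop:har_ub_1} survives this change of weight for tiny items; or (b) adopt a small-item weight in the spirit of the $\tfrac{K}{K-1}$ scaling used in Lemma~\ref{lemma:HAUpper} and accept a vanishing $o_K(1)$ loss as $K \to \infty$. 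As stated, the type-$K$ case is not resolved.
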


\begin{restatable}{proposition}{harUBpropII}\label{prop:har_ub_1}
    Suppose item weights are defined as in \autoref{eq:har_weight}. Given $\beta G \le 1$ and an input $\sigma$, then 
    $$w^*(\sigma) \le \begin{cases}
        1.691-1.691\beta G+\beta & G\le 1/2 \\
        1.691-0.691\beta G+\beta/2 & 1/2<G\le 2/3 \\
        \max\{1.691+\beta(1-G)/6,1.636+\beta(1-G)/2\} & G > 2/3 \\
    \end{cases} $$
\end{restatable}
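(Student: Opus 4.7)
The goal is to bound $w^*(\sigma)$, the maximum Harmonic weight of items that can coexist in a single unit-capacity bin, for each range of $G$. For a class-$i$ item of size $x$, I decompose $w(x) = \frac{1}{i} + \beta(x - G/i)^+$, so the bin's total weight splits into a \emph{green} part $\sum_j 1/i_j$ and a \emph{black} part $\beta\sum_j (x_j - G/i_j)^+$. Two invariants drive the analysis: the classical Harmonic bound $\sum_j 1/i_j \le h_\infty \le 1.691$, and the capacity constraint $\sum_j x_j \le 1$. The structural fact that a class-$i$ item lies in $J := \{j : x_j > G/i_j\}$ precisely when $1/(i+1) > G/i$, i.e., $i > G/(1-G)$, governs which items incur any black contribution at all.

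For $G \le 1/2$, the condition $i > G/(1-G)$ is satisfied for every $i \ge 1$, so every item belongs to $J$. Each contributes $(1-\beta G)/i + \beta x$, and summing yields $(1-\beta G)\sum_j 1/i_j + \beta \sum_j x_j \le 1.691(1-\beta G) + \beta$, which matches the first case. For $1/2 < G \le 2/3$, only class-$1$ items can escape $J$ (since $G \le 2/3 \le i/(i+1)$ for $i \ge 2$). Because at most one class-$1$ item fits per bin, the dominant worst case has one class-$1$ item of size $\le G$ (weight exactly $1$, no black cost) together with class-$\ge 2$ items summing to $\le 1/2$. A Sylvester-type argument restricted to classes $\ge 2$ gives $\sum_j 1/i_j \le h_\infty - 1 \approx 0.691$ for those items, while their total size contributes at most $\beta/2$. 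The total is thus at most $1 + (h_\infty - 1)(1-\beta G) + \beta/2 = 1.691 - 0.691\beta G + \beta/2$. The remaining subcases (class-$1$ above threshold, or no class-$1$ at all) are dominated because their $\beta$-coefficient is at most $\beta(1-G) < \beta/2$ in this range.

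For $G > 2/3$, class-$2$ items with $x \le G/2$ also avoid the black cost, so the set of candidate extremal configurations grows. I split on whether the bin contains a class-$1$ item. When a class-$1$ item is present, it contributes at most $1 + \beta(x_1 - G)^+ \le 1 + \beta(1-G)$, and the remaining capacity $1-x_1 < 1/2$ admits at most one class-$2$ item plus a Sylvester continuation in classes $\ge 3$; combining the $\sum_j 1/i_j \le h_\infty$ bound with a term-by-term accounting of which higher-class items still incur black cost yields the first candidate $1.691 + \beta(1-G)/6$. The factor $1/6$ reflects that once the class-$1$ and class-$2$ slots are filled at sizes that avoid their thresholds, a residual class-$6$ item of size up to $1/6$ produces a black contribution of $\beta(1-G)/6$. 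When no class-$1$ item is present, up to two class-$2$ items may fit, and a parallel enumeration of configurations yields the second candidate $1.636 + \beta(1-G)/2$, where the $1/2$ factor captures a class-$2$ item of size close to $1/2$ contributing black $\beta(1-G)/2$. Taking the maximum over the two subcases closes the proof.

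The main obstacle is the $G > 2/3$ case, where several classes may simultaneously stay below their black thresholds and the set of candidate extremal packings widens considerably. The delicate part is identifying the two worst-case templates that yield the $1.691 + \beta(1-G)/6$ and $1.636 + \beta(1-G)/2$ bounds, and verifying that no alternative mixture of classes produces a larger weight. This requires careful per-class bookkeeping of the green weights $\sum 1/i_j$ restricted to specific class subsets, together with a case analysis on which items lie in $J$, a step whose structure parallels the analogous argument for \AAF in Propositions~\ref{prop:smallG} and~\ref{prop:largeG} but is more intricate due to the infinitely many Harmonic classes.
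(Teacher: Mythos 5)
Your analysis of the first two cases tracks the paper's argument closely and is correct: for $G \le 1/2$ every item has $i x > G$ so the weight splits cleanly into $(1-\beta G)\sum 1/i_j + \beta\sum x_j$; and for $1/2 < G \le 2/3$ the dominant template is a class-$1$ item of size $1/2+\epsilon$ (weight $1$, no black cost) plus Sylvester items in the remaining half-bin. (The one-line justification you give for why the other subcases are dominated in the second case is a bit loose---it ignores the differing constant and $\beta G$ coefficients---but it points in the right direction.)

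The $G > 2/3$ case, however, has a genuine error in the case decomposition. You split on whether a class-$1$ item is present and assign the candidate $1.691 + \beta(1-G)/6$ to the ``class-$1$ present'' branch and $1.636 + \beta(1-G)/2$ to the ``no class-$1$'' branch. That attribution is wrong. In fact \emph{both} extremal templates contain a class-$1$ item of size $1/2+\epsilon$; the distinction that matters is whether a class-$2$ item is \emph{also} present. With a class-$1$ item and a class-$2$ item of size $1/3+\epsilon$, the residual capacity is $\approx 1/6$ and you get $1.691 + \beta(1-G)/6$. With a class-$1$ item but \emph{no} class-$2$ item, the remaining $\approx 1/2$ is filled with class-$\ge 3$ items whose class-weight sum is bounded by $0.636$ (the paper's Lemma~\ref{lemma:HAUpper} for $c=1/2$, $S=(0,1/3]$), giving $1 + 0.636 + \beta(1-G)/2 = 1.636 + \beta(1-G)/2$. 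Since $\beta(1-G)$ can be as large as about $1/2$ in this regime, $1.636 + \beta(1-G)/2$ can exceed $1.691 + \beta(1-G)/6$, so the ``class-$1$ present, no class-$2$'' subcase is not dominated and cannot be folded into the $1.691 + \beta(1-G)/6$ bound. As written, your ``class-$1$ present'' branch claims a bound that is false. Meanwhile, the ``no class-$1$'' branch is actually dominated by both candidates (its maximum is roughly $1.44 + \beta(1-G)/3$ from two class-$2$ items of size $1/3+\epsilon$ plus a Sylvester remainder), so assigning $1.636 + \beta(1-G)/2$ to it is a harmless overestimate but does not fix the gap in the other branch. To repair the argument you need to further split the class-$1$-present case on the presence of a class-$2$ item, exactly as the paper's Cases~Ia and~Ib do.
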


Combining \autoref{prop:wpc bound}, \autoref{prop:har_ub_0}, and \autoref{prop:har_ub_1} gives \autoref{thm:harUB}.

\section{Green Bin Packing when $\beta G > 1$}
\label{sec:GBgtOne}

When $\beta G > 1$, the unit cost of black space is relatively large, and filling up to the capacity of a bin can be more costly than simply opening up a new bin. Thus, we introduce a class of threshold algorithms for this case. The algorithms take a parameter $\tau \in [0, 1-G]$ and attempt to fill each bin up to a maximum level of $G+\tau \le 1$. Let \TNF, \TWF, \TAAF (including \TFF, \TBF), \GHAR denote the threshold algorithms that proceed identically to their classic counterparts but using the new modified capacity $G + \tau$. Any items with size greater than $G+\tau$ will be placed alone in a bin.

This section derives the upper and lower bounds on the competitive ratio for each of these threshold algorithms. We start by finding the competitive ratio for a given $\tau$, and then choosing the $\tau$ which minimizes \CR in \autoref{subsec:results5-1}. In a departure from the classical behavior, we find that \TNF performs strictly worse than \TWF, and that \TAAF and \GHAR have the same \CR bounds. In \autoref{subsec:results5-3} and \autoref{Appendix-TAAFUB}, we provide the detailed proofs for these results. 

\subsection{Summary of Results for  $\beta G > 1$}\label{subsec:results5-1}

\subsubsection{Tight Bounds of \TNF}

\begin{restatable}{theorem}{tnf}\label{thm:tnf}
Given $\beta G > 1$, the \CR of \TNF for \gbp is exactly $\max\left\{\frac{G(1+\tau \beta)}{G+\tau}, \frac{G(2+\tau \beta)}{G
    +2\tau}, \frac{2+\tau \beta }{1+\tau \beta } \right\}.$
\end{restatable}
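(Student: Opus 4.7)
The plan is to prove the equality by establishing matching lower and upper bounds, both equal to the stated maximum of three terms. For the lower bound I would construct three adversarial inputs, each tight for exactly one term. The first term $G(1+\tau\beta)/(G+\tau)$ is achieved by $n$ infinitesimally small items: \TNF fills each bin to the effective capacity $G+\tau$ at cost $1+\beta\tau$ and opens $S/(G+\tau)$ bins, while by \autoref{thm:OptLB} the offline optimum fills each bin exactly to $G$ at unit cost, opening $S/G$ bins. The third term $(2+\tau\beta)/(1+\tau\beta)$ comes from the classical NextFit-style alternation $(G+\tau-\epsilon,\epsilon')^n$ with $0<\epsilon<\epsilon'$: \TNF opens $n$ nearly full bins (fill $G+\tau-\epsilon$, cost $1+\beta\tau$) and $n$ near empty bins (cost $1$), whereas \Opt combines each pair into one bin of fill $\approx G+\tau$ and cost $1+\beta\tau$.

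The most delicate construction is for the middle term $G(2+\tau\beta)/(G+2\tau)$, which requires an input that forces \TNF into an alternating $G+\tau$ / $\tau$ fill pattern while still permitting \Opt to attain its ideal $S/G$ cost. I would use repeated triples $(G-\epsilon,\tau-\epsilon,\tau+\epsilon')^n$ with $0<\epsilon<\epsilon'$ (assuming $\tau<G$ and $G+\tau\le 1$). The first two items of each triple pack into one \TNF bin of fill $G+\tau-2\epsilon$, but the third item opens a new ``light'' bin of fill $\tau+\epsilon'$; the next triple's first item does not fit there because $\tau+\epsilon'+(G-\epsilon)>G+\tau$, so the pattern repeats. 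Each triple thus costs \TNF $\approx 2+\beta\tau$ over volume $\approx G+2\tau$. Because $\beta G>1$, the offline optimum strictly prefers the ``separation'' strategy, packing each $G-\epsilon$ alone (cost $1$) and collecting the $\tau$-sized items into bins filled to $G$ (amortized cost $1/G$ per unit volume), yielding $\Opt\approx n(1+2\tau/G)$ and ratio $G(2+\beta\tau)/(G+2\tau)$.

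For the matching upper bound, I would first amortize away ``large'' bins that hold a single item of size $>G+\tau$: each such item contributes the same $1+\beta(x-G)$ to both \TNF and \Opt, so the analysis reduces to ``regular'' bins filled to at most $G+\tau$. For the regular bins the NextFit constraint $L_j+L_{j+1}>G+\tau$ bounds the cost per consecutive pair, and a case analysis on the pair fill profile (both $>G$, one $>G$ and one $\le G$, both $\le G$) produces distinct per-pair cost-to-volume ratios. Combining each with the appropriate offline lower bound on \Opt---either the universal $\Opt\ge S/G$ from \autoref{thm:OptLB} (tight for the inputs realizing terms~1 and~2) or the refined bound forced by items of size close to $G+\tau$ (which push \Opt above $S/G$ and match term~3)---gives an upper bound equal to the stated maximum.

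The main obstacle is the upper-bound case analysis: a blunt pair-wise bound combined with the loose estimate $\Opt\ge S/G$ yields $G(2+\beta\tau)/(G+\tau)$, which is strictly larger than term~2. Matching term~2 requires either an amortized argument over triples of consecutive bins (to recover the denominator $G+2\tau$) or a coupling argument assigning each input to the regime it ``belongs'' to, so that whenever \TNF's per-pair cost-to-volume ratio attains the high value $(2+\beta\tau)/(G+\tau)$, the input must contain an item of size close to $G+\tau$ that forces \Opt itself to incur $\beta\tau$ of black cost, collapsing the ratio to term~3. Executing this coupling cleanly, so that every adversarial input is bounded by one of the three terms, is the crux of the proof.
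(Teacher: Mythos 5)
Your lower bound constructions are essentially the paper's: the sand input gives the first term, the alternating large/tiny input gives the third, and your triple $(G-\epsilon,\tau-\epsilon,\tau+\epsilon')$ is a variant of the paper's $(\tau,\epsilon,G,\epsilon^m)$ for the middle term. One caveat: the paper fills the $\tau$-sized role with sand of total mass $\tau$, so \Opt can fill its bins to level exactly $G$; your discrete $\tau\pm\epsilon$ items can leave \Opt with quantization waste unless $G/\tau$ is an integer, so you only realize the stated bound for a restricted set of thresholds. Replacing one of the two $\tau$-items with a block of sand fixes this and recovers the paper's construction.

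The genuine gap is the upper bound, and the route you sketch does not quite lead to the denominator $G+2\tau$. You propose amortizing over triples of bins or a coupling argument; the paper stays at the pair level but classifies consecutive pairs of bins into four types---\texttt{GG}, \texttt{GBR}, \texttt{GBS}, and isolated \texttt{B}---and bounds each type's cost-to-volume ratio separately. The observation you are missing is the one that drives the \texttt{GBR} case: if a green bin at level $L_1\le G$ is closed by a \emph{regular} item of size $\le G$ (so $L_1 + \text{item} > G+\tau$ with item $\le G$), then $L_1 > \tau$, so the green bin wastes at most $G-\tau$. Since the following black bin has level in $(G,G+\tau]$, the pair's volume is at least $G+2\tau$ in the worst case while its cost is at most $2+\beta\tau$; dividing by $\Opt \ge S/G$ gives exactly the middle term, with no triples required. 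Your coupling idea is the right instinct for the remaining case (\texttt{GBS}, where the closing item exceeds $G$ and forces \Opt itself to pay black cost), and the paper's amortization of special items yields $\frac{G(2+\tau\beta)}{G+2\tau}$ for $1<\beta G\le 2$ and $\frac{2+\tau\beta}{1+\tau\beta}$ for $\beta G\ge 2$. Without the waste bound for \texttt{GBR} pairs, neither of the two routes you name produces the $G+2\tau$ denominator, so as written the upper bound half of the theorem is unproven.
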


The tight bounds of \TNF allow us to uniquely determine the optimal $\tau$ and the optimal \CR.

\begin{restatable}{corollary}{tnfOPT}\label{cor:tnfOPT}
Let $\tilde{\tau} =\frac{2-\beta G + \sqrt{5\beta^2 G^2 - 8\beta G + 4}}{2\beta(\beta G-1)}$. The optimal threshold $\tau^*$ is given by $\min\{\tilde{\tau},1-G\}$ when $1<\beta G \le 2$, by $\min\{\sqrt{G/\beta},1-G\}$ when $2 <\beta G < 4$, and by $0$ when $\beta G \ge 4$. The associated optimal \CR of \TNFop is given by
\begin{align}
\label{eq:nextfittau}
\CR =  
\begin{cases}
        \frac{3\beta G-2 +\sqrt{5\beta^2G^2-8\beta G +4}}{\beta G +\sqrt{5\beta^2G^2-8\beta G +4}} & 1 < \beta G \le 2 \text{ and } \tilde{\tau} < 1-G \\
        \frac{2+\beta(1-G)}{1+\beta(1-G)} & 1 < \beta G \le 2 \text{ and } \tilde{\tau} \ge 1-G \\
        \sqrt{\beta G} & 2 <\beta G  <4 \text{ and } \sqrt{G/\beta} < 1-G\\
        \frac{2G+\beta G(1-G)}{2-G} &
        2<\beta G < 4 \text{ and } \sqrt{G/\beta} \ge 1-G\\
        2 & \beta G \ge 4
    \end{cases}    
\end{align}
\end{restatable}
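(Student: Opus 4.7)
The plan is to minimize directly the closed-form expression from Theorem \ref{thm:tnf} over $\tau \in [0,1-G]$. Introduce the shorthand $f_1(\tau) = \frac{G(1+\tau\beta)}{G+\tau}$, $f_2(\tau) = \frac{G(2+\tau\beta)}{G+2\tau}$, $f_3(\tau) = \frac{2+\tau\beta}{1+\tau\beta}$, so that the competitive ratio at threshold $\tau$ is $\max\{f_1,f_2,f_3\}$. A routine derivative computation yields $f_1'(\tau)=G(\beta G-1)/(G+\tau)^2$, $f_2'(\tau)=G(\beta G-4)/(G+2\tau)^2$, and $f_3'(\tau)=-\beta/(1+\tau\beta)^2$. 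Thus on $\beta G>1$: $f_1$ is strictly increasing, $f_3$ is strictly decreasing, and $f_2$ is strictly decreasing, constant, or strictly increasing as $\beta G$ is $<4$, $=4$, or $>4$. Also $f_1(0)=1$ and $f_2(0)=f_3(0)=2$.

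For $\beta G \ge 4$: $f_2(\tau)\ge 2$ everywhere, while the three values at $\tau=0$ give $\max=2$, so $\tau^\ast=0$ and $\CR=2$. For $2<\beta G<4$: the increasing $f_1$ meets the decreasing $f_2$; solving $f_1=f_2$ and clearing denominators reduces to $\tau^2\beta=G$, giving $\tau=\sqrt{G/\beta}$ with common value $\sqrt{\beta G}$. Plugging into $f_3$ gives $(2+\sqrt{\beta G})/(1+\sqrt{\beta G})$, strictly below $\sqrt{\beta G}$ iff $\beta G>2$, which holds. Hence if $\sqrt{G/\beta}\le 1-G$, monotonicity forces $\max f_i\ge\sqrt{\beta G}$ with equality at this $\tau$. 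If instead $\sqrt{G/\beta}>1-G$, then $f_1<f_2$ on all of $[0,1-G]$, so the minimum of $\max f_i$ is attained at $\tau=1-G$; a short comparison shows $f_2(1-G)\ge f_3(1-G)$ iff $\beta G\ge 2$, yielding $\CR=G(2+\beta(1-G))/(2-G)$.

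For $1<\beta G\le 2$: now at $\tau=\sqrt{G/\beta}$ the inequality reverses, i.e.\ $f_3(\sqrt{G/\beta})>\sqrt{\beta G}=f_1=f_2$, so the relevant intersection is $f_1=f_3$. Clearing denominators in $f_1(\tau)=f_3(\tau)$ reduces to the quadratic $\beta(\beta G-1)\tau^2+(\beta G-2)\tau-G=0$, whose positive root is precisely $\tilde\tau$ in the statement. Substituting into $f_3$ and simplifying with $x:=\beta G$ reproduces the first branch of (\ref{eq:nextfittau}). The main obstacle is verifying at this candidate optimum that the third function $f_2$ does not exceed $f_1=f_3$. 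I would argue $\tilde\tau>\sqrt{G/\beta}$, which follows since $f_1<f_3$ at $\sqrt{G/\beta}$ while $f_1$ increases and $f_3$ decreases; then because $f_2$ is decreasing (as $\beta G<4$), $f_2(\tilde\tau)<f_2(\sqrt{G/\beta})=\sqrt{\beta G}\le f_1(\tilde\tau)$.

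Finally, handle the boundary: if $\tilde\tau>1-G$, evaluate at $\tau=1-G$. The condition $\beta G\le 2$ gives $f_3(1-G)\ge f_2(1-G)$ (the reverse of the inequality used above), and $f_1(1-G)<f_1(\tilde\tau)=f_3(\tilde\tau)<f_3(1-G)$, so $\max f_i=f_3(1-G)=(2+\beta(1-G))/(1+\beta(1-G))$, matching the second branch of (\ref{eq:nextfittau}). Consistency at the breakpoints $\beta G\in\{2,4\}$ is immediate: at $\beta G=2$, $\tilde\tau=\sqrt{G/\beta}$ and the first branch simplifies (after rationalization) to $\sqrt{2}$; at $\beta G=4$ the third and fifth branches both give $2$.
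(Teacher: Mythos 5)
Your proposal is correct and follows the same approach as the paper: directly minimizing $\max\{f_1,f_2,f_3\}$ from Theorem~\ref{thm:tnf} over $\tau$, identifying the relevant pairwise intersections based on whether $\beta G$ is above or below $2$. In fact your treatment is somewhat more complete than the paper's terse argument, since you explicitly dispose of the $\beta G \ge 4$ regime via the sign of $f_2'$, verify that the third function stays below the two that cross at the optimizer, and handle the boundary sub-cases ($\tilde\tau \ge 1-G$ and $\sqrt{G/\beta} \ge 1-G$) that the paper leaves implicit.
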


As is the case for $\beta G \le 1$, the \CR of \TNF can reach $2$; indeed it can reach this maximum \CR on non-extremal values of $\beta, G$. This contrasts to the remaining algorithms, which can only reach maximal \CR on extremal values of $\beta, G$. Next, we show the lower bound results for \TAAF, \TWF and \GHAR.

\subsubsection{Lower Bounds of \TAAF, \TWF and \GHAR}
\begin{restatable}{theorem}{tafLB}\label{thm:taf-lb}
 Let $\beta G > 1$. When $\tau <1 -G$, the \CR of \TAAF, \TWF and \GHAR ($K\ge 2$) is lower bounded by 
 $\max\{\frac{2}{1+\tau\beta}, 1+\frac{(G-\tau)(1+\tau\beta)}{2(G+\tau)},\frac{G(1+\tau\beta)}{G+\tau}\}$. When $\tau = 1 -G$ these algorithms are lower bounded by $\max\{\zeta(\beta,G), 1+\frac{(2G-1)(1+\beta(1-G))}{2},G(1+\beta(1-G))\}$, where $\zeta(\beta,G)$ is the lower bound presented in \autoref{thm:anyfitLB}. 
\end{restatable}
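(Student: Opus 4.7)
The plan is to prove the lower bound for $\tau<1-G$ by exhibiting three separate adversarial constructions, one for each term inside the maximum, and then to handle the boundary case $\tau=1-G$ by combining Theorem~\ref{thm:anyfitLB} with two of the same constructions.

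For the term $\tfrac{2}{1+\tau\beta}$, the adversary releases $n$ items of size $(G+\tau)/2+\epsilon$. Two such items exceed the effective capacity $G+\tau$, so each of \TAAF, \TWF, and \GHAR (with $K\ge 2$) opens one bin per item and pays at least $n$. Since $\tau<1-G$ makes $(G+\tau)+2\epsilon\le 1$, \OPT can pair the items two per bin and use $n/2$ bins of cost $1+\tau\beta$ each. Letting $n\to\infty$ and $\epsilon\to 0$ delivers the ratio $\tfrac{2}{1+\tau\beta}$.

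For the term $\tfrac{G(1+\tau\beta)}{G+\tau}$, the adversary releases a long stream of infinitesimal items of total volume $S$. Every algorithm fills bins to the threshold $G+\tau$, using $S/(G+\tau)$ bins at cost $1+\tau\beta$ each. Because $\beta G>1$, \OPT prefers opening new bins to using \black space and packs $S/G$ bins to level $G$ at cost $1$ each, yielding the ratio $\tfrac{G(1+\tau\beta)}{G+\tau}$.

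For the term $1+\tfrac{(G-\tau)(1+\tau\beta)}{2(G+\tau)}$, which is only informative when $\tau<G$, the construction is order-sensitive: the adversary first releases infinitesimal items totalling volume $V=n(G-\tau)/2$ and then $n$ items of size $(G+\tau)/2+\epsilon$. The tiny prefix saturates $V/(G+\tau)=n(G-\tau)/(2(G+\tau))$ bins at level $G+\tau$ for each of the three families: under \TAAF the currently open bin absorbs tinies until it saturates; under \TWF the same bin is always the lowest until saturation; and under \GHAR with $K\ge 2$ the tinies and mediums lie in distinct size classes and never share a bin. The mediums then cannot slot into any saturated bin, so the algorithm opens $n$ fresh bins at level $(G+\tau)/2+\epsilon<G$, each of cost $1$. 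The total algorithm cost is $n\bigl[1+(G-\tau)(1+\tau\beta)/(2(G+\tau))\bigr]$, whereas \OPT pairs each medium with exactly $(G-\tau)/2-\epsilon$ tiny volume to fill $n$ bins to level $G$ at total cost $n$; dividing gives the target ratio.

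For the boundary case $\tau=1-G$, the effective capacity equals $1$ and the three threshold algorithms coincide with classical \AAF, \WF, and \HAR. The first construction degenerates because \OPT can no longer pair two $(G+\tau)/2+\epsilon$-items inside a unit bin, but the adversarial sequence behind Theorem~\ref{thm:anyfitLB} still applies verbatim (its construction is independent of the regime of $\beta G$) and yields $\zeta(\beta,G)$. The second and third constructions, specialised to $\tau=1-G$, supply the remaining two entries of the maximum.

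The main obstacle will be Construction~2: one must verify that under each of the three algorithm families the tinies concentrate in a single current bin until it saturates (rather than spreading across several bins), and then that the subsequent mediums cannot find room in any saturated bin. The ordering of items is critical --- feeding the mediums first would allow the algorithm to absorb tinies into the leftover space of each medium bin and collapse the competitive ratio back to $1$, so the adversary must commit to tinies first.
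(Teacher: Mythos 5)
Your proposal is correct and follows essentially the same route as the paper: the same three adversarial constructions (one item per bin of size just above $(G+\tau)/2$; sand items alone; sand followed by medium items), and the same replacement of the pairing construction by the Theorem~\ref{thm:anyfitLB} sequence at $\tau=1-G$. The only cosmetic difference is that the paper additionally labels the crossover point $\hat\tau$ to indicate which construction dominates for which $\tau$, whereas you simply present all three and take the maximum — both are fine since the bound is stated as a maximum anyway.
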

The above lower bounds are derived from three types of worst-case inputs, under which \Alg and \Opt fill bins using different strategies. This is in contrast to the lower bound construction in the classic bin packing and the \gbp when $\beta G \le 1$, which mainly rely on one type of worst-case input. The input strategy producing $\frac{2}{1+\tau\beta}$ does not occur when $\tau = 1-G$, but the lower bound input in \autoref{thm:anyfitLB} exploits a similar weakness. 
We now shift to the upper bounds for \TAF and \GHAR.

\subsubsection{Upper bounds of \TAAF, \GHAR and \TWF}

\begin{restatable}{theorem}{TWFUB}\label{thm:TWFUB}
    Given $\beta G >1$, the \CR of \TWF is upper bounded by         $\max\{\frac{2G}{G+\tau},\frac{G(1+\tau\beta)}{G+\tau}\}.$
\end{restatable}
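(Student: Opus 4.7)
The plan is to derive a sharp upper bound on $\TWF(\sigma)$ as a near-linear function of total item volume and combine it with the offline lower bound of Theorem~\ref{thm:OptLB}. Define $M := \max\{2,\,1+\beta\tau\}/(G+\tau)$, so the target bound on the ratio is $MG$. One checks $MG\ge 1$ in both regimes: if $\beta\tau\le 1$ then $\tau\le 1/\beta<G$ (using $\beta G>1$), hence $2G/(G+\tau)\ge 1$; if $\beta\tau>1$ then $G(1+\beta\tau)\ge G+\tau$ directly.

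The crux is a pairwise property of \TWF. Call a bin \emph{mixed} if it is not opened to hold a single item of size exceeding $G+\tau$, and let $S_1,S_2$ denote the total volumes in mixed versus big-item bins. I claim that any two mixed bins $j,k$ satisfy $L_j+L_k>G+\tau$ at the end: if $j$ is opened first and bin $k$ later opens for some item $a$, the \TWF rule forces every existing bin (in particular $j$) to have level $>G+\tau-a$ at that moment; bin $k$ starts at level $a$, and monotonicity of fill levels gives the claim. So at most one mixed bin ends with level $\le(G+\tau)/2$. For each of the remaining mixed bins I bound the cost-to-volume ratio by $M$: if $L\in((G+\tau)/2,G]$ the cost is $1$ and $1/L\le 2/(G+\tau)\le M$; if $L\in(G,G+\tau]$ the ratio $(1+\beta(L-G))/L=\beta-(\beta G-1)/L$ is increasing in $L$ (since $\beta G>1$), hence maximized at $L=G+\tau$, giving $(1+\beta\tau)/(G+\tau)\le M$. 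Summing over mixed bins and absorbing the lone exceptional bin into an $O(1)$ additive term yields $\TWF_{\mathrm{small}}(\sigma)\le M\,S_1+O(1)$.

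For the big items, Proposition~\ref{lem:smallItemBlackSpace} forces $\Opt$ to place each item of size $\ge G$ in its own bin, with at most $1/\beta$ co-located small volume per such bin. Partitioning $\Opt$'s bins into those with and without a big item, applying the per-bin inequality $(1+\beta(L-G)^+)/L\ge 1/G$ (the ingredient that drives Theorem~\ref{thm:OptLB}) to the latter, and using $\beta>1/G$ to discard the extra black cost of any co-located smalls, yields $\Opt(\sigma)\ge \TWF_{\mathrm{big}}(\sigma)+S_1/G$, where $\TWF_{\mathrm{big}}(\sigma)=N_2+\beta(S_2-N_2 G)$ is the (identical) cost that both algorithms incur on big items. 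Combining via the elementary inequality $\frac{a+c}{b+c}\le\max(a/b,1)$ with $a=\TWF_{\mathrm{small}}$, $b=S_1/G$, $c=\TWF_{\mathrm{big}}$,
\[
\frac{\TWF(\sigma)}{\Opt(\sigma)}\ \le\ \max\!\left(\frac{G\,\TWF_{\mathrm{small}}}{S_1},\,1\right)\ \le\ \max(MG,\,1)\ =\ MG,
\]
which after expanding $M$ is precisely $\max\{2G/(G+\tau),\,G(1+\tau\beta)/(G+\tau)\}$.

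The main obstacle is the big-item accounting. A naive appeal to $\Opt(\sigma)\ge S/G$ alone is too loose, because a single big-item bin can have cost-to-volume ratio as large as $1+\beta(1-G)$, which exceeds $M$ for moderate $\beta$. The fix is to recognize that the big-item cost appears identically in \TWF and in a tight lower bound for \Opt, and therefore cancels in the ratio; making this rigorous while handling the $\le 1/\beta$ of small-item slack per big bin that Proposition~\ref{lem:smallItemBlackSpace} permits is the technical crux.
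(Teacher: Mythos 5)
Your proof is correct, and its engine is the same one the paper uses: a weight-per-cost (equivalently, cost-to-volume) comparison with weight equal to size, where every \Opt bin costs at least $1/G$ per unit volume, and every \TWF bin, except possibly one, is filled beyond $(G+\tau)/2$ and therefore costs at most $\max\{2/(G+\tau),\,(1+\tau\beta)/(G+\tau)\}$ per unit volume. The genuine difference is your treatment of items larger than $G+\tau$. The paper's proof assumes each \TWF bin has level $p(G+\tau)$ with $p\in[1/2,1]$ and directly takes the ratio of the extremal per-bin densities, which silently ignores the bins \TWF opens for a single oversized item; such a bin has cost-to-volume ratio $(1+\beta(x-G))/x$, increasing in $x$ since $\beta G>1$, hence as large as $1+\beta(1-G)$, which can exceed the target density whenever $\tau<1-G$. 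You close this by splitting \TWF's packing into mixed and big-item bins, proving $\Opt(\sigma)\ge(\text{big-item cost})+S_1/G$ via \autoref{lem:smallItemBlackSpace} together with $\beta>1/G$ to absorb the at most $1/\beta$ of co-located small volume, and then cancelling the identical big-item cost through the mediant inequality $\frac{a+c}{b+c}\le\max\{a/b,1\}$ (your check that the target bound is at least $1$ is exactly what legitimizes this step). Your pairwise-level claim for mixed bins is the standard AnyFit property underlying the paper's assertion that all but one bin is filled to at least half of the effective capacity $G+\tau$, so that part coincides. In short: same core argument, but your version supplies the big-item accounting that the paper's terse proof glosses over, at the cost of one extra decomposition and the mediant step.
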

By setting $\tau = \min\{\frac{1}{\beta}, 1-G)$, the \CR of \TWF is upper bounded by $\max\{\frac{2\beta G}{\beta G+1}, 2G\}$. As expected from classical bin packing, we can establish stronger bounds for \TAF and \GHAR than for \TWF.

\begin{restatable}{theorem}{TAAFUB}\label{thm:taf-UB} 
Given $\beta G >1$, the \CR of \TAAF and \GHAR ($K\ge 3$) is upper bounded by
    $ {\max}\{\frac{2}{1+\tau\beta},\frac{7G+\tau}{4(G+\tau)}, 1+\frac{(G-\tau)(1+\tau \beta)}{2(G+\tau)}, \frac{G(1+\tau\beta)}{G+\tau}\}$ when $\tau <1-G$. If $\tau = 1-G$, then the \CR is upper bounded by
    $ {\max}\{\frac{6G+1}{4},1+\frac{(2G-1)(1+\beta(1-G))}{2}, G(1+\beta(1-G))\}$. 
\end{restatable}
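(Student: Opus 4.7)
The plan is to extend the weight-per-cost (WPC) framework of \autoref{prop:wpc bound}--\autoref{prop:largeG} from the $\beta G \le 1$ regime to the threshold algorithms for $\beta G > 1$, with the effective capacity $C := G + \tau$ playing the role that unit capacity played in \autoref{thm:anyfitUB}. The overall strategy is: define an item weight function $w(\cdot)$, show $\TAAF(\sigma) \le w(\sigma) + O(1)$ so the first inequality of the WPC framework holds, and then show $w(\sigma) \le c \cdot \Opt(\sigma)$ for $c$ equal to the claimed maximum, by a per-bin analysis of $\Opt$.

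For the first step I would partition items into \emph{large} (size $> C/2$) and \emph{small} (size $\le C/2$). A large item is packed alone by \TAAF, so I assign it weight equal to the cost of a singleton bin, $w(s) = 1 + (s-G)^+\beta$. A small item receives a linear weight $w(s) = \alpha s$, with $\alpha$ chosen so that any \TAAF bin containing only small items (necessarily at level at least $C/2$ by the almost-any-fit structural property, apart from at most one exceptional bin) has weight at least its cost $1 + (L-G)^+\beta$. Choosing $\alpha$ near $2/(G+\tau)$ suffices when $C/2 \le G$, and a slightly larger value compensates when $C/2 > G$. This immediately gives $\TAAF(\sigma) \le w(\sigma)$ asymptotically.

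For the second step I would combine the lower bound $\Opt(\sigma) \ge S/G$ from \autoref{thm:OptLB} with an enumeration of $\Opt$'s possible bin configurations. Each $\Opt$ bin has some level $L \in (0,1]$ and cost $1 + (L-G)^+\beta$; computing $w(\text{bin})/\text{cost}(\text{bin})$ over a representative set of configurations yields the four terms of the claimed maximum: (i) a bin filled to level $G$ with only small items matches the per-volume efficiency ratio $\frac{G(1+\tau\beta)}{G+\tau}$; (ii) a bin containing a single item of size in $(C/2, G]$ combined with sand, which \TAAF must split into two bins while \Opt packs into one, contributes $\frac{2}{1+\tau\beta}$; (iii) a bin containing two items of size just above $C/2$, which $\TAAF$ cannot combine but $\Opt$ can when $C < 1$, contributes the $1 + (G-\tau)(1+\tau\beta)/(2(G+\tau))$ term; (iv) an \AAF-style adversarial configuration with items near $C/3$ and $C/2$ recovers the $\tfrac{7G+\tau}{4(G+\tau)}$ term by rescaling Johnson's classical $\nicefrac{7}{4}$ argument to the capacity $C$ and per-unit $\Opt$ cost $1/G$.

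The main obstacle is that these four adversarial configurations are tight in different regions of the $(\beta, G, \tau)$ parameter space, so one must verify that a single weight function bounds $w(\text{bin})/\text{cost}(\text{bin})$ by $c$ for \emph{every} configuration of $\Opt$ simultaneously; tuning $\alpha$ against the concurrent constraints is where the piecewise four-term envelope arises. The boundary case $\tau = 1-G$ is then a direct specialization: $C = 1$ collapses the gap between the effective capacities of $\TAAF$ and $\Opt$, so the $\frac{2}{1+\tau\beta}$ scenario (which requires $C < 1$ for $\Opt$ to strictly out-pack \TAAF on medium items) disappears, and the remaining three terms yield exactly the bound stated for $\tau = 1 - G$ after substitution. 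For $\GHAR$ with $K \ge 3$, the same WPC argument transfers: since no $\GHAR$ bin mixes harmonic classes, the weight function is set to be constant within each class, and the per-class bin-level ratios stay within the same four-term envelope because the harmonic classes only refine, and do not weaken, the almost-any-fit packing structure.
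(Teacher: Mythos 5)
The proposal contains a genuine gap at its foundation. You assert that ``A large item is packed alone by \TAAF,'' with ``large'' meaning size $> (G+\tau)/2$; this is false. In the threshold model, only items of size $> G+\tau$ are forced into singleton bins; an item of size in $\bigl((G+\tau)/2,\, G+\tau\bigr]$ can absolutely share a bin with small items. This matters because your one-sided accounting $\TAAF(\sigma) \le w(\sigma)$ then breaks. Consider a \TAAF bin containing a large item of size $s$ with $G \le s < G+\tau$ together with small items of total size $x > 0$ (so $s + x \le G+\tau$). The bin cost is $1 + \beta(s+x-G)$; your assigned weight is $\bigl(1 + \beta(s-G)\bigr) + \alpha x$. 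For weight $\ge$ cost you need $\alpha \ge \beta$. But $\beta > 1/G$, so a purely small-item $\Opt$ bin filled to level $G$ would have weight $\alpha G \ge \beta G > 1$ against cost $1$, forcing $c \ge \beta G$, which blows past the claimed bound (which never exceeds $2$). No choice of $\alpha$ can satisfy both sides simultaneously, so the plan ``$\Alg \le w \le c\cdot\Opt$'' cannot be carried through with a weight of this shape.

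The paper explicitly flags and sidesteps this obstacle: ``unlike the upper bound proof when $\beta G \le 1$ \ldots we do not simply upper bound the cost of \Alg by its weight.'' Instead, it uses a genuinely \emph{two-sided} weight-per-cost argument with the much simpler parameterized weight $w(x) = x$ for $x \le (G+\tau)/2$ and $w(x) = x + R$ for $x > (G+\tau)/2$, where $R \in [0,(G+\tau)/6]$. It proves a \emph{lower} bound on $\Alg$'s \WPC (\autoref{lem:ALG-WPC}) and an \emph{upper} bound on $\Opt$'s \WPC (\autoref{lem:opt-WPC}), and takes the ratio; crucially, $\Alg$'s \WPC is allowed to drop below $1$, which is exactly what rescues the black-space-heavy mixed bins that derail your argument. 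Varying $R$ over $[0,(G+\tau)/6]$ then sweeps out the four-term envelope in the theorem statement; for instance the $\frac{7G+\tau}{4(G+\tau)}$ term is not a rescaled Johnson $7/4$ bound but simply the endpoint $R=(G+\tau)/6$ of $\frac{2(G+R)}{G+\tau+2R}$. Your intuition for where each of the four scenarios ``comes from'' and why $\frac{2}{1+\tau\beta}$ vanishes at $\tau = 1-G$ is in the right spirit, and the partitioning threshold $(G+\tau)/2$ is the same as the paper's, but the mechanism that ties it all together has to change. (A minor additional imprecision: the almost-any-fit structural bound gives level $\ge \tfrac{2}{3}(G+\tau)$ for all-small bins, not merely $\ge (G+\tau)/2$ as you state; the latter applies only to bins containing a large item.)
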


We distinguish between optimal $\tau$ (over tight bounds) and optimized $\tau$ (over loose bounds).

\begin{restatable}{corollary}{optimalTauOne}\label{Cor:optimalTauOne} The choice of $\tau$ will depend on the specific $\beta, G$ values according to three cases:

\begin{itemize}
    \item If $1 \le \beta(1-G)$, the optimized $\tau$ for \TAAF and \GHAR is $\tau = \frac{1}{\beta}$ when $\beta G \le \frac{7}{4}+\frac{\sqrt{57}}{4} (\approx 3.637)$ and $\tau = \frac{1}{2\beta}$ otherwise, giving \CR upper bounds of $\min\{\frac{2\beta G }{\beta G + 1},\frac{14\beta G + 1}{8\beta G + 4})$. Furthermore, $\tau = 1/\beta$ is optimal when $\beta G \le 1+\sqrt[3]{2+\sqrt{44/27}}+\sqrt[3]{2-\sqrt{44/27}} (\approx 3.383)$.
    \item If $1/2\beta \le 1-G < 1/\beta$, the optimized threshold for \TAAF and \GHAR is either $\tau = 1-G$ or $\tau = 2/\beta$, and the \CR is upper bounded by $\min \{ \frac{14\beta G+1}{8\beta G+4},1+\frac{(2G-1)(1+\beta(1-G))}{2} \}$. 
    \item If $1 > 2\beta(1-G)$, the optimized threshold for \TAAF and \GHAR is $\tau = 1-G$, and the \CR is upper bounded by $\frac{6G+1}{4}$.
\end{itemize}
 
\end{restatable}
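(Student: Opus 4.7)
The plan is to minimize the upper bound from Theorem \ref{thm:taf-UB} over $\tau \in [0, 1-G]$. Write $U(\tau) := \max\{f_1(\tau), f_2(\tau), f_3(\tau), f_4(\tau)\}$, where $f_1 = 2/(1+\tau\beta)$, $f_2 = (7G+\tau)/(4(G+\tau))$, $f_3 = 1 + (G-\tau)(1+\tau\beta)/(2(G+\tau))$, and $f_4 = G(1+\tau\beta)/(G+\tau)$. I would first record the monotonicity: $f_1$ and $f_2$ are strictly decreasing in $\tau$; $f_4$ is strictly increasing because $f_4'(\tau) = G(\beta G - 1)/(G+\tau)^2 > 0$ under $\beta G > 1$; and $f_3$ is unimodal on $(0,1-G]$ with its unique interior critical point (present only when $\beta G > 2$) a local maximum at $-G + \sqrt{2G^{2} - 2G/\beta}$. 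This profile implies that any minimizer of $U$ must lie where an increasing component ties with a decreasing one.

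Next I would identify the two interior candidate thresholds $\tau = 1/\beta$ and $\tau = 1/(2\beta)$, together with the boundary $\tau = 1-G$. Direct substitution at $\tau = 1/\beta$ yields $f_1 = 1$, $f_3 = f_4 = 2\beta G/(\beta G+1)$, and $f_2 = (7\beta G+1)/(4\beta G+4)$, which lies strictly below $2\beta G/(\beta G+1)$ once $\beta G > 1$, so $U(1/\beta) = 2\beta G/(\beta G+1)$. At $\tau = 1/(2\beta)$ one obtains $f_2 = f_3 = (14\beta G+1)/(8\beta G+4)$, $f_4 = 3\beta G/(2\beta G+1) < f_2$, and $f_1 = 4/3$, giving $U(1/(2\beta)) = \max\{4/3, (14\beta G+1)/(8\beta G+4)\}$. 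The boundary $\tau = 1-G$ is evaluated using the second piece of Theorem \ref{thm:taf-UB}.

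The three cases in the corollary then correspond to which candidates are feasible. In Case 1 ($\beta(1-G) \ge 1$, so $1/\beta \le 1-G$), both interior candidates lie in $[0,1-G]$; equating $U(1/\beta) = 2\beta G/(\beta G+1)$ with $U(1/(2\beta)) = (14\beta G+1)/(8\beta G+4)$ (which applies once $\beta G \ge 13/10$) yields the quadratic $2(\beta G)^2 - 7\beta G - 1 = 0$, whose positive root $(7+\sqrt{57})/4$ is the announced threshold; for $\beta G \in (1, 13/10)$ one checks directly that $2\beta G/(\beta G+1) < 4/3$, so $\tau = 1/\beta$ still wins. In Case 2 ($1/(2\beta) \le 1-G < 1/\beta$) only $1/(2\beta)$ remains as an interior candidate and must be compared with the $\tau = 1-G$ boundary expression. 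In Case 3 ($2\beta(1-G) < 1$) only the boundary is feasible, and a short algebraic check verifies that $(6G+1)/4$ dominates the other two terms of that expression under the case constraint.

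The main obstacle is twofold. First, rigorously excluding spurious interior minima requires exploiting the unimodality of $f_3$ to confirm that $U$ has local minima only at the two identified candidates. Second, the concluding \emph{optimal} (versus merely optimized) claim concerns the lower bound in Theorem \ref{thm:taf-lb}, namely $\max\{f_1, f_3, f_4\}$, which exhibits two dips: one at $\tau = 1/\beta$ with value $2\beta G/(\beta G+1)$, and another at the $f_1 = f_3$ crossing $\tau = 1/(\beta^{2}G)$. Equating the value at the second dip with $2\beta G/(\beta G+1)$ reduces, after substituting $\tau = 1/(\beta^{2}G)$ into $f_3$ and simplifying, to the cubic $(\beta G)^3 - 3(\beta G)^2 - \beta G - 1 = 0$. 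Via Cardano's formula applied to the depressed form $y^3 - 4y - 4 = 0$ with $y = \beta G - 1$, the unique real root yields $\beta G = 1 + \sqrt[3]{2+\sqrt{44/27}} + \sqrt[3]{2-\sqrt{44/27}} \approx 3.383$, matching the announced threshold. This cubic transition, rather than any of the simpler cases, is the most delicate algebraic step of the argument.
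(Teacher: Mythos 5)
Your proposal follows essentially the same route as the paper's: enumerate the candidate thresholds $\tau\in\{1/\beta,\ 1/(2\beta),\ 1-G\}$ arising from intersections of the bounding pieces, evaluate the Theorem~\ref{thm:taf-UB} bound at each, compare across cases to get the quadratic $2(\beta G)^2-7\beta G-1=0$ with root $(7+\sqrt{57})/4$, and then use the Theorem~\ref{thm:taf-lb} lower bound to certify when $\tau=1/\beta$ is genuinely optimal, landing on the cubic $(\beta G)^3-3(\beta G)^2-\beta G-1=0$. Your direct evaluations $U(1/\beta)=2\beta G/(\beta G+1)$ and $U(1/(2\beta))=\max\{4/3,(14\beta G+1)/(8\beta G+4)\}$, and the Cardano reduction $y^3-4y-4=0$, all check out and match the paper.

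One logical slip deserves repair before the argument is airtight. You assert that the $f_1=f_3$ crossing of the lower bound occurs at $\tau=1/(\beta^2G)$. That is not true in general: the crossing $\hat\tau$ satisfies the cubic $\beta^2\tau^3+\beta^2G\,\tau^2+(4\beta G-3)\tau-G=0$ (as the paper records in its proof of Theorem~\ref{thm:taf-lb}) and coincides with $1/(\beta^2G)$ only at the critical transition value of $\beta G$. What \emph{is} always true, and what your computation actually exploits, is that $f_1$ attains the level $L:=2\beta G/(\beta G+1)$ precisely at $\tau=1/(\beta^2G)$, so whether the lower-bound dip at $\hat\tau$ falls below $L$ reduces to the sign of $f_3(1/(\beta^2G))-L$; setting that to zero produces your cubic. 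The paper formulates the same test equivalently by comparing where $f_1$ and $f_3$ each reach $L$ (at $1/(\beta^2G)$ and at $G(\beta G-3)/(\beta G+1)$, respectively) and asking which lies farther right. Either way the answer is the same; just replace the false claim about the location of the $f_1=f_3$ crossing with the correct justification so the derivation of the $\approx 3.383$ threshold does not rest on an identity that fails off the critical curve.
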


\noindent The upper bounds of \TAAF and \GHAR are close to the lower bounds derived in \autoref{thm:taf-lb}. 
This also allows us to identify the selection of threshold $\tau$. In particular, we can observe that the best candidate $\tau$ that minimizes the upper bounds can be  $\frac{1}{\beta}$, $\frac{1}{2\beta}$, or $1-G$. As $\beta G \to \infty$, the \CR of \TAAF or \GHAR approaches $1.75$. In addition, when $G \le \nicefrac{1}{2}$ and $\beta G$ is sufficiently close to $1$, $\tau = \frac{1}{\beta}$ is the optimal design that can attains matching upper and lower bounds. Note that this guarantee does not extend to \TWF.

In summary, the upper bound results broadly follow the same trends as in classical bin packing and in the \gbp case of $\beta G \le 1$: we find that \TNF is worst, \TWF is in the middle, and \TAAF and \GHAR algorithms are best. Again, we find a clear separation between $\TNF$ and $\TWF$ which does not appear in the classical setting. In addition, \TAAF and \GHAR attain the same \CR bounds. It is also notable that the parameter $K$ in \GHAR is not relevant to the competitive ratio bounds (following our analysis)  so long as it is at least three. 

Finally, we show a general lower bound for the \gbp problem when $\beta G > 1$, the proof is given in \autoref{Appendix-GeneralLB}.

\begin{restatable}{theorem}{generalLB} \label{Thm:GenLB_BGgt1}
    Given $\beta G > 1$, the \CR of any deterministic online algorithm is at least \\ $\max\{\frac{r+1-G}{1+\beta(1-G)}, f(\beta G)\}$, where $r$ is the classical lower bound (\nicefrac{248}{161}) and $f(\beta G)$ is $\frac{3(\beta G+1)}{\beta G+5}$ for $ 1 < \beta G \le \nicefrac{3}{2}$ , $\frac{3(\beta G+3)}{\beta G+11}$ for $\nicefrac{3}{2} < \beta G \le 3$, $\nicefrac{9}{7}$ for $3 < \beta G \le 4$, $\nicefrac{4}{3}$ for $4 < \beta G \le 48$, and $\nicefrac{3}{2}$ when $\beta G > 48$.
\end{restatable}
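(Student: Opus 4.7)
The plan is to establish the two lower bounds appearing in the maximum through separate adversarial constructions.

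For the first term $\frac{r+1-G}{1+\beta(1-G)}$, I would adapt the proof of \autoref{thm:GenLB-gbLtOne} from the $\beta G \le 1$ case. The adversary invokes the classical online bin packing lower bound of $r = 248/161$~\cite{BALOGH2012lowerBound}, which produces a sequence $\sigma_0$ of items with total volume approximately 1 such that any deterministic online algorithm opens at least $r$ bins while the offline optimum uses a single bin. In \gbp, \Opt packs $\sigma_0$ in one bin at level 1, incurring cost $1+\beta(1-G)$. Since the classical lower bound constrains the number of bins opened by any placement rule, the \gbp algorithm \Alg also opens at least $r$ bins on $\sigma_0$. Appending a carefully chosen auxiliary sequence (for example, sand of total volume roughly $1-G$) then adds cost at least $1-G$ to \Alg---either via additional bins or via induced black space in its already-opened bins---while \Opt can absorb it at negligible asymptotic cost by opening at most one additional bin. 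A case analysis on how \Alg distributes the auxiliary items yields an \Alg cost of at least $r+(1-G)$, and the claimed ratio follows.

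For the second term $f(\beta G)$, I would construct distinct adversarial sequences for each of the five ranges of $\beta G$. Since black space becomes prohibitively expensive as $\beta G$ grows, the adversary's strategy shifts accordingly: for $\beta G > 48$, the \gbp setting essentially reduces to classical bin packing with capacity $G$, and a Yao-style sequence of items slightly exceeding $G/2$ yields the classical $3/2$ bound. For $4 < \beta G \le 48$, a two-size construction (items near $G/2$ and $G/4$) produces the $4/3$ bound, and for $3 < \beta G \le 4$ a three-size construction produces the $9/7$ bound. For the smallest ranges $1 < \beta G \le 3$, the algorithm may still exploit black space, so the adversary parameterizes item sizes continuously in $\beta G$ and optimizes the resulting ratio, giving rise to the rational expressions $\frac{3(\beta G+1)}{\beta G+5}$ and $\frac{3(\beta G+3)}{\beta G+11}$. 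For each construction, \Opt's cost is computed assuming packing at level $G$ or $1$ (matching the minimizer in \autoref{OptLB1}), while \Alg's cost is bounded below by an adversarial-commitment argument: after \Alg fixes its placement of the early items, the adversary sends later items that force either additional bin openings or significant black-space usage.

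The hard part will be designing and analyzing the piecewise constructions for the small-$\beta G$ ranges, where the algorithm has genuine flexibility to trade bins for black space; each construction must anticipate both strategies and force a high cost regardless. The five pieces of $f$ correspond to regime transitions where the optimal adversarial strategy changes---at the boundaries $\beta G \in \{3/2, 3, 4, 48\}$---and the constructions must be re-verified at each transition to confirm which piece dominates. Finally, verifying that the piecewise $f$ and the adapted classical term $\frac{r+1-G}{1+\beta(1-G)}$ combine into the claimed $\max$ requires a direct algebraic comparison across the $(\beta, G)$ parameter space, which is tedious but routine once the constructions are in hand.
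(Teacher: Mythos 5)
Your high-level skeleton matches the paper's: the first term is obtained by adapting the classical $248/161$ construction as in \autoref{thm:GenLB-gbLtOne}, and $f(\beta G)$ comes from range-specific adversarial sequences, with a reduction to Yao's bound for very large $\beta G$. But the proposal defers essentially all of the content that makes the theorem true, and where you do commit to specifics they would not produce the stated bounds. For $1<\beta G\le 3$ the rational expressions do not come from ``parameterizing item sizes continuously in $\beta G$'': the paper fixes the two sizes $G/3$ and $2G/3$, classifies the algorithm's bins after the first phase by the numbers $(s,d,t)$ of one-, two-, and three-item bins, writes the two-phase cost ratios as functions of the normalized fractions, and minimizes the maximum; the $\beta G$-dependence enters through the black-space cost of the mixed configurations the algorithm may form in the second phase, and which configurations are not trivially suboptimal is exactly what changes at $\beta G=\nicefrac{3}{2}$ and $3$. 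Your three-size construction for $3<\beta G\le 4$ and your sizes ``near $G/2$ and $G/4$'' for $4<\beta G\le 48$ are unsupported guesses: the paper gets $\nicefrac{9}{7}$ from the same $G/3,\,2G/3$ pair via a counting argument, and $\nicefrac{4}{3}$ from the perturbed sizes $G\frac{2\beta G+1}{6\beta G}$ and $G\frac{4\beta G-1}{6\beta G}$, chosen so that a three-item bin incurs black cost exactly $\nicefrac{1}{2}$; the breakpoints $4$ and $48$ fall out of those choices. Likewise, ``a Yao-style sequence of items slightly exceeding $G/2$'' is not enough for $\beta G>48$: Yao's bound needs three item classes (roughly $G/6$, $G/3$, $G/2$), and the real work is choosing $\epsilon\in\left(\frac{1}{2\beta G},\,\frac{1}{84}-\frac{1}{14\beta G}\right)$ --- a window that is nonempty precisely when $\beta G>48$ --- so that neither \Opt nor any non-trivially-suboptimal online packing ever uses black space, after which Yao's theorem applies as a black box; you never identify why $48$ is the threshold.

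Your treatment of the first term also has an internal inconsistency. If the appended sand has total volume proportional to $n(1-G)$ (one unit of $1-G$ per copy of the classical gadget), then \Opt cannot absorb it ``at negligible asymptotic cost by opening at most one additional bin''; if instead the sand has total volume $O(1)$, it adds nothing to the asymptotic ratio and cannot buy the $+(1-G)$ in the numerator. The paper appends nothing: it upper-bounds \Opt by the fill-to-capacity packing of cost $n(1+\beta(1-G))$ and lower-bounds \Alg on the classical sequence itself by charging, on top of the at least $rn$ bin openings, the extra cost of the volume that cannot reside in the green space of those bins (using $\beta>1/G$ so that placing it in black space is no cheaper than opening bins for it). Without these constructions and their case analyses, the specific piecewise values of $f$ --- which are the actual content of the theorem --- are not established by your proposal.
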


\begin{figure}
    \centering
    \includegraphics[width=0.99\linewidth]{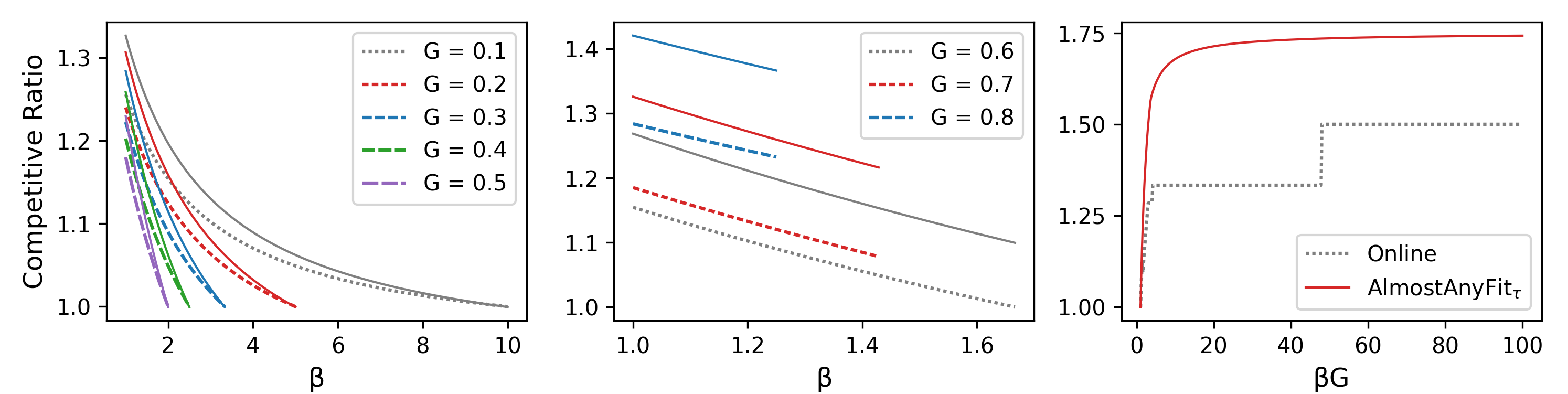}
    \makebox[0.99\textwidth][s]{ 
        \hspace{18mm}
        (a) $G\le\nicefrac{1}{2}$
        \hspace{29mm}
        (b) $G>\nicefrac{1}{2}$
        \hspace{25mm}
        (c) $1< \beta G \le 100$
        \hspace{10mm}
    }
    \caption{General Online Lower Bound vs \AAF/\HAR Lower Bound for $\beta G \le 1$ (a and b). The general online bound is shown with dotted lines, while \AAF/\HAR is shown with solid lines. And (c) gives a comparison of the General Online Lower Bound to the upper bound of \TAAF for $\beta G > 1$ and $G\le 1/2$.}
    \label{fig:general_onlineLB}
    \Description{For $G\le 1/2$, the general lower bound decreases  to 1 as $\beta G \rightarrow 1$. The lower bound for \AF begins slightly above the general lower bound, but they converge as $\beta G \rightarrow 1$. For $G>1/2$, the bounds are still decreasing in $\beta$, but do not reach 1. The gaps between the general lower bound and \AF is larger and does not decrease with increasing $\beta$. For $1< \beta G\le 100$, the general lower bound is a stepwise function which nearly matches \AAF for $\beta G < 5$, but has a much larger gap for increased $\beta G$}
\end{figure}

In \autoref{fig:general_onlineLB}, we illustrate the general lower bounds and compare them to the lower bounds derived in \autoref{thm:anyfitLB} and \autoref{Cor:optimalTauOne}. For $\beta G \le 1$ and $G \le \nicefrac{1}{2} $ the lower bounds of our algorithms are close to the online optimal, and this gap decreases as $\beta G$ approaches 1. When $\beta G \le 1$ and $G > \nicefrac{1}{2}$, there is a larger gap between the lower bound of our algorithms and the general lower bound which is constant across $\beta G$. Finally, when $\beta G>1$, \TAAF is nearly optimal for small $\beta G$ (at most 1.5), but a larger gap exists for $\beta G$ outside this range.

\subsection{Proofs}\label{subsec:results5-3}

In this section, we give proofs for the lower bounds of each algorithm, as well as the derivation of optimal $\tau$ from the bounds. For the upper bounds, we provide sketches of the arguments used for \TNF, \TAAF, and \GHAR, and defer the remaining proofs to \autoref{Appendix-TAAFUB}.

\subsubsection{Proof of \autoref{thm:tnf}} 
\shahinadd{First, we prove \autoref{thm:tnf}, establishing competitive ratio of \NF. }
We consider the following worst-case inputs for the lower bound: 

\noindent\textbf{Lower Bound:}  
    \begin{itemize}
        \item  Consider an input formed only by items of size $\epsilon$ and total size $S$ for some larger $S$.
        The cost of \TNF is simply $\frac{S}{G+\tau} (1+\beta \tau)$ while that of \Opt is $S/G$, giving a lower bound of $\frac{(S/(G+\tau))(1+\beta \tau)}{S/G} = \frac{G(1+\beta \tau)}{G+\tau}$ for the 
        competitive ratio. 

        \item Consider the input $(\epsilon, (G+\tau))^n$. The cost of the algorithm is $n(2+\tau \beta)$ while that of \Opt is $n(1+ \tau \beta)+1$. For asymptotically large $n$, the lower bound would be $\frac{2+\tau \beta}{1+\tau \beta }$. 

        \item Consider the input $(\tau, \epsilon, G, \epsilon^m)^n$ such that the $m$ items of such $\epsilon$ have total weight $\tau$. The cost of the algorithm will be $n (2+\alpha \beta G)$, while that of \Opt will be $n + n (2\tau )/ G = n(1+2\tau/G)$ (\Opt fills every bin with an item of size $G$ or an item of size $\tau$ along with items of size $\epsilon$ with total size $(G-\tau)$. Therefore, the competitive ratio is lower-bounded by $\frac{2+\tau \beta}{2\tau/G + 1} = \frac{G(2+\tau \beta)}{G+2\tau}$. 
    \end{itemize}

\noindent\textbf{Upper Bound (Sketch).}
The detailed proof is provided in Appendix~\ref{Appendix-TAAFUB}; here we summarize the main argument. 
We begin by partitioning the final packing of \TNF\ into four types of bin pairs, depending on whether consecutive bins are filled up to~$G$ (green) or exceed~$G$ (black). 
Specifically, we distinguish between \texttt{GG} pairs (two consecutive green bins), \texttt{GBR} pairs (a green bin followed by a black bin opened by an item of size $\leq G$), \texttt{GBS} pairs (a green bin followed by a black bin opened by a ``special'' item, i.e., an item of size $>G$), and single \texttt{B} bins (isolated black bins). 
Each type contributes differently to the total cost, depending on the amount of green and black space it occupies.

For each type~$X \in \{\texttt{GG}, \texttt{GBR}, \texttt{GBS}, \texttt{B}\}$, let $S_X$ denote the total size of items packed in bins of that type, $A_X$ the total unused green space (waste), and $B_X$ the total black space used. 
Let $m$ denote the number of special items in \texttt{GBS} bins, and suppose their total size is $mG + \Gamma$ for  $\Gamma > 0$.

The cost of \TNF\ can then be expressed as a weighted sum of these quantities:
\[
\text{ALG} = 
   \frac{S_{\texttt{GG}} + A_{\texttt{GG}}}{G}
 + \frac{S_{\texttt{GBR}} + A_{\texttt{GBR}} - B_{\texttt{GBR}}}{G} + \beta B_{\texttt{GBR}}
 + \frac{S_{\texttt{B}} - B_{\texttt{B}}}{G} + \beta B_{\texttt{B}}
 + 2m + \beta(\Gamma + B_{\texttt{GBS}}),
\]
The optimal solution, in contrast, must use at least $\tfrac{S_{\texttt{GG}} + S_{\texttt{GBR}} + S_{\texttt{GBS}} + S_{\texttt{B}}}{G}$ bins plus $m$ additional bins for the special items, with a total black-space cost of~$\beta\Gamma$.

Taking the ratio of these two quantities yields the expression in~\autoref{exp:nftUB}, which can be upper-bounded by the maximum of four simpler terms, each corresponding to one bin type. 
Bounding each term in terms of~$\tau$, $\beta$, and~$G$ gives the inequalities summarized in \autoref{prop:tnf_ub_2}, from which the upper bound in the theorem follows.

\shahinadd{Next, we prove \autoref{thm:taf-lb}, establishing a lower bound for competitive ratios of \WF, \AAF, and \HAR.}
\subsubsection{Proof of \autoref{thm:taf-lb}}

\textbf{Case 1: $\tau <1-G$}. The lower bounds for these algorithms are derived using instances which capture three scenarios. First, when $\beta\tau$ is relatively small compared to the cost of a new bin, the worst case is to fill $\Alg$ to $\frac{G+\tau}{2}$ and $\Opt$ to $G+\tau$. Next, if $\beta\tau$ is comparable to the cost of a new bin: $\Opt$ will fill to $G$, and $\Alg$ will fill some bins to $G+\tau$ and some to $\frac{G+\tau}{2}$. Finally, when $\beta\tau \ge 1$, the worst case is to fill $\Alg$ to $G+\tau$ and $\Opt$ to $G$. These scenarios give three separative functions for the competitive ratio based on $\tau$. The lower bound is the maximum of these three functions over $\tau\in [0,1-G]$. 

We define $\hat{\tau}$ as the smallest positive root of $\beta^2\tau^3+\tau^2(\beta^2G)+\tau(4\beta G-3)-G$. Note that this is the intersection point between our first two lower-bound functions. 

\textit{Case 1.1:} $\tau \le \hat{\tau}$. For asymptotically small $\epsilon > 0$, consider the input: $(\frac{G+\tau+\epsilon}{2})^n$. In this case, the \Opt will go above the $G+\tau$ threshold by placing two items per bin, for a cost $\frac{n}{2} + \frac{n}{2}\beta(\tau+\epsilon)$. Meanwhile $\Alg$ will only fit one item per bin, for cost $n$, $\Alg$ will not incur any \black cost since $\frac{G+\tau+\epsilon}{2} < G$. Since $\epsilon$ is asymptotically small, we can then lower bound the competitive ratio as
$\CR \ge \frac{2}{1+\tau\beta}$.

\textit{Case 1.2:} $\hat{\tau} < \tau < \frac{1}{\beta}$.  Consider the sequence $((\epsilon)^E,( \frac{G+\tau}{2}+\epsilon)^F)$. We can pick $E = (\frac{G-\tau}{2\epsilon})F$ so that each bin of \Opt will have one item of size $\frac{G+\tau}{2}+\epsilon$ and many of size $\epsilon$, and be filled to $G$. Thus the cost of the \Opt will just be the number of bins, $F$.
Meanwhile, $\Alg$ will consist of bins filled with $\epsilon$ up to $G+\tau$, and then bins with a single item of size $\frac{G+\tau}{2}+\epsilon$. This is because \TAAF algorithms will not open new bins while there is available space, and because \GHAR will place the $\epsilon$'s in type $K> 1$ bins while the $\frac{G+\tau}{2}+\epsilon$ are placed in type $1$ bins. The cost of \Alg will be $F + \frac{E\epsilon}{G+\tau}+\beta\tau(\frac{E\epsilon}{G+\tau})$. We can substitute to find that cost is $F+F\frac{G-\tau}{2(G+\tau)}(1+\beta\tau)$. This gives $\CR \ge 1+\frac{G-\tau}{2(G+\tau)}(1+\beta\tau)$.

\textit{Case 1.3:} $\tau \ge \frac{1}{\beta}$. Consider the sequence of items $(\epsilon)^n$ with total volume $S$. In this case, $\Opt$ will fill each bin exactly to $G$, for a cost $S/G$. $\Alg$ will fill each bin exactly to $G+\tau$, for a cost $\frac{S}{G+\tau}+\frac{S}{G+\tau}\tau\beta$. Therefore, $\CR \ge \frac{G(1+\tau\beta)}{G+\tau}$. 

\textbf{Case 2: $\tau = 1-G$ }. In this scenario, the argument from Case 1.1 is not possible, as \Opt cannot fit two items of size $\frac{G+\tau+\epsilon}{2}$ into a single bin. We instead reuse the bound presented in \autoref{thm:anyfitLB}, which still gives a lower bound for certain $\beta, G$ in this setting, even when it is no longer optimal to fill to capacity 1. Note this bound is never maximal when $\tau < 1-G$. Meanwhile, the input sequences seen in Cases 1.2 and 1.3 still hold, and we can evaluate those functions at $\tau=1-G$.

\shahinadd{Next, we establish \autoref{thm:taf-UB}, which gives an upper bound for the competitive ratios of \AAF and \HAR.}
\subsubsection{Proof of \autoref{thm:taf-UB}}
We provide a proof sketch for the upper bound and the full proof is deferred to \autoref{app:taaf-ub}. 
We still rely on the weighting argument; however, unlike the upper bound proof when $\beta G \le 1$ (as shown in \autoref{sec:aaf-ub}), we do not simply upper bound the  cost of \Alg by its weight. Instead, we derive both upper bounds on the \WPC of \Opt (i.e., $W(\sigma)/\Opt(\sigma)$), and lower bounds on the \WPC of \Alg (i.e., $W(\sigma)/\Alg(\sigma)$) for any input $\sigma$. Since the total weight $W(\sigma)$ is the same for both \Opt and \Alg, the ratio of the two bounds gives the competitive ratio. 

In this case, we define the weight of an item of size $x$ as
\begin{align}
\label{eq:weight}
w(x) = 
\begin{cases}
    x & x\le \nicefrac{G+\tau}{2}\\
    x+R & x > \nicefrac{G+\tau}{2}
    \end{cases}, \quad \text{for a given parameter}\ R \in\left[0,\nicefrac{G+\tau}{6}\right].   
\end{align}

We can build lower bounds and upper bounds of \WPC for the online algorithms and the offline optimal, respectively, in the following two lemmas.
\begin{restatable}{lemma}{taafUBlemI}\label{lem:ALG-WPC}
Given $\beta G > 1$ and the weighting function $w(x)$ given in Equation~\eqref{eq:weight}, define $\mu_1:=\mu_1(R) = \frac{(1-G\beta-2R\beta)+\sqrt{(G\beta+2R\beta-1)^2+4\beta(G-2R)}}{2\beta}$. The \WPC of \TAAF or \GHAR is at least $\frac{G+\tau + 2R}{2}$ when $\tau \le \mu_1(R)$ and at least $\frac{G+\tau}{1+\tau \beta}$ when $\tau \ge \mu_1(R)$.
\end{restatable}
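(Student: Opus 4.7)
The plan is to lower-bound the WPC $W(\sigma)/\Alg(\sigma)$ by deriving a per-bin lower bound on $w(\text{bin})/\text{cost}(\text{bin})$ and using $\sum w_i/\sum c_i \ge \min_i w_i/c_i$. I would partition the bins opened by \Alg into \emph{Type A} bins, which contain an item of size $> (G+\tau)/2$ (carrying the $+R$ weight bonus), and \emph{Type B} bins, whose items are all of size $\le (G+\tau)/2$. Since two items larger than $(G+\tau)/2$ cannot coexist (their sum would exceed the effective capacity $G+\tau$), each Type A bin contains exactly one large item, so its total weight is $L + R$ where $L$ is the fill level. The per-bin WPC of a Type A bin is then $(L+R)/(1+\beta(L-G)^+)$: on $L \in ((G+\tau)/2, G]$ this equals $L + R$ and is minimized as $L$ approaches $(G+\tau)/2$ from above, yielding $(G+\tau+2R)/2$; on $L \in (G, G+\tau]$ its derivative $(1 - \beta G - \beta R)/(1+\beta(L-G))^2$ is negative because $\beta G > 1$, so the minimum occurs at $L = G+\tau$, giving $(G+\tau+R)/(1+\beta\tau)$.

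For Type B bins I would rely on the structural property that every such bin, except possibly one (asymptotically negligible), has fill level at least $2(G+\tau)/3$. Under \TAAF this is the direct analogue of the \AAF property used for \autoref{thm:anyfitUB}, scaled to effective capacity $G+\tau$; under \GHAR with $K \ge 3$ it follows class-by-class, since a class-$i$ bin ($2 \le i \le K-1$) holds $i$ items each of size $> (G+\tau)/(i+1)$, and the sand class fills to at least $(K-1)(G+\tau)/K$, both of which are $\ge 2(G+\tau)/3$. Because the weight of a Type B bin equals its level, the same monotonicity argument (with derivative $(1-\beta G)/(1+\beta(L-G))^2 < 0$) yields per-bin WPC at least $\min\bigl(2(G+\tau)/3,\ (G+\tau)/(1+\beta\tau)\bigr)$. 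The hypothesis $R \le (G+\tau)/6$ gives $2(G+\tau)/3 \ge (G+\tau+2R)/2$, so the Type B bound dominates the Type A bound of $(G+\tau+2R)/2$; taking the minimum across both types, the overall WPC of \Alg is at least $\min\bigl((G+\tau+2R)/2,\ (G+\tau)/(1+\beta\tau)\bigr)$.

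To identify $\mu_1(R)$, I would equate the two candidate bounds: setting $(G+\tau+2R)/2 = (G+\tau)/(1+\beta\tau)$ and clearing denominators yields the quadratic $\beta\tau^2 + (G\beta + 2R\beta - 1)\tau - (G - 2R) = 0$, whose positive root matches the expression for $\mu_1(R)$ in the statement. Since $(G+\tau+2R)/2$ is increasing in $\tau$ and $(G+\tau)/(1+\beta\tau)$ is decreasing in $\tau$ (as $\beta G > 1$), the minimum of the two equals $(G+\tau+2R)/2$ for $\tau \le \mu_1(R)$ and $(G+\tau)/(1+\beta\tau)$ for $\tau \ge \mu_1(R)$, matching the lemma exactly. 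The main technical obstacle will be rigorously establishing the level bound $L \ge 2(G+\tau)/3$ for Type B bins---especially arguing that the one exceptional undersized bin under \TAAF contributes only lower-order cost asymptotically---and verifying that the piecewise cost structure at $L = G$ does not introduce a hidden interior minimum across the two subintervals.
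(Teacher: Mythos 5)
Your proposal is correct and follows essentially the same route as the paper's proof: the same structural claim that all small-item bins (up to a constant number) reach level $2(G+\tau)/3$ for both \TAAF and \GHAR with $K\ge 3$, the same per-bin weight bound of $\frac{G+\tau}{2}+R$, the same case split on fill level $L\le G$ versus $L>G$ with the WPC minimized at $L\to(G+\tau)/2$ and $L=G+\tau$ respectively, and the same quadratic giving $\mu_1(R)$ as the crossing point. Your explicit Type A/Type B decomposition and the derivative check retaining the $+R$ bonus in the black region are slightly more detailed than the paper's write-up but yield the identical bound.
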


\begin{restatable}{lemma}{taafUBlemII} \label{lem:opt-WPC}
Given $\beta G > 1$ and the weighting function $w(x)$ given in Equation~\eqref{eq:weight}, define $\mu_2 := \mu_2(R) = \frac{R}{\beta(G+R)-1}$. If $\tau < 1-G$, \WPC of \Opt is at most $\frac{G+\tau+2R}{1+\tau \beta}$ if $\tau \le \mu_2$, and at most $G+R$ if $\tau > \mu_2$. Meanwhile, if $\tau = 1-G$, the \WPC of \Opt is at most $G+R$.

\end{restatable}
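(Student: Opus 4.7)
\textbf{Proof plan for Lemma~\ref{lem:opt-WPC}.} The idea is to bound the weight-to-cost ratio bin by bin in any \Opt\ packing: if every bin satisfies $(\text{weight in bin})/(\text{cost of bin}) \le c$, summation yields $W(\sigma)/\Opt(\sigma) \le c$. Fix an arbitrary bin with fill level $L$ containing exactly $m$ \emph{large} items (those of size strictly greater than $(G+\tau)/2$), together with possibly some small items. Its total weight equals $L + mR$ and its cost equals $1+\beta(L-G)^+$, so its WPC equals $(L+mR)/(1+\beta(L-G)^+)$.

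The driving lever is a monotonicity argument powered by $\beta G > 1$. For $L > G$, the derivative of $(L+mR)/(1+\beta(L-G))$ with respect to $L$ is proportional to $1 - \beta(G+mR)$, which is strictly negative whenever $R \ge 0$; thus, for each fixed $m$ the per-bin WPC is maximized at the smallest feasible $L$. For $m = 0$ the maximum is $G$, attained at $L = G$. For $m = 1$ the fill level is at least the size of the single large item, and after splitting on $L \le G$ versus $L > G$ the WPC is bounded by $G+R$. For $m = 2$, two items each strictly greater than $(G+\tau)/2$ force $L > G+\tau$, and taking the infimum yields $(G+\tau+2R)/(1+\beta\tau)$.

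A short algebraic check shows that every case $m \ge 3$ is dominated by the $m = 2$ case: evaluating the ratio at the infimum $L = m(G+\tau)/2$ and simplifying reduces the comparison $(L+mR)/(1+\beta(L-G)) \le (G+\tau+2R)/(1+\beta\tau)$ to $(m/2)(1-\beta G) \le 1-\beta G$, which holds for all $m \ge 2$ since $1 - \beta G < 0$. Combining the three surviving bounds $G$, $G+R$, and $(G+\tau+2R)/(1+\beta\tau)$, a direct comparison shows that $(G+\tau+2R)/(1+\beta\tau) > G+R$ exactly when $\tau < R/(\beta(G+R)-1) = \mu_2(R)$, which reproduces the piecewise bound in the statement. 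When $\tau = 1-G$, each large item has size strictly greater than $1/2$, so no bin can contain two of them: the $m = 2$ branch disappears and the bound reduces uniformly to $G+R$.

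The step I expect to require the most care is cleanly organizing the per-bin case split, particularly checking that mixing small items into a bin with large items never permits a higher WPC than the pure large-item configuration at minimal $L$, and that the ``smallest feasible $L$'' used for the $m = 2$ infimum is genuinely attainable as a limit of legal \Opt\ packings. Once this uniform sign control via $\beta G > 1$ is nailed down, the remaining work is essentially arithmetic, leaning on $\beta(G+R) > 1$ to pin down the direction of each inequality.
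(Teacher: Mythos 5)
Your proposal is correct and follows essentially the same route as the paper: a bin-by-bin bound on weight per cost, using the fact that $\beta G > 1$ makes $(L+mR)/(1+\beta(L-G))$ decreasing in $L$, then comparing the surviving candidates $G+R$ and $(G+\tau+2R)/(1+\beta\tau)$ at $\mu_2$, with the two-large-item case vanishing when $\tau = 1-G$. If anything, your write-up is slightly more complete than the paper's sketch, since you explicitly dispose of bins with three or more large items and of fill levels in $(G, G+\tau]$, which the paper's case analysis passes over.
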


Furthermore, we can show that $\mu_1(R) > \mu_2(R)$. 
\begin{restatable}{lemma}{weightingGroup}\label{lem:weighting-group-1}
If $\beta G > 1$, then $\mu_1(R) > \mu_2(R)$ for all $R \in [0, \frac{G+\tau}{6}]$.
\end{restatable}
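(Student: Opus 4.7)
The plan is to exploit the quadratic characterization of $\mu_1$. A direct check against the quadratic formula shows that $\mu_1$ is exactly the larger root of
\[
p(\mu) \;:=\; \beta\mu^{2} + \bigl(\beta(G+2R)-1\bigr)\mu - (G-2R).
\]
Under $\beta G>1$, the sum of the two roots of $p$ equals $(1-\beta G-2R\beta)/\beta<0$, while the product equals $-(G-2R)/\beta$. In the admissible range $R\in[0,(G+\tau)/6]$ with $\tau\le 1-G$ we have $R\le 1/6$, so the product is non-positive whenever $R\le G/2$, which is the relevant subregime; hence the two roots of $p$ have opposite signs and $\mu_1$ is the unique non-negative root. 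Consequently, to prove $\mu_2<\mu_1$ it suffices to show $p(\mu_2)<0$, since $p$ is convex, opens upward, and $\mu_2\ge 0$.

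The next step is to compute $p(\mu_2)$ using the defining identity $\mu_2\bigl(\beta(G+R)-1\bigr)=R$. Writing $A:=\beta(G+R)-1>0$, substitution of $\mu_2=R/A$ and simplification (pulling $\beta G+2R\beta-1=A+\beta R$ and using $1+A=\beta(G+R)$) yields
\[
p(\mu_2) \;=\; \beta\mu_2(\mu_2+R) - (G-3R) \;=\; \frac{\beta^{2}R^{2}(G+R)}{A^{2}} - (G-3R).
\]
The inequality $p(\mu_2)<0$ is therefore equivalent to the polynomial inequality
\[
\beta^{2}R^{2}(G+R) \;<\; (G-3R)\,\bigl(\beta(G+R)-1\bigr)^{2}.
\]

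The final step is to establish this polynomial inequality for $R\in[0,(G+\tau)/6]$ under $\beta G>1$. At $R=0$ it reduces to $0<G(\beta G-1)^{2}$, which is strict. The plan is to treat the gap $h(R):=(G-3R)A^{2}-\beta^{2}R^{2}(G+R)$ as a cubic in $R$ with positive value at $R=0$ and analyze its monotonicity via $h'(R)$: using $dA/dR=\beta$, one can group the derivative into a term of order $A^{2}$ (negative, reflecting the $-3R$ factor) and a lower-order positive term, producing an explicit bound of the form $h(R)\ge h(0)-CR$ for a constant $C$ depending on $\beta,G$. Plugging in $R\le(G+\tau)/6\le 1/6$ then yields $h(R)>0$ after invoking $\beta G>1$. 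The main obstacle is controlling this cubic tightly enough that the bound survives across the full range of $(\beta,G,\tau)$ satisfying the hypothesis; if the coarse linear-in-$R$ control is insufficient near the boundary $R=(G+\tau)/6$, I would refine the argument by splitting into the subcases $\tau\le G$ (where $R\le G/3$ makes $G-3R\ge 0$ and the inequality is comparatively clean) and $\tau>G$ (which requires a sharper lower bound on $A^{2}$ obtained from $\beta(G+R)-1\ge \beta G-1$).
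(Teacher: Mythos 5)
Your approach differs genuinely from the paper's. The paper first notes $\mu_1(0)=1/\beta>0=\mu_2(0)$, proves $\mu_1$ is decreasing and $\mu_2$ increasing in $R$ via explicit derivative signs, and thereby reduces the claim to one endpoint check at $R=(G+1/\beta)/6$ (after restricting to $\tau\le 1/\beta$). You instead identify $\mu_1$ as the positive root of the quadratic $p(\mu)=\beta\mu^2+\bigl(\beta(G+2R)-1\bigr)\mu-(G-2R)$ and reduce the claim to $p(\mu_2)<0$; this reduction is valid, and your evaluation $p(\mu_2)=\beta^2R^2(G+R)/A^2-(G-3R)$ with $A=\beta(G+R)-1$ checks out. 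The quadratic route is cleaner in that it treats all $R$ at once and isolates a single polynomial inequality to verify.

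The final step, however, is only sketched, and the sketch cannot be completed as planned: the target inequality $\beta^2R^2(G+R)<(G-3R)\bigl(\beta(G+R)-1\bigr)^2$ is actually false near the endpoint $R=(G+\tau)/6$ when $\beta G$ is only slightly above $1$. With $\beta=2.2$, $G=1/2$, $R=1/10$ the left side is about $0.029$ while the right side is about $0.020$. More systematically, setting $x=\beta G$ and $R=(x+1)/(6\beta)$, the inequality becomes $35x^3-93x^2+69x-19>0$, which is negative for $1<x$ up to roughly $1.66$. So the proposed linear-in-$R$ bound $h(R)\ge h(0)-CR$ cannot close, and no refinement of the cubic estimate will help, because the conclusion itself fails on part of the claimed range. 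In fact your reduction exposes an algebra slip in the paper's own appendix: there $\mu_2\bigl((G+1/\beta)/6\bigr)$ is computed with denominator $7\beta G$ rather than the correct $7\beta G-5$, which produces the more forgiving polynomial $210x^3-198x^2+24x-9$ (positive for all $x>1$) and masks the failure. Your framing is arguably more transparent than the paper's, but as a proof it remains incomplete, and the lemma's hypotheses need to be reexamined before either argument can go through.
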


Using this weighting function and above lemmas,
we construct upper bounds ($\tau <1-G$ on left, $\tau=1-G$ on right) which hold for any given value $R\in [0, \frac{G+\tau}{6}]$. 

$$CR \le \begin{cases}
    \frac{\frac{G+\tau+2R}{1+\tau\beta}}{\frac{G+\tau+2R}{2}} = \frac{2}{1+\tau \beta} & 0 \le \tau \le \mu_2 \\
    \frac{G+R}{\frac{G+\tau+2R}{2}} = \frac{2(G+R)}{G+\tau+2R} & \mu_2 \le \tau \le \mu_1 \\
    \frac{G+R}{\frac{G+\tau}{1+\tau\beta}} = \frac{(G+R)(1+\tau\beta)}{G+\tau} & \mu_1 \le \tau \le 1-G \\
\end{cases}, \text{ and } CR \le \begin{cases}
    \frac{G+R}{\frac{1+2R}{2}} = \frac{2(G+R)}{1+2R} & 0\le  1-G \le \mu_1 \\
    \frac{G+R}{\frac{1}{1+\tau\beta}} = \frac{(G+R)(1+\beta(1-G))}{1} & \mu_1 \le 1-G \\
\end{cases}$$

When $R = 0$, we have $\mu_1 = 1/\beta$, and thus $\CR \le \frac{G(1+\tau \beta)}{G + \tau}$ for $\tau \in [1/\beta, 1-G]$. This matches the third term in the lower bound in \autoref{thm:taf-lb}.

As $R$ increases from $0$ to $\frac{G+\tau}{6}$, $\mu_1(R)$ decreases from $\frac{1}{\beta}$ to $\frac{1}{2\beta}$, and $\CR \le 1 + \frac{(G-\tau)(1+\tau\beta)}{G+ \tau}$ for $\tau = \mu_1(R) \in [\frac{1}{2\beta},\frac{1}{\beta}]$. This also matches the second term in the lower bound.

When $R = \frac{G+\tau}{6}$, we have $\mu_1 = \frac{1}{2\beta}$, and $\frac{2(G+R)}{G+\tau + 2 R} =  \frac{7G + \tau}{4(G+\tau)}$. Thus, we have $\CR \le \max\{\frac{2}{1+\tau\beta}, \frac{7G + \tau}{4(G+\tau)})$, which is close to the first term in the lower bound. Note that $\frac{2}{1+\tau\beta}$ is only possible if $\tau <1-G$.

Summarizing all above cases gives the upper bounds in \autoref{thm:taf-UB}.

\section{Experiments}
\label{sec:experiments}
In this section, we evaluate the numerical performance of each of the algorithms discussed in both settings of $\beta G \le 1$ and $\beta G > 1$.

\subsection{Experimental Setup}
We evaluate the algorithms when items are drawn from a Weibull distribution~\cite{angelopoulos2024BPpredictions}, which was applied to bin packing by \cite{Castiñeiras2012} in the context of data center job scheduling. Following the previous work by \cite{angelopoulos2024BPpredictions}, we generate items using a shape parameter of $3.0$.
In \autoref{Appendix-Experiments}, we evaluate the algorithms under two additional input distributions: continuous uniform $[0,1]$, the GI benchmark in the BPPLIB library~\cite{Gschwind2016,Dolorme2018}. All experiments show similar trends.
Due to the complexity of computing the offline optimal bin packing solution, we follow the convention of evaluating the algorithms' performance by \textit{empirical competitive ratio}, which is the ratio of the empirical cost from an online algorithm and lower bound of offline optimal algorithm, for a given input sequence. 
For each experiment, we test the specific $\beta$ and $G$ on an input with $3000$ items, and present the average empirical competitive ratio.

\subsection{Experimental Results}

\noindent\textbf{Results for $\beta G \le 1$.}
In Figure~\ref{fig:unif_weibullleqOne}, we present the empirical competitive ratios of \NF, \WF, \AAF (including \FF and \BF), and \HAR (with $K = 10$).
The results of these simulations are in line with our theoretical results and with empirical patterns in classical bin packing. First, we see that when $\beta G = 1$ and $G \le \nicefrac{1}{2}$, as in the $\beta=4$ case, \BF and \FF \shahinadd{achieve performance closest to the optimum.}%
This mirrors our competitive ratio bounds. Additionally, we see a separation of the algorithms into three groups by performance. The worst performing group contains \NF and \HAR. \WF itself makes up the middle group; while \FF and \BF are near optimal in practice. This is aligned with empirical performance in classic bin packing, where \HAR does poorly despite having the best theoretical competitive ratio \cite{Bentley1984}\cite{Lee1985Har}.

\begin{figure}[t]
    \centering
    \includegraphics[width=0.9\linewidth]{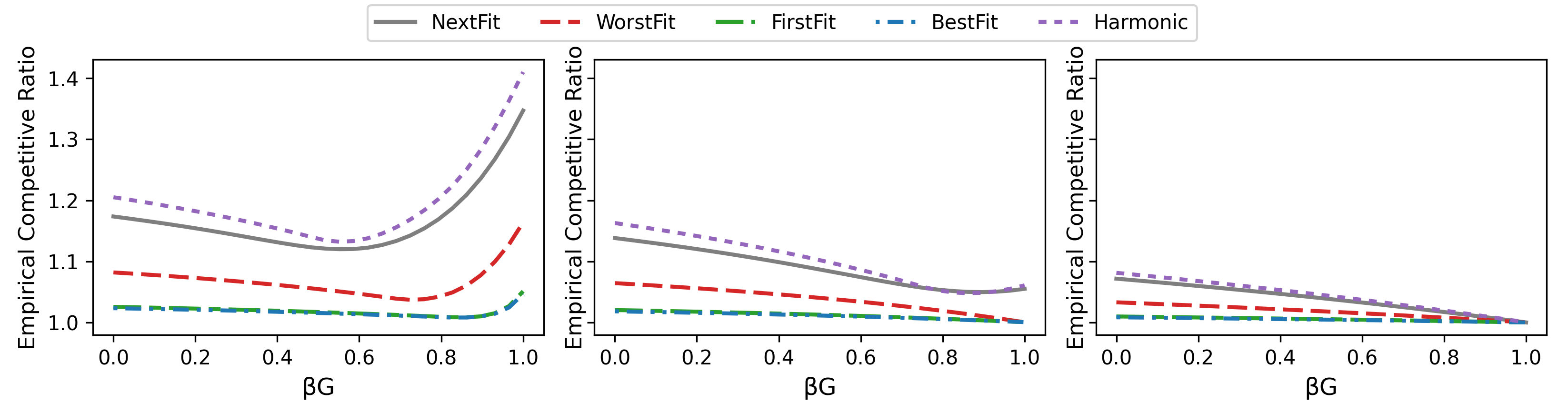}
    \makebox[0.99\textwidth][s]{ 
        \hspace{16mm}
        (a) $\beta = 1$
        \hspace{27mm}
        (b) $\beta = 3/2$
        \hspace{27mm}
        (c) $\beta = 4$
        \hspace{10mm}
    }
    \caption{Experiments on Weibull distribution with fixed $\beta$ for $\beta G \le 1$}
    \Description{\NF and \HAR perform worse than \WF, which is worse than \FF and \BF. As $\beta$ increases, the competitive ratio for all algorithms decreases at any given $\beta G$. For $\beta = 1$, the competitive ratio increases as $\G \rightarrow 1$, most dramatically for the worst performing algorithms.}
    \label{fig:unif_weibullleqOne}
\end{figure}

The most interesting observation is the uptick in competitive ratio for large $\beta G$, as seen in the $\beta=1$ and $\beta = 3/2$ plots. 
This is due to wasted green space as $G$ becomes large. 
For a given bin, this uptick compared to $\Opt$ will occur when the fill level $L < \beta G$. Since \NF and \HAR achieve lower fill levels, they waste more green space relative to $\Opt$. Notably, these upticks disappear when $\beta \ge 2$ because each algorithm will be able to fill every bin to at least $\nicefrac{1}{2}$, and when $\beta \ge 2$, $G \le \nicefrac{1}{2}$. Thus, no green space will be wasted by any of the algorithms.

\noindent\textbf{Results for $\beta G > 1$.}
In this case, we have the added complexity of selecting the value of $\tau$ to use for each algorithm. 
In \autoref{fig:compare_tau_weibull}, we evaluate the algorithms with both theoretical $\tau$ as derived in \autoref{sec:GBgtOne} and empirically-inspired $\tau$ (See details in \autoref{Appendix-Experiments}). 
We find that there is a trade-off between worst-case bounds and practical performance. Each algorithm except for \GHAR appears to perform better with a smaller $\tau$ than that suggested by the theory.
We can observe that the performances of the algorithms using the theoretical $\tau$ remain divided into three groups, with \TWF and \GHAR performing the best, \TNF performing the worst, and \TFF and \TBF performing as bad as \TNF for small $\beta G$ but as well as \TWF for large $\beta G$. The performance gain of \TWF is mainly due to the load-balancing nature of this algorithm, which prevents the use of costly black space for bins with effective capacity $G + \tau$. However, this advantage can be mitigated by a proper selection of $\tau$, as shown in \autoref{fig:compare_tau_weibull}(b). This motivates the design of data-driven algorithms for threshold-based \gbp, which represents a promising future direction.
Another interesting observation is that the performance of algorithms using either the theoretical or empirical $\tau$ first increases and then decreases as $\beta G$ grows. Bounding the performance when $\beta G$ is not at the extremes ($\beta G \to 1$ or $\beta G \to \infty$) appears to be a particularly critical and interesting future direction.

\begin{figure}[t]
    \centering
    \includegraphics[width=0.9\linewidth]{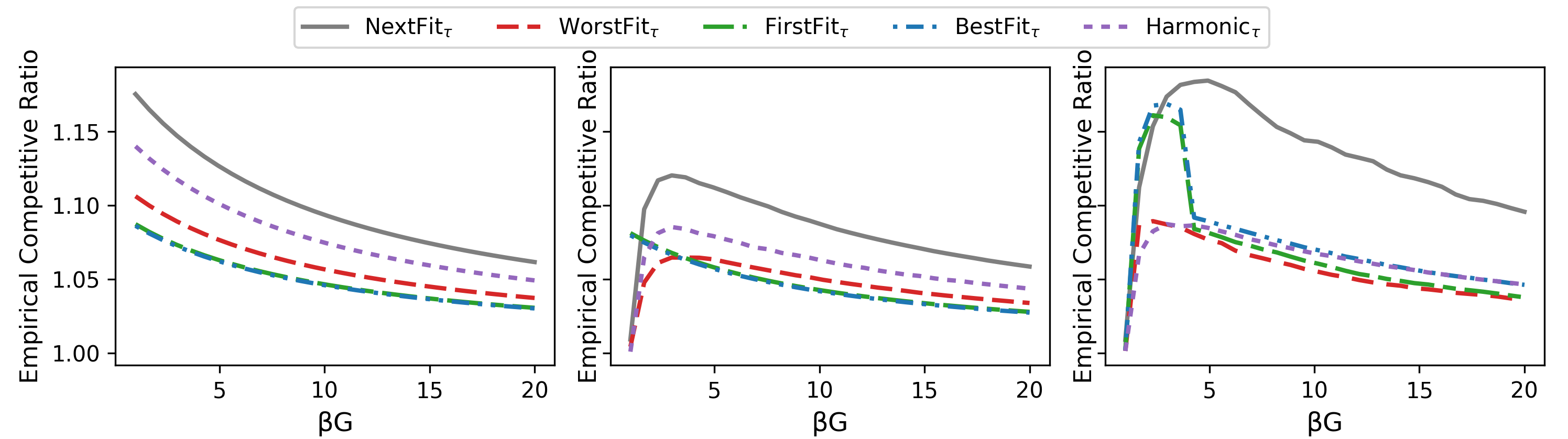}
    \makebox[0.99\textwidth][s]{ 
        \hspace{20mm}
        (a) $\tau = 0$
        \hspace{28mm}
        (b) Empirical $\tau$
        \hspace{23mm}
        (c) Theoretical $\tau$
        \hspace{10mm}
    }
    \caption{Experiments on Weibull distribution with $G=\nicefrac{1}{2}$. (a): All $\tau=0$ (b): All $\tau$ set to empirically best $\tau$ (c): All $\tau$ set to theoretical best $\tau$ for given $\beta G$}
    \Description{For $\tau=0$ all algorithms are decreasing in $\beta G$. For the empirical and theoretical $\tau$, performance is increasing for small $\beta G$, but decreases once $\beta G > 5$.}
    \label{fig:compare_tau_weibull}
\end{figure}

To further demonstrate the potential improvement from empirical thresholds, we evaluate all our threshold algorithms in \autoref{fig:WeibullgeOne} by fixing $G \in \{1/2, 3/4, 19/20\}$ and selecting $30$ values of $\beta$ from $1.005/G$ to $20/G$.
Here, we see that the value of $G$ impacts both the absolute performance and the degree of separation between algorithms. For $G = \nicefrac{1}{2}$, none of the algorithms waste green space, leading to less separation among the algorithms overall. As $G$ increases, the grouping of \NF and \HAR, \WF, and \FF and \BF becomes more pronounced, and the absolute empirical competitive ratio generally increases.
We can also observe that as the green space $G$ increases, the load-balancing advantage of \TWF in using green space becomes weaker, and \TBF and \TFF perform much better than \TWF. Notably, \GHAR can still perform as poorly as \TNF.

\begin{figure}[t]
    \centering
    \includegraphics[width=0.899\linewidth]{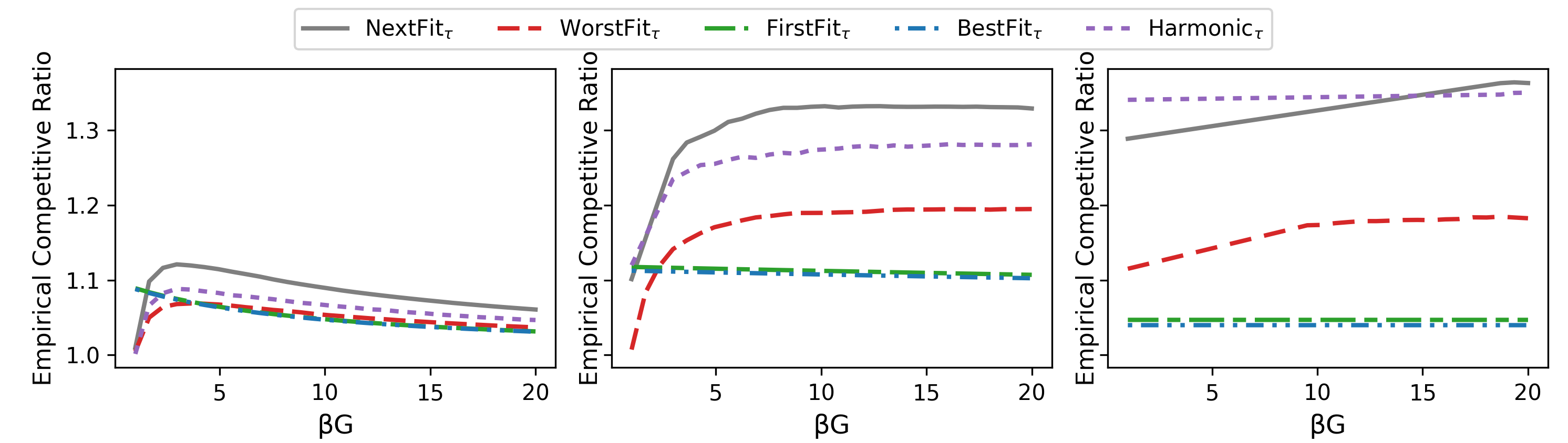}
    \makebox[0.99\textwidth][s]{ 
        \hspace{17mm}
        (a) $G = \nicefrac{1}{2}$
        \hspace{28mm}
        (b) $G= \nicefrac{3}{4}$
        \hspace{25mm}
        (c) $G= \nicefrac{19}{20}$
        \hspace{10mm}
    }
    \caption{Experiments on Weibull distribution with fixed $G$ using empirically determined $\tau$.}
    \Description{Algorithm performance is very similar for $G=1/2$. For larger $G$ the algorithms separate into three groups: \NF and \HAR, \WF, \FF and \BF.}
    \label{fig:WeibullgeOne}
\end{figure}

\section{Concluding Remarks and Future Directions}
\label{sec:conclusion}
We introduced and analyzed the \gbp problem, where filling bins up to a free-capacity threshold $G < 1$ incurs no cost, while usage beyond this point incurs a unit cost of $\beta$. Our results show that classical bin packing algorithms remain effective when $\beta G \le 1$, although their competitive analyses require new analytical techniques and reveal new insights. In contrast, when $\beta G > 1$, the problem exhibits different behavior, calling for novel algorithmic strategies. For this regime, we proposed a family of threshold-based algorithms that strategically control the utilization of costly bin capacity before opening new bins. We characterized their competitive ratios and revealed several ways in which the resulting trade-offs fundamentally differ from those in classical bin packing.

As summarized in Tables~\ref{tab:less1} and~\ref{tab:greater}, our theoretical bounds cover most algorithms and parameter ranges, yet certain gaps remain between the upper bounds on the competitive ratios achieved by our algorithms and the general lower bounds. 
Such gaps are not unique to \gbp: similar discrepancies persist even in the classical bin packing problem and have proven notoriously difficult to close. 
Tightening these gaps in the \gbp\ setting will likely require new adversarial constructions or more refined analytical techniques capable of capturing the joint dependence on~$\beta$ and~$G$. 
We regard this as a natural direction for further research.
Beyond these analytical gaps, \gbp\ opens several promising research directions.
One direction is to design adaptive or randomized thresholding strategies for the $\beta G > 1$ regime. Another is to leverage ML predictions ---for example, forecasts of future item sizes---drawing inspiration from recent advances in learning-augmented algorithms. Finally, extending the model to incorporate nonlinear (e.g., convex) cost functions would capture a broader range of practical systems where marginal energy or overcommitment costs increase superlinearly with usage.

\begin{acks}

This research is supported by National Science Foundation grants 2045641, 2325956, 2512128, and 2533814. In addition, we acknowledge the support of the Natural Sciences and Engineering Research Council of Canada (NSERC), [funding reference number RGPIN-2025-07295]. 

\end{acks}

\bibliographystyle{ACM-Reference-Format}
\bibliography{references}


\begin{thebibliography}{57}


\ifx \showCODEN    \undefined \def \showCODEN     #1{\unskip}     \fi
\ifx \showISBNx    \undefined \def \showISBNx     #1{\unskip}     \fi
\ifx \showISBNxiii \undefined \def \showISBNxiii  #1{\unskip}     \fi
\ifx \showISSN     \undefined \def \showISSN      #1{\unskip}     \fi
\ifx \showLCCN     \undefined \def \showLCCN      #1{\unskip}     \fi
\ifx \shownote     \undefined \def \shownote      #1{#1}          \fi
\ifx \showarticletitle \undefined \def \showarticletitle #1{#1}   \fi
\ifx \showURL      \undefined \def \showURL       {\relax}        \fi
\providecommand\bibfield[2]{#2}
\providecommand\bibinfo[2]{#2}
\providecommand\natexlab[1]{#1}
\providecommand\showeprint[2][]{arXiv:#2}

\bibitem[Angelopoulos et~al\mbox{.}(2022)]%
        {angelopoulos2024BPpredictions}
\bibfield{author}{\bibinfo{person}{Spyros Angelopoulos},
  \bibinfo{person}{Shahin Kamali}, {and} \bibinfo{person}{Kimia Shadkami}.}
  \bibinfo{year}{2022}\natexlab{}.
\newblock \showarticletitle{Online Bin Packing with Predictions}. In
  \bibinfo{booktitle}{\emph{Proceedings of the Thirty-First International Joint
  Conference on Artificial Intelligence, {IJCAI-22}}},
  \bibfield{editor}{\bibinfo{person}{Lud~De Raedt}} (Ed.).
  \bibinfo{publisher}{International Joint Conferences on Artificial
  Intelligence Organization}, \bibinfo{pages}{4574--4580}.
\newblock
\href{https://doi.org/10.24963/ijcai.2022/635}{doi:\nolinkurl{10.24963/ijcai.2022/635}}


\bibitem[Balogh et~al\mbox{.}(2018)]%
        {balogh2017new}
\bibfield{author}{\bibinfo{person}{J\'{a}nos Balogh},
  \bibinfo{person}{J\'{o}zsef B\'{e}k\'{e}si}, \bibinfo{person}{Gy\"{o}rgy
  D\'{o}sa}, \bibinfo{person}{Leah Epstein}, {and} \bibinfo{person}{Asaf
  Levin}.} \bibinfo{year}{2018}\natexlab{}.
\newblock \showarticletitle{{A New and Improved Algorithm for Online Bin
  Packing}}. In \bibinfo{booktitle}{\emph{26th Annual European Symposium on
  Algorithms (ESA 2018)}} \emph{(\bibinfo{series}{Leibniz International
  Proceedings in Informatics (LIPIcs)}, Vol.~\bibinfo{volume}{112})},
  \bibfield{editor}{\bibinfo{person}{Yossi Azar}, \bibinfo{person}{Hannah
  Bast}, {and} \bibinfo{person}{Grzegorz Herman}} (Eds.).
  \bibinfo{publisher}{Schloss Dagstuhl -- Leibniz-Zentrum f{\"u}r Informatik},
  \bibinfo{address}{Dagstuhl, Germany}, \bibinfo{pages}{5:1--5:14}.
\newblock
\showISBNx{978-3-95977-081-1}
\showISSN{1868-8969}
\href{https://doi.org/10.4230/LIPIcs.ESA.2018.5}{doi:\nolinkurl{10.4230/LIPIcs.ESA.2018.5}}


\bibitem[Balogh et~al\mbox{.}(2014)]%
        {Balogh2014Migration}
\bibfield{author}{\bibinfo{person}{J\'{a}nos Balogh},
  \bibinfo{person}{J\'{o}zsef B\'{e}k\'{e}si}, \bibinfo{person}{G\'{a}bor
  Galambos}, {and} \bibinfo{person}{Gerhard Reinelt}.}
  \bibinfo{year}{2014}\natexlab{}.
\newblock \showarticletitle{On-line bin packing with restricted repacking}.
\newblock \bibinfo{journal}{\emph{J. Comb. Optim.}} \bibinfo{volume}{27},
  \bibinfo{number}{1} (\bibinfo{date}{Jan.} \bibinfo{year}{2014}),
  \bibinfo{pages}{115–131}.
\newblock
\showISSN{1382-6905}
\href{https://doi.org/10.1007/s10878-012-9489-4}{doi:\nolinkurl{10.1007/s10878-012-9489-4}}


\bibitem[Balogh et~al\mbox{.}(2015)]%
        {BALOGH2015offline}
\bibfield{author}{\bibinfo{person}{János Balogh}, \bibinfo{person}{József
  Békési}, \bibinfo{person}{György Dósa}, \bibinfo{person}{Leah Epstein},
  \bibinfo{person}{Hans Kellerer}, \bibinfo{person}{Asaf Levin}, {and}
  \bibinfo{person}{Zsolt Tuza}.} \bibinfo{year}{2015}\natexlab{}.
\newblock \showarticletitle{Offline black and white bin packing}.
\newblock \bibinfo{journal}{\emph{Theoretical Computer Science}}
  \bibinfo{volume}{596} (\bibinfo{year}{2015}), \bibinfo{pages}{92--101}.
\newblock
\showISSN{0304-3975}
\href{https://doi.org/10.1016/j.tcs.2015.06.045}{doi:\nolinkurl{10.1016/j.tcs.2015.06.045}}


\bibitem[Balogh et~al\mbox{.}(2012)]%
        {BALOGH2012lowerBound}
\bibfield{author}{\bibinfo{person}{János Balogh}, \bibinfo{person}{József
  Békési}, {and} \bibinfo{person}{Gábor Galambos}.}
  \bibinfo{year}{2012}\natexlab{}.
\newblock \showarticletitle{New lower bounds for certain classes of bin packing
  algorithms}.
\newblock \bibinfo{journal}{\emph{Theoretical Computer Science}}
  \bibinfo{volume}{440-441} (\bibinfo{year}{2012}), \bibinfo{pages}{1--13}.
\newblock
\showISSN{0304-3975}
\href{https://doi.org/10.1016/j.tcs.2012.04.017}{doi:\nolinkurl{10.1016/j.tcs.2012.04.017}}


\bibitem[Banerjee and Freund(2020)]%
        {Banerjee2020}
\bibfield{author}{\bibinfo{person}{Siddhartha Banerjee} {and}
  \bibinfo{person}{Daniel Freund}.} \bibinfo{year}{2020}\natexlab{}.
\newblock \showarticletitle{Uniform Loss Algorithms for Online Stochastic
  Decision-Making With Applications to Bin Packing}. In
  \bibinfo{booktitle}{\emph{Abstracts of the 2020 SIGMETRICS/Performance Joint
  International Conference on Measurement and Modeling of Computer Systems}}
  (Boston, MA, USA) \emph{(\bibinfo{series}{SIGMETRICS '20})}.
  \bibinfo{publisher}{Association for Computing Machinery},
  \bibinfo{address}{New York, NY, USA}, \bibinfo{pages}{1–2}.
\newblock
\showISBNx{9781450379854}
\href{https://doi.org/10.1145/3393691.3394224}{doi:\nolinkurl{10.1145/3393691.3394224}}


\bibitem[Barbalho et~al\mbox{.}(2023)]%
        {barbalho2023virtual}
\bibfield{author}{\bibinfo{person}{Hugo Barbalho}, \bibinfo{person}{Patricia
  Kovaleski}, \bibinfo{person}{Beibin Li}, \bibinfo{person}{Luke Marshall},
  \bibinfo{person}{Marco Molinaro}, \bibinfo{person}{Abhisek Pan},
  \bibinfo{person}{Eli Cortez}, \bibinfo{person}{Matheus Leao},
  \bibinfo{person}{Harsh Patwari}, \bibinfo{person}{Zuzu Tang},
  {et~al\mbox{.}}} \bibinfo{year}{2023}\natexlab{}.
\newblock \showarticletitle{Virtual machine allocation with lifetime
  predictions}.
\newblock \bibinfo{journal}{\emph{Proceedings of Machine Learning and Systems}}
   \bibinfo{volume}{5} (\bibinfo{year}{2023}), \bibinfo{pages}{232--253}.
\newblock


\bibitem[Bashir et~al\mbox{.}(2021)]%
        {bashir2021take}
\bibfield{author}{\bibinfo{person}{Noman Bashir}, \bibinfo{person}{Nan Deng},
  \bibinfo{person}{Krzysztof Rzadca}, \bibinfo{person}{David Irwin},
  \bibinfo{person}{Sree Kodak}, {and} \bibinfo{person}{Rohit Jnagal}.}
  \bibinfo{year}{2021}\natexlab{}.
\newblock \showarticletitle{Take it to the limit: peak prediction-driven
  resource overcommitment in datacenters}. In
  \bibinfo{booktitle}{\emph{Proceedings of the Sixteenth European Conference on
  Computer Systems}}. \bibinfo{pages}{556--573}.
\newblock


\bibitem[Bentley et~al\mbox{.}(1984)]%
        {Bentley1984}
\bibfield{author}{\bibinfo{person}{John~L. Bentley}, \bibinfo{person}{David~S.
  Johnson}, \bibinfo{person}{Frank~T. Leighton}, \bibinfo{person}{Catherine~C.
  McGeoch}, {and} \bibinfo{person}{Lyle~A. McGeoch}.}
  \bibinfo{year}{1984}\natexlab{}.
\newblock \showarticletitle{Some unexpected behavior results for bin packing}.
\newblock \bibinfo{journal}{\emph{Proceedings of the 16th ACM Symposium on
  Theory of Computing}} (\bibinfo{year}{1984}), \bibinfo{pages}{279--288}.
\newblock
\href{https://doi.org/10.1145/800057.808692}{doi:\nolinkurl{10.1145/800057.808692}}


\bibitem[Borgulya(2021)]%
        {Borgulya2021}
\bibfield{author}{\bibinfo{person}{Istvan Borgulya}.}
  \bibinfo{year}{2021}\natexlab{}.
\newblock \showarticletitle{A hybrid evolutionary algorithm for the offline Bin
  Packing Problem}.
\newblock \bibinfo{journal}{\emph{Cent Eur J Oper Res}}  \bibinfo{volume}{29}
  (\bibinfo{year}{2021}), \bibinfo{pages}{425--445}.
\newblock
\href{https://doi.org/10.1007/s10100-020-00695-5}{doi:\nolinkurl{10.1007/s10100-020-00695-5}}


\bibitem[Castiñeiras et~al\mbox{.}(2012)]%
        {Castiñeiras2012}
\bibfield{author}{\bibinfo{person}{Ignacio Castiñeiras},
  \bibinfo{person}{Milan de Cauwer}, {and} \bibinfo{person}{Barry O'Sullivan}.}
  \bibinfo{year}{2012}\natexlab{}.
\newblock \showarticletitle{Weibull-Based Benchmarks for Bin Packing}.
\newblock \bibinfo{journal}{\emph{Constraint Programming}}
  \bibinfo{volume}{7514} (\bibinfo{year}{2012}), \bibinfo{pages}{207--222}.
\newblock
\showISBNx{9783642335570}
\href{https://doi.org/10.1007/978-3-642-33558-7_17}{doi:\nolinkurl{10.1007/978-3-642-33558-7_17}}


\bibitem[Coffman and Lueker(2006)]%
        {Coffman2006}
\bibfield{author}{\bibinfo{person}{E.G. Coffman} {and} \bibinfo{person}{G.S.
  Lueker}.} \bibinfo{year}{2006}\natexlab{}.
\newblock \showarticletitle{Approximation Algorithms for Extensible Bin
  Packing}.
\newblock \bibinfo{journal}{\emph{J Sched}}  \bibinfo{volume}{9}
  (\bibinfo{year}{2006}), \bibinfo{pages}{63--69}.
\newblock
\href{https://doi.org/10.1007/s10951-006-5594-5}{doi:\nolinkurl{10.1007/s10951-006-5594-5}}


\bibitem[Cohen et~al\mbox{.}(2019)]%
        {Cohen2019Overcommitment}
\bibfield{author}{\bibinfo{person}{Maxime~C. Cohen},
  \bibinfo{person}{Philipp~W. Keller}, \bibinfo{person}{Vahab Mirrokni}, {and}
  \bibinfo{person}{Morteza Zadimoghaddam}.} \bibinfo{year}{2019}\natexlab{}.
\newblock \showarticletitle{Overcommitment in Cloud Services: Bin Packing with
  Chance Constraints}.
\newblock \bibinfo{journal}{\emph{Management Science}} \bibinfo{volume}{65},
  \bibinfo{number}{7} (\bibinfo{year}{2019}), \bibinfo{pages}{3255--3271}.
\newblock
\href{https://doi.org/10.1287/mnsc.2018.3091}{doi:\nolinkurl{10.1287/mnsc.2018.3091}}


\bibitem[Cortez et~al\mbox{.}(2017)]%
        {cortez2017resource}
\bibfield{author}{\bibinfo{person}{Eli Cortez}, \bibinfo{person}{Anand Bonde},
  \bibinfo{person}{Alexandre Muzio}, \bibinfo{person}{Mark Russinovich},
  \bibinfo{person}{Marcus Fontoura}, {and} \bibinfo{person}{Ricardo
  Bianchini}.} \bibinfo{year}{2017}\natexlab{}.
\newblock \showarticletitle{Resource central: Understanding and predicting
  workloads for improved resource management in large cloud platforms}. In
  \bibinfo{booktitle}{\emph{Proceedings of the 26th Symposium on Operating
  Systems Principles}}. \bibinfo{pages}{153--167}.
\newblock


\bibitem[Csirik et~al\mbox{.}(2006)]%
        {Csirik2006}
\bibfield{author}{\bibinfo{person}{Janos Csirik}, \bibinfo{person}{David~S.
  Johnson}, \bibinfo{person}{Claire Kenyon}, \bibinfo{person}{James~B. Orlin},
  \bibinfo{person}{Peter~W. Shor}, {and} \bibinfo{person}{Richard~R. Weber}.}
  \bibinfo{year}{2006}\natexlab{}.
\newblock \showarticletitle{On the Sum-of-Squares algorithm for bin packing}.
\newblock \bibinfo{journal}{\emph{J. ACM}} \bibinfo{volume}{53},
  \bibinfo{number}{1} (\bibinfo{date}{Jan.} \bibinfo{year}{2006}),
  \bibinfo{pages}{1–65}.
\newblock
\showISSN{0004-5411}
\href{https://doi.org/10.1145/1120582.1120583}{doi:\nolinkurl{10.1145/1120582.1120583}}


\bibitem[Denton et~al\mbox{.}(2010)]%
        {Denton2010OR}
\bibfield{author}{\bibinfo{person}{Brian~T. Denton}, \bibinfo{person}{Andrew~J.
  Miller}, \bibinfo{person}{Hari~J. Balasubramanian}, {and}
  \bibinfo{person}{Todd~R. Huschka}.} \bibinfo{year}{2010}\natexlab{}.
\newblock \showarticletitle{Optimal Allocation of Surgery Blocks to Operating
  Rooms Under Uncertainty}.
\newblock \bibinfo{journal}{\emph{Operations Research}} \bibinfo{volume}{58},
  \bibinfo{number}{4-part-1} (\bibinfo{year}{2010}), \bibinfo{pages}{802--816}.
\newblock
\href{https://doi.org/10.1287/opre.1090.0791}{doi:\nolinkurl{10.1287/opre.1090.0791}}


\bibitem[Fernandez de~la Vega and Lueker(1981)]%
        {Vega81apt}
\bibfield{author}{\bibinfo{person}{W. Fernandez de~la Vega} {and}
  \bibinfo{person}{G.S Lueker}.} \bibinfo{year}{1981}\natexlab{}.
\newblock \showarticletitle{Bin packing can be solved within 1 + $\epsilon$ in
  linear time}.
\newblock \bibinfo{journal}{\emph{Combinatorica}}  \bibinfo{volume}{1}
  (\bibinfo{year}{1981}), \bibinfo{pages}{349–355}.
\newblock
\urldef\tempurl%
\url{https://doi.org/10.1007/BF02579456}
\showURL{%
\tempurl}


\bibitem[Gambosi et~al\mbox{.}(2000)]%
        {Gambosi2000Migration}
\bibfield{author}{\bibinfo{person}{Giorgio Gambosi}, \bibinfo{person}{Alberto
  Postiglione}, {and} \bibinfo{person}{Maurizio Talamo}.}
  \bibinfo{year}{2000}\natexlab{}.
\newblock \showarticletitle{Algorithms for the Relaxed Online Bin-Packing
  Model}.
\newblock \bibinfo{journal}{\emph{SIAM J. Comput.}} \bibinfo{volume}{30},
  \bibinfo{number}{5} (\bibinfo{year}{2000}), \bibinfo{pages}{1532--1551}.
\newblock
\href{https://doi.org/10.1137/S0097539799180408}{doi:\nolinkurl{10.1137/S0097539799180408}}


\bibitem[Garey and Johnson(1979)]%
        {GareyJ79}
\bibfield{author}{\bibinfo{person}{M.~R. Garey} {and} \bibinfo{person}{David~S.
  Johnson}.} \bibinfo{year}{1979}\natexlab{}.
\newblock \bibinfo{booktitle}{\emph{Computers and Intractability: {A} Guide to
  the Theory of NP-Completeness}}.
\newblock \bibinfo{publisher}{W. H. Freeman}.
\newblock
\showISBNx{0-7167-1044-7}


\bibitem[Grandl et~al\mbox{.}(2014)]%
        {Grandl2014ClusterSched}
\bibfield{author}{\bibinfo{person}{Robert Grandl}, \bibinfo{person}{Ganesh
  Ananthanarayanan}, \bibinfo{person}{Srikanth Kandula},
  \bibinfo{person}{Sriram Rao}, {and} \bibinfo{person}{Aditya Akella}.}
  \bibinfo{year}{2014}\natexlab{}.
\newblock \showarticletitle{Multi-resource packing for cluster schedulers}. In
  \bibinfo{booktitle}{\emph{Proceedings of the 2014 ACM Conference on SIGCOMM}}
  (Chicago, Illinois, USA) \emph{(\bibinfo{series}{SIGCOMM '14})}.
  \bibinfo{publisher}{Association for Computing Machinery},
  \bibinfo{address}{New York, NY, USA}, \bibinfo{pages}{455–466}.
\newblock
\showISBNx{9781450328364}
\href{https://doi.org/10.1145/2619239.2626334}{doi:\nolinkurl{10.1145/2619239.2626334}}


\bibitem[Gschwind and Irnich(2016)]%
        {Gschwind2016}
\bibfield{author}{\bibinfo{person}{Timo Gschwind} {and} \bibinfo{person}{Stefan
  Irnich}.} \bibinfo{year}{2016}\natexlab{}.
\newblock \showarticletitle{Dual Inequalities for Stabilized Column Generation
  Revisited}.
\newblock \bibinfo{journal}{\emph{INFORMS Journal on Computing}}
  \bibinfo{volume}{28}, \bibinfo{number}{1} (\bibinfo{year}{2016}),
  \bibinfo{pages}{175--194}.
\newblock
\href{https://doi.org/10.1287/ijoc.2015.0670}{doi:\nolinkurl{10.1287/ijoc.2015.0670}}


\bibitem[Gupta and Radovanović(2020)]%
        {Gupta_2020}
\bibfield{author}{\bibinfo{person}{Varun Gupta} {and} \bibinfo{person}{Ana
  Radovanović}.} \bibinfo{year}{2020}\natexlab{}.
\newblock \showarticletitle{Interior-Point-Based Online Stochastic Bin
  Packing}.
\newblock \bibinfo{journal}{\emph{Operations Research}} \bibinfo{volume}{68},
  \bibinfo{number}{5} (\bibinfo{date}{Sept.} \bibinfo{year}{2020}),
  \bibinfo{pages}{1474–1492}.
\newblock
\showISSN{1526-5463}
\href{https://doi.org/10.1287/opre.2019.1914}{doi:\nolinkurl{10.1287/opre.2019.1914}}


\bibitem[Hadary et~al\mbox{.}(2020)]%
        {Hadary2020Protean}
\bibfield{author}{\bibinfo{person}{Ori Hadary}, \bibinfo{person}{Luke
  Marshall}, \bibinfo{person}{Ishai Menache}, \bibinfo{person}{Abhisek Pan},
  \bibinfo{person}{Esaias~E Greeff}, \bibinfo{person}{David Dion},
  \bibinfo{person}{Star Dorminey}, \bibinfo{person}{Shailesh Joshi},
  \bibinfo{person}{Yang Chen}, \bibinfo{person}{Mark Russinovich}, {and}
  \bibinfo{person}{Thomas Moscibroda}.} \bibinfo{year}{2020}\natexlab{}.
\newblock \showarticletitle{Protean: VM allocation service at scale}. In
  \bibinfo{booktitle}{\emph{Proceedings of the 14th USENIX Conference on
  Operating Systems Design and Implementation}}
  \emph{(\bibinfo{series}{OSDI'20})}. \bibinfo{publisher}{USENIX Association},
  \bibinfo{address}{USA}, Article \bibinfo{articleno}{48},
  \bibinfo{numpages}{17}~pages.
\newblock
\showISBNx{978-1-939133-19-9}


\bibitem[Hajiesmaili et~al\mbox{.}(2025)]%
        {hajiesmaili2025toward}
\bibfield{author}{\bibinfo{person}{Mohammad Hajiesmaili},
  \bibinfo{person}{Shaolei Ren}, \bibinfo{person}{Ramesh Sitaraman}, {and}
  \bibinfo{person}{Adam Wierman}.} \bibinfo{year}{2025}\natexlab{}.
\newblock \showarticletitle{Toward Environmentally Equitable AI}.
\newblock \bibinfo{journal}{\emph{Commun. ACM}} \bibinfo{volume}{68},
  \bibinfo{number}{7} (\bibinfo{year}{2025}), \bibinfo{pages}{70--73}.
\newblock


\bibitem[Hong et~al\mbox{.}(2023)]%
        {Hong2023SBP}
\bibfield{author}{\bibinfo{person}{Yige Hong}, \bibinfo{person}{Qiaomin Xie},
  {and} \bibinfo{person}{Weina Wang}.} \bibinfo{year}{2023}\natexlab{}.
\newblock \showarticletitle{Near-Optimal Stochastic Bin-Packing in Large
  Service Systems with Time-Varying Item Sizes}.
\newblock \bibinfo{journal}{\emph{Proc. ACM Meas. Anal. Comput. Syst.}}
  \bibinfo{volume}{7}, \bibinfo{number}{3}, Article \bibinfo{articleno}{48}
  (\bibinfo{date}{Dec.} \bibinfo{year}{2023}), \bibinfo{numpages}{46}~pages.
\newblock
\href{https://doi.org/10.1145/3626779}{doi:\nolinkurl{10.1145/3626779}}


\bibitem[Irwin et~al\mbox{.}(2025)]%
        {irwin2025vision}
\bibfield{author}{\bibinfo{person}{David Irwin}, \bibinfo{person}{Prashant
  Shenoy}, \bibinfo{person}{Mohammad Hajiesmaili}, \bibinfo{person}{Walid~A
  Hanafy}, \bibinfo{person}{Jimi Oke}, \bibinfo{person}{Ramesh Sitaraman},
  \bibinfo{person}{Yuvraj Agarwal}, \bibinfo{person}{Geoff Gordon},
  \bibinfo{person}{Zico Kolter}, \bibinfo{person}{Deepak Rajagopal},
  {et~al\mbox{.}}} \bibinfo{year}{2025}\natexlab{}.
\newblock \showarticletitle{A Vision for Computational Decarbonization of
  Societal Infrastructure}.
\newblock \bibinfo{journal}{\emph{IEEE Internet Computing}}
  (\bibinfo{year}{2025}).
\newblock


\bibitem[Jansen and Klein(2019)]%
        {Jansen2019Migration}
\bibfield{author}{\bibinfo{person}{Klaus Jansen} {and}
  \bibinfo{person}{Kim-Manuel Klein}.} \bibinfo{year}{2019}\natexlab{}.
\newblock \showarticletitle{A Robust AFPTAS for Online Bin Packing with
  Polynomial Migration}.
\newblock \bibinfo{journal}{\emph{SIAM Journal on Discrete Mathematics}}
  \bibinfo{volume}{33}, \bibinfo{number}{4} (\bibinfo{year}{2019}),
  \bibinfo{pages}{2062--2091}.
\newblock
\href{https://doi.org/10.1137/17M1122529}{doi:\nolinkurl{10.1137/17M1122529}}


\bibitem[Johnson(2016)]%
        {Johnson16}
\bibfield{author}{\bibinfo{person}{David~S. Johnson}.}
  \bibinfo{year}{2016}\natexlab{}.
\newblock \showarticletitle{Bin Packing}.
\newblock In \bibinfo{booktitle}{\emph{Encyclopedia of Algorithms}}.
  \bibinfo{pages}{207--211}.
\newblock
\href{https://doi.org/10.1007/978-1-4939-2864-4\_49}{doi:\nolinkurl{10.1007/978-1-4939-2864-4\_49}}


\bibitem[Johnson et~al\mbox{.}(1974)]%
        {Johnson74}
\bibfield{author}{\bibinfo{person}{D.~S. Johnson}, \bibinfo{person}{A. Demers},
  \bibinfo{person}{J.~D. Ullman}, \bibinfo{person}{M.~R. Garey}, {and}
  \bibinfo{person}{R.~L. Graham}.} \bibinfo{year}{1974}\natexlab{}.
\newblock \showarticletitle{Worst-Case Performance Bounds for Simple
  One-Dimensional Packing Algorithms}.
\newblock \bibinfo{journal}{\emph{SIAM J. Comput.}} \bibinfo{volume}{3},
  \bibinfo{number}{4} (\bibinfo{year}{1974}), \bibinfo{pages}{299--325}.
\newblock
\href{https://doi.org/10.1137/0203025}{doi:\nolinkurl{10.1137/0203025}}


\bibitem[Kamali(2015)]%
        {Kamali2015EfficientBP}
\bibfield{author}{\bibinfo{person}{Shahin Kamali}.}
  \bibinfo{year}{2015}\natexlab{}.
\newblock \showarticletitle{Efficient Bin Packing Algorithms for Resource
  Provisioning in the Cloud}. In \bibinfo{booktitle}{\emph{International
  Workshop on Algorithmic Aspects of Cloud Computing}}.
\newblock
\urldef\tempurl%
\url{https://api.semanticscholar.org/CorpusID:10608206}
\showURL{%
\tempurl}


\bibitem[Kamali and L{\'o}pez-Ortiz(2015)]%
        {kamali2014all}
\bibfield{author}{\bibinfo{person}{Shahin Kamali} {and}
  \bibinfo{person}{Alejandro L{\'o}pez-Ortiz}.}
  \bibinfo{year}{2015}\natexlab{}.
\newblock \showarticletitle{An all-around near-optimal solution for the classic
  bin packing problem}.
\newblock \bibinfo{journal}{\emph{Proceedings of the 26th International
  Symposium on Algorithms and Computation (ISAAC)}}  \bibinfo{volume}{9472}
  (\bibinfo{year}{2015}), \bibinfo{pages}{727--739}.
\newblock


\bibitem[Kamali and Nikbakht(2021)]%
        {Kamali2021FaultTolerant}
\bibfield{author}{\bibinfo{person}{Shahin Kamali} {and} \bibinfo{person}{Pooya
  Nikbakht}.} \bibinfo{year}{2021}\natexlab{}.
\newblock \showarticletitle{On the Fault-Tolerant Online Bin Packing Problem}.
  In \bibinfo{booktitle}{\emph{Algorithmic Aspects of Cloud Computing}},
  \bibfield{editor}{\bibinfo{person}{Gianlorenzo D'Angelo} {and}
  \bibinfo{person}{Othon Michail}} (Eds.). \bibinfo{publisher}{Springer
  International Publishing}, \bibinfo{address}{Cham}, \bibinfo{pages}{1--17}.
\newblock


\bibitem[Karp et~al\mbox{.}(1984)]%
        {Karp1984MDBP}
\bibfield{author}{\bibinfo{person}{Richard~M. Karp}, \bibinfo{person}{Michael
  Luby}, {and} \bibinfo{person}{A. Marchetti-Spaccamela}.}
  \bibinfo{year}{1984}\natexlab{}.
\newblock \showarticletitle{A probabilistic analysis of multidimensional bin
  packing problems}. In \bibinfo{booktitle}{\emph{Proceedings of the Sixteenth
  Annual ACM Symposium on Theory of Computing}} \emph{(\bibinfo{series}{STOC
  '84})}. \bibinfo{publisher}{Association for Computing Machinery},
  \bibinfo{address}{New York, NY, USA}, \bibinfo{pages}{289–298}.
\newblock
\showISBNx{0897911334}
\href{https://doi.org/10.1145/800057.808693}{doi:\nolinkurl{10.1145/800057.808693}}


\bibitem[Kou and Markowsky(1977)]%
        {Kou1977MDBP}
\bibfield{author}{\bibinfo{person}{L.~T. Kou} {and} \bibinfo{person}{G.
  Markowsky}.} \bibinfo{year}{1977}\natexlab{}.
\newblock \showarticletitle{Multidimensional Bin Packing Algorithms}.
\newblock \bibinfo{journal}{\emph{IBM Journal of Research and Development}}
  \bibinfo{volume}{21}, \bibinfo{number}{5} (\bibinfo{year}{1977}),
  \bibinfo{pages}{443--448}.
\newblock
\href{https://doi.org/10.1147/rd.215.0443}{doi:\nolinkurl{10.1147/rd.215.0443}}


\bibitem[Kumbhare et~al\mbox{.}(2021)]%
        {kumbhare2021prediction}
\bibfield{author}{\bibinfo{person}{Alok~Gautam Kumbhare}, \bibinfo{person}{Reza
  Azimi}, \bibinfo{person}{Ioannis Manousakis}, \bibinfo{person}{Anand Bonde},
  \bibinfo{person}{Felipe Frujeri}, \bibinfo{person}{Nithish Mahalingam},
  \bibinfo{person}{Pulkit~A Misra}, \bibinfo{person}{Seyyed~Ahmad Javadi},
  \bibinfo{person}{Bianca Schroeder}, \bibinfo{person}{Marcus Fontoura},
  {et~al\mbox{.}}} \bibinfo{year}{2021}\natexlab{}.
\newblock \showarticletitle{$\{$Prediction-Based$\}$ power oversubscription in
  cloud platforms}. In \bibinfo{booktitle}{\emph{2021 USENIX Annual Technical
  Conference (USENIX ATC 21)}}. \bibinfo{pages}{473--487}.
\newblock


\bibitem[Lechowicz et~al\mbox{.}(2025)]%
        {lechowicz2025learning}
\bibfield{author}{\bibinfo{person}{Adam Lechowicz}, \bibinfo{person}{Nicolas
  Christianson}, \bibinfo{person}{Bo Sun}, \bibinfo{person}{Noman Bashir},
  \bibinfo{person}{Mohammad Hajiesmaili}, \bibinfo{person}{Adam Wierman}, {and}
  \bibinfo{person}{Prashant Shenoy}.} \bibinfo{year}{2025}\natexlab{}.
\newblock \showarticletitle{Learning-Augmented Competitive Algorithms for
  Spatiotemporal Online Allocation with Deadline Constraints}.
\newblock \bibinfo{journal}{\emph{Proceedings of the ACM on Measurement and
  Analysis of Computing Systems}} \bibinfo{volume}{9}, \bibinfo{number}{1}
  (\bibinfo{year}{2025}), \bibinfo{pages}{1--49}.
\newblock


\bibitem[Lee and Lee(1985)]%
        {Lee1985Har}
\bibfield{author}{\bibinfo{person}{C.~C. Lee} {and} \bibinfo{person}{D.~T.
  Lee}.} \bibinfo{year}{1985}\natexlab{}.
\newblock \showarticletitle{A simple on-line bin-packing algorithm}.
\newblock \bibinfo{journal}{\emph{J. ACM}} \bibinfo{volume}{32},
  \bibinfo{number}{3} (\bibinfo{date}{July} \bibinfo{year}{1985}),
  \bibinfo{pages}{562–572}.
\newblock
\showISSN{0004-5411}
\href{https://doi.org/10.1145/3828.3833}{doi:\nolinkurl{10.1145/3828.3833}}


\bibitem[Levin(2022)]%
        {Levin2022GEPB}
\bibfield{author}{\bibinfo{person}{Asaf Levin}.}
  \bibinfo{year}{2022}\natexlab{}.
\newblock \showarticletitle{Approximation Schemes for the Generalized
  Extensible Bin Packing Problem}.
\newblock \bibinfo{journal}{\emph{Algorithmica}} \bibinfo{volume}{84},
  \bibinfo{number}{2} (\bibinfo{date}{Feb.} \bibinfo{year}{2022}),
  \bibinfo{pages}{325–343}.
\newblock
\showISSN{0178-4617}
\href{https://doi.org/10.1007/s00453-021-00895-8}{doi:\nolinkurl{10.1007/s00453-021-00895-8}}


\bibitem[Lewis(1983)]%
        {lewis1983michael}
\bibfield{author}{\bibinfo{person}{Harry~R Lewis}.}
  \bibinfo{year}{1983}\natexlab{}.
\newblock \showarticletitle{Michael R. Garey and David S. Johnson. Computers
  and intractability. A guide to the theory of NP-completeness. WH Freeman and
  Company, San Francisco1979, x+ 338 pp.}
\newblock \bibinfo{journal}{\emph{The Journal of Symbolic Logic}}
  \bibinfo{volume}{48}, \bibinfo{number}{2} (\bibinfo{year}{1983}),
  \bibinfo{pages}{498--500}.
\newblock


\bibitem[Liang(1980)]%
        {LIANG1980}
\bibfield{author}{\bibinfo{person}{Frank~M. Liang}.}
  \bibinfo{year}{1980}\natexlab{}.
\newblock \showarticletitle{A lower bound for on-line bin packing}.
\newblock \bibinfo{journal}{\emph{Inform. Process. Lett.}}
  \bibinfo{volume}{10}, \bibinfo{number}{2} (\bibinfo{year}{1980}),
  \bibinfo{pages}{76--79}.
\newblock
\showISSN{0020-0190}
\href{https://doi.org/10.1016/S0020-0190(80)90077-0}{doi:\nolinkurl{10.1016/S0020-0190(80)90077-0}}


\bibitem[Lin et~al\mbox{.}(2012)]%
        {Lin2012GLB}
\bibfield{author}{\bibinfo{person}{Minghong Lin}, \bibinfo{person}{Zhenhua
  Liu}, \bibinfo{person}{Adam Wierman}, {and} \bibinfo{person}{Lachlan L.~H.
  Andrew}.} \bibinfo{year}{2012}\natexlab{}.
\newblock \showarticletitle{Online algorithms for geographical load balancing}.
  In \bibinfo{booktitle}{\emph{2012 International Green Computing Conference
  (IGCC)}}. \bibinfo{pages}{1--10}.
\newblock
\href{https://doi.org/10.1109/IGCC.2012.6322266}{doi:\nolinkurl{10.1109/IGCC.2012.6322266}}


\bibitem[Lin et~al\mbox{.}(2023)]%
        {Lin2023GLB}
\bibfield{author}{\bibinfo{person}{Wen-Ting Lin}, \bibinfo{person}{Guo Chen},
  {and} \bibinfo{person}{Huaqing Li}.} \bibinfo{year}{2023}\natexlab{}.
\newblock \showarticletitle{Carbon-Aware Load Balance Control of Data Centers
  With Renewable Generations}.
\newblock \bibinfo{journal}{\emph{IEEE Transactions on Cloud Computing}}
  \bibinfo{volume}{11}, \bibinfo{number}{2} (\bibinfo{year}{2023}),
  \bibinfo{pages}{1111--1121}.
\newblock
\href{https://doi.org/10.1109/TCC.2022.3150391}{doi:\nolinkurl{10.1109/TCC.2022.3150391}}


\bibitem[Liu and Tang(2022)]%
        {Liu2022DBP}
\bibfield{author}{\bibinfo{person}{Mozhengfu Liu} {and} \bibinfo{person}{Xueyan
  Tang}.} \bibinfo{year}{2022}\natexlab{}.
\newblock \showarticletitle{Dynamic Bin Packing with Predictions}.
\newblock \bibinfo{journal}{\emph{Proc. ACM Meas. Anal. Comput. Syst.}}
  \bibinfo{volume}{6}, \bibinfo{number}{3}, Article \bibinfo{articleno}{45}
  (\bibinfo{date}{Dec.} \bibinfo{year}{2022}), \bibinfo{numpages}{24}~pages.
\newblock
\href{https://doi.org/10.1145/3570605}{doi:\nolinkurl{10.1145/3570605}}


\bibitem[Liu et~al\mbox{.}(2011)]%
        {liu2011greening}
\bibfield{author}{\bibinfo{person}{Zhenhua Liu}, \bibinfo{person}{Minghong
  Lin}, \bibinfo{person}{Adam Wierman}, \bibinfo{person}{Steven~H Low}, {and}
  \bibinfo{person}{Lachlan~LH Andrew}.} \bibinfo{year}{2011}\natexlab{}.
\newblock \showarticletitle{Greening geographical load balancing}.
\newblock \bibinfo{journal}{\emph{ACM SIGMETRICS Performance Evaluation
  Review}} \bibinfo{volume}{39}, \bibinfo{number}{1} (\bibinfo{year}{2011}),
  \bibinfo{pages}{193--204}.
\newblock


\bibitem[Maxence et~al\mbox{.}(2018)]%
        {Dolorme2018}
\bibfield{author}{\bibinfo{person}{Delorme Maxence}, \bibinfo{person}{Manuel
  Iori}, {and} \bibinfo{person}{Silvano Martello}.}
  \bibinfo{year}{2018}\natexlab{}.
\newblock \showarticletitle{"BPPLIB: a library for bin packing and cutting
  stock problems"}.
\newblock   \bibinfo{volume}{12} (\bibinfo{year}{2018}),
  \bibinfo{pages}{235--250}.
\newblock
\href{https://doi.org/10.1007/s11590-017-1192-z}{doi:\nolinkurl{10.1007/s11590-017-1192-z}}


\bibitem[Mellou et~al\mbox{.}(2024)]%
        {Mellou2024}
\bibfield{author}{\bibinfo{person}{Konstantina Mellou}, \bibinfo{person}{Marco
  Molinaro}, {and} \bibinfo{person}{Rudy Zhou}.}
  \bibinfo{year}{2024}\natexlab{}.
\newblock \showarticletitle{The Power of Migrations in Dynamic Bin Packing}.
\newblock \bibinfo{journal}{\emph{Proc. ACM Meas. Anal. Comput. Syst.}}
  \bibinfo{volume}{8}, \bibinfo{number}{3}, Article \bibinfo{articleno}{45}
  (\bibinfo{date}{Dec.} \bibinfo{year}{2024}), \bibinfo{numpages}{28}~pages.
\newblock
\href{https://doi.org/10.1145/3700435}{doi:\nolinkurl{10.1145/3700435}}


\bibitem[Mohan et~al\mbox{.}(2022)]%
        {mohan2022looking}
\bibfield{author}{\bibinfo{person}{Jayashree Mohan}, \bibinfo{person}{Amar
  Phanishayee}, \bibinfo{person}{Janardhan~(Jana) Kulkarni}, {and}
  \bibinfo{person}{Vijay Chidambaram}.} \bibinfo{year}{2022}\natexlab{}.
\newblock \showarticletitle{Looking Beyond GPUs for DNN Scheduling on
  Multi-Tenant Clusters}. In \bibinfo{booktitle}{\emph{USENIX Symposium on
  Operating Systems Design and Implementation (OSDI 2022)}}.
\newblock
\urldef\tempurl%
\url{https://www.microsoft.com/en-us/research/publication/synergy-looking-beyond-gpus-for-dnn-scheduling-on-multi-tenant-clusters/}
\showURL{%
\tempurl}


\bibitem[Murillo et~al\mbox{.}(2024)]%
        {Murillo2024GLB}
\bibfield{author}{\bibinfo{person}{Jorge Murillo}, \bibinfo{person}{Walid~A.
  Hanafy}, \bibinfo{person}{David Irwin}, \bibinfo{person}{Ramesh Sitaraman},
  {and} \bibinfo{person}{Prashant Shenoy}.} \bibinfo{year}{2024}\natexlab{}.
\newblock \showarticletitle{CDN-Shifter: Leveraging Spatial Workload Shifting
  to Decarbonize Content Delivery Networks}. In
  \bibinfo{booktitle}{\emph{Proceedings of the 2024 ACM Symposium on Cloud
  Computing}} (Redmond, WA, USA) \emph{(\bibinfo{series}{SoCC '24})}.
  \bibinfo{publisher}{Association for Computing Machinery},
  \bibinfo{address}{New York, NY, USA}, \bibinfo{pages}{505–521}.
\newblock
\showISBNx{9798400712869}
\href{https://doi.org/10.1145/3698038.3698516}{doi:\nolinkurl{10.1145/3698038.3698516}}


\bibitem[Panigrahy et~al\mbox{.}(2011)]%
        {panigrahy2011heuristics}
\bibfield{author}{\bibinfo{person}{Rina Panigrahy}, \bibinfo{person}{Kunal
  Talwar}, \bibinfo{person}{Lincoln Uyeda}, {and} \bibinfo{person}{Udi
  Wieder}.} \bibinfo{year}{2011}\natexlab{}.
\newblock \bibinfo{title}{Heuristics for Vector Bin Packing}.
  (\bibinfo{date}{January} \bibinfo{year}{2011}).
\newblock
\urldef\tempurl%
\url{https://www.microsoft.com/en-us/research/publication/heuristics-for-vector-bin-packing/}
\showURL{%
\tempurl}


\bibitem[Perez-Salazar et~al\mbox{.}(2022)]%
        {PerezSalazar2022}
\bibfield{author}{\bibinfo{person}{Sebastian Perez-Salazar},
  \bibinfo{person}{Mohit Singh}, {and} \bibinfo{person}{Alejandro Toriello}.}
  \bibinfo{year}{2022}\natexlab{}.
\newblock \showarticletitle{Adaptive Bin Packing with Overflow}.
\newblock \bibinfo{journal}{\emph{Mathematics of Operations Research}}
  \bibinfo{volume}{47}, \bibinfo{number}{4} (\bibinfo{year}{2022}),
  \bibinfo{pages}{3317--3356}.
\newblock
\href{https://doi.org/10.1287/moor.2021.1239}{doi:\nolinkurl{10.1287/moor.2021.1239}}


\bibitem[Reidys et~al\mbox{.}(2025)]%
        {reidys2025coach}
\bibfield{author}{\bibinfo{person}{Benjamin Reidys}, \bibinfo{person}{Pantea
  Zardoshti}, \bibinfo{person}{{\'I}{\~n}igo Goiri}, \bibinfo{person}{Celine
  Irvene}, \bibinfo{person}{Daniel~S Berger}, \bibinfo{person}{Haoran Ma},
  \bibinfo{person}{Kapil Arya}, \bibinfo{person}{Eli Cortez},
  \bibinfo{person}{Taylor Stark}, \bibinfo{person}{Eugene Bak},
  {et~al\mbox{.}}} \bibinfo{year}{2025}\natexlab{}.
\newblock \showarticletitle{Coach: Exploiting temporal patterns for
  all-resource oversubscription in cloud platforms}. In
  \bibinfo{booktitle}{\emph{Proceedings of the 30th ACM International
  Conference on Architectural Support for Programming Languages and Operating
  Systems, Volume 1}}. \bibinfo{pages}{164--181}.
\newblock


\bibitem[Rhee and Talagrand(1993)]%
        {Rhee1993}
\bibfield{author}{\bibinfo{person}{WanSoo~T. Rhee} {and}
  \bibinfo{person}{Michel Talagrand}.} \bibinfo{year}{1993}\natexlab{}.
\newblock \showarticletitle{On-Line Bin Packing of Items of Random Sizes, II}.
\newblock \bibinfo{journal}{\emph{SIAM J. Comput.}} \bibinfo{volume}{22},
  \bibinfo{number}{6} (\bibinfo{year}{1993}), \bibinfo{pages}{1251--1256}.
\newblock
\href{https://doi.org/10.1137/0222074}{doi:\nolinkurl{10.1137/0222074}}


\bibitem[Song et~al\mbox{.}(2014)]%
        {Song2014AdaptiveRP}
\bibfield{author}{\bibinfo{person}{Weijia Song}, \bibinfo{person}{Zhen Xiao},
  \bibinfo{person}{Qi Chen}, {and} \bibinfo{person}{Haipeng Luo}.}
  \bibinfo{year}{2014}\natexlab{}.
\newblock \showarticletitle{Adaptive Resource Provisioning for the Cloud Using
  Online Bin Packing}.
\newblock \bibinfo{journal}{\emph{IEEE Trans. Comput.}}  \bibinfo{volume}{63}
  (\bibinfo{year}{2014}), \bibinfo{pages}{2647--2660}.
\newblock
\urldef\tempurl%
\url{https://api.semanticscholar.org/CorpusID:5516218}
\showURL{%
\tempurl}


\bibitem[Sukprasert et~al\mbox{.}(2024)]%
        {Sukprasert2024}
\bibfield{author}{\bibinfo{person}{Thanathorn Sukprasert},
  \bibinfo{person}{Abel Souza}, \bibinfo{person}{Noman Bashir},
  \bibinfo{person}{David Irwin}, {and} \bibinfo{person}{Prashant Shenoy}.}
  \bibinfo{year}{2024}\natexlab{}.
\newblock \showarticletitle{On the Limitations of Carbon-Aware Temporal and
  Spatial Workload Shifting in the Cloud}. In
  \bibinfo{booktitle}{\emph{Proceedings of the Nineteenth European Conference
  on Computer Systems}} (Athens, Greece) \emph{(\bibinfo{series}{EuroSys
  '24})}. \bibinfo{publisher}{Association for Computing Machinery},
  \bibinfo{address}{New York, NY, USA}, \bibinfo{pages}{924–941}.
\newblock
\showISBNx{9798400704376}
\href{https://doi.org/10.1145/3627703.3650079}{doi:\nolinkurl{10.1145/3627703.3650079}}


\bibitem[Sylvester(1880)]%
        {Sylvester1880}
\bibfield{author}{\bibinfo{person}{J.~J. Sylvester}.}
  \bibinfo{year}{1880}\natexlab{}.
\newblock \showarticletitle{On a Point in the Theory of Vulgar Fractions}.
\newblock \bibinfo{journal}{\emph{American Journal of Mathematics}}
  \bibinfo{volume}{3}, \bibinfo{number}{4} (\bibinfo{year}{1880}),
  \bibinfo{pages}{332--335}.
\newblock
\showISSN{00029327, 10806377}
\urldef\tempurl%
\url{http://www.jstor.org/stable/2369261}
\showURL{%
\tempurl}


\bibitem[{van Vliet}(1992)]%
        {VANVLIET1992}
\bibfield{author}{\bibinfo{person}{André {van Vliet}}.}
  \bibinfo{year}{1992}\natexlab{}.
\newblock \showarticletitle{An improved lower bound for on-line bin packing
  algorithms}.
\newblock \bibinfo{journal}{\emph{Inform. Process. Lett.}}
  \bibinfo{volume}{43}, \bibinfo{number}{5} (\bibinfo{year}{1992}),
  \bibinfo{pages}{277--284}.
\newblock
\showISSN{0020-0190}
\href{https://doi.org/10.1016/0020-0190(92)90223-I}{doi:\nolinkurl{10.1016/0020-0190(92)90223-I}}


\bibitem[Yao(1980)]%
        {Yao1980genLB}
\bibfield{author}{\bibinfo{person}{Andrew Chi-Chih Yao}.}
  \bibinfo{year}{1980}\natexlab{}.
\newblock \showarticletitle{New Algorithms for Bin Packing}.
\newblock \bibinfo{journal}{\emph{J. ACM}} \bibinfo{volume}{27},
  \bibinfo{number}{2} (\bibinfo{date}{April} \bibinfo{year}{1980}),
  \bibinfo{pages}{207–227}.
\newblock
\showISSN{0004-5411}
\href{https://doi.org/10.1145/322186.322187}{doi:\nolinkurl{10.1145/322186.322187}}


\end{thebibliography}

\appendix

\section{Preliminaries}
\label{Appendix-Prelim}
\classicunbounded*

\begin{proof}
    Suppose an input $\sigma$ total volume $S = \sum_{a\in \sigma} a$, where each item is negligibly small. On this input, any classic algorithm $\Alg$ will fill each bin to capacity and will incur a cost $S(1+\beta(1-G))$. Now consider an alternate algorithm (\autoref{thm:OptLB}), which fills each bin to $G$, for cost $S/G$. Thus, the competitive ratio for $\Alg$ least $G(1+\beta(1-G))$, which is unbounded as $\beta \rightarrow \infty$.
\end{proof}

\section{Proofs of Offline Results}
\label{Appendix-Offline}

    \subsection{NP-Hardness and APTAS}
 
    \apxhard*

\begin{proof}
We reduce from the classic \emph{Partition} problem, which is known to be NP-hard~\cite{GareyJ79}. The input is a multiset $S$ of positive integers, and the task is to decide whether $S$ can be partitioned into two multisubsets with equal sum. We assume the total sum of elements in $S$ is $2s$ for some integer $s$; if the sum is odd, the answer is trivially no.

Let $\epsilon > 0$ be a sufficiently small constant such that $\epsilon < \min\{1/2,1/2\beta, 1 - G\}$. From the instance $S$ of Partition, we construct an instance $I(S)$ of GBP as follows. First, add two \emph{large items}, each of size $1 - \epsilon$ to $I(S)$. Second, for each item $a \in S$, create a corresponding item in $I(S)$ of size $a \cdot \frac{\epsilon}{s}$. The total size of these scaled items is $\frac{\epsilon}{s} \cdot 2s = 2\epsilon$. Thus, the total size of all items in $I(S)$ is $2(1 - \epsilon) + 2\epsilon = 2$. Note that the two large items cannot fit in the same bin, so at least two bins must be opened, each containing one large item. The cost for each such bin is $1 + \beta(1 - \epsilon - G)$, yielding a \emph{base cost} of $2\left(1 + \beta(1 - \epsilon - G)\right).$

Now, the scaled items derived from $S$ can fit into the two existing bins (without exceeding bin capacity) if and only if the original multiset $S$ can be partitioned into two subsets with equal sum $s$. In this case, the two bins will each receive a subset of scaled items summing to $\epsilon$, and the contribution from these items to the cost will be $2\beta \epsilon$. So the total cost becomes
$2(1 + \beta(1 - \epsilon - G)) + 2\beta \epsilon = 2 + 2\beta(1 - G).$

On the other hand, if $S$ cannot be partitioned into two equal-sum subsets, then at least one additional bin must be opened to pack the remaining items. In this case, the total cost is at least $2(1 + \beta(1 - \epsilon - G)) + 1 = 2(1.5 + \beta(1 - \epsilon - G))$. 

Therefore, solving the GBP instance $I(S)$ to optimality allows us to decide whether the original Partition instance has a solution. We can conclude that it is NP-hard to decide whether the optimal packing of $I(S)$ has a cost $2(1+\beta(1-G))$ or $2(1.5+\beta (1-\epsilon-G))$.
Moreover, since $\epsilon$ can be made arbitrarily small, the cost gap between the "yes" and "no" instances of GBP can be made arbitrarily close to the ratio
$\frac{2(1.5 + \beta(1 - G))}{2(1 + \beta(1 - G))} = \frac{1.5 + \beta(1 - G)}{1 + \beta(1 - G)}.$

Thus, unless $P = NP$, no polynomial-time algorithm can approximate GBP within a factor strictly smaller than this ratio.

\end{proof}

    \aptas*
    \begin{proof}
         There are three steps in our approach. The first two steps have similarities to the APTAS to classical bin packing described by Fernandez de la Vega and Lueker \cite{Vega81apt}.  
	\begin{itemize}
		\item Step I: We show that instances of the green bin packing problem with $n$ items where item sizes are all larger than $0<\delta <1$ and come from a set $S$ of finite size can be solved optimally in $O(n^{p+1})$ where $p \in O(|S|^{1/\delta})$. In particular, if both $|S|$ and $\delta$ are constants, this restricted version of the problem can be solved optimally in polynomial time.
		\item Step II: We show that there is an APTAS for the instances where all items are larger than $\delta$.
		\item Step III: Finally, we relax the last assumption and show that there is an APTAS for any instance of the green bin packing problem.
	\end{itemize}
	
	For Step I, we note that for these restricted instances, each bin contains at most $1/\delta$ items, and hence there are $p = O(|S|^{1/\delta})$ possible \emph{bin types}. Each bin type defines a set of items that can be packed inside a bin; there are at most $1/\delta$ ``spots" in such a bin type, and each can be accompanied by one of the $|S|$ possible items. 
	An instance of the problem (restricted as in Step I) with $n$ items requires at most $n$ bins, and any of its potential packings can be described with the number of bins from each bin type in the packing. Each bin type is repeated at most $n$ times in the packing, and thus there are $O(n^p)$ possible packings. If we try all of them exhaustively, we can find the cost of each in $O(n)$, and hence, an exhaustive search of all packings takes $O(n^{p+1})$. This completes Step I.

	For Step II, given an instance $I$ of the green bin packing, where all items are of a size larger than $\delta$, we do the following. 
	First, sort $I$ in a non-decreasing order of item sizes. Let $s$ be a large constant whose value will be decided later as a function of the PTAS parameter $\epsilon$. Now partition the input into $s$ partitions, each containing $\frac{n}{s}$ items (in the sorted order). Now, create an instance $I'$ of the problem where each item is scaled up to have a size equal to the largest element in its partition. For example, if $n=12$, $s=4$, and the input is \ \ $I = 0.3,0.3,0.31,0.34,0.41,0.43,0.48,0.51,0.51,0.53,.55,0.6$, then $I' = 0.31,0.31,0.31,0.43,0.43,0.43,0.51,0.51,0.51, 0.6,0.6,0.6$. Note that all items in $I'$ are larger than $\delta$, and there are $s$ item sizes in $I'$. Thus, we can find an optimal packing of $I'$ in polynomial time as in Step I. Also, note that this packing is valid for $I$. We call this packing $Alg(I)$, and we have $Alg(I) \leq Opt(I')$. To complete Step II, we show that $Opt(I') \leq (1+\epsilon) Opt(I)$. For that, form an instance $I''$ from $I$ in which every item's size (except those in the first partition, which is ignored) is decreased to the largest item in its previous partition. In the above example, $I'' = 0.31,0.31,0.31,0.43,0.43,0.43,0.51,0.51,0.51$. Note that $I''$ and $I'$ are the same sets except that $\frac{n}{s}$ largest items in $I'$ are missing in $I''$. The extra cost due to these $s$ items is at most $\frac{n}{s}(1+\beta(1-G))$. Therefore, we can write $Opt(I') \leq Opt(I'') + \frac{n}{s} (1+\beta(1-G))$. 
	Given that $I''$ is a reduced instance compared to $I$ (some items are missing and some have become smaller), we can write $Opt(I'') \leq Opt(I)$. Also, recall that $Alg(I) \leq Opt(I')$. Thus we can write $Alg(I) \leq Opt(I) +\frac{n}{s}(1+\beta(1-G))$. Since all items are larger than $\delta$, we have $Opt(I) \geq \delta n$  and thus $Alg(I) \leq Opt(I) + \frac{Opt(I)}{\delta s}(1+\beta(1-G)) = Opt(I)(1+\frac{1}{\delta s}(1+\beta(1-G)))$. For any given $\epsilon$, if we choose $s \geq \frac{1}{\delta \epsilon}(1+\beta(1-G))$, which is still constant, we can ensure $Alg(I) \leq (1+\epsilon) Opt(I)$. This completes Step II.
	
	It remains to do Step III. That is, we want to show there is an APTAS with a given parameter $\epsilon$ for any instance of the green bin packing problem. Let $\delta$ be a small value which will be decided later as a function of $\epsilon$. In our algorithm, we partition items to the \emph{small} ones that are of size at most $\delta$ and large ones that are larger than $\delta$. There are two cases depending on the value of $\beta G$ being at most or larger than 1.
	
	\textbf{Case of Small $\beta$ when $\beta \leq 1/G$:}
	In this case, we let $\delta = \min \{\epsilon/2, \epsilon (1 - G\beta)/(2\beta) \}$.
	We first pack the input formed by large items, denoted by $I_L$, using the APTAS of Step II with parameter $\epsilon' = \epsilon/3$. 
	Now, we explain the packing of small items. We process small items in the non-decreasing order of their sizes to add them to the packing of the large items given by the APTAS. To place an item $x$, we first check whether it fits in an existing bin such that the level of the bin becomes at most $G$ after placing $x$ there; if so, we place $x$ in such a bin. Consider otherwise; then, we check if $x$ fits in any existing bin $B$. If so, we place $x$ in $B$; otherwise, we open a new bin for $x$.
	
	We now analyze the algorithm. First, assume no new bin is opened for small items, and all of them have been placed in the green space of existing bins of the APTAS. In that case, the cost of the algorithm remains the same as the cost of the APTAS for $I_L$, and we can write $Alg(I) \le (1+\epsilon')Opt(I_L) \leq (1+\epsilon')Opt(I) < (1+\epsilon)Opt(I)$, which completes the proof; note that $Opt(I_L) \leq Opt(I)$ holds trivially as $I_L$ is a subset of $I$. 
	
	Next, consider a case where a new bin is opened for small items. In this case, the level of all bins in the final packing of $Alg(I)$ is at least $1-\delta$ (except for the very last bin). If $S_I$ denotes the total size of items in $I$, then we have $Opt(I) \geq S_I(1+\beta(1-G))$.
    For the cost of the algorithm, we can write $Alg(I) \leq \frac{S_I}{1-\delta}(1+\beta(1-G))$; this is because the algorithm has at most $\frac{S_I}{1-\delta}$ bins (each bin except the last has a level of at least $1 - \delta$; otherwise a new bin was not opened). Moreover, each bin has a cost of at most $1+\beta(1-G)$. We conclude that the competitive ratio is at most $1/(1-\delta) = 1+\delta/(1-\delta) \leq 1+\epsilon$. 
	
	Finally, suppose no new bin is opened for small items, but some are partly or fully placed in \black space. The packing of the algorithm has a property that all bins are filled up to a level of at least $G-\delta$ (otherwise, a small item is partly placed into the \black space of some bin while it could be placed fully in a green space of some other bin). Let $m$ denote the number of bins in $Opt(I_L)$ and $S_I$ is the total size of items. Given that $m$ bins are necessary for packing $I$, we conclude that $Opt(I) \geq m + \beta (S_I-mG)$; this is because in an ideal scenario that $Opt$ uses $m$ bins, at most a total size of $mG$ is placed in the green spaces; the remaining bulk of size $S_I-mG$ is in the \black space. 
    
      On the other hand, let $m'$ denote the number of bins used by our algorithm, which is the same as $Opt(I')$, where $I'$ is the instance with only large items and items being scaled up in their partition.

     Then all bins are filled up to a level of at least $G-\delta$; thus, a total size of at least $m' (G-\delta)$ is placed in green space; the cost of the algorithm is thus at most $m' + \beta (S_I-m'(G-\delta))$. The ratio between the two costs is at most 
     \begin{align*}
      \frac{m'+\beta (S_I- m'(G-\delta)))}{m +\beta (S_I - mG)} = \frac{m' (1-G \beta+\delta \beta) + \beta S_I}{m (1- G\beta) + \beta S_I} &\le \frac{m'}{m}\cdot (1 + \frac{\delta \beta}{1 - G \beta})\\
      &\le (1 + \epsilon') \cdot (1 + \frac{\delta \beta}{1 - G \beta}) \le 1 + \epsilon.
     \end{align*}
     The second inequality holds since $m' \le (1 + \epsilon') m$.
     To see this, recall that $m'$ is the number of bins for obtaining $Opt(I')$ and Let $m''$ denote the number of bins for obtaining $Opt(I'')$, where $I''$ is the same as $I'$ except that it excludes the largest $n/s$ items. Then we have $m' \le m'' + \frac{n}{s} \le m + \frac{n}{s} \le m + \frac{m}{s \delta} \le (1 + \epsilon') m$, where we use the fact that $m'' \le m$, $m \ge \delta n$, and $s \ge \frac{1}{\delta \epsilon'}(1+\beta(1-G)) \ge \frac{1}{\delta \epsilon'}$.   
	
	\textbf{Case of large $\beta$ when $\beta > 1/G$:} In this case, we let $\delta = \epsilon/(4\beta^2)$. We start with \emph{bundling} small items as follows. Process small items in an arbitrary order and group them so that each group has a total size in $[\delta,2\delta)$. We call each group a bundle (ignore the last group in the asymptotic analysis). Let $I'$ be an instance of the green bin packing in which small items are replaced with their bundles. We apply the APTAS of step II to pack $I'$ with parameter $\epsilon' = \epsilon/3$; the result will be a valid packing of $I$ (in which small items forming a bundle are placed together), which we call Alg(I). Note that $Alg(I) \leq (1+\epsilon') Opt(I')$.
	
	For the analysis, we claim that $Opt(I') \leq (1+2\delta \beta^2) Opt(I)$. Recall that the APTAS tells us that $Alg(I) \leq (1+\epsilon') Opt(I')$. We can conclude $Alg(I) \leq (1+\epsilon')(1+2\delta \beta^2) Opt(I) = 
	1+(2\delta \beta^2+ \epsilon'(1+2\delta \beta^2))Opt(I) \leq (1+\epsilon) Opt(I)$, which completes the proof.

	It remains to prove the claim. We process each bin in the optimal packing of $I$ as follows. First, remove all small items from the packing. Now, process bins in the optimal packing in the non-decreasing order of their original level. For each bin $B$, let the total size of (removed) small items be $\ell_B$ and if $\ell_B > \epsilon$, keep adding bundles to $B$ until the total size of added bundles, denoted by $\ell'_B$, becomes more than $\ell_B$; we will have $\ell_B < \ell'_B \leq \ell_B + 2\delta$. Note that the cost of $B$ is increased by at most by $2\delta \beta$ in the packing, and the cost of $B$ in the resulting packing is increased by a factor of at most $(1+2\delta \beta^2)$. Given that each bin receives more small items than it originally had, we run out of bundles before processing all bins. The result is a valid packing of $I'$, which uses the same number of bins as the optimal packing of $I$, and where the cost of each bin is at most $(1+2\delta \beta^2)$ times the cost of the corresponding bin in the optimal packing. We conclude that $Opt(I') \leq (1+2\delta \beta^2) Opt(I)$.
    \end{proof}

\section{\AAF and \HAR Upper Bounds}
\label{Appendix-AAFUB}
In this section, we give additional proofs for the case that $\beta G \le 1$. We begin by proving that any online algorithm should attempt to fill to capacity 1 in this setting, and then give full proofs for   \autoref{thm:anyfitUB} and \autoref{thm:harUB} which establish upper bounds for \AAF and \HAR algorithms when $\beta G \le 1$. 

\subsection{Proof of \autoref{thm:fillToCapacity}}

\fillToCapacity*

\begin{proof}
Let $A$ and $A^+$ denote the same algorithm with thresholds $\tau$ and $\tau + \epsilon$, respectively.  
Define $\gamma = \frac{\tau}{\tau + \epsilon}$, noting that $\gamma \in [0, 1]$.

Consider the worst-case input $\sigma$ for $A^+$.  
Let $s(\sigma)$ be the total size of items in $\sigma$.  
Construct a new input sequence $\sigma'$ that begins with many ``sand'' items (arbitrarily small items) of total size $X = (1-\gamma)s(\sigma)$ %
followed by a copy of $\sigma$ in which all item sizes are scaled down by the factor~$\gamma$.  
Note that $X$ is asymptotically large and we can assume it is an integer multiple of $G + \tau$.  
Since the total size of $\sigma'$ equals that of $\sigma$ (i.e., $s(\sigma') = s(\sigma)$), 
any offline packing for $\sigma$ can be converted into a valid packing for $\sigma'$ 
(by replacing each item with its scaled-down copy plus the corresponding sand).  
Thus, $\OPT(\sigma') \le \OPT(\sigma)$.

When $\sigma'$ is given to algorithm~$A$, it first places the sand items into $\frac{X}{G+\tau}$ bins and then proceeds to pack the remaining items exactly as $A^+$ packs $\sigma$—the scaling ensures that bin utilization patterns are identical.  
Hence, both algorithms use the same number~$m$ of bins for the main part of the sequence.

The cost of $A$ for the sand items is $\tfrac{(1 + \beta \tau) X}{G + \tau}$.  
For the rest of the sequence, the packing of~$A$ corresponds to that of~$A^+(\sigma)$, except that the scaled-down items may save at most $\beta X$ in black-space cost.  
Therefore,
\[
A(\sigma') 
  \ge \frac{(1 + \beta \tau) X}{G + \tau} + A^+(\sigma) - \beta X
  = A^+(\sigma) + \frac{(1 - \beta G) X}{G + \tau}
  \ge A^+(\sigma),
\]
where the last inequality follows from $\beta G \le 1$.

Finally, since $\OPT(\sigma') \le \OPT(\sigma)$, we have
\[
\CR(A) 
   \geq \frac{A(\sigma')}{\OPT(\sigma')}
   \ge \frac{A(\sigma')}{\OPT(\sigma)}
   \ge \frac{A^+(\sigma)}{\OPT(\sigma)}
   = \CR(A^+).
\]
This proves that decreasing
the threshold cannot improve the competitive ratio when $\beta G \le 1$, and thus the optimal choice is $\tau = 1 - G$.
\end{proof}

\subsection{\AAF and \HAR Upper Bounds}

To establish these upper bounds, we will use a weighting argument.

 The idea is to define a weight $w(x)$ for any item of size $x$. We let $W$ show the total weight of all items. Then we show I) $Alg \leq W$ and II) $Opt \geq W/c$. This ensures a competitive ratio $c$ for Alg. To establish (I), we show the cost of Alg for any bin is at most equal to the total weight of items in the bin. To establish (II), we show that the minimum possible ratio between the cost and the total weight of items in any bin is at least $1/c$. This means that the ratio between the total cost of \Opt and the total weight of items in \Opt's packing is at least $1/c$. That is, $W/Opt \leq c$. To find this $c$, we rely on the following proposition.

\propWPCboundI*
\begin{proof}
     Consider a bin $B$ of \Opt. We aim to identify a packing of $B$ that maximizes the ratio between the weight of the bin and its cost. %
First, suppose that items in $B$ form a set $X$ and the total size of items in $X$ is $S(X) < G$; then $cost(B) = 1$, and the ratio between weight and cost is $W(X)/cost(X) = W(X)$. If we add a collection of items of size $\epsilon$ to $X$ to get a set $X'$, we will have $cost(X') = cost(X) = 1$, and $W(X') > W(X)$ (the weights are all positive). Thus $\frac{w(X')}{cost(X')} > \frac{W(X)}{cost(X)}$. In other words, to maximize the weight/cost ratio, it is best to fill to at least level $G$. 

Next, we show that when $\beta G \leq 1$, adding more items of size $\epsilon$ to $X'$ can increase the ratio even further.
This requires assuming the density of all items, defined as their weight/size ratio, is at least $\beta$ (which we need to pay attention to when defining weights).
Let $X''$ be a set formed by $X'$ via adding one item of size $\epsilon$. We will have $cost(X'') = cost(X') + \beta \epsilon$ and $W(X'') \geq W(X')  +\beta \epsilon$ due to the assumption we made about the density. %
So, $\frac{W(X'')}{cost(X'')} \geq  \frac{W(X') + \beta \epsilon}{cost(X') + \beta \epsilon}  \geq \frac{W(X')}{cost(X')} $. Repeating the above argument, we may conclude that to minimize the cost/weight ratio, it is best to fill up $B$ completely to a level of 1.
Thus, the cost of \Opt for $B$ will be exactly $1+\beta(1-G)$. To maximize the weight/cost ratio then, we need to maximize the weight of items inside a bin. Now, we are in a setting similar to classic bin packing.

If the maximum weight of a bin is $w^*$, then the maximum weight/cost ratio (and upper bound for the \CR of \Alg) would be $w^*/(1+\beta(1-G))/w^*$.
\end{proof}

 With this background, we now approach the theorems.

\subsubsection{\AAF Upper Bound}

\anyfitUB*

Let an item be `large' if it is larger than 1/2 and 'small' otherwise. In this  proof, we rely on the property of \AAF algorithms that each of its bins, except possibly one, either includes a large item or has a level of at least 2/3. This is because if a bin is formed only by small items and level less than 2/3, any subsequent bin formed by small items will include exactly two small items larger than 1/3.

Based on this property, we split the cases for $G\leq 2/3$ and $G>2/3$ into the following two propositions.

\propWPCboundII*
\begin{proof}
 Note that by our chosen weights and $\beta G \leq 1$, we have $w(x)\geq \beta x$. Additionally, since $G\le 2/3$, any \AAF bins containing only small items will fully use their green space.

Next, we show that the total weight of items in a bin of \AAF is at least equal to that bin's cost. Consider a bin $B$ of \AAF. We show that the total weight of items in $B$ is at least the cost \AAF pays for $B$. Suppose the total size of small and large items in $B$ is $L_s$ and $L_\ell$, respectively. Now, if $B$ does not contain a large item, we will have $L_s \geq 2/3$ and $L_\ell = 0$; in this case, the total weight of items in $B$ would be 
$(1.5(1-\beta G) + \beta) L_s =  
1.5(1-\beta G)L_s + \beta L_s  
\geq (1-\beta G) + \beta L_s = 1 + \beta (L_s-G) = cost(B)
$. If $B$ does contain an item $x>1/2$, the weight of $B$ would be $(1+\beta (x - G)^+) + (1.5(1-\beta G)+\beta)L_s \geq (1+\beta (x - G)^+) + \beta L_s \geq 1 + \beta (x + L_s - G) = cost(B) $.

To complete the upper bounds analysis, we establish an upper bound $w^*$ for the total weight of items in any bin of \Opt. As mentioned earlier, that would establish an upper bound of $w^*/(1+\beta (1-G))$ for \AAF.

Consider a bin $B^*$ of \Opt. If $B^*$ does not contain any large item, its total weight is at most $w_1 = 1.5(1-\beta G)+\beta$. We will refer back to $w_1$ later.

If $B^*$ contains a large item of size $x$ (we have $x>1/2)$, its weight would be $1+\beta(x - G)^+$ for $x$ plus at most $(1.5(1-\beta G)+\beta)(1-x)$ for small items. Let $y = 1-x$, and we have $w(B^*) \leq 1+\beta((1-y)- G)^+ + (1.5(1-\beta G)+\beta)y$, which is an increasing function of $y$; this is because the coefficient of $y$ is at least $1.5(1-\beta G)$ %
which is positive. So, the weight of $B^*$ is maximized when $y$ takes its maximum value of $0.5-\epsilon$ (when $x$ takes its minimum value of $1/2+\epsilon$). 
In this case, the weight would be $1+\beta (0.5 - G)^+ + 0.75(1-\beta G) + 0.5\beta$. If $G\leq 0.5$, this would be $1.75 + \beta - 1.75 \beta G = 1.75 (1-\beta G)+\beta $; note that this weight is always larger than $w_1$. We conclude the competitive ratio would be at most $\frac{1.75(1-\beta G)+\beta}{\beta + (1-\beta G)} = \frac{1.75 +\beta (1-1.75 G)}{\beta + (1-\beta G)}$. 
If $G>0.5$, the weight of $B^*$ would be at most $1 + 0.75(1-\beta G) + 0.5\beta = 1.75 +\beta(0.5-0.75G)$, and the competitive ratio of \AAF will be bounded by $\frac{\max\{1.75 +\beta(0.5-0.75G),w_1\}}{\beta + (1-\beta G)} $. For all $\beta G \le 1$ and $0.5 < G \le 2/3$, the bound will become $\frac{1.75 +\beta(0.5-0.75G)}{\beta + (1-\beta G)}$ 
\end{proof}

\propWPCboundIII*
\begin{proof}
Again note that weights are chosen so that $w(x) \ge \beta x$.

We first show the weight of any bin $B$ of \AAF is at least equal to its cost. Suppose $B$ contains no large item, and its level is $L$. If $L\leq G$, the cost of \AAF for $B$ is 1, while the total weight of items is $1.5 L \geq 1.5 (2/3) = 1$. Next, suppose $L = G + q$ for some positive $q$; then its total weight is $1.5 (G+q) = 1.5G + 1.5 q \geq 1 + 1.5q \geq 1+\beta q = cost(\AAF)$; the first inequality holds because $g\geq 2/3$ and the second inequality is due to $\beta \leq 3/2$ (again because $G\geq 2/3$ and $\beta G\leq 1$). If $B$ does contain an item $x>1/2$ and $L_s$ denote the total size of its small items, the weight of $B$ would be $(1+\beta (x - G)^+) + 1.5L_s \geq (1+\beta (x - G)^+) + \beta L_s \geq 1 + \beta ((x + L_s) -G) = cost(B) $.

Consider a bin $B^*$ of \Opt. We show the weight of items in $B^*$ cannot exceed 1.75. If $B^*$ contains no large item, its total weight is at most $1.5$. If $B^*$ contains a large item of size $x$ (we have $x>1/2)$, its weight would be $1+\beta(x - G)^+$ for $x$ plus at most $1.5(1-x)$ for small items. 
This would give $w(B^*) \leq 1+\beta(x - G)^+ + 1.5(1-x)$; this is a decreasing function of $x$ because the coefficient of $x$ is $-(1.5 - \beta (x - G)^+)$, which is non-positive. So, the weight of $B^*$ is maximized when $x$ takes its minimum value of $0.5+\epsilon$. 
In this case, the weight would be
$1+\beta(0.5 - G)^+ + 0.75 = 1.75$, this gives an upper bound of $\frac{7/4}{1+\beta(1-G)}$ on the competitive ratio of \AAF for large $G$.   

\end{proof}

\subsubsection{\HAR Upper Bound}

We begin by establishing the following lemma to support the proof.

\begin{lemma}\label{lemma:HAUpper}
    Let $S \subseteq (0,1]$ be a set of item sizes, and suppose we want to select an arbitrary subset of items from $S$ whose total size is at most $c \in (0,1]$. Define the \emph{fixed weight} of an item of size $x \in (1/(i+1), 1/i]$ as follows ($K$ is a sufficiently large value, e.g., $K\geq 50$):
    \[
        \text{fixed weight}(x) = 
        \begin{cases}
            1/i & \text{if } i \leq K-1, \\
            \frac{K}{K-1} \cdot x & \text{otherwise.}
        \end{cases}
    \]
    Then, the maximum total fixed weight achievable by selecting items with total size at most $c$ is bounded as follows:
    \begin{itemize}
        \item If $c = 1$ and $S = (0,1]$, the total fixed weight is at most $1.691$.
        \item If $c = 1/2$ and $S = (0,1)$, the total fixed weight is at most $0.691$
        \item If $c = 1$ and $S = (0,1/2] \cup [2/3,1]$, the total fixed weight is at most $1.42$
        \item If $c = 1/3$ and $S = (0,1)$, the total fixed weight is at most $0.4231$
        \item If $c = 1/6$ and $S = (0,1)$, the total fixed weight is at most $0.191$
        \item if $c=1/2$ and $S=(0,1/3]$, the total fixed weight is at most 0.636.
    \end{itemize}
\end{lemma}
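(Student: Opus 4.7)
The plan is to reduce the maximization to a structured discrete optimization problem, solve it via a greedy/exchange argument, and then evaluate the resulting tail of a Sylvester-like series for each of the six cases.

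First, I would perform a normalization step. Given any candidate subset $T$ with $\sum_{x \in T} x \leq c$, replace each selected item of size $x \in (1/(i+1), 1/i]$ in class $i \leq K-1$ by one of size $1/(i+1) + \epsilon$: the fixed weight $1/i$ is preserved while the total size only decreases, so the supremum of the weight-over-subsets problem is attained (as $\epsilon \to 0$) at selections concentrated at class left-endpoints. For items in classes $\geq K$, the weight $\frac{K}{K-1} x$ is linear in size, so they can be aggregated as a single continuous quantity of density $K/(K-1)$.

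Second, I would analyze the resulting integer program: let $n_i$ denote the number of chosen class-$i$ items (for $i \leq K-1$) and $Y$ the total size placed in classes $\geq K$; maximize $\sum_i n_i / i + \frac{K}{K-1} Y$ subject to $\sum_i n_i / (i+1) + Y \leq c$, integer $n_i \geq 0$, and class $i$ representable in $S$. The density of class $i$ equals $(i+1)/i$, which is strictly decreasing in $i$; an exchange argument (swap any class-$i$ item for a feasible smaller-index item whenever possible) combined with the strict lower bound $x > 1/(i+1)$ forces $n_i \leq 1$ in the optimum and yields the greedy characterization: while residual budget $c' > 0$, pick the smallest class index $i^*$ with $1/(i^* + 1) < c'$ whose class is representable in $S$, select one such item, and subtract $1/(i^* + 1)$ from $c'$.

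Third, I would evaluate the greedy for each case. For $S = (0,1]$ with $c = 1$, the class indices satisfy the recurrence $i_{k+1} = i_k(i_k+1)$ starting at $i_1 = 1$, giving $1, 2, 6, 42, 1806, \ldots$ and total weight $\sum_k 1/i_k \approx 1.691$. The bounds for $c = 1/2$, $1/3$, and $1/6$ use the same recurrence starting at $i_1 = 2$, $3$, $6$, yielding tail sums $\approx 0.691$, $0.423$, and $0.191$. For $S = (0,1/2] \cup [2/3,1]$ with $c = 1$, the class-1 items in $S$ must have size $\geq 2/3$, so greedy takes one item of weight $1$ with residual $\leq 1/3$; class 2 is forbidden, so the Sylvester recurrence restarts at $i_1 = 3$, giving $1 + 1/3 + 1/12 + 1/156 + \cdots \approx 1.42$. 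For $S = (0,1/3]$ with $c = 1/2$, classes 1 and 2 are excluded; greedy picks one class-3 and one class-4 item with residual $\approx 1/20$, yielding $1/3 + 1/4 + 1/20 + 1/420 + \cdots \approx 0.636$.

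The main obstacle will be the exchange argument justifying greedy optimality, especially in the restricted cases (third and sixth bounds) where forbidden size intervals disrupt the clean density ordering. There, one must verify that skipping forbidden classes and continuing greedy from the next allowed one still dominates any alternative selection; this follows from a two-item swap argument showing that replacing any class-$i$ item with a feasible smaller-index item strictly increases (or preserves) the total weight while freeing budget. The numerical verifications then reduce to bounded computations of rapidly converging Sylvester tails, and the finite-$K$ truncation only perturbs these sums by amounts absorbed in the stated approximations.
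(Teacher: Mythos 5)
Your proposal is genuinely different from the paper's: you reduce to a structured knapsack and claim greedy-by-density solves it, whereas the paper just branches on whether each Sylvester-sequence item (e.g.\ $1/2+\epsilon$, $1/3+\epsilon$, $1/7+\epsilon$, \dots) is present and bounds the residual by a density argument in each branch, never needing a general optimality lemma. However, your exchange argument has a direction error that leaves a real gap. You write that ``replacing any class-$i$ item with a feasible smaller-index item strictly increases (or preserves) the total weight while freeing budget,'' but a smaller-index class means a \emph{larger} item: the minimal class-$j$ size $1/(j+1)$ exceeds the minimal class-$i$ size $1/(i+1)$ when $j<i$, so that swap \emph{consumes} more budget rather than freeing it. The move that actually frees budget while preserving weight is replacing \emph{two} class-$i$ items (total size $>2/(i+1)$, weight $2/i$) with \emph{one} item of class roughly $\lfloor i/2\rfloor$ (weight $\geq 2/i$, size $\approx 2/(i+2) < 2/(i+1)$); that establishes $n_i\leq 1$, but you did not state that two-for-one exchange. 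Even with $n_i\leq 1$ in hand, the claim that the densest-feasible-class greedy is optimal is not a routine knapsack fact (the LP relaxation here has a nontrivial integrality gap: for $c=1$ the LP optimum is $2$ but the integral optimum is $\approx 1.691$), so you still owe an argument---essentially the inductive observation that skipping the smallest feasible class $j$ caps all remaining density at $(j+2)/(j+1)$ and hence cannot beat $1/j$ plus the recursive Sylvester tail. The numerical endpoint computations in your third paragraph are right; it is the optimality reduction in the second paragraph that needs to be repaired (or replaced by the paper's finite case split) before the proof stands.
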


\begin{proof}
To maximize the weight, if we choose an item of class $i \leq K-1$, then it is best to select the smallest item of that class (making items smaller does not decrease the weight but makes room for other items).
\begin{itemize}
    \item     If $c=1$ and $S=(0,1]$, the analysis is similar to that of classic \HAR. It is best to select items of sizes $1/2+\epsilon, 1/3+\epsilon, 1/7+\epsilon, \ldots$ which have a total weight of $1+1/2+1/6+1/42 + \ldots \approx 1.691$ \cite{Lee1985Har}.
    
    \item Suppose $c = 1/2$ and $S = (0,1)$. (I) Assume $B^*$ includes an item of size $1/3+\epsilon$ and an item of size $1/7+\epsilon$. The remaining space would be $1/42 - 3\epsilon$ and the density of items in this space is at most $\frac{1/42}{1/43+\epsilon} = 43/42$. The total weight would be at most $1/2+1/6+(1/42)(43/42) \approx 0.691$. (II) If there is no item of size larger than 1/3 in $B^*$, the density of remaining items would be at most $\frac{1/3}{1/4+\epsilon} < 4/3$, and the total weight would be at most $(4/3)\cdot (1/2) = 2/3 \approx 0.66$. (III) If there is an item of size $1/3+\epsilon$ but no item of size $1/7+\epsilon$, the density of other items would be at most $\frac{1/7}{1/8+\epsilon} < 8/7$, and the total weight would be at most $1/2+ (1/6)\cdot(8/7) = 0.6904$.

    \item Suppose $c=1$ and $S = (0,1/2] \cup [2/3,1]$. We use a case analysis. (I) Suppose $B^*$ includes an item of size $2/3+\epsilon$ or two items of $1/3+\epsilon$; regardless, the total remaining space is $1/3-\epsilon$ and the weight of selected items (so far) would be 1. Now, if the remaining space contains an item of size $1/4+\epsilon$ (case (Ia)), the remaining items would have sizes less than 1/12 and density (weight/size ratio) of at most $\frac{1/12}{1/13 + \epsilon} < 13/12$; the total weight would be $1+1/3+ (1/12)\cdot(13/12) \approx 1.4231$. 
    Otherwise (case (Ib)), if there is no item of size $1/4+\epsilon$, the density (weight/size ratio) of other items would be $\frac{1/4}{1/5+\epsilon} < 5/4$ and thus the total weight would be $1+(5/4)\times (1/3) = 1.416$. (II) If $B^*$ includes one item of size $1/3+\epsilon$, the density of remaining items would be at most $\frac{1/3}{1/4+\epsilon} < 4/3$ and the total weight becomes $1/2+(2/3)\times (4/3) \approx 1.388$. (III) If $B^*$ does not include an item of size larger than 1/3, the total weight becomes at most $1\times 4/3 \approx 1.333$.
 
    \item Suppose $c = 1/3$ and $S = (0,1)$. (I) If $B^*$ includes an item of size $1/4+\epsilon$ and an item of size $1/13+\epsilon$, the remaining space would have size at most $1/156$ and, since the density of items in this space is at most $\frac{1/156}{1/157+\epsilon} = 157/156$. Therefore, the total weight of items in $B^*$ would be $1/3+1/12+(1/156)\cdot(157/156) = 0.423$. (II) If there is no item of size larger than 1/4 in $B^*$, the density of remaining items would be at most $\frac{1/4}{1/5+\epsilon} < 5/4$, and the total weight would be at most $(5/4)\cdot (1/3) \approx 0.416$. (III) If there is an item of size $1/4+\epsilon$ but no item of size $1/13+\epsilon$, the density of other items would be at most $\frac{1/13}{1/14+\epsilon} < 14/13$, and the total weight would be at most $1/3+ (1/12)\cdot(14/13) = 0.423$.

    \item Suppose $c = 1/6$ and $S = (0,1)$. (I) If $B^*$ includes an item of size $1/7+\epsilon$, the remaining space is $1/42-\epsilon$ and all items in this space have density at most $\frac{1/42}{1/43+\epsilon} < 43/42$ and total weight would be at most $1/6+(1/42)\cdot (43/42) \approx 0.191$. (II) If there is no item of size larger than 1/7 in $B^*$, the density of remaining items would be at most $\frac{1/7}{1/8+\epsilon} < 8/7$, and the total weight would be at most $(8/7)\cdot (1/6) = 0.1904$.

\item If $c=1/2$ and $S=(0,1/3]$.
    (I)    If there is no item of size larger than 1/4, the density of remaining items would be at most $\frac{1/5}{1/4+\epsilon} < 5/4$, and the total weight would be at most $5/8$. (II) If there is an item of size $1/4+\epsilon$ but no other item of size larger than $1/5$, the density of other items would be at most $\frac{1/5}{1/6+\epsilon} < 6/5$, and the total weight would be at most $1/3+ (1/4)\cdot(6/5) \approx 0.6333$.
(III) If there is an item of size $1/4+\epsilon$, an item of size $1/5+\epsilon$, the density of the other item is at most $\frac{1/20}{1/21+\epsilon} = 21/20$ and the total weight would be at most $1/3+1/4+(1/20)\cdot(21/20) < 0.636$.

\end{itemize}

\end{proof}

We are now ready to prove the upper bound for the competitive ratio of Harmonic.

\harUB*

We use a weighting argument, as described for \autoref{thm:anyfitUB}. Let the weight of an item of size $x$ from class $i$ (of size in $\in [1/(i+1),1/i]$ to be 

\begin{align*}w(x) = \frac{1}{i} (1+\beta (ix - G)^+) \tag{\ref{eq:har_weight}}\end{align*}

We first show that the cost of \HAR for each bin is at most equal to the weight of that bin (except for possibly a constant number of bins). 

\harUBpropI* 
\begin{proof}
Let $B$ be a bin of \HAR of type $i$. There are $i$ items of class $i$ in $B$ with sizes $x_1,x_2,\ldots, x_i$ totalling a weight of $$1 + \frac{\beta}{i} \sum_{j\in[1,i]} (i\cdot x_j-G)^+ \geq 1 + \frac{\beta}{i} \sum_{j\in[1,i]} (i\cdot x_j-G) = 1+\beta ( (\sum_{j\in [1,i]}x_j) - G).$$ This latter term is the cost of \HAR for the bin (by definition). Thus, for each bin B, the cost of \HAR is at most equal to the weight of all items in $B$.
\end{proof}

Next, we need to find an upper bound for the weight of any bin $B^*$ of \Opt. 
\harUBpropII* 
\begin{proof}
We will use a case analysis based on the value of $G$.

\begin{itemize}
    \item Suppose $G \leq 1/2$.  
Suppose $B^*$ contains items $y_1,y_2,\ldots, y_m$ for some integer $m$, such that $y_j$ has class $c_j$. Given that $G\leq 1/2$, for any item of size $y$ of class $c$, we can write $cy \geq G$. Thus, for the total weight of items in $B^*$, we can write:

\begin{align*}
\sum_{j\in[1,m]}(\frac{1}{c_j}(1+\beta(c_j y_j - G)^+)   =  
(1-\beta G)\sum_{j\in[1,m]} \frac{1}{c_j} + \beta \sum_{j\in[1,m]} y_j  
\end{align*}
Recall that \Opt should fill out the bins fully to maximize its weight/cost ratio. Therefore, we have $\sum_{j\in[1,m]} y_j =1$, and the total weight is $$ (1-\beta G)\sum_{j\in[1,m]} \frac{1}{c_j} + \beta $$
To maximize the weight, it is necessary to select items in a way to maximize $\sum_{j\in[1,m]} \frac{1}{c_j}$. By \autoref{lemma:HAUpper}, the maximum value of $\sum_{j\in[1,m]} \frac{1}{c_j}$ is 1.691. Therefore, we can conclude that the maximum weight of items in $B^*$ is $\beta + 1.691(1-\beta G)$ and the competitive ratio would be $\frac{\beta + 1.691(1-\beta G)}{1+\beta(1- G)}$.

\item Suppose $G\in (1/2,2/3]$. We consider two cases. \\
First, suppose $B^*$ contains an item $y_0 \in (1/2,2/3]$. Note that the weight of $y_0$ would be 1, and to maximize the total weight, it is best for $y_0$ to have size $1/2+\epsilon$. Suppose other items are $y_1,y_2,\ldots, y_m$ for some integer $m$, such that $y_j$ has class $c_j$. These items are all smaller than 1/2, so we have $cy > G$. The total weight of items in $B^*$ is then

\begin{align*}
1 + \sum_{j\in[1,m]}(\frac{1}{c_j}(1+\beta(c_j y_j - G)^+)   &= 1 +  (1-\beta G) 
\sum_{j\in[1,m]} \frac{1}{c_j} + \beta \sum_{j\in[1,m]} y_j \\  
& = 1 +  (1-\beta G) 
\sum_{j\in[1,m]} \frac{1}{c_j} + \beta(1/2-\epsilon).
\end{align*}
The last equality holds because $B^*$ is fully filled and $y_0$ has size $1/2+\epsilon$.

So, to maximize the total weight, it is necessary to maximize $\sum_{j\in[1,m]} \frac{1}{c_j}$, when the total available capacity is 1/2. By \autoref{lemma:HAUpper}, this value is at most 0.691. In this case, we get a total weight of $ 1+ 0.691 (1-\beta G) + \beta/2 = 1.691 - 0.691\beta G + \beta /2$.

Next, assume there is no item of size in $(1/2,2/3]$ in $B^*$. Suppose $B^*$ contains items $y_1,y_2,\ldots, y_m$ for some integer $m$, such that $y_j$ has class $c_j$. Since each $y_j \le 1/2$ we have $cy > G$.

\begin{align*}
\sum_{j\in[1,m]}(\frac{1}{c_j}(1+\beta(c_j y_j - G)^+)   =  
\sum_{j\in[1,m]} \frac{(1-\beta G)}{c_j} + \beta \sum_{j\in[1,m]} y_j  
\end{align*}
Again, since \Opt should fill out the bins fully to maximize its weight/cost ratio, we have $\sum_{j\in[1,m]} y_j =1$, and the total weight is $$\beta + (1-\beta G)\sum_{j\in[1,m]} \frac{1}{c_j} $$
As before, to maximize the weight, it is need to select 
items in a way to maximize $\sum_{j\in[1,m]} \frac{1}{c_j}$ subject to all item sizes being in $(0,1/2] \cup (2/3,1]$. By \autoref{lemma:HAUpper}, the maximum value of $\sum_{j\in[1,m]} \frac{1}{c_j}$ is 1.42. Therefore, we can conclude that the maximum weight of items in $B^*$ is $\beta + 1.42(1-\beta G) = 1.42 - 1.42\beta G + \beta$.
To conclude, the maximum weight in this case would be at most 
\begin{align*}
\max\{1.691 - 0.691\beta G + \beta /2, 1.42 - 1.42 \beta G + \beta \} = 1.691 - 0.691\beta G + \beta /2.    
\end{align*}
The maximum is realized by the first term in the specific range for $G$.
Therefore, the competitive ratio of \HAR in this case would be at most $\frac{1.691 - 0.691\beta G + \beta /2}{1+\beta(1-G)}$.

\item Suppose $G>2/3$. To analyze this case, we look at the density of items of class $i$. An item of size $x$ and class $i$ has the following density (note $x\in (1/(i+1),1/i]$):
\begin{align*}
den(x)  = \frac{w(x)}{x}  &= \frac{1}{x i} (1+\beta (ix - G)^+)    \\ & = \frac{1}{x i} + \beta (1 - \frac{G}{x i})^+ \\ & \leq \frac{1}{xi} + \beta(1-G) \\ & \leq \frac{i+1}{i} + \beta(1-G).
\end{align*}

We consider the following cases.\\
Case I: First, suppose $B^*$ contains an item $y_0 \in (1/2,2/3]$. The weight of $y_0$ would be 1, and it is best to have size $1/2+\epsilon$.
\begin{itemize}
    \item Case Ia: Suppose $B^*$ contains also an item $y_1 \in (1/3,2/3]$. The weight of $y_1$ would be 1/2, and it is best to have size $1/3+\epsilon$. 
    Let other items be $y_2,\ldots, y_m$ for some integer $m$ such that $y_j$ has class $c_j$. %
    Given that the remaining space in $B$ is $1/6-2\epsilon$, %
    the total weight would be %

    \begin{align*}
   1+ 1/2 + \sum_{j\in[2,m]} y_j \cdot den(j) & = 1.5+ \sum_{j\in[2,m]}y_j (\cdot\frac{1}{y_j c_j} + \beta(1-G)) \\ & = 1.5 + \sum_{j\in[2,m]} \frac{1}{c_j} + \beta(1-G)  \sum_{j\in[1,m]} y_j \\ & = 1.5 + \sum_{j\in[2,m]} \frac{1}{c_j} + \beta(1-G)/6   
\end{align*}

 So, to maximize the total weight, it is necessary to maximize $\sum_{j\in[2,m]} \frac{1}{c_j}$, when the total available capacity is 1/6. 
By \autoref{lemma:HAUpper}, this value is at most 0.191. In this case, we get a total weight of at most $ 1.5 + 0.191 + \beta (1-G)/6 = 1.691 + \beta (1-G)/6$.

\item Case Ib: Suppose $B^*$ does not contain an item of size in $(1/3,2/3]$. 
Other items belong to classes $3$ or larger.
Let other items be $y_1,\ldots, y_m$ for some integer $m$ such that $y_j$ has class $c_j$. Given that the remaining space in $B^*$ is $1/2-\epsilon$, the total weight of items in $B^*$ would be at most:
\begin{align*}
   1+ \sum_{j\in[1,m]} y_j \cdot den(j) & = 1+ \sum_{j\in[2,m]}y_j (\cdot\frac{1}{y_j c_j} + \beta(1-G)) \\ & = 1 + \sum_{j\in[1,m]} \frac{1}{c_j} + \beta(1-G)  \sum_{j\in[1,m]} y_j \\ & = 1 + \sum_{j\in[1,m]} \frac{1}{c_j} + \beta(1-G)/2   
\end{align*}

So, to maximize the total weight, it is necessary to maximize $\sum_{j\in[1,m]} \frac{1}{c_j}$, when the total available capacity is 1/2 and items are of size at most 1/3. By \autoref{lemma:HAUpper}, this value is at most 0.636. In this case, we get a total weight of 
$1+0.636 + \beta(1-G)/2 = 1.636 + \beta (1-G)/2$. %
Depending on the specific $\beta, G$ either $1.691 + \beta (1-G)/6$ or $1.636 + \beta (1-G)/2$ may be larger.

\end{itemize}

Case II: Next, suppose $B^*$ does not contain an item in $(1/2,2/3]$ but contains one or two items in $(1/3,1/2]$.
\begin{itemize}
    \item Case IIa: Suppose $B^*$ contains two items $y_0, y_1$ of size in $(1/3,1/2]$. They both have weight 1/2, and to maximize weight of $B^*$, they should be of size $1/3+\epsilon$. 

Other items belong to class 3 or more and have density $\frac{4}{3} + \beta (1-G)$. Given that the remaining size $1/3-2\epsilon$, the total weight is thus at most $1/2+1/2+ \frac{1}{3} (\frac{4}{3} + \beta (1-G))  = 1.444 + \beta (1-G)/3$, which is less than $1.691 + \beta(1-G)/6$. %

    \item Case IIb: Suppose $B^*$ contains only one item $y_0$ of size in $(1/3,1/2]$. This item has weight 1/2, and to maximize the weight of $B^*$, it should be of size $1/3+\epsilon$. 
    As in the previous case, other items belong to class 3 or more and have density $\frac{4}{3} + \beta (1-G)$. Given that the remaining size $2/3-\epsilon$, the total weight is thus at most $1/2+ \frac{2}{3} (\frac{4}{3} + \beta (1-G))  = 1.388 + 2\beta (1-G)/3$, which is less than $1.691 + \beta(1-G)/6$.

\end{itemize}

Case III: Next, suppose $B^*$ does not contain any item with sizes in $(1/3,2/3]$.  
    So, all items belong to class 3 or more and have density $\frac{4}{3} + \beta (1-G)$. Given that the available size is $1$, the total weight is at most $(\frac{4}{3} + \beta (1-G))  = 1.333 + \beta (1-G)$, which is less than $1.691 + \beta(1-G)/2$.

To conclude, when $G>2/3$, the total weight of $B^*$ is no more than $1.691+ \beta(1-G)/6$, and thus the upper bound for the competitive ratio of HA is at most  $\max\{\frac{1.691 + \beta(1-G)/6}{1+\beta (1-G)}, \frac{1.636+\beta(1-G)/2}{1+\beta(1-G)}\}$.

\end{itemize}
\end{proof}

Combining \autoref{prop:wpc bound}, \autoref{prop:har_ub_0}, and \autoref{prop:har_ub_1} gives \autoref{thm:harUB}.

\section{Proofs of Upper Bounds for $\beta G>1$}
\label{Appendix-TAAFUB}

In this section, we provide full proofs for the upper bounds of \TNF, \TWF, and \TAAF \& \GHAR, as introduced in \autoref{sec:GBgtOne}.

\subsection{Upper Bound of \TNF}

\tnf*

Here we show the upper bound portion of this proof. We will begin by establishing some useful notation.

We refer to a bin of \TNF as green (respectively \black) if the total size of items in the bin is at most $G$ (respectively more than $G$). 
    We partition the bins in the packing of \TNF into GG bins, formed by a pair of consecutive green bins; GB bins, formed by a green bin followed by a \black bin; and B bins, which is a single \black bin. Note that a simple linear scan can help us partition the bins as such. A pair of GB bins is said to \emph{special} if the \black bin is opened by an item of size larger than $G$, such item is called an \emph{special} item. Non-special GB bins are called \emph{regular} GB bins. We use GBR and GBS to refer to regular and special GB bins, respectively.

    We use $S_{GG}$, $S_{GBR}$, and $S_B$ to respectively denote the total size of items in GG, GBR, and B-bins. We use $m$ to denote the number of pairs of $GBS$ pairs (there are $2m$ such bins). As before, we let $S_{GBS}$ be the total size of non-special items in the GBS bins.
    Assume the total size of special items in all GBS bins is $mG + \Gamma$; that is, if we place every special item in a separate bin, the total \black space used will be $\Gamma$.

    Given the final packing of \TNF, for any $X\in \{GG, GBR, GBS\}$, we let $A_X$ denote the space wasted in all green bins of type $X$. That is, if a green bin of type $X$ has a level $G-a$, the contribution of that bin to $A_X$ will be $a$. Similarly, for any $X \in \{GBR, GBS, B\}$, let the $B_X$ denote the total \black space used by bins of type $X$. 

Now, we are ready to investigate the cost of \TNF as follows:

\begin{itemize}
    \item Consider the GG bins. The cost of the algorithm for these bins is $\frac{S_{GG}+A_{GG}}{G}$. Note that $S_{GG}+A_{GG}$ is simply the total green capacity of all GG bins, and dividing it by $G$ gives the total number of such bins, which defines the cost of the algorithm for these bins. 
    \item Consider the B bins. The number of B bins is simply $\frac{S_B - B_B}{G}$, and thus their total cost is $\frac{S_B - B_B}{G} + \beta B_B$. 
    \item Consider GBR bins. The number of such bins is $\frac{S_{GBR} + A_{GBR} - B_{GBR} }{G}$; this is because $S+A-B$ is the total green space across all bins. The cost of the algorithm for these bins is thus $\frac{S_{GBR} + A_{GBR} - B_{GBR} }{G} + \beta B_{GBR}$. 
    \item Consider GBS bins. The algorithm opens exactly $2m$ such bins and uses a total \black space of $\Gamma + B_{GBS}$. Therefore, the cost of the algorithm is $2m+ \beta (\Gamma + B_{GBS})$.
\end{itemize}
From the above discussion, we conclude that the total cost of \TNF is $\frac{S_{GG}+A_{GG}}{G} + \frac{S_B - B_B}{G} + \beta B_B + \frac{S_{GBR} + A_{GBR} - B_{GBR} }{G} + \beta B_{GBR} + 2m+ \beta (\Gamma + B_{GBS})$. On the other hand, Opt places each special item in a separate bin for a total cost of $m+ \beta \Gamma$ (there are $m$ such items and a \black space of total size $\Gamma$ must be used); other items are placed in the green space of at least $\frac{S_{GG}+S_B+S_{GBR}+S_{GBS}}{G}$. Thus, the cost of Opt is at least $\frac{S_{GG}+S_B+S_{GBR}+S_{GBS}}{G} + m+ \beta \Gamma $. We conclude the competitive ratio of \TNF is at most: 
\\
\newcommand\numberthis{\addtocounter{equation}{1}\tag{\theequation}}
\scalebox{.9}{
\begin{minipage}{\textwidth}
\begin{align*}
 \ &   \frac{\frac{S_{GG}+A_{GG}}{G} + \frac{S_B - B_B}{G} + \beta B_B + \frac{S_{GBR} + A_{GBR} - B_{GBR} }{G} + \beta B_{GBR} + 2m+ \beta (\Gamma + B_{GBS})}{\frac{S_{GG}+S_B+S_{GBR}+S_{GBS}}{G} + m+ \beta \Gamma } \\ 
 = &   \frac{(S_{GG}+A_{GG}) + (S_B - B_B + \beta G B_B) + (S_{GBR} + A_{GBR} - B_{GBR}  + \beta G B_{GBR}) + (2m G+ \beta G (\Gamma + B_{GBS}))}{S_{GG}+S_B+S_{GBR}+(S_{GBS} + mG + \beta G \Gamma )} 
 \\
 \leq & \max \{  \frac{S_{GG}+A_{GG}}{S_{GG}}, \frac{S_B - B_B + \beta G B_B}{S_{B}}, \frac{S_{GBR} + A_{GBR} - B_{GBR}  + \beta G B_{GBR}}{S_{GBR}}, \frac{2m G+ \beta G (\Gamma + B_{GBS})}{S_{GBS} + mG + \beta G \Gamma} \} \numberthis \label{exp:nftUB}.
\\ 
\end{align*}
\end{minipage}}

Note that our discussion so far has been independent of $\tau$. We review the above terms in the $\max$ function and bound them above by a function of $\tau$. 
\begin{restatable}{proposition}{tnfUBII}\label{prop:tnf_ub_2}
    Suppose $\beta G > 1$ and $\tau \in [0,G]$, then the following relations hold:
    \begin{itemize}
        \item $\frac{S_{GG}+A_{GG}}{S_{GG}} \le \frac{2G}{G+\tau}$
        \item $\frac{S_B - B_B + \beta G B_B}{S_{B}} \le \frac{G(1+\tau \beta)}{G+\tau}$
        \item $\frac{S_{GBR} + A_{GBR} - B_{GBR}  + \beta G B_{GBR}}{S_{GBR}} \le \frac{G(2+\tau \beta )}{G+2\tau}$ 
        \item $\frac{2m G+ \beta G (\Gamma + B_{GBS})}{S_{GBS} + mG + \beta G \Gamma} \le \frac{G(2+\tau \beta )}{G+2\tau} \;$ when \; $1<\beta G \le 2$
        \item $\frac{2m G+ \beta G (\Gamma + B_{GBS})}{S_{GBS} + mG + \beta G \Gamma} \le \frac{2+\tau\beta}{1+\tau\beta }$ \; when \; $\beta G \ge 2$
    \end{itemize}
\end{restatable}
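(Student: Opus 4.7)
My proof plan is to establish each of the five inequalities by a per-pair (or per-bin) analysis lifted to the aggregate via the mediant inequality $\frac{\sum a_i}{\sum b_i}\le \max_i\frac{a_i}{b_i}$ already invoked in Eq.~(\ref{exp:nftUB}). Throughout, the crucial input is the bin-opening rule of \TNF, which gives tight lower bounds on $S$ in terms of $\ell_1$ (and hence on $\tau$ and $G$).

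For the \texttt{GG} and \texttt{B} bounds the arguments are short. In any \texttt{GG} pair, the item $s$ that opens bin~2 satisfies $\ell_1+s>G+\tau$ and is then placed in bin~2, so $\ell_2\ge s$ and hence $\ell_1+\ell_2\ge G+\tau$; together with the identity $S_{\mathrm{pair}}+A_{\mathrm{pair}}=2G$ this yields bound~1. For an isolated \texttt{B} bin with level $L\in(G,G+\tau]$, the per-bin ratio equals $G(1+\beta(L-G))/L$, which is monotonically increasing in $L$ precisely when $\beta G>1$, so it is maximized at $L=G+\tau$, yielding bound~2.

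For the \texttt{GBR} bound, the non-special constraint $s\le G$ combined with $\ell_1+s>G+\tau$ forces $\ell_1>\tau$, while $\ell_2>G$ gives $\ell_2=G+B'$ with $B'\in(0,\tau]$. Setting $\ell_1$ to its minimum yields the per-pair ratio $(2G+\beta G B')/(G+\tau+B')$; its derivative in $B'$ has sign $\beta(G+\tau)-2$. When $\beta(G+\tau)\ge 2$ the maximum sits at $B'=\tau$, producing exactly $G(2+\beta\tau)/(G+2\tau)$. In the complementary regime the supremum is $2G/(G+\tau)$, which is already subsumed by the \texttt{GG} bound and hence absorbed by the outer maximum in Eq.~(\ref{exp:nftUB}).

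The \texttt{GBS} bounds~4 and~5 are the most delicate step and I expect this to be the main obstacle. A \texttt{GBS} pair is parameterized by $(\ell_1,s,Y)$ with $s>G$, where $Y\in[0,(G+\tau-s)^+]$, and the opening condition reads $\ell_1>(G+\tau-s)^+$. Substituting the minimum $\ell_1$ and writing the per-pair ratio in terms of $(s,Y)$ gives an expression whose maximizer lives in a two-dimensional piecewise-linear region, with the piecewise boundary at $s=G+\tau$ (at which the parameterization of $Y$ switches between the non-singleton case $Y\in[0,G+\tau-s]$ and the singleton case $Y=0$). Optimizing requires a case split governed by the sign of $\beta G-2$: in one regime the worst pair sits at $s\to G^+$ with $Y\to\tau$, yielding $G(2+\beta\tau)/(G+2\tau)$; in the other it sits at $s\to G+\tau$ with $\ell_1\to 0$, yielding $(2+\beta\tau)/(1+\beta\tau)$. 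The crux is certifying in each regime that no interior critical point nor the boundary $s=G+\tau$ produces a larger ratio, which is precisely what forces the explicit conditional structure separating bounds~4 and~5.
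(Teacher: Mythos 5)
Your plan follows the paper's own route: the same four-way decomposition into \texttt{GG}, \texttt{GBR}, \texttt{GBS}, \texttt{B} contributions, per-pair (per-bin) extremal bounds, and the mediant inequality from (\ref{exp:nftUB}). The \texttt{GG} and \texttt{B} arguments are the paper's arguments in only slightly different clothing (your monotonicity in $L$ for a \texttt{B} bin is equivalent to the paper's bound $b_i/s_i\le \tau/(G+\tau)$). The real difference is in \texttt{GBR} and \texttt{GBS}: the paper substitutes the extreme values $a_i\to G-\tau$, $b_i\to\tau$ (resp.\ $B_{GBS}\to \tau mG-\Gamma$, $\Gamma\in\{0,\tau m G\}$) after asserting monotonicity ``since $\beta G>1$,'' whereas you do an explicit corner analysis with case splits on $\beta(G+\tau)-2$ and $\beta G-2$. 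Your version is in fact the more careful one: the paper's monotonicity-in-$b_i$ claim only holds when $\beta(G+\tau)\ge 2$, and in the complementary regime the aggregate \texttt{GBR} ratio genuinely approaches $2G/(G+\tau)$, which exceeds $G(2+\tau\beta)/(G+2\tau)$ there (e.g.\ $G=0.6$, $\beta=2$, $\tau=0.2$: pairs with green level $\tau+\epsilon$, opener of size exactly $G$, and a tiny overflow give ratio about $1.48>1.44$). So the third bullet as literally stated needs exactly the caveat you introduce, and your observation that the leftover bound $2G/(G+\tau)$ is absorbed by the outer maximum is the right way to keep \autoref{thm:tnf} intact.

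Two points to pin down in the \texttt{GBS} step. First, make the regime assignment explicit: the per-pair ratio is linear-fractional in $(u,Y)$ with $u=s-G$ over the triangle $\{u,Y\ge 0,\ u+Y\le\tau\}$, so its maximum sits at a vertex (this also disposes of your ``interior critical point'' worry for free, and the $s>G+\tau$ branch is dominated by the vertex $(\tau,0)$); the binding vertex is $(u,Y)=(\tau,0)$ with value $(2+\tau\beta)/(1+\tau\beta)$ when $\beta G\le 2$, and $(0,\tau)$ with value $G(2+\tau\beta)/(G+2\tau)$ when $\beta G\ge 2$. Note this is the reverse of how bullets 4 and 5 pair bounds with regimes — the paper's own derivation (maximizing over $\Gamma$) yields the same assignment as your corner analysis, and only the maximum of the two terms is used downstream, so nothing in \autoref{thm:tnf} changes, but your write-up should commit to the correct pairing rather than ``in one regime \dots in the other.'' Second, spell out the per-pair \texttt{GBS} numerator and denominator ($2G+\beta G(u+Y)$ over $\ell_1+Y+G+\beta G u$ with $\ell_1$ at its infimum $(G+\tau-s)^+$); the corner values you quote are correct, but the conclusion rests on that algebra being stated.
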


\begin{proof}
For each of the terms in the max function, we will bound them above by a function of $\tau, \beta, G$. Suppose $\tau = \alpha G$ for some $\alpha > 0$.

\begin{itemize}
    \item Consider the $i$'th pair of GG bins. Let $s_i$ denote the total size of items inside the two bins and $a_i$ denote the wasted green space. 
    We know that the second bin is opened because the total size of the items inside the two bins is more than $G+\tau = (1+\alpha) \tau$, i.e., $s_i \geq (1+\alpha)\tau$. The total green space inside the two bins is $2G$; that is, the wasted space within the two bins is $a_i \leq 2G - (G+\tau) = G - \tau = (1-\alpha) \tau$.   Thus, we have ${a_i} \leq \frac{1-\alpha}{1+\alpha} s_i$. Summing over all pairs $i$, we get ${A_{GG}}\leq \frac{1-\alpha}{1+\alpha} S_{GG}$. We can conclude that $\frac{S_{GG}+A_{GG}}{S_{GG}} \leq 1 + \frac{1-\alpha}{1+\alpha} = \frac{2}{1+\alpha}$.
    \item Consider the $i$'th \black bin and let $s_i$ be the total size of items and $b_i$ be the \black space used in the bin.     
    Given that the \black space is used after the green space, we can write $\frac{b_i}{s_i} \leq \frac{\alpha G}{G + \alpha G} = \frac{\alpha}{1+\alpha}$. Summing over all values of $i$, we can write $B_{B} \leq \frac{\alpha}{1+\alpha} S_B$. Therefore, $\frac{S_B - B_B + \beta G B_B}{S_{B}} \le 1 +\frac{\alpha}{1+\alpha}(\beta G-1) = \frac{1+\alpha \beta G}{1+\alpha}$.

    \item Consider the $i$'th pair of BGR bins. As before, let $s_i$ denote the total size of items inside the two bins, $a_i$ denote the wasted green space in the green bin, and $b_i$ denote the used \black space in the \black bin. Given that the \black bin is opened by an item smaller than $G$ (otherwise, the pair was special), we can write that $(G-a_i) + G > G+ \tau $ or $a_i < G-\tau = (1-\alpha)G$. 
    Moreover, we have $b_i \leq \alpha G$ (threshold definition).
    Therefore, for the ratio $r =\frac{a_i + b_i (\beta G-1)}{s_i}$, we can write $r = \frac{a_i + b_i (\beta G-1)}{(G-a_i) + b_i + G}$, which is increasing in $a_i$ and thus $r \leq \frac{(1-\alpha)G + b_i (\beta G-1)}{(\alpha+1)G + b_i} $. This is also increasing in $b_i$ (since $\beta G >1$) and thus $r\leq \frac{(1-\alpha)G + \alpha G (\beta G-1)}{(\alpha+1)G + \alpha G} = \frac{(1-\alpha) + \alpha (\beta G-1)}{(\alpha+1) + \alpha} = \frac{1-2\alpha + \alpha \beta G}{2\alpha + 1}$. So far, we have shown that $a_i + b_i (\beta G-1) \leq  s_i (\frac{1-2\alpha + \alpha \beta G}{2\alpha + 1})$. Summing over all $i$'s, we get $A_{GBR} + B_{GBR}(\beta G-1) \leq S_{GBR} (\frac{1-2\alpha + \alpha \beta G}{2\alpha + 1}) $. To conclude, we have $\frac{S_{GBR} + A_{GBR} - B_{GBR}  + \beta G B_{GBR}}{S_{GBR}} \leq  1 + \frac{1-2\alpha + \alpha \beta G}{2\alpha + 1} = \frac{2+\alpha \beta G}{2\alpha + 1}$.

\item Consider the $i$'th pair of BGS bins. Let $s_i$ denote the total size of non-special items inside the two bins, and $b_i$ be the total \black space used in the \black bin. Suppose the special item that caused the opening of the second bin has size $G+\gamma_i$. So, we have $(s_i-b_i) + G + \gamma_i > (1+\alpha)G$; this is because $s_i-b_i$ is the space used in the green bin. Therefore, $s_i - b_i + \gamma_i > \alpha G $ or $s_i > \alpha G + b_i - \gamma_i$. Summing over all $i$'s, we get $S_{GBS} \geq \alpha G m + B_{GBS} - \Gamma$. Therefore, for the ratio  
$r = \frac{2m G+ \beta G (\Gamma + B_{GBS})}{S_{GBS} + mG + \beta G \Gamma}$, we can write $r\leq \frac{2m G+ \beta G (\Gamma + B_{GBS})}{\alpha G m + B_{GBS} - \Gamma + mG + \beta G \Gamma} = \frac{2m G+ \beta G  B_{GBS} + \beta G \Gamma}{ (1+\alpha) mG  + B_{GBS} + (\beta G-1) \Gamma} $. 
Moreover, back to the $i$'th pair, we have $b_i + \gamma_i \leq \alpha G$ (they both use the \black space that is at most $\alpha G$). Summing over all $i$'s, we get $B_{GBS} + \Gamma \leq \alpha m G$ or $B_{GBS} \leq \alpha m G - \Gamma$. Given that the above upper bound is increasing in $B_{GBS}$ (since $\beta G >1$), we can write $r\leq \frac{2m G+ \beta G  (\alpha m G - \Gamma) + \beta G \Gamma}{ (1+\alpha) mG  + (\alpha m G - \Gamma) + (\beta G-1) \Gamma} = \frac{2m+ \beta  \cdot \alpha m G }{ (1+2\alpha) m  + (\beta-2/G) \Gamma}$. This function is monotone in $\Gamma$ and takes its maximum either at $\Gamma = 0$ or the largest value of $\Gamma$, which is $\alpha m G$. In the former case, the upper bound becomes $\frac{2 + \alpha \beta G}{1+2\alpha}$ (similar to the previous case), and in the former case, it becomes $\frac{2 + \alpha \beta G}{(1+2\alpha) + (\beta -2/G)\alpha G} = \frac{2 + \alpha \beta G}{(1+2\alpha) +  (\alpha \beta G -2\alpha)} = \frac{2+\alpha \beta G}{1+\alpha \beta G}$. 
\end{itemize}
\end{proof}

\autoref{prop:wpc bound} and \autoref{prop:tnf_ub_2} imply that the competitive ratio is at most \\  $\max\{ \frac{2}{1+\alpha}, \frac{1+\alpha \beta G}{1+\alpha}, \frac{2+\alpha \beta G}{2\alpha + 1}, \frac{2+\alpha \beta G}{1+\alpha \beta G} \}$. The maximum, however, is never realized by $\frac{2}{1+\alpha}$. We can then substitute out $\alpha = \tau/G$ to complete the proof of \autoref{thm:tnf}.

\subsection{Upper bound of \TWF}

\TWFUB*

We use weighting argument to prove the upper bound. Different from \TAAF, we define the weight of each item equal its size, i.e., $w(x) = x$. Then we derive an upper bound by taking the ratio of the maximum \WPC of \Opt to the minimum \WPC of \TWF.

First, consider \WPC of \Opt. Fix a bin $B^*$ of Opt and suppose $B^*$ has a level $pG$ for some $p>0$. If $p < 1$, one can add items of small size to increase the weight of $B^*$ without changing its cost. If $p > 1$, the weight per unit cost would be $\frac{pG}{1+\beta(p-1)G}$, which is decreasing in $p$ and is maximized at $p=1$, in which case the which case it will be $G$.

Now consider the \WPC of \TWF. Fix a bin $B$ of \TWF and suppose it has a level $p(G+\tau)$ for $p \in [\nicefrac{1}{2},1]$. If $p(G+\tau) \le G$, then the cost of the bin is $1$, and \WPC is minimized when $p=\nicefrac{1}{2}$. If $p(G+\tau) > G$, then the cost is $1+\beta(p(G+\tau)-G))$, for a \WPC of $\frac{p(G+\tau)}{1+\beta(p(G+\tau)-G)}$. This function is decreasing in $p$, and reaches its minimum at $p=1$.

Finally, we can take the ratio of the maximal \WPC for \Opt and the two minimal \WPC of \TWF to reveal the upper bounds $\frac{2G}{G+\tau}$ and $\frac{G(1+\tau\beta)}{G+\tau}$.

\subsection{Upper Bound of \TAAF and \GHAR}
\label{app:taaf-ub}

\TAAFUB*

In order to provide upper bounds for \TAAF and \GHAR, we will use the following weighting argument. For an item of size $x$, let its weight be $x$ if $x\le \frac{G+\tau}{2}$, and $x+R$ if $x>\frac{G+\tau}{2}$, where $R \in[0,\frac{G+\tau}{6}]$. To derive \CR bounds, we will provide lower bounds on the weight per cost (\WPC) of $\Alg$, and upper bounds on the \WPC for $\Opt$. Since both $\Alg$ and $\Opt$ will take the same total weight, we can upper bound the \CR by dividing the $\Opt$ \WPC by the $\Alg$ \WPC. To prove \autoref{thm:taf-UB} we will first establish a number of intermediary results.

\taafUBlemI*

\begin{proof}
    Let large items be those with size $>\frac{G+\tau}{2}$. We will show that in both cases, all bins (except a constant number) without large items will fill to a level of at least $\frac{2(G+\tau)}{3}$. For \TAAF algorithms, if some bin fills to less than $\frac{2(G+\tau)}{3}$, then we know that for each bin after, items are all of size greater than $\frac{G+\tau}{3}$, and each bin (except perhaps the last) will have two items with fill above $\frac{2(G+\tau)}{3}$. Thus, at most two bins without large items are filled below $\frac{2(G+\tau)}{3}$. For \GHAR, $i$-type bins for $2\le i \le K$ contain only small items. Each closed $i$-type bin fills to a level of at least $\frac{i(G+\tau)}{i+1}$ for $2 \le i \le K-1$, or $\frac{(K-1)(G+\tau)}{K}$ for $K$-type bins, since $K \ge 3$ this value is at least $\frac{2(G+\tau)}{3}$. Recall that there are at most $K$ open bins which may be filled lower than this level. Since we are considering asymptotic competitive ratio, the constant number of bins filled below $\frac{2(G+\tau)}{3}$ will not affect the analysis. Now considering bins with large items: for both algorithms, the minimum fill level for a bin with large items is $\frac{G+\tau}{2}$. Thus, for both \TAAF and \GHAR, each bin (except at most a constant number) will be filled to a \textit{weight} of at least $\frac{G+\tau}{2}+R$.

We can now make two observations. First, if the fill level of an $\Alg$ bin is below $G$, then its cost is $1$, and its minimum \WPC is $\frac{G+\tau+2R}{2}$. Second, if the fill level is a value $L > G$, then its cost is $1+\beta(L - G)$, and \WPC is minimized when $L = G+\tau$, for a value of $\frac{G+\tau}{1+\beta\tau}$. Depending on the value of $R$, one of these expressions will be lower than the other at a given $\tau$. If we denote the value of $\tau$ where they intersect as $\mu_1(R) = \frac{(1-G\beta-2R\beta)+\sqrt{(G\beta+2R\beta-1)^2+4\beta(G-2R)}}{2\beta}$, then we can see that $\frac{G+\tau+2R}{2}$ is smaller for $\tau <\mu_1$ and $\frac{G+\tau}{1+\tau\beta}$ is smaller for $\tau > \mu_1$. 
\end{proof}

\taafUBlemII*

\begin{proof}
For each possible fill level $L$, we will consider the maximal \WPC. First, if $L \le \frac{G+\tau}{2}$, then \WPC at most $\frac{G+\tau}{2}$. Next, if $\frac{G+\tau}{2} < L \le G$, \WPC maximized when $L = G$ and \WPC is $G+R$. Thirdly, if $G+\tau < L \le 1$, \WPC maximized when $L = G+\tau$ with $\frac{G+\tau+2R}{1+\beta \tau}$, but this is only possible if $\tau < 1-G$. Thus, if $\tau = 1-G$, the maximum \WPC for $\Opt$ is $G+R$. However, for $\tau <1-G$, the maximum \WPC for $\Opt$ will either be $G+R$ or $\frac{G+\tau+2R}{1+\tau \beta}$. These expressions are equal at $\tau = \mu_2(R) = \frac{R}{\beta(G+R)-1}$. The $G+R$ case occurs for $\tau > \mu_2$ and the $\frac{G+\tau+2R}{1+\tau \beta}$ for $\tau <\mu_2$.    
\end{proof}

With these results, we can find expressions for the upper bound with respect to a given choice of $R$. The specific expressions depend on the relation between $\mu_1(R)$ and $\mu_2(R)$, but we show that we only need to consider the case that $\mu_1(R) > \mu_2(R)$.

\weightingGroup*

\begin{proof}
Recall $\mu_1(R) = \frac{(1-\beta G-2RB)+\sqrt{(G\beta+2R\beta-1 )^2+4\beta(G-2R)}}{2\beta}$ and $\mu_2(R) = \frac{R}{\beta(G+R)-1}$.
    We will first show that when $R=0$, $\mu_1(0) > \mu_2(0)$. Then we will show that $\mu_1(R)$ is decreasing in $R$, while $\mu_2(R)$ is increasing in $R$. With these established, we can say that if $\mu_1(\frac{G+\tau}{6}) > \mu_2(\frac{G+\tau}{6})$, then $\mu_1(R) > \mu_2(R)$ for all $R \in [0, \frac{G+\tau}{6}]$. Since $\tau$ varies, we will take the largest value of interest: $\nicefrac{1}{\beta}$, as larger $\tau$ are never optimal. 

    Consider $R=0$, a value $\mu_2(0) = 0$, while $\mu_1(0) = \nicefrac{1}{\beta} > 0$. Now consider the derivatives of each. First, it is clear that $\frac{d}{dR}\mu_2(R) = \frac{\beta G-1}{(\beta(G+R)-1)^2}$, which will be positive since $\beta G >1$ and $R \ge 0$. Now consider $\frac{d}{dR}\mu_1(R) = -1 +\frac{\beta G + 2RB - 3}{\sqrt{(G\beta+2R\beta-1 )^2+4\beta(G-2R)}}$. This expression is positive only when $\beta G < 1$ and $\beta G + 2RB > 3$. By the restriction on $\beta G$ in the statement of the Lemma, the first condition does not hold, establishing that $\mu_1(R)$ is decreasing in $R$.

    Finally, we will show $\mu_1(\frac{G+1/\beta}{6}) > \mu_2(\frac{G+1/\beta}{6})$. First, we have 
    $\mu_1(\frac{G+1/\beta}{6})=\frac{1-\beta G+\sqrt{4\beta^2 G^2+2\beta G - 2}}{3\beta}$
    Meanwhile, 
    $\mu_2(\frac{G+1/\beta}{6}) =\frac{G+1/\beta}{7\beta G}$. We can now write:

    \begin{align*}
        \frac{1-\beta G+\sqrt{4\beta^2 G^2+2\beta G - 2}}{3\beta} >& \frac{G+1/\beta}{7\beta G} \\
         7\beta G\left(1 - 2\beta G + \sqrt{4(\beta G)^2 + 2\beta G - 2}\right) >& 3\beta (G + 1/\beta ) \\
         7\beta G  \sqrt{4(\beta G)^2 + 2\beta G - 2} >& 14(\beta G)^2-4\beta G+3 \\
         \text{Now square both sides and simplify:}\\
         196(\beta G)^2 + 98(\beta G)^3-98(\beta G)^2 >&
         196(\beta G)^4-112(\beta G)^3+100(\beta G)^2-24(\beta G)+9 \\
         210(\beta G)^3-198(\beta G)^2+24\beta G-9 >& 0
    \end{align*}
    Note that we are able to square to remove the radical because $\beta G > 1$, thus $4(\beta G)^2+2\beta G-2>0$ and $14(\beta G)^2-4\beta G+3>0$. We can then use numerical methods to find the cutoff for $\beta G$ as approximately $0.868$. Since $\beta G >1$, we can then conclude that  $\mu_1(\frac{G+1/\beta}{6}) > \mu_2(\frac{G+1/\beta}{6})$. Thus, we have proved the statement of the Lemma.

\end{proof}

With that result established, we can now present the upper bounds with respect to a given value of $R$. We consider $\tau < 1-G$ (left) and $\tau = 1-G$ (right).

$$CR \le \begin{cases}
    \frac{\frac{G+\tau+2R}{1+\tau\beta}}{\frac{G+\tau+2R}{2}} = \frac{2}{1+\tau \beta} & 0 \le \tau \le \mu_2 \\
    \frac{G+R}{\frac{G+\tau+2R}{2}} = \frac{2(G+R)}{G+\tau+2R} & \mu_2 \le \tau \le \mu_1 \\
    \frac{G+R}{\frac{G+\tau}{1+\tau\beta}} = \frac{(G+R)(1+\tau\beta)}{G+\tau} & \mu_1 \le \tau \le 1-G \\
\end{cases}, \text{ and } CR \le \begin{cases}
    \frac{G+R}{\frac{1+2R}{2}} = \frac{2(G+R)}{1+2R} & 0\le  1-G \le \mu_1 \\
    \frac{G+R}{\frac{1}{1+\tau\beta}} = \frac{(G+R)(1+\beta(1-G))}{1} & \mu_1 \le 1-G \\
\end{cases}$$

We will show that as we vary $R$, our we achieve different upper bounds which result in the bounds stated in \autoref{thm:taf-UB}.

First consider $R = 0$, we see that $\mu_1 = \frac{1}{\beta}$, and so we have the upper bound $\frac{G(1+\tau\beta)}{G+\tau}$ when $\tau \ge 1/\beta$, which matches our expectation. This means that we have tight bounds for $\tau \ge 1/\beta$. When $\tau = 1-G$, this function becomes $G(1+\beta(1-G))$.

 Next consider $0 < R \le \frac{G+\tau}{6}$. We will see that the intersection of the second ($\frac{2(G+R)}{G+\tau+2R}$) and third ($\frac{(G+R)(1+\tau\beta)}{G+\tau}$) bounds will \textit{trace} along the $1+\frac{(G-\tau)(1+\tau \beta)}{G+\tau}$ bound. By construction, we know that the second and third case bounds intersect at $\tau = \mu_1$. We will show that both bounds also intersect with $1+\frac{(G-\tau)(1+\tau \beta)}{G+\tau}$ at $\tau = \mu_1$.

\begin{align*}
    \frac{(G+R)(1+\tau\beta)}{G+\tau} &= 1+\frac{(G-\tau)(1+\tau\beta)}{2(G+\tau)}\\
    2(G+R)(1+\tau\beta) &= 2(G+\tau)+(G-\tau)(1+\tau\beta) \\
    2(G+R+G\tau\beta+R\tau\beta) &= 2G+2\tau+G-\tau+G\tau\beta-\tau^2\beta \\
    \tau^2(\beta)+\tau(G\beta+2R\beta-1)+(2R-G) &=0 \\
    \tau &= \mu_1
\end{align*}

As $R$ increases, $\mu_1$ decreases, reaching a minimum value when $R = \frac{G+\tau}{6}$. At this value, the second bound will become $\frac{7G+\tau}{4(G+\tau)}$, which intersects with $1+\frac{(G-\tau)(1+\tau \beta)}{G+\tau}$ at $\tau = 1/2\beta$. Therefore, we can say that $1+\frac{(G-\tau)(1+\tau \beta)}{G+\tau}$ is tight from $1/2\beta \le \tau \le 1/\beta$. When $\tau = 1-G$, this function becomes $1+\frac{(2G-1)(1+\beta(1-G))}{2}$.

Since we can no longer increase $R$, we are left with a loose upper bound $\frac{7G+\tau}{4(G+\tau)}$ between either its intersection with $\frac{2}{1+\tau \beta}$  ($\tau <1-G$) or 0 ($\tau =  1-G$) and $1/2\beta$. For $\tau <1-G$, this intersection occurs at $\frac{7-7\beta G +\sqrt{(7\beta G-7)^2+4\beta G}}{2\beta}$. Therefore, if $\tau <1-G$ we conclude that $\frac{2}{1+\tau \beta}$ is a tight bound $0 \le \tau \le \frac{7-7\beta G +\sqrt{(7\beta G-7)^2+4\beta G}}{2\beta}$ and $\frac{7G+\tau}{4(G+\tau)}$ is an upper bound from $\frac{7-7\beta G +\sqrt{(7\beta G-7)^2+4\beta G}}{2\beta} \le \tau \le 1/2\beta$. Meanwhile, if $\tau = 1-G$, we have the bound $\frac{6G+1}{4}$ for $0 \le 1-g \le 1/2\beta$. Thus, we establish the statement of \autoref{thm:taf-UB}.

\subsection{Proofs of Corollaries in \autoref{sec:GBgtOne}}

\tnfOPT*

\begin{proof} 

Its easy to check that the $\frac{G(2+\tau \beta)}{G+2\tau}$ and $ \frac{2+\tau \beta}{1+\tau\beta}$ bounds are equivalent when $\beta G =2$. 

We find that that when $\beta G < 2$, $\frac{G(2+\tau \beta)}{G+2\tau} < \frac{2+\tau \beta}{1+\tau\beta}$ for $\tau \in [0, 1-G]$, so the minimum competitive ratio occurs at the intersection of $\frac{2+\tau \beta}{1+\tau\beta}$ and $\frac{G(1+\beta \tau)}{G+\tau}$. This intersection occurs at $\tau = \frac{2-\beta G + \sqrt{5\beta^2 G^2 - 8\beta G + 4}}{2\beta(\beta G-1)}$, which we can substitute back in to find a competitive ratio of $\frac{3\beta G-2 +\sqrt{5B^2G^2-8\beta G +4}}{\beta G +\sqrt{5B^2G^2-8\beta G +4}}$.

Meanwhile, when $\beta G > 2$, $\frac{G(2+\tau \beta)}{G+2\tau} > \frac{2+\tau \beta}{1+\tau\beta}$ for $\tau \in [0, 1-G]$. The intersection of $\frac{G(2+\tau \beta)}{G+2\tau}$ and $\frac{G(1+\beta \tau)}{G+\tau}$ occurs when $\tau = \sqrt{G/\beta}$, which we can substitute to find a competitive ratio of $\frac{\beta G (1+ \sqrt{\beta G})}{\beta G +\sqrt{\beta G}}$.
\end{proof}

\optimalTauOne*

\begin{proof}

\textbf{Case 1:} $1 <\beta(1-G)$.

To establish this result, we begin by identifying critical points on the upper and lower bounds which may minimize the competitive ratio based on $\beta G$. These critical points are the boundary conditions ($\tau \in \{0, 1-G\}$) and the intersection points between bounds: these are $\tau \in \{ \frac{7-7\beta G+\sqrt{(7\beta G-7)^2+4\beta G}}{2\beta},\nicefrac{1}{2\beta}, \nicefrac{1}{\beta}\}$ for upper bounds and $\tau \in \{\hat{\tau}, \nicefrac{1}{\beta}\}$ for the lower bounds. The restriction that $1<\beta(1-G)$ ensures that each of these critical points are valid values of $\tau$. We can then consider the slope of each bound to filter out trivially non-competitive points. This removes the boundary points in both cases, and $\tau = \frac{7-7\beta G+\sqrt{(7\beta G-7)^2+4\beta G}}{2\beta}$ for the upper bound.

This process leaves us with two critical points which can minimize the upper bound of the competitive ratio: $\tau = \nicefrac{1}{2\beta}$ and $\tau = \nicefrac{1}{\beta}$. Similarly, there are two critical points which can minimize the lower bound of the competitive ratio: $\tau = \hat{\tau}$ and $\tau = \nicefrac{1}{\beta}$. Note that we have tight bounds for $\tau \ge \nicefrac{1}{2\beta}$, while the upper bound of $\tau=\hat{\tau}$ is loose. We will consider three ranges of $\beta G$ and discuss the appropriate $\tau$ in each range.

\textbf{Case 1.1}. Let $1< \beta G \le 1+\sqrt[3]{2+\sqrt{44/27}}+\sqrt[3]{2-\sqrt{44/27}}$. We will show that in this range, $\tau = \nicefrac{1}{\beta}$ \underline{is optimal}. 

We begin by finding that $\tau= \nicefrac{1}{\beta}$ gives a tight competitive ratio of $\frac{2\beta G}{\beta G + 1}$. To say that $\tau= \nicefrac{1}{\beta}$ is optimal in this range, we must show that the lower bound on competitive ratio is higher at every other $\tau$. The other critical point to examine is $\hat{\tau}$, which is the intersection of the lower bounds lower bounds $\frac{2}{1+\tau\beta}$ and $1+\frac{(G-\tau)(1+\tau\beta)}{2(G+\tau)}$. 

Since directly evaluating $\hat{\tau}$ is complex, we use an indirect argument. We can consider when the bounds $\frac{2}{1+\tau\beta}$ and $1+\frac{(G-\tau)(1+\tau\beta)}{2(G+\tau)}$ intersect with the line $\frac{2 \beta G}{\beta G +1 }$. If $\frac{2}{1+\tau\beta}$ intersects at a larger $\tau$ than $1+\frac{(G-\tau)(1+\tau\beta)}{2(G+\tau)}$ does, then this means that the competitive ratio lower bound at $\hat{\tau}$ is greater than $\frac{2 \beta G}{\beta G +1 }$.

The $\frac{2}{1+\tau\beta}$ lower bound achieves the value $\frac{2 \beta G}{\beta G +1 }$ at $\tau = \frac{1}{\beta^2G}$. Meanwhile the $1+\frac{(G-\tau)(1+\tau\beta)}{2(G+\tau)}$ bound achieves it when
$\tau = \frac{G(\beta G-3)}{\beta G+1}$. Therefore, we can conclude that $\tau = 1/\beta$ is optimal when $\frac{1}{\beta^2G}\ge\frac{G(\beta G-3)}{\beta G+1}$. The inequality holds when $\beta G \le 1+\sqrt[3]{2+\sqrt{44/27}}+\sqrt[3]{2-\sqrt{44/27}}$, which is approximately $\beta G \le 3.383$.

\textbf{Case 1.2}. Let $1+\sqrt[3]{2+\sqrt{44/27}}+\sqrt[3]{2-\sqrt{44/27}}< \beta G \le \frac{7}{4}+\frac{\sqrt{57}}{4}$. We will show that in this range, $\tau = \nicefrac{1}{\beta}$ \underline{minimizes the upper bound}.

To do so, we must show that in this range, $\tau = \nicefrac{1}{\beta}$ achieves a lower competitive ratio than the other critical point, $\tau = \nicefrac{1}{2\beta}$. We can see that the competitive ratio at $\tau = \nicefrac{1}{2\beta}$ is exactly $\frac{14\beta G+1}{8\beta G+4}$. Simple algebra shows that $\frac{2\beta G}{\beta G + 1}$ is smaller than $\frac{14\beta G+1}{8\beta G+4}$ when $\beta G \le \frac{7}{4}+\frac{\sqrt{57}}{4}$

\textbf{Case 1.3}. Let $ \beta G \ge \frac{7}{4}+\frac{\sqrt{57}}{4}$. We will show that in this range, $\tau = \nicefrac{1}{2\beta}$ \underline{minimizes the upper bound}.

We can use the same argument as Case 1.2, except now $\frac{14\beta G+1}{8\beta G+4}$ is smaller in this range. 

\textbf{Case 2} $ 1/2\beta\le1-G < 1/\beta$.
    Since $1/2\beta \le 1-G <1/\beta$,  $\frac{1}{\beta}$ is not a possible value of $\tau$. As seen previously, if $\tau < 1/2\beta$, the upper bound is $\frac{7G+\tau}{4(G+\tau)}$, which is decreasing in $\tau$. Thus, the optimized value of $\tau\in[1/2\beta, 1-G]$ and must lie along the bound $1+\frac{(G-\tau)(1+\tau\beta)}{2(G+\tau)}$. Since this bound is concave in $\tau$, the minimal value will occur at either $\tau = 1/2\beta$  or $\tau=1-G$, with \CR $\frac{14\beta G+1}{8\beta G+4}$ or $1+\frac{(2G-1)(1+\beta(1-G))}{2}$, respectively.

\textbf{Case 3} $ 1>2\beta(1-G)$. In this range neither $\frac{1}{\beta}$ nor $\frac{1}{2\beta}$ are possible values of $\tau$. Thus, the only possible upper bounds is $\frac{7G+\tau}{4(G+\tau)}$ which is decreasing in $\tau$. We can minimize the bound by selecting $\tau = 1-G$.
\end{proof}

\section{Visualization Across $\beta, G$}
\label{Appendix-Visual}
\begin{figure}[h!]
    \centering
    \includegraphics[width=1\linewidth]{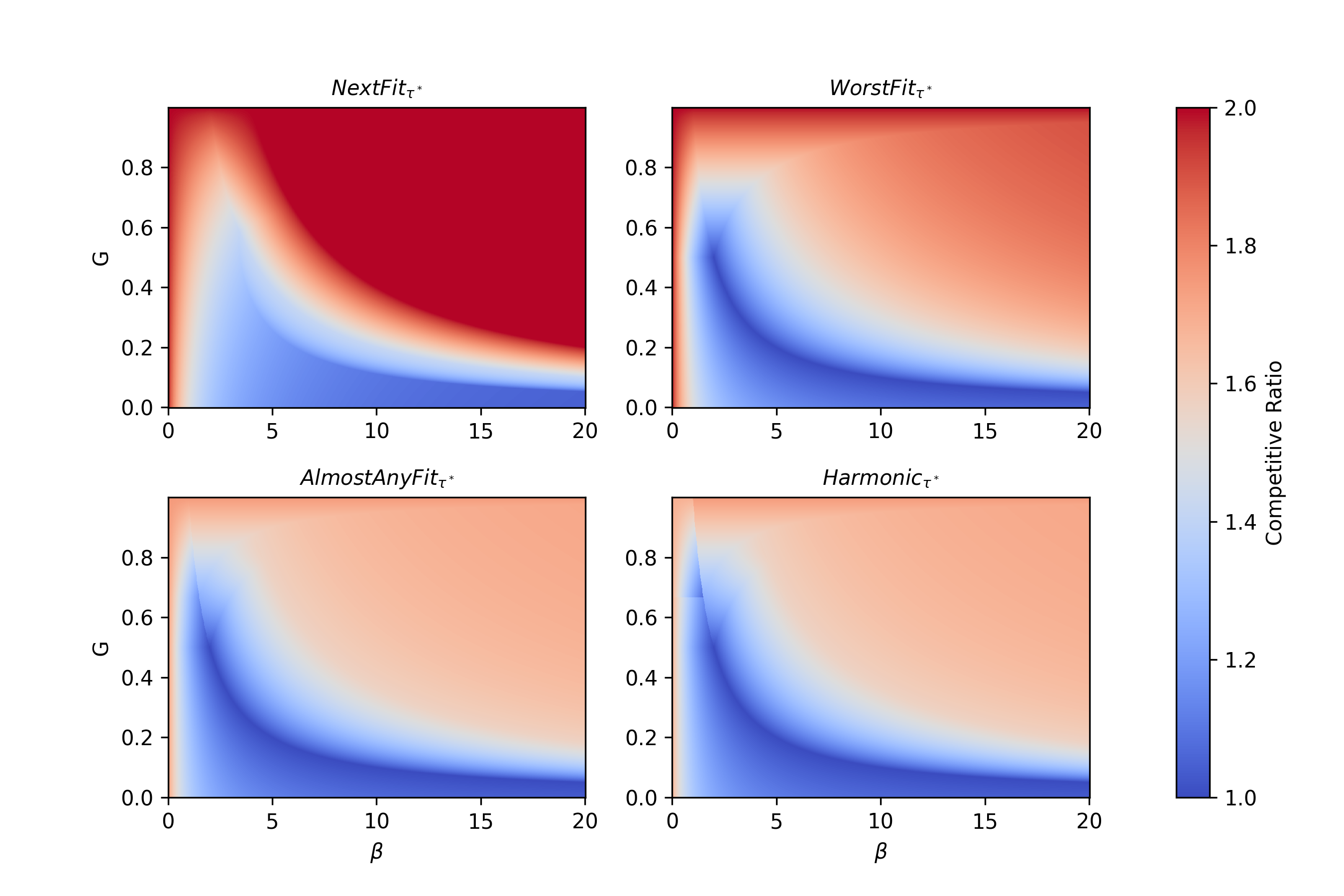}
    \caption{Competitive Ratio upper bounds for all algorithms using the optimized $\tau$, across $G\in[0,1]$ and $\beta \in [0,20]$}. 
    \Description{Four heatmap-style plots corresponding to \TNF, \TWF, \TAAF,  and \GHAR where the \CR upper bound is plotted as a color value for each $\beta,G$. All plots show low values near $\beta G = 1$ for $G\le 2/3$, with \CR increasing relatively smoothly outwards. \TNF quickly reaches 2-competitive, while the other algorithms only achieve maximal values for extremal values of $\beta,G$.}
    \label{fig:allGB}
\end{figure}

\autoref{fig:allGB} shows the optimized upper bound for each of the algorithms considered in this paper. The results show that while $\beta G \le 1$ and $\beta G >1$ cases require separate analysis, the resulting bounds generally transition across this boundary quite smoothly. This figure also highlights the separation between \TNF and \TWF. \TAAF and \GHAR are identical for $\beta G > 1$; \GHAR is slightly better when $\beta G \le 1$, but the difference is difficult to see in this visualization. Finally, when $G>1/2$ and $\beta G >1$, each of the algorithms except for \TNF demonstrates slight regions of lower \CR along $1 > \beta(1-G)$ and $\beta = \frac{G-1/2}{3G-2G^2-1}$, corresponding to the boundaries between worst-case expressions.

\section{Proof of General Lower Bound}
\label{Appendix-GeneralLB}
In this section, we present the proofs for our general lower bound on competitive ratio for any deterministic online algorithm.

\subsection{$\beta G\le 1$}

\genLBgbLtOne*
\begin{proof}
Consider the input used in the classic bin packing setting~\cite{BALOGH2012lowerBound, LIANG1980, VANVLIET1992}: $(\sigma_1,\dots,\sigma_i,\dots,\sigma_k)$, where $\sigma_i = (\frac{1}{m_{k-j+1}+\epsilon})^n$ and $m_i$ is the $i$-th number in the Sylvester sequence \cite{Sylvester1880}.

Since $\beta G \le 1$, it is optimal to fill each bin maximally. Thus, under the input, \Opt uses $n$ bins while \Alg uses at least $r n$ bins, where $r = \nicefrac{248}{161}$.
Thus, cost of \Opt on $n$ bins is at most $n(1+\beta(1-G))$. Meanwhile, the minimal cost for \Alg occurs when the volume is evenly spread across the $rn$ bins, for a cost $rn+rn\beta(\frac{n}{rn}-G)^+ = rn+n\beta(1-rG)^+$. 
    
    This gives a lower bound of $\frac{r+\beta(1-rG)^+}{1+\beta(1-G)}$ on the competitive ratio of any online algorithm for \gbp when $\beta G \le 1$.

\end{proof}

\subsection{$\beta G >1$}

\generalLB*

The first function in this bound
is $\frac{r+1-G}{1+\beta(1-G)}$, which is derived in the same way as \autoref{thm:GenLB-gbLtOne}. Since $\beta G >1$, it may no longer be optimal to fill each bin to capacity, so we will denote the cost $\tilde{\Opt}$. So long as $\Alg>\tilde{\Opt}$, we still have a valid lower bound as $\Alg>\tilde{\Opt}\ge\Opt$. If $\tilde{\Opt}$ uses $n$ bins, its cost is at most $n(1+\beta(1-G))$. Meanwhile, we lower bound the cost of $\Alg$ as $nr+nr(1/r-G) = nr+n(1-rG)$, since $\Alg$ has at least $n(1-G)$ volume which cannot be in the green space of the first $nr$ bins. Note that this differs from our proof of \autoref{thm:GenLB-gbLtOne}, as it may now be cheaper to place this volume in new bins rather than take the black cost. Thus, we realize the bound $\frac{r+1-G}{1+\beta(1-G)}$, which is only feasible if $r >G+\beta(1-G)$. The remainder of this section discusses the function $f(\beta G)$.

To clarify the proof we use the following terms to classify suboptimal packings. First, the term \textit{trivially suboptimal} describes a packing where any bin takes more than $1/\beta$ \black space from items with size $< G$, as the packing could be trivially improved by moving a subset of the items of that bin into a newly created bin. Comparatively, a \textit{weakly suboptimal} packing takes more cost than the offline optimal, but would require redistributing items from multiple bins, potentially into one or more new bins.

\subsubsection{$1 < \beta G \le 1.5$}

    Consider the inputs $\sigma_1 = \{\frac{G}{3},\ldots\}$, $\sigma_2=\{\frac{2G}{3},\ldots\}$, where each sequence contains $n$ items and $n$ is a multiple of three. We will consider the performance of $\Alg$ on $\sigma_1$ and $\sigma_1\sigma_2$.

    First consider the packing after $\sigma_1$. $\Alg$ can place at most five items of size $G/3$ in a given bin due to $\beta G \le 1.5$. However, placing more than three items is weakly suboptimal: three items per bin is optimal on $\sigma_1$, and placing four or 5 in a given bin does not help performance on $\sigma_1\sigma_2$. We can then assume that $\Alg$ will place at most 3 items per bin after $\sigma_1$. Let $s, d, t$ be the number of single, double, and triple item bins respectively. Note that $n= s+2d+3t$. The cost for $\Alg$ is simply $s+d+t$, while the optimal cost is $n/3$. Thus, we have $CR \ge \frac{3(s+d+t)}{n} = \frac{3(s+d+t)}{s+2d+3t}$.

        \begin{table}[h!]
    \centering
        \begin{tabular}{|c c c c}
        \hline
    & (1,0) & (2,0) & (3,0) \\
    (0,1) & (1,1) & (2,1) & (3,1)  \\ 
    (0,2) & (1,2)
\end{tabular}
    \caption{Case $1 < \beta G \le 1.5$. Possible bin types after $\sigma_1\sigma_2$. $(i,j)$ indicates $i$ items of size $G/3$ and $j$ items of size $2G/3$.}
    \label{table:genLB-items-2}
\end{table}

    Now consider the packing after $\sigma_2$. The minimum cost packing of $\Alg$ is to first add a single  $\frac{2G}{3}$ item to the $s$ single bins, then to the $d$ double bins, then to the $t$ triple bins. We will then have $n-s-t-d = 2t+d$ items remaining. We will then add a second item to $\min\{2t+d, s\}$ single bins, create $\lfloor\frac{(2t+d-s)^+}{2}\rfloor$ bins with only 2 items $2G/3$, and have at most one item left over to make a bin with only $2G/3$. The cost of this packing is: 
    $$s-\min\{2t+d,s\}+(d+\frac{(2t+d-s)^+}{2}) (1+\beta G/3)+(t+\min\{2t+d,s\})(1+2\beta G/3)$$

    Meanwhile, the optimal cost is $n = s+2d+3t$.
    
    We can think of these two bounds as functions of $s,d,t$. 
    $$f(s,d,t) = \frac{3(s+d+t)}{s+2d+3t}$$
    $$g(s,d,t) = \frac{s-\min\{2t+d,s\}+(d+\frac{(2t+d-s)^+}{2}) (1+\beta G/3)+(t+\min\{2t+d,s\})(1+2\beta G/3)}{s+2d+3t}$$
    
    The competitive ratio of $\Alg$ is thus lower bounded by:
    $$\lim_{n\rightarrow \infty} \min_{(s,d,t)}\max\{f(s,d,t),g(s,d,t)\}$$

    To analyze this problem, we will first normalize each of the competitive ratio functions. This will allow us to identify locations where $f=g$, which will minimize the maximum. 

    Let $\alpha = s/n$, $\delta = d/n$, $\gamma = t/n$, then we create new functions:
    $$F(\alpha,\delta,\gamma) = 3(\alpha+\delta+\gamma)$$
    $$G(\alpha,\delta,\gamma) = \alpha-\min\{2\gamma+\delta,\alpha\}+(\delta+\frac{(2\gamma+\delta-\alpha)^+}{2}) (1+\beta G/3)+(\gamma+\min\{2\gamma+\delta,\alpha\})(1+2\beta G/3)$$
    Note that we have the constraints that $\alpha, \delta,\gamma \ge 0$ and $\alpha+2\delta+3\gamma = 1$.

    To simplify $G$, let us consider two cases: $G_1$ where $2\gamma+\delta > s$ and $G_2$ where $2\gamma+\delta \le s$

    $$G_1 = \alpha/2(1+\beta G) + \delta/2(3+\beta G)+\gamma(2+\beta G)$$
    $$G_2 = \alpha+\delta(1+\beta G)+\gamma(1+2\beta G)$$

    We can now eliminate the variable $\alpha$ using the substitution $1-2\delta-3\gamma$. 
    $$F(\delta,\gamma) = 3(1-\delta-2\gamma)$$
    $$G_1(\delta,\gamma) = \frac{\beta G+1}{2}+(\frac{1-\beta G}{2})(\delta+\gamma)$$
    $$G_2(\delta,\gamma) = 1+\delta(\beta G - 1)+2\gamma(\beta G  -1) $$

    \textit{Case 1: $F= G_1, \; 2\gamma+\delta >\alpha$}
    \begin{align*}
        3(1-\delta-2\gamma) &= \frac{\beta G+1}{2}+(\frac{1-\beta G}{2})(\delta+\gamma) \\
        5 - \beta G&= \delta(7-\beta G) +\gamma(13-\beta G) 
    \end{align*}

    Therefore, we claim that the minimum is achieved when $$\delta = \frac{5-\beta G+\gamma(\beta G-13)}{7-\beta G}$$
    Since $\delta \ge 0$ and by restrictions on $\beta G$ this implies $\gamma \le \frac{\beta G-5}{\beta G-13}$. 

    We can then substitute into either $F$ or $G_1$ to find the competitive ratio as a function of $\gamma$.
    \begin{align*}
        F(\gamma) &= 3(1-\frac{5-\beta G+\gamma(\beta G-13)}{7-\beta G}-2\gamma) \\
        &=3(\frac{7-\beta G}{7-\beta G}-\frac{5-\beta G}{7-\beta G}-\gamma(\frac{\beta G-13}{7-\beta G}+2)) \\
        &= 3(\frac{2}{7-\beta G}-\gamma(\frac{1-\beta G}{7-\beta G}))\\
    \end{align*}

    Note that since $1<\beta G\le 1.5$, $F(\gamma)$ is increasing in $\gamma$. Thus, to minimize $F(\gamma)$ we use the smallest $\gamma$ which satisfies the constraints: $\gamma \ge 0$, $\delta = \frac{5-\beta G+\gamma(\beta G-13)}{7-\beta G}$, $2\delta+3\gamma \le 1$ and $2\gamma+\delta > \alpha$. This optimal value is $\gamma^* = \frac{3-\beta G}{\beta G+5}$, giving $\delta^* = \frac{2(\beta G-1)}{\beta G+5}$ and $\alpha^* = 0$. We can then plug these optimal values into either $F$ or $G_1$ to find a competitive ratio of  $\frac{3(\beta G+1)}{\beta G+5}$.

    \textit{Case 2: $F= G_2, \; 2\gamma+\delta \le \alpha$}
    \begin{align*}
        3(1-\delta-2\gamma) &= 1 + \delta(\beta G - 1)+2\gamma(\beta G-1)\\
        2 &= \delta(2+\beta G) +2\gamma(\beta G+2)
    \end{align*}

    Therefore, we claim that the minimum is achieved when 
    $$\delta = \frac{2-2\gamma(\beta G+2)}{\beta G+2}$$

    Since $\delta \ge 0$ this implies $\gamma \le \frac{1}{\beta G+2}$.

    We can then substitute into either $F$ or $G_2$ to find the competitive ratio as a function of $\gamma$.

    \begin{align*}
        F(\gamma) &= 3(1-\frac{2-2\gamma(\beta G+2)}{\beta G+2}-2\gamma) \\
        &=3(\frac{\beta G+2}{\beta G+2}-\frac{2}{\beta G+2}+\frac{2\gamma(\beta G+2)}{\beta G+2}-\frac{2\gamma(\beta G+2)}{\beta G+2}) \\
        &= 3(\frac{\beta G}{\beta G+2})
    \end{align*}

    In this case, the minimum possible competitive ratio is $\frac{3\beta G}{\beta G+2}$. Note that $\frac{3\beta G}{\beta G+2} \ge \frac{3(\beta G+1)}{\beta G+5}$ for $1 <\beta G \le 1.5$, meaning that Case 1 establishes our online lower bound for this range of $\beta G$.

    \subsubsection{$1.5 < \beta G \le 3$}

    This case is quite similar to the one above. Consider the inputs $\sigma_1 = \{\frac{G}{3},\ldots\}$, $\sigma_2=\{\frac{2G}{3},\ldots\}$, where each sequence contains $n$ items and $n$ is a multiple of three. We will consider the performance of $\Alg$ on $\sigma_1$ and $\sigma_1\sigma_2$.

    First consider the packing after $\sigma_1$. $\Alg$ can place at most four items of size $G/3$ in a given bin due to $1.5 < \beta G \le 3$. However, placing more than three items is weakly suboptimal: three items per bin is optimal on $\sigma_1$, and placing four in a given bin does not help performance on $\sigma_1\sigma_2$. We can then assume that $\Alg$ will place at most 3 items per bin after $\sigma_1$. Let $s, d, t$ be the number of single, double, and triple item bins respectively. Note that $n= s+2d+3t$. The cost for $\Alg$ is simply $s+d+t$, while the optimal cost is $n/3$. Thus, we have $CR \ge \frac{3(s+d+t)}{n} = \frac{3(s+d+t)}{s+2d+3t}$.

        \begin{table}[h!]
    \centering
        \begin{tabular}{|c c c c}
        \hline
    & (1,0) & (2,0) & (3,0) \\
    (0,1) & (1,1) & (2,1) &    \\ 
    (0,2) & & &
\end{tabular}
    \caption{Case $3 < \beta G \le 3$. Possible bin types after $\sigma_1\sigma_2$. $(i,j)$ indicates $i$ items of size $G/3$ and $j$ items of size $2G/3$.}
    \label{table:genLB-items-3}
\end{table}

    Now consider the packing after $\sigma_2$. The minimum cost packing of $\Alg$ is to first add a single $\frac{2G}{3}$ item to the $s$ single bins, and then to the $d$ double bins. We will then have $n-s-d = 3t+d$ items remaining. We will then create $\frac{d+3t}{2}$ bins with only 2 items $2G/3$. The cost of this packing is: 
    $$s+t+\frac{3(d+t)}{2}(1+\frac{\beta G}{3})$$

    Meanwhile, the optimal cost is $n = s+2d+3t$.
    
    We can think of these two bounds as functions of $s,d,t$. 
    $$f(s,d,t) = \frac{3(s+d+t)}{s+2d+3t}$$
    $$g(s,d,t) = \frac{s+t+\frac{3(d+t)}{2}(1+\frac{\beta G}{3})}{s+2d+3t}$$
    
    The competitive ratio of $\Alg$ is thus lower bounded by:
    $$\lim_{n\rightarrow \infty} \min_{(s,d,t)}\max\{f(s,d,t),g(s,d,t)\}$$

    To analyze this problem, we will first normalize each of the competitive ratio functions. This will allow us to identify locations where $f=g$, which will minimize the maximum. 

    Let $\alpha = s/n$, $\delta = d/n$, $\gamma = t/n$, then we create new functions:
    $$F(\alpha,\delta,\gamma) = 3(\alpha+\delta+\gamma)$$
    $$G(\alpha,\delta,\gamma) = \alpha+\gamma+\frac{3}{2}(\delta+\gamma)(1+\frac{\beta G }{3})$$

    We can use the constraint $\alpha+2\delta+3\gamma = 1$ to remove a variable.
    $$F(\delta,\gamma) = 3(1-\delta-2\gamma)$$
    $$G(\delta,\gamma) = 1 +(\delta+\gamma)(\frac{\beta G-1}{2})$$

    To identify the minimum, we will examine $F=G$.

    \begin{align}
        3(1-\delta-2\gamma) &= 1 - 2(\delta+\gamma)+\frac{3}{2}(\delta+\gamma)(1+\frac{\beta G }{3}) \notag \\
        2 &= \delta(\frac{5+\beta G}{2})+\gamma(\frac{11+\beta G}{2})\label{eq:gen_lb_3}
    \end{align}

    Thus, we can say that the minimum is achieved when
    $$\delta = \frac{2-b\gamma}{a}, \text{ for }a = \frac{5+\beta G}{2}, b = \frac{11+\beta G}{2}$$

    We can then substitute into either $F$ or $G$ to find the competitive ratio as a function of $\gamma$.
    \begin{align*}
        F(\gamma) &= 3(1-\frac{2-b\gamma}{a}-2\gamma) \\
        &=3(1-\frac{2}{a}+\frac{b\gamma}{a}-2\gamma) \\
        &= 3(\frac{1+\beta G}{\beta G+5}+\gamma(\frac{1-\beta G}{\beta G+5}))\\
        &= \frac{3(1+\beta G+\gamma-\gamma\beta G)}{\beta G+5}
    \end{align*}

    We can now find the value of $\gamma$ which minimizes $F(\gamma)$, subject to the constraints that $\alpha,\delta,\gamma \ge 0$. 

    By \autoref{eq:gen_lb_3} and the fact that $\delta\ge 0$, we can conclude that $\gamma \le \frac{4}{11+\beta G}$. Note that since $\beta G > 1$, $F(\gamma)$ is decreasing in $\gamma$. Thus, we find the minimum at $\gamma = \frac{4}{11+\beta G}$. 

    \begin{align*}
        F(\frac{4}{11+\beta G}) &= \frac{3(1+\beta G+\frac{4}{11+\beta G}-\beta G\frac{4}{11+\beta G})}{\beta G+5}\\
        &=3(\frac{\frac{(1+\beta G)(11+\beta G)}{11+\beta G}+\frac{4-4\beta G}{11+\beta G}}{5+\beta G}) \\
        &= 3(\frac{\beta^2 G^2+8\beta G+15}{(11+\beta G)(5+\beta G)}) \\
        &= 3\frac{\beta G+3}{\beta G+11}
    \end{align*}

    Thus, we can conclude that for $1.5 < \beta G \le 3$, any online algorithm will have a competitive ratio of at least $\frac{3(\beta G+3)}{\beta G+11}$.

    \subsubsection{$3 < \beta G \le 4$}

    Consider the inputs $\sigma_1 = \{\frac{G}{3},\ldots\}$, $\sigma_2=\{\frac{2G}{3},\ldots\}$, where each sequence contains $n$ items, where $n$ is a multiple of three. We will consider the performance of $\Alg$ on $\sigma_1$ and $\sigma_1\sigma_2$. 

    First consider the packing after $\sigma_1$. $\Alg$ can place between one to three items of size $G/3$ in a bin. Because $\beta G > 3$, it is never worth adding more than 3 items into a bin. Suppose $\Alg$ produces the packing $k$ bins $G$, $\ell$ bins $2G/3$, and $n-3k-2\ell$ bins $G/3$. The cost of $\Alg$ will be $n-2k-\ell$, while $\Opt$ is $n/3$. This results in a competitive ratio of $\frac{3(n-2k-\ell)}{n}$

    Now consider the packing after $\sigma_1\sigma_2$. The minimum cost packing possible will be: $n-3k-2\ell$ bins with one item $G/3$ and one item $2G/3$, $k$ bins with 3 items $G/3$, $\ell$ bins with 2 items $G/3$ and one item $2G/3$, and $3k+2\ell$ bins with one item $2G/3$. This will cost $n+k+\ell$, while $\Opt$ is $n$. This results in a competitive ratio of $\frac{n+k+\ell}{n}$.

    Now, we can say that the competitive ratio of $\Alg$ is at least $\max\{\frac{3(n-2k-\ell)}{n}, \frac{n+k+\ell}{n}\}$. The minimum value occurs at the intersection: $n=\frac{7k+4\ell}{2}$, for a value $CR \ge \frac{9k+6\ell}{7k+4\ell}$. In order for $n$ to be nonzero, we need to allow either $k, \ell$ to be nonzero. The expression is minimized when $\ell = 0$ and $k$ is nonzero, giving a lower bound of $9/7$ on the competitive ratio for $\Alg$.

    \subsubsection{$4 < \beta G \le 48$}

    Consider the inputs $\sigma_1 = \{G(\frac{2\beta G + 1}{6\beta G}), \ldots\}$ and $\sigma_2 = \{G(\frac{4\beta G - 1}{6\beta G}), \ldots\}$, where each sequence contains $n$ items with $n$ a multiple of six.

    Consider the performance of $\Alg$ on $\sigma_1$. $\Alg$ can place up to 3 items of size $G(\frac{2\beta G + 1}{6\beta G})$ in the same bin: any more items would create a trivially suboptimal packing. Note that if $\Alg$ packs 3 items, then it will incur a \black cost of $\beta\frac{3G}{\beta G} = 1/2$ for that bin. Thus, packing $6x$ items entirely in bins of 3 items is equivalent to packing those $6x$ items into bins of 2 items. Because neither of these bin types will be able to accept a $G(\frac{4\beta G - 1}{6\beta G})$, we can assume without loss of generality that $\Alg$ will only use bins with one or three items. We can now state that if there are $\ell$ bins with 3 items, then there are $n-3\ell$ bins with one item, for a cost of $n-1.5\ell$. The optimal cost on $\sigma_1$ is $n/2$, meaning that the competitive ratio is at least $\frac{2(n-1.5\ell)}{n}$.

    Now, consider $\sigma_1\sigma_2$. Since $\beta G > 4$, it is trivially suboptimal to place $G(\frac{4\beta G - 1}{6\beta G})$ items together, or to place them with more than one item of size $G(\frac{2\beta G + 1}{6\beta G})$. Therefore, the best packing of $\Alg$ will consist of $n-3\ell$ bins with one item of each size, $\ell$ bins with three small items, and $3\ell$ bins with one large item. The resulting cost is $n+1.5\ell$, while the optimal cost is $n$. Therefore, the competitive ratio is at least $\frac{n+1.5\ell}{n}$.

    We can perform a similar algebra as in the previous cases to find that the \CR is minimized when $n=4.5\ell$, and substitute to recover the lower bound $CR \ge 4/3$.

\subsubsection{$\beta G >48$}
Construct three item sequences: $\sigma_1 = (G(1/6-2\epsilon))^n$, $\sigma_2 = (G(1/3 + \epsilon))^n$, $\sigma_3 = (G(1/2 + \epsilon))^n$ for $\epsilon \in (\frac{1}{2\beta G}, \frac{1}{84}-\frac{1}{14\beta G})$, where each sequence contains $n$ copies of one item type. 
This construction ensures that \Opt never uses black space because any packing takes at least $1/\beta$ black space if its filling level exceeds $G$, and such packing is \textit{trivially suboptimal} as it can be improved by moving items into a newly opened bin. For example, under the input $\sigma_1$, \Opt packs $6$ items in a bin with final fill level $G(1 - 12 \epsilon)$ and the optimal cost is $\Opt(\sigma_1) = \frac{n}{6}$ without costs from using black space. This is because packing $7$ items in a bin takes black space at least $7G(1/6 - 2\epsilon) - G = G/6 - 14\epsilon > G/6 - 14G(1/84 - 1/(14 \beta G)) = 1/\beta$. Similarly, under the input $\sigma_1\sigma_2 = (\sigma_1, \sigma_2)$, \Opt packs $2$ type-one items and $2$ type-two items in a bin; under the input $\sigma_1\sigma_2\sigma_3 = (\sigma_1, \sigma_2,\sigma_3)$, \Opt packs one item from each type in a bin. Thus, $\Opt(\sigma_1\sigma_2) = n/2$ and $\Opt(\sigma_1\sigma_2\sigma_3) = n$.

Now, consider the packing produced by any online algorithm $\Alg$, Under $\sigma_1$, each bin will have at most 6 items and take no \black space. 
If a bin had more than 6 items, it would take more than $1/\beta$ \black space, meaning it would be cheaper for $\Alg$ to open a new bin for those additional items. 

Next consider the packing produced by $\Alg$ over $\sigma_1\sigma_2$.  \autoref{table:genLB-items} lists all bin types which are not trivially suboptimal (i.e. taking more than $1/\beta$ \black space). Note in fact that each of these bin types takes zero \black space.
\begin{table}[h!]
    \centering
        \begin{tabular}{|c c c c c c c}
        \hline
    & \textbf{(1,0)} & \textbf{(2,0)} & \textbf{(3,0)}& (4,0) & (5,0)  & (6,0) \\  
    \textbf{(0,1)} & \textbf{(1,1)} & (2,1) & (3,1) & (4,1) \\ 
    (0,2) & (1,2) & (2,2) 
\end{tabular}
    \caption{Possible bin types after $\sigma_1\sigma_2$. $(i,j)$ indicates $i$ items of size $G(1/6-2\epsilon)$ and $j$ items of size $G(1/3+\epsilon)$. Bolded items are those which could accept an item of size $G(1/2+\epsilon)$. }
    \label{table:genLB-items}
\end{table}
Finally, consider the packing produced by $\Alg$ over $\sigma_1\sigma_2\sigma_3$. By construction, $\Alg$ should never place more than one item of size $G(1/2+\epsilon)$ in a bin. Additionally, these items can only be placed in the designated bins shown in \autoref{table:genLB-items}, or alone in a bin.

Thus, we have shown that on the given inputs, $\Alg$ should never pack above $G$. 
The \Opt and the bin types possible for $\Alg$ are identical to those in Yao's lower bound (Theorem~$2$ in \cite{Yao1980genLB}), thus we can now apply that result as a \black box to find a general lower bound of $3/2$ when $\beta G > 48$.

\section{Additional Experimental Results}
\label{Appendix-Experiments}

In this section, we present supplemental experimental results for inputs following a continuous uniform $[0,1]$ distribution and the GI benchmark in the BPPLIB library \cite{Gschwind2016}\cite{Dolorme2018}. The uniform distribution is a common initial benchmark of empirical performance for bin packing algorithms \cite{Bentley1984, kamali2014all}. 
Meanwhile, the GI benchmark was introduced by Gschwind and Irnich and is now part of the BPPLIB library, which is a collection of bin packing benchmarks primarily used for offline bin packing. However, the GI benchmark was used by \cite{angelopoulos2024BPpredictions} to evaluate online algorithms. 

We separately present results for $\beta G \le 1$ and $\beta G > 1$. These results follow a similar structure to those using the Weibull distribution in \autoref{sec:experiments}. One consistent phenomena is that the Weibull distribution induces a slightly higher empirical competitive ratio than the uniform or GI distributions, and that the relative performance of \NF and \HAR is swapped when $\beta G \le 1$.

\subsection{$\beta G \le 1$}
For both the continuous (\autoref{fig:unif_GBleqOne}) and GI (\autoref{fig:GI_GBleqOne}) distributions, we fix values of $\beta \in \{1,3/2,4\}$ and select 30 evenly spaced values of $G \in (0,1/\beta]$. The general shape of the results is extremely similar in both figures, and follows closely with the Weibull distribution in \autoref{fig:unif_weibullleqOne}. Unlike with the Weibull distribution, \NF performs slightly worse than \HAR for each value of $\beta,G$.

\begin{figure}[hbt!]
    \centering
    \includegraphics[width=0.99\linewidth]{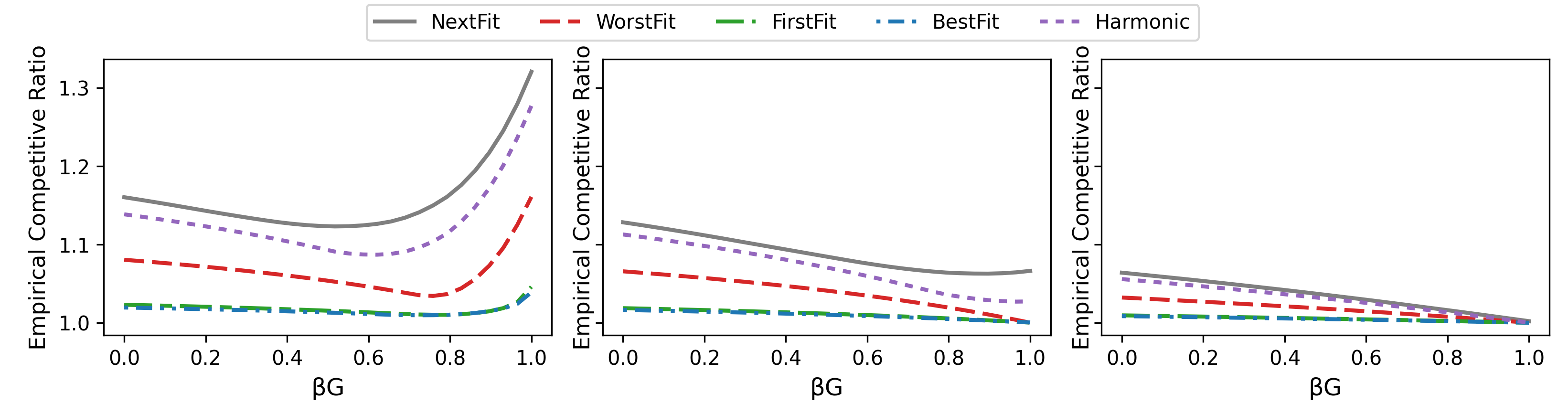}
    \makebox[0.99\textwidth][s]{ 
        \hspace{17mm}
        (a) $\beta = 1$
        \hspace{28mm}
        (b) $\beta = \nicefrac{3}{2}$
        \hspace{25mm}
        (c) $\beta = 4$
        \hspace{10mm}
    }
    \caption{Experiments on GI distribution with fixed $\beta$ for $\beta G \le 1$.}
    \Description{Plots are nearly identical to Figure \autoref{fig:unif_weibullleqOne}, but relative performance of \NF and \HAR are flipped.}
    \label{fig:GI_GBleqOne}
\end{figure}

\begin{figure}[hbt!]
    \centering
    \includegraphics[width=0.99\linewidth]{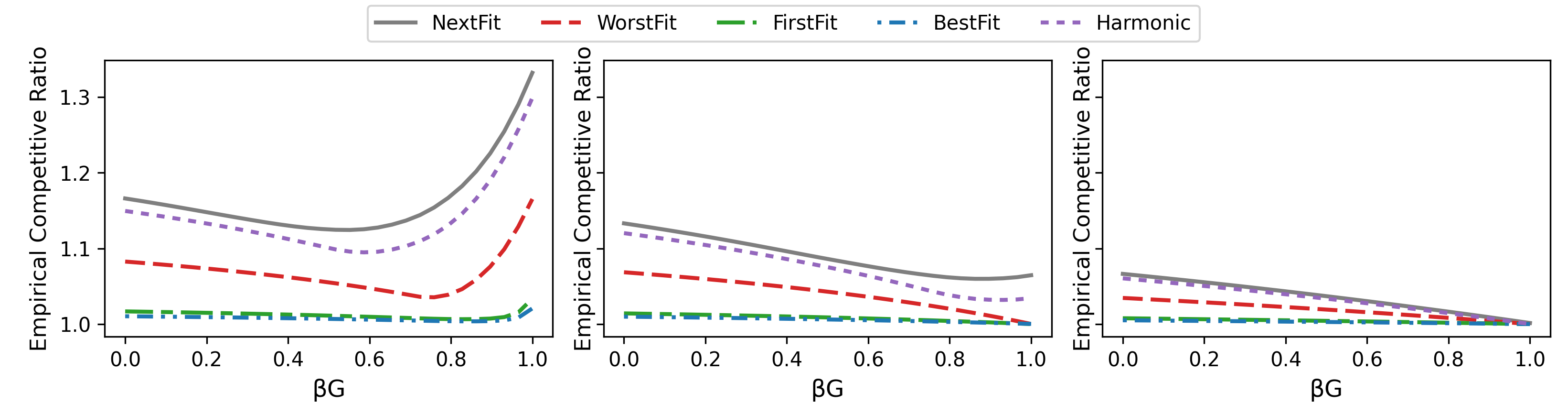}
    \makebox[0.99\textwidth][s]{ 
        \hspace{17mm}
        (a) $\beta = 1$
        \hspace{28mm}
        (b) $\beta = \nicefrac{3}{2}$
        \hspace{25mm}
        (c) $\beta = 4$
        \hspace{10mm}
    }
     \caption{Experiments on Uniform distribution with fixed $\beta$ for $\beta G \le 1$.}
    \Description{Plots are nearly identical to \autoref{fig:GI_GBleqOne}}
    \label{fig:unif_GBleqOne}
\end{figure}

\subsection{$\beta G > 1$}

The algorithms in this setting have the additional complexity of appropriate selection of the parameter $\tau$. We chose to evaluate the algorithms based on empirically-inspired threshold values. To find these values, we fixed $\beta \in \{5,10,20\}$, $ G = 0.5$, and varied $\tau$ to examine which value performs best empirically. 
We tested both Weibull (\autoref{fig:vary_tau_w}) and uniform (\autoref{fig:vary_tau_u}) distributions, selecting 150 values of $\tau$ for each. While there is some variation between the distributions and across different values of $\beta$, we observe relatively consistent values for empirical $\tau$ selection. First, we can observe that \TFF and \TBF perform best when $\tau = 0$. Meanwhile, \TWF minimizes cost when $\tau \approx \nicefrac{1}{2\beta}$. Finally, \TNF and \GHAR minimize cost when $\tau \approx \nicefrac{1}{\beta}$. We then used these values in subsequent experiments. 

\begin{figure}[hbt!]
    \centering
    \includegraphics[width=0.99\linewidth]{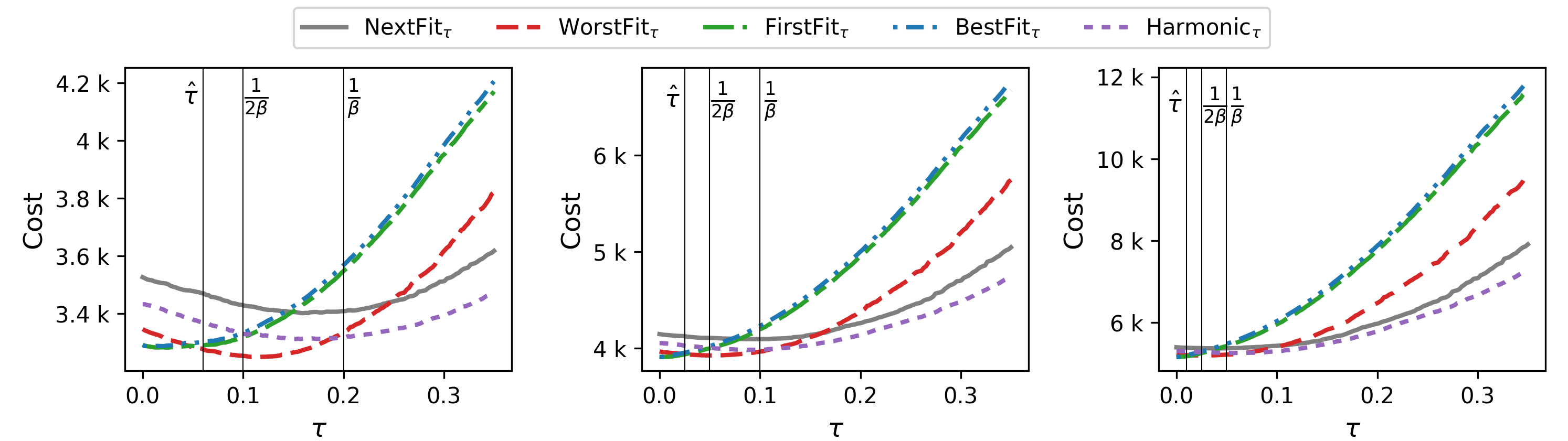}
    \makebox[0.99\textwidth][s]{ 
        \hspace{18mm}
        (a) $\beta = 5$
        \hspace{27mm}
        (b) $\beta = 10$
        \hspace{25mm}
        (c) $\beta = 20$
        \hspace{10mm}
    }
    \caption{Experiments on Weibull distribution with $G=\nicefrac{1}{2}$ to determine empirically best $\tau$ for each algorithm}
    \Description{Three plots mapping empirical performance of threshold algorithms, we find different empirical thresholds work best for different algorithms.}
    \label{fig:vary_tau_w}
\end{figure}

\begin{figure}[hbt!]
    \centering
    \includegraphics[width=0.99\linewidth]{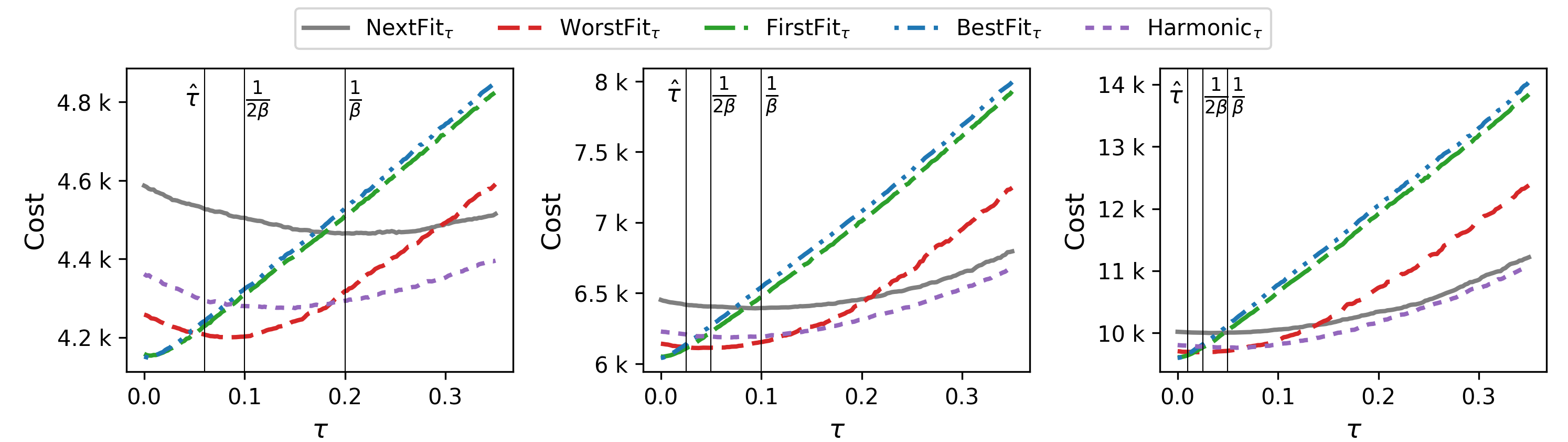}
    \makebox[0.99\textwidth][s]{ 
        \hspace{18mm}
        (a) $\beta = 5$
        \hspace{27mm}
        (b) $\beta = 10$
        \hspace{25mm}
        (c) $\beta = 20$
        \hspace{10mm}
    }
    \caption{Experiments on uniform distribution with $G= \nicefrac{1}{2}$ to determine empirically best $\tau$ for each algorithm}
    \Description{Empirical performance on Uniform distribution suggests the same thresholds as the Weibull distribution}
    \label{fig:vary_tau_u}
\end{figure}

\autoref{fig:GI_GBgeOne} and \autoref{fig:unif_GBgeOne} both use the empirically selected values of $\tau$ to evaluate the GI and uniform distributions when $G \in \{1/2, 3/4,19/20\}$ and $\beta \in (1.005/G,20/G]$. We again see very similar trends as for the Weibull distribution in \autoref{fig:WeibullgeOne}, except that the empirical competitive ratio for continuous and GI distributions is slightly smaller. We still see three clear groups of algorithmic performance, particularly for larger 
$G$, where \TFF and \TBF are close to optimal, \TWF is in the middle, and \TNF and \GHAR perform the worst.

\begin{figure}[hbt!]
    \centering
    \includegraphics[width=0.99\linewidth]{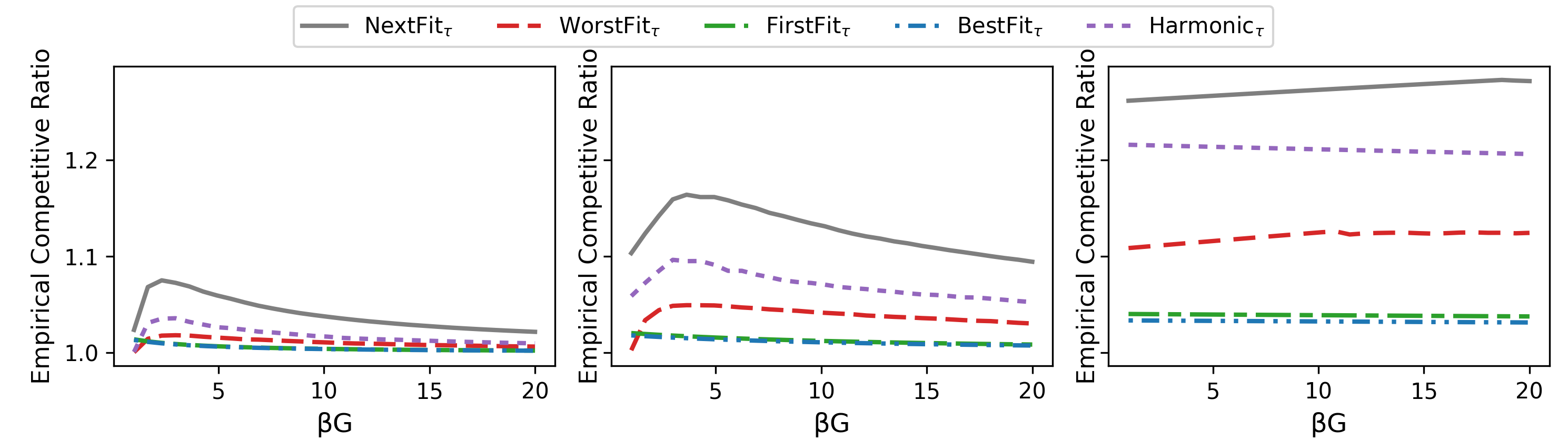}
    \makebox[0.99\textwidth][s]{ 
        \hspace{17mm}
        (a) $G = \nicefrac{1}{2}$
        \hspace{28mm}
        (b) $G= \nicefrac{3}{4}$
        \hspace{25mm}
        (c) $G= \nicefrac{19}{20}$
        \hspace{10mm}
    }
    \caption{Experiments on GI distribution with fixed $G$ using empirically determined $\tau$.}
    \Description{Plots show similar shapes to the Weibull test in \autoref{fig:WeibullgeOne}, however the competitive ratio is generally lower in these plots.}
    \label{fig:GI_GBgeOne}
\end{figure}

\begin{figure}[hbt!]
    \centering
    \includegraphics[width=0.99\linewidth]{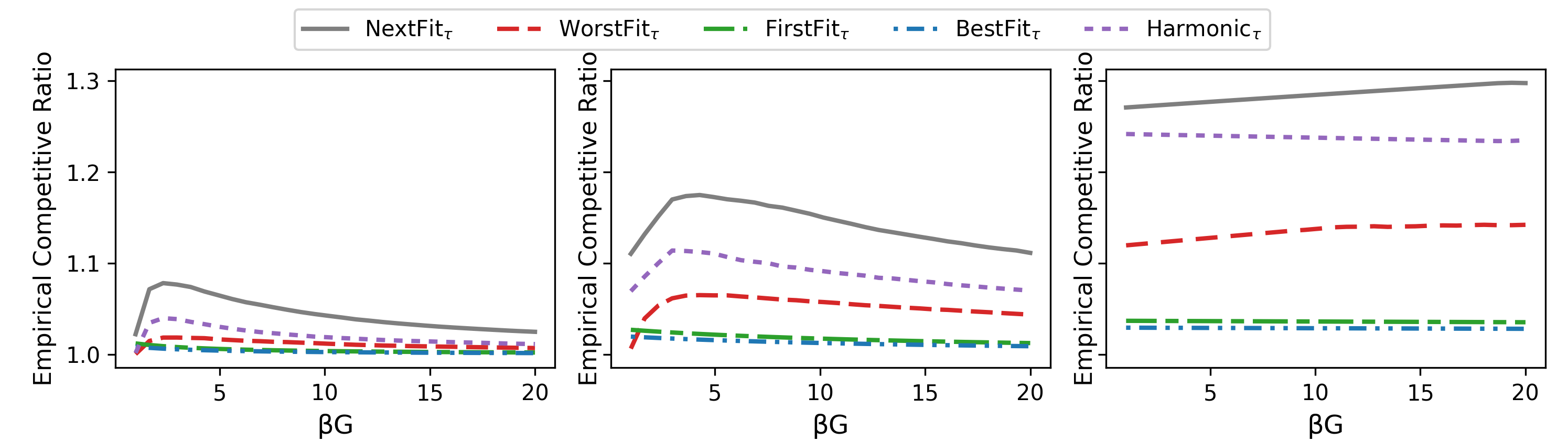}
    \makebox[0.99\textwidth][s]{ 
        \hspace{17mm}
        (a) $G = \nicefrac{1}{2}$
        \hspace{28mm}
        (b) $G= \nicefrac{3}{4}$
        \hspace{25mm}
        (c) $G= \nicefrac{19}{20}$
        \hspace{10mm}
    }
    \caption{Experiments on uniform distribution with fixed $G$ using empirically determined $\tau$.}
    \Description{Plots are nearly identical to the GI version in \autoref{fig:GI_GBgeOne}.}
    \label{fig:unif_GBgeOne}
\end{figure}

\end{document}